\newcommand{\cov}{\operatorname{Cov}}
\newcommand{\T}{\top}
\newcommand{\op}{o_{\mathbb{P}}}
\newcommand{\Op}{O_{\mathbb{P}}}
\newcommand{\E}{\mathbb{E}}
\newcommand{\F}{\mathcal{F}} 
\newcommand{\FF}{\mathcal{F}}
\newcommand{\R}{\mathbb{R}}
\newcommand{\mf}{\mathbf}
\newcommand{\bs}{\boldsymbol}
\newcommand{\bt}{\mathrm{boot}}
\newcommand{\nt}{\lceil n \tau\rceil}
\newcommand{\proj}{\mathcal{P}}
\theoremstyle{plain}
\newtheorem{theorem}{Theorem}[section]
\newtheorem{assumption}{Assumption}[section]
\newtheorem{definition}{Definition}[section]
\newtheorem{proposition}{Proposition}[section]
\newtheorem{example}{Example}[section]
\newtheorem{remark}{Remark}[section]
\newtheorem{lemma}{Lemma}[section]
\definecolor{darkgreen}{rgb}{0.0, 0.4, 0.1} 
        \newcommand{\HDB}{\textcolor{black}}
\begin{document}

\begin{frontmatter}
\title{A portmanteau test for multivariate  non-stationary functional time series with an increasing number of lags} 
\runtitle{Portmanteau test for non-stationary functional time series}
\begin{aug}
\author[A]{\fnms{Lujia}~\snm{Bai}\ead[label=e1]{lujia.bai@rub.de}},
\author[A]{\fnms{Holger}~\snm{Dette}\ead[label=e2]{holger.dette@rub.de}}
\and
\author[B]{\fnms{Weichi}~\snm{Wu}\ead[label=e3]{wuweichi@mail.tsinghua.edu.cn}}
\address[A]{Fakult\"at f\"ur Mathematik, Ruhr-Universit\"at Bochum, 44780 Bochum, Germany\printead[presep={,\ }]{e1,e2}}

\address[B]{Department of Statistics and Data Science, Tsinghua University, 100084 Beijing, China \printead[presep={,\ }]{e3}}
\end{aug}

\begin{abstract}
Multivariate locally stationary functional time series provide a flexible framework for modeling \HDB{functional data} exhibiting both temporal and spatial dependencies while allowing for a time-varying data generating mechanism. In this paper, we introduce a  portmanteau-type test for assessing white noise assumptions  tailored  for multivariate locally stationary functional time series without dimension reduction. A simple bootstrap procedure is proposed to implement the test, \HDB{because it is not clear if the limiting distribution of the test statistic exists. }
 Our approach is based on a Gaussian approximation result for a degenerate $U$-statistic of second-order
functional time series involving  an increasing number of lags,  which is of independent interest.
Through theoretical analysis, simulation studies, \HDB{and real data analysis of energy consumptions}, we  demonstrate the efficacy and adaptability of the proposed method in detecting departures from white noise assumptions in multivariate locally stationary functional time series.  \HDB{Finally, the R package corresponding to our method can be downloaded  from \href{https://github.com/Lujia-Bai/nftsport}{\textbf{nftsport}. }}
\end{abstract}

 \begin{keyword}[class=MSC]
 \kwd[Primary ]{62M10}
 \kwd[; secondary ]{62R10}
 \end{keyword}

\begin{keyword}
\kwd{Multivariate functional time series}
\kwd{white noise testing}
\kwd{high-dimensional Gaussian approximation}
\kwd{bootstrap}
\kwd{spatio-temporal data}
\end{keyword}

\end{frontmatter}



\section{Introduction}
Functional time series provide a very  flexible framework  for modeling  temporal dependence between different observations \HDB{of functions} with  numerous applications  including the analysis of electricity prices \citep{liebl2013,7917305}, daily concentration curves of particulate matter and gaseous pollutants at different locations \citep{LI2017235}, cumulative intraday return trajectories \citep{HORVATH201466} among many others. Exploring the complex temporal dynamics of functional time series has   therefore  become a central  topic in applied and mathematical statistics, see for example the  monographs  of   \cite{horvath2012inference} and \cite{hsingeubank2015}.  Since then the field of functional data analysis is continuously growing 
\citep[see, for example,][for some recent developments]{Lundborg2022,chang2023autocovariance,basubastianDette2024,tan2024}.

White noise tests are frequently used to  analyze 
the serial dependence structure of  time series or checking  model
assumptions \citep[see, for example,][for some recent developments]{TSAY2020106, yao2019, chang2023testing}. 
For  functional time series   white noise testing  is challenging, and  several authors have worked under the assumption of stationarity  in recent years. For example,   \cite{ZHANG201676},  \cite{Bagchi2018}, \cite{rice2020}  and \cite{hlavka2021testing} 
have investigated such tests in the frequency domain, and
\cite{Gabrys2007}, \cite{horvath2012inference}, \cite{kokoszka2017inference}, \cite{rice2020tests} and \cite{MESTRE2021107108} 
among others
have developed tests using a   time domain approach.  We refer to \cite{kimkokricesurvey} for a recent review on the subject \HDB{for IID and uncorrelated white noise}.

Although these methods provide powerful tools for white noise testing  in  functional time series,  there are several limitations, which restrict their application. First,  several  portmanteau tests for functional time series such as those in \cite{Gabrys2007} and \cite{kokoszka2017inference}
are based on functional principal component analysis (FPCA).  Although FPCA has been successfully applied in many cases, it has also  been pointed out by \cite{aue2018detecting} that  the application  of FPCA can cause a loss of information when projecting  infinite-dimensional  data onto a finite-dimensional orthogonal basis, especially when  testing for a signal which is orthogonal to the selected basis.
Second, to our best knowledge, the problem of constructing  tests for the white noise assumption in 
multivariate functional time series is left relatively untouched. Even in the literature of multivariate time series analysis,   portmanteau tests have been majorly discussed for ARMA models, see for instance \cite{hosking1980}, \cite{Li1981}, \cite{francq2007} and  \cite{Mahdi2012} among others.  Finally, most of the literature deals with  stationary functional time series where the  data-generating mechanism is not allowed to be  time-varying. A notable exception is the work of \cite{Axel2023}, where univariate locally stationary \HDB{centered} functional time series are  considered. 

\par

In this paper, we propose a fully functional nonparametric  portmanteau-type test  for \HDB{uncorrelated} white noise in a multivariate locally stationary functional time series, where dense signals in different lags and dimensions of unknown form can be accumulated. Our approach is based on an aggregation of non-parametric estimates of the norm of  
functional auto-covariances, where   the number of  lags is increasing  with the sample size.  In contrast to the classical portmanteau tests, the corresponding test statistic considered in this paper converges to $0$ under the null hypothesis of white noise  (for an increasing sample size).

To investigate its  asymptotic behavior, we develop a systematic theoretical framework for a general class of  degenerate $U$-statistics consisting of second-order statistics of  not necessarily stationary functional time series. An essential ingredient of our  approach is  a non-standard  high-dimensional Gaussian approximation  result, which addresses the  increasing dependence caused by the summation of the growing number of lags 
among the nonparametric estimates of the covariance functions. Note that  the existing literature on Gaussian approximation for high-dimensional series cannot be directly used to deal with these intricate dependencies
because these papers make rather strict assumptions on 
the boundedness of the coefficients measuring the   strength of the serial dependence;  see, for example,
the assumptions in Theorem 3.2 of \cite{wu2017}, 
Assumption 2.3 in  \cite{zhang2018}, 
or Assumption  (G.1) of  \cite{miessteland2023}.

Based on this Gaussian approximation, we address the problem of a potentially \HDB{ non-existing limiting distribution } of the test statistic 
by  developing a novel difference-based multiplier bootstrap procedure to obtain critical values, which is adapted to the increasing dependence between the multipliers caused by  the aggregation of a growing number of lags, while filtering out signals via differencing to ensure good power performance.
We prove  the  asymptotic correctness and consistency of the  bootstrap test  and validate its finite sample performance by means of a simulation study.

Our Gaussian approximation results are of independent interest for  degenerate U-statistics of  functional time series with a kernel depending on the sample size $n$. 
The asymptotic properties  of degenerate $U$-statistics have been studied by \cite{zhou2014}  for (univariate) locally stationary time series.  A typical limit distribution is a weighted sum of centered chi-square  distributions with weights depending on $n$, which are difficult to estimate from the data. A common approach 
to obtain the limiting distribution
of  $U$-statistics with kernels depending on $n$ is the  martingale central limit theorem; see    \cite{HALL19841}  for the case of independent data and \cite{xiao2014portmanteau} for time series.  However, \cite{phillipp1991} showed that in general one can not embed vector-valued martingales into a Gaussian process. Meanwhile, \cite{belloni2018high} proposed a high-dimensional Gaussian approximation for the maximum of martingale processes, \HDB{but  as  mentioned in Remark 1 of \cite{belloni2018high}, conditions for Theorem 3.1 in their  paper cannot be obtained via the element-wise  convergence in probability, and the proof of Theorem 3.1 cannot be directly extended to the case of the element-wise  convergence in probability.} In  contrast to these papers, the  Gaussian approximation scheme developed in this paper  
is suitable for a general class of $U$-statistics derived from a locally stationary functional time series.

\section{Test for white noise in multivariate locally stationary functional time series} \label{sec2} 
  
\def\theequation{2.\arabic{equation}}	
  \setcounter{equation}{0}

\subsection{Multivariate locally stationary functional time series}
For $q \geq 1$, let $\|X \|_q = \{\E|X|^q\}^{1/q}$ the $q$-norm of a real-valued random variable $X$
with existing moment of order $q$. In the case $q = 2$, we omit the index $q$ and use $\|X \|$ for $\|X \|_2$. \HDB{Let $b$ be the smoothing parameter for estimating the trend function, while $\tau$ the smoothing parameter for estimating the covariance function. Let $|\mf a|_F$ denote the Frobenius (or Euclidean) norm of the vector $\mf a$, and for any square integrable function $f:[0,1] \to \mathbb{R}$, the norm of $f$ is defined by   $\|f\|_{\mathcal L^2} = \big \{\int_0^1 f^2(t) dt \big \}^{1/2}$.  In the rest of the paper, $n$ is the number of observed functions and $N$ is the number of grid points ${1 \over N} ,{2 \over N} ,  \ldots, {N-1 \over N}, {N \over N}$, where  each function is observed.} Throughout this paper, 
we  consider  a $p$-dimensional functional time series of the  form 
\begin{align}
    \mf X_{i,n}(u) = \mf m(i/n, u) + \bs \epsilon_{i,n}(u), \quad u \in [0,1], \label{eq:model} 
\end{align}
where $\mf X_{i,n}(u) = (X_{i,n, 1}(u), \ldots, X_{i,n, p}(u))^{\top} \in \mathbb{R}^p$, $\mf m:[0,1]\times [0,1] \to \mathbb R^p$ is a  $p$-dimensional  mean function and  for a given $i$, $\bs \epsilon_{i,n}$ lies in the  space, 
\HDB{${\cal L}^2\big ( [0,1],\mathbb{R}^p\big):= 
\big \{ \ f =(f_1, \ldots , f_p )^\top : [0,1] \to \mathbb{R}^p~:~f_i \in {\cal L}^2[0,1] \big \} $, where ${\cal L}^2[0,1]$ denotes the space of square integrable functions on the interval $[0,1]$.} \HDB{In each  $X_{i,n,k}(u) $, the index $i \in \{ 1,\ldots,n \} $ denotes the temporal component,  $k=1,\ldots, p$ corresponds to the spatial component, and $u \in [0,1]$ represents the argument of the function.} For  the vector-valued  function $\bs f(\cdot) = (f_1, \ldots, f_p)^{\top} \in {\cal L}^2\big ( [0,1],\mathbb{R}^p\big)$,  we define the  norm 
$$
\||\bs f|_F\|_{\mathcal L^2} = \HDB{\sqrt{\sum_{i=1}^p \int f_i^2 (t) \mathrm dt} },
$$
then, equipped with  $\||\cdot|_F\|_{\mathcal L^2}$, ${\cal L}^2\big ( [0,1],\mathbb{R}^p\big)$ is a Hilbert space (with the natural inner product).
In  model \eqref{eq:model},  the errors come from a triangular array $\{\bs \epsilon_{i,n}  \colon  i=1, \ldots  , n \}_{n \in \mathbb{N}} $ 
\HDB{of centered random variables  taking values in the space $ {\cal L}^2\big ( [0,1],\mathbb{R}^p\big)$.
In particular, for a fixed $i$ (and $n$), $\boldsymbol \epsilon_{i,n} = (\epsilon_{i,1,n} , \ldots , \epsilon_{i,p,n})^\top $ is a $ {\cal L}^2\big ( [0,1],\mathbb{R}^p\big)$-valued random variable and we assume that it  } can be represented as
$$
\bs \epsilon_{i,n}(u) = \mf H(i/n, u, \F_i).  
$$
Here $\mf H = (H_1, \cdots, H_p)^{\top}:  [0, 1] \times [0, 1] \times {\cal S}^{\mathbb Z}
\to \mathbb R^p$
 is a $p$-dimensional measurable filter function,  $\mathcal F_i = (\cdots, \eta_{i-1}, \eta_{i})$  and   $(\eta_i)_{i \in \mathbb Z}$ 
 is a sequence of independent and identically  distributed (i.i.d)  
 random elements taking values in  a measurable space  ${\cal S}$. \HDB{Note that the randomness enters the different dimensions of  $\bs \epsilon_{i,n}$ through the random elements as well as their indices.} 
\HDB{
The mapping
     $\mathbf H$ denotes the time-varying functional data generating mechanism. More specifically for any fixed $t \in [0,1]$, $\big ( \mathbf H(t, \cdot, \mathcal{F}_i)\big )_{i \in \mathbb{Z}}$ defines a stationary functional time series in $ {\cal L}^2\big ( [0,1],\mathbb{R}^p\big)$ (with a distribution depending on $t$), and we denote by $\boldsymbol \epsilon_{i,n} = \mathbf H(i/n,  \cdot , \mathcal{F}_i)$ 
     the (functional) observation  in $ {\cal L}^2\big ( [0,1],\mathbb{R}^p\big) $ at time point $t= i/n$.   As a consequence the time series $(\mf X_{i,n})_{i=1, \ldots , n} $   in  $ {\cal L}^2\big ( [0,1],\mathbb{R}^p\big) $   defined by \eqref{eq:model} is not necessarily stationary.
     }
     
 \HDB{We observe a sample of $n$ of vector-valued  functions, where  each function is evaluated  on $N$ grid points, say $  1/N, \cdots, N/N$. Moreover,  the  methods developed in this paper are  suitable for continuously observed functions. In this  case,  we can still discretize the functions on the grid $ u =1/N, \cdots, N/N$.} We assume that the stochastic process $\mf H$ is a  multivariate locally stationary functional time series in the following sense. 
 
\begin{definition} \label{defloc}
Let $(\eta_i^{\prime})_{i \in \mathbb Z}$ denote an independent copy of 
of $(\eta_i)_{i \in \mathbb Z}$ and define $\mathcal F^{*}_i = (\cdots, \eta_{-1}, \eta_0^{\prime}, \eta_1,\cdots,\eta_i)$. Define the element-wise $q$-order functional dependence measure as 
    \begin{align}
        \delta_{q, i, j}:= \delta_{q, i}(H_j) =\sup_{u, t \in [0,1]}\mathbb \|H_j(t, u, \mathcal F_i)-H_j(t, u, \mathcal F_i^*)\|_{q},
    \end{align}
where $H_j(t, u, \mathcal F_i)$ is the $j$-th coordinate  of the vector $\mf H(t, u, \mathcal F_i)$. 
Then,  $\mf H(t, u, \mathcal F_i)$ is a multivariate locally stationary functional time series,  if there exists a constant $\chi \in (0,1)$ such that 
\begin{align}
\label{dette1}
    \max_{1 \leq j \leq p} \delta_{1, i,j} = O(\chi^{i}),
\end{align}
and local stationarity is satisfied, namely
    there exists some constant $K > 0$ such that    
    \begin{align}
    \label{dette2}
    \max_{1 \leq j  \leq p} \|H_j(t_1, u, \F_0) -  H_j(t_2, u, \F_0)\| \leq K  |t_1-t_2|,\quad \text{for all}~ u, t_1, t_2 \in [0,1] .
    \end{align}
\end{definition}

It is important to point out that our approach does not rely on a particular Karhunen-Loève expansion. Therefore, it  circumvents the problem  
that the time-varying covariance operator of a multivariate locally stationary process  will lead to time-varying  eigenspaces, which are very difficult  to estimate. In the following discussion we give two examples of multivariate locally stationary processes satisfying the assumptions made in Definition \ref{defloc}.

\begin{example}[Multivariate time-varying Karhunen-Loève-type expansion]
{\rm 
\HDB{Let  $M\in \mathbb N$ and  $\sigma_1, \ldots \sigma_M: [0,1]\to \mathbb{R} $ denote differentiable functions such that   $\sup_{t\in [0,1]} \sum_{m=1}^M (\sigma^{\prime}_m(t))^2 < \infty$. Let $\{\eta_{i,m} ~|~m=1, \ldots , M,~i \in \mathbb{N} \} $ denote  $i.i.d.$  random variables with variance $1$ and define $\eta_{i} = (\eta_{i,1}, \ldots , \eta_{i,M})^\top$ ($i \in \mathbb N$). We consider the (univariate)  model 
\begin{align}
\label{revhd2}
    {H}(t,u, \mathcal F_i) = \sum_{m=1}^{M}  \sigma_m(t)( \eta_{i,m} + 0.5  \eta_{i-1,m})2^{-m} \cos(m\pi u).
\end{align}
Then,  
\begin{align}
\|H(t, u, \mathcal F_1) -  H(t, u, \mathcal F_1^{*})\|^2 & \leq  \sup_{ u\in [0,1]}\Big \{\sum_{m=1}^{M} (2^{-m}\cos(m\pi u))^2 \Big \}  \\
& \qquad \times  \E \Big \{ \sum_{m=1}^{M} \sigma^2_m(t) (     \eta_{1,m} + 0.5  \eta_{0,m}- \eta_{1,m} - 0.5  \eta^{'}_{0,m} )^2 \Big \}   \\
& < 0.25 \sup_{ u\in [0,1]}\Big \{\sum_{m=1}^{M} 4^{-m}  \Big \} \times   \Big \{ \sum_{m=1}^{M} 2\sigma^2_m(t)   <\infty\Big \},
\end{align}
 and $\|H(t, u, \mathcal F_i) -  H(t, u, \mathcal F_i^{*})\| =0$, if $i>1$, where $\eta_{0,m}$ and $\eta_{0,m^{\prime}}$ are $i.i.d.$ random variables. Therefore,  condition  \eqref{dette1} of \cref{defloc}
is satisfied.}
\HDB{Moreover, for $t_1, t_2, u \in [0,1]$, we have 
\begin{align}
   \| {H}(t_1,u, \mathcal F_i) -{H}(t_2,u, \mathcal F_i)\|^2 &= \E \Big \{\sum_{m=1}^{M}  (t_2-t_1)\sigma^{\prime}_m(t^*)( \eta_{i,m} + 0.5  \eta_{i-1,m})2^{-m} \cos(m\pi u)\Big \}^2 \\
   &\leq | t_2-t_1|^2 \sup_{t \in [0,1]}\Big \{\sum_{m=1}^M 1.25 (\sigma^{\prime}_m(t))^2\Big\}   \Big\{\sum_{m=1}^{M} 4^{-m} \Big\} < C | t_2-t_1|^2 ,
\end{align}
where $C$ is a positive constant which is independent of $M$. Therefore, condition \eqref{dette2} 
in Definition \ref{defloc} is also satisfied. \\
The model \eqref{revhd2} is a special case of a  more generous functional time series model:}
\begin{align}
\mf H(t, u, \mathcal F_i) = \sum_{m=1}^{\infty} \lambda_m(t, \mathcal F_i)\bs \psi^{(m)} (t, u),
\end{align}
where for any fixed $t \in [0,1]$, $\bs \psi^{(m)}(t, u)  = (\psi^{(m)}_1(t, u), \ldots,\psi^{(m)}_p(t, u) )^{\top} $, $1\leq m\leq\infty$ is a  complete orthonormal basis of  $ {\cal L}^2\big ( [0,1],\mathbb{R}^p\big)$, $\lambda_m(t, \mathcal F_i) = \int \sum_{j=1}^p H_j(t, u, \mathcal F_i) \psi_j^{(m)} (t, u) du$. If \HDB{the filter $\mathbf H = (H_1, \ldots , H_p)^\top $ (almost surely) is differentiable with respect to its second argument, $\sup_{u, t \in [0,1]}\| H_j(t,u,\mathcal F_i)\| \leq  C\sup_{t \in [0,1]} \| \| H_j(t,\cdot,\mathcal F_i)\|_{\mathcal L_2}\|$},  and 
 \begin{align}
\sup_{t \in [0,1]} \Big \| \Big [\sum_{m=1}^{
\infty} \{\lambda_m(t, \mathcal F_i) - \lambda_m(t, \mathcal F_i^*)\}^2 \Big ]^{1/2} \Big \| &= O(\chi^i)
 \end{align}
  for some $\chi \in (0,1)$, it can be shown that condition  \eqref{dette1} of \cref{defloc}
holds.
Moreover, if the functions $\bs \psi^{(m)}$ are  Lipschitz  continuous and the random variables $\lambda_m$ are 
stochastically  Lipschitz  continuous in the sense that   
\begin{align} 
    \Big \|[\sum_{m=1}^{
\infty} \{\lambda_m(t, \mathcal F_0) - \lambda_m(s, \mathcal F_0)\}^2]^{1/2}  \Big\| &\leq  c_1 |t-s| , \\ 
\Big \|\Big [\sum_{m=1}^{
\infty} \lambda_m^2(t, \mathcal F_0)\Big ]^{1/2} \Big  \| &\leq  c_2,\\
\Big \|[\sum_{m=1}^{
\infty} |\bs \psi_j^{(m)} (t, \cdot)  - \bs \psi_j^{(m)} (s, \cdot) |^2_F]^{1/2}\Big \|_{\mathcal L^2}& \leq  c_3 |t-s| , 
\end{align}
for all $s,t \in [0,1]$, where 
$c_1, c_2$ and $c_3$ are  positive constants,  condition \eqref{dette2} 
in Definition \ref{defloc} is also satisfied, and  the process $\mf H$ is a multivariate 
locally stationary process. }
\end{example}

\begin{example}[Multivariate locally stationary iterated functional time series]
\par 
\HDB{Consider the functional AR(p) model (see also Section 3.2 of \cite{bosq2000linear}), 
\begin{align}
    Z_i^{(t)} = \Psi^{(t)} ( Z_{i-1}^{(t)}) + \eta_i, \label{eq:far} 
\end{align}
where $\eta_i$ are $i.i.d.$ random elements in  $ {\cal L}^2\big ( [0,1],\mathbb{R}^p\big)$ and  $\big \{ \Psi^{(t)} ~|~t \in [0,1] \} $ is a family of linear operators, for example integral operators of the form   
 $$
 \Psi^{(t)} ( Z_{i-1}^{(t)})= \int_0^1 \rho(t) \bs A(\cdot,u)Z_{i-1}^{(t)}(u) du
 $$ with some kernel $\bs A: [0,1]\times[0,1] \to \mathbb R^{p \times p}$ and a function $\rho : [0,1] \to \mathbb{R}$.  If there exists a constant $j^0 \geq 1 $ such that the norm of the linear operator satisfies 
 $$
\| (\Psi^{(t)})^{j_0}\|_{\mathcal L}:= \sup_{x \in {\cal L}^2\big ( [0,1],\mathbb{R}^p\big), \||x|_F\|_{\mathcal{L}_2} \leq 1}\||(\Psi^{(t)})^{j_0}(x) |_F\|_{\mathcal{L}_2} <  1,
$$
then the process \eqref{eq:far} can be represented in the form  
\begin{align}
    Z_i^{(t)} (\cdot) = \mf H(t, \cdot, \mathcal F_i)  :=  \eta_{i}    (\cdot) +  \sum_{j=1}^\infty(\Psi^{(t)})^j \eta_{i-j}  (\cdot)  .
\end{align}
\HDB{
If  $\mf H$ is almost surely differentiable with respect to its second argument with uniformly bounded derivatives, i.e., $\sup_{t,u\in[0,1]}
\left|\partial_u \left|\mathbf H(t,u,\mathcal F_i)\right|_F\right|
\leq L < \infty $ (almost surely), then uniformly for all $j=1,\ldots, p$, $t \in [0,1]$, we have   $ \sup_{u \in [0,1]}\|   H_j (t,u,\mathcal F_i) \| \leq \| \sup_{u \in [0,1]}|\mf H (t,u,\mathcal F_i)|_F\| \leq  c \| \||\mf H (t,\cdot,\mathcal F_i)|_F\|_{\mathcal L_2}\|$ for a positive constant $c$. 
  By the definition of the operator norm, it follows that  uniformly for $t \in [0,1]$, $j = 1,\ldots, p$,
\begin{align}
  \sup_{u \in [0,1]}\|   H_j (t,u,\mathcal F_i)  - H_j (t,u,\mathcal F_i^*) \|^2  &\leq   c^2  \|\| |\mf H(t, \cdot, \mathcal F_i)-\mf H(t, \cdot, \mathcal F_i^{*})|_F\|_{\mathcal L^2}\|^2\\ & \leq  c^2  \E \|| (\Psi^{(t)})^i  (\eta_{0} - \eta_{0}^{\prime})|_F\|_{\mathcal L^2}^2\\ &\leq  c^2  \|  (\Psi^{(t)})^i \|_{\mathcal L }^2  \E\||(\eta_{0} - \eta_{0}^{\prime})|_F\|_{\mathcal L^2}^2 = O(\chi^{2i}),
\end{align}
for some $\chi \in (0,1)$. This means that \eqref{dette1} is satisfied. If we further assume that  $\||\Psi^{(t_1)} - \Psi^{(t_2)} |_F\|_{\mathcal L_2} < C|t_1 - t_2| $ for any $t_1, t_2 \in [0,1]$ (with some positive constant $C$),  we have for another positive constant $C^{\prime}$
\begin{align}
  &\sup_{u \in [0,1]}\|   H_j (t_1,u,\mathcal F_i) - H_j (t_2,u,\mathcal F_i) \|^2 \\ & \leq   c^2 \big  \|  |  \mf   H(t_1, \cdot, \mathcal F_i)-\mf H(t_2, \cdot, \mathcal F_i )|_F\big \|^2_{\mathcal L^2} \\ &=   c^2  \E \Big \|\Big |\sum_{j=0}^\infty(\Psi^{(t_1)} -\Psi^{(t_2)} )^j \eta_{i-j} \Big |_F\Big \|_{\mathcal L^2 }^2 \\ 
      & \leq   c^2\sum_{j=1}^\infty\| (\Psi^{(t_1)} -\Psi^{(t_2)} )^j \|_{\mathcal L }^2 \E \||\eta_{i-j}|_F\|_{\mathcal L^2 }^2\\
      & \leq C^{\prime} |t_1-t_2|^2.
\end{align}}}
{\rm Similar arguments can be used to establish these properties for general  non-linear processes defined by 
\begin{align}\label{eq:lsiterative}
    \mf Z_i^{\HDB{(t)}}(\cdot) = R(t, \mf Z_{i-1}^{\HDB{(t)}}(\cdot), \bs \epsilon_i),
\end{align}
where  $t \in [0,1]$ and 
$\mf Z_{i-1}^{\HDB{(t)}}(\cdot) \in {\cal L}^2\big ( [0,1],\mathbb{R}^p\big)$ is a $p$-dimensional  vector of  square-integrable random functions on the interval $[0,1]$
 and   $R: [0,1]  \times {\cal L}^2\big ( [0,1],\mathbb{R}^p\big)\times {\cal L}^2\big ( [0,1],\mathbb{R}^p\big) \to {\cal L}^2\big ( [0,1],\mathbb{R}^p\big)$ 
  is a measurable function. 
  If the dimension is $1$ and  stationarity (which means that the function $R$ in \eqref{eq:lsiterative} does not depend on $t$) is assumed, the process defined by \eqref{eq:lsiterative} includes both the functional AR($p$) model \citep{kokoszka2017introduction} and the functional GARCH model \citep{functionalgarch2020}.

\par 
Note that the Hilbert space ${\cal L}^2\big ( [0,1],\mathbb{R}^p\big)$ is a complete and separable metric space. Therefore, if there exists a square-integrable function $
\mf Z_0(\cdot)$ 
\begin{align}
    &\sup_{t \in [0,1]} \|\| |R(t, 
\mf Z_0(\cdot), \bs \epsilon_i)|_F\|_{\mathcal{L}^2}\| < \infty, \label{eq:momentq}\\
    &\chi_1 := \sup_{t \in [0,1]} \sup_{\mf x(\cdot) \neq \mf y(\cdot) \in {\cal L}^2\big ( [0,1],\mathbb{R}^p\big)} \frac{\|\| | R(t, \mf x(\cdot), \bs \epsilon_i) - R(t, \mf y(\cdot), \bs \epsilon_i)|_F\|_{\mathcal{L}^2}\|}{\||\mf x(\cdot) - \mf y(\cdot)|_F\|_{\mathcal{L}^2}} < 1, \label{eq:chi1} 
\end{align}
it follows from 
  Theorem 2 of \cite{wu2004limit} that,  for any $t \in [0,1]$, the iterative equation  \eqref{eq:lsiterative} admits a unique stationary solution
\begin{align}
    \mf Z_i^{\HDB{(t)}}(\cdot) = \mf H(t, \cdot, \mathcal{F}_i), 
\end{align} 
such that 
\begin{align}
    \sup_{t \in [0,1]} \left\|\| |\mf H(t,  \cdot, \mathcal{F}_i) -  \mf H(t, \cdot, \mathcal{F}_i^*)|_F\|_{\mathcal{L}^2} \right\| = \mathcal{O}(\chi_1^i).
\end{align}
 Furthermore, assume that  \HDB{the filter $\mathbf H$ is differentiable with respect to its second argument, so that $\sup_{u \in [0,1]}\| H_j(t,u,\mathcal F_i)\| \leq  C \| \| H_j(t,\cdot,\mathcal F_i)\|_{\mathcal L_2}\|$} and that $\sup_{t \in [0,1]} M(\mf  H(t, \cdot, \mathcal{F}_0))  < \infty$,
 where 
\begin{align}
   M(\mf  x(\cdot)) = \sup_{0 \leq t < s \leq 1} \frac{\|\|| R(t, \mf  x(\cdot), \bs \epsilon_i) - R(s, \mf  x(\cdot), \bs \epsilon_i)|_F\|_{\mathcal{L}^2}\|}{|t - s|}.
\end{align}
 Then,  
 a straightforward but tedious calculations  shows
there exists a positive constant $C$  such that \ 
\begin{align}
    \max_{1 \leq j \leq p} \sup_{u \in [0,1]} \|H_j(t, u, \mathcal{F}_0) - H_j (s, u, \mathcal{F}_0)\| \leq C |t - s|,
\end{align}
and a constant $\chi_2 \in (0,1)$ such that for $i > 0$, $1 \leq j \leq p$, $\delta_{1,i,j} \leq \delta_{2,i,j} = \mathcal{O}(\chi_2^i)$. 
As a consequence, the process \eqref{eq:lsiterative} is  a
multivariate locally stationary functional time series in the sense of Definition \ref{defloc}.

}

\end{example}

\subsection{Testing procedures} 

Under appropriate assumptions  on  the filter $\mf H $ (which will be given below), we can define  the  auto-covariance function 
at lag $k$ as 
\begin{align}
\bs  \Gamma_k:  \left\{ 
\begin{array}{l}
       [0,1]^2 \to  \mathbb R \\
     (t, u) \mapsto     \bs \Gamma_k(t, u) = \mathrm{Cov}(\mf H(t, u,  \mathcal F_0), \mf H(t, u,\mathcal  F_k))
\end{array}
\right.  ~, 
\nonumber
\end{align}
which directly measures the cross-covariance of lag $k$ of the random observations at each time point in the functional time series.

In order to construct a 
test for white noise of the  multivariate locally stationary functional time series  $\{ \mf X_{i,n} \colon  i=1 , \ldots , n \} $,  
we consider the hypothesis
\begin{align}
    H_0:  \bs \Gamma_k(t, u) = \mf 0, \text{   for all  } ~k \in \mathbb N ,\ (t,u)^\top \in [0,1]^2. \label{eq:H0}
\end{align}
\HDB{In \eqref{eq:H0}, the difference of time is shown in the index $k$, which is the time-lag in the covariance. To see this,  note that $\mathrm{Cov} (\boldsymbol \epsilon_{i,n}(u), \boldsymbol  \epsilon_{j,n}(u)) \approx \Gamma_{|j-i|}(i/n, u)$, if $|j-i|/n \to 0$. }
An extension to testing the hypothesis
\begin{align}
    H_0: \mathrm{Cov}(\mf H(t, u,  \mathcal F_0), \mf H(t, v,\mathcal  F_k)) = \mf 0, \text{ for all }  k\in \mathbb{N} \text{ and all } (t,u,v)^\top  \in [0,1]^3, \label{eq:extended}
\end{align}
which includes the  \HDB{cross dependence between functional locations} $u$ and $v$, 
will be discussed in Section \ref{sec:extend}.  
Specifically, we will develop a Box-Pierce portmanteau-type test  using the statistic 
\begin{align}
\max_{1 \leq i \leq n, 1 \leq j \leq N} \sum_{k=1}^{s_n}\mathrm{tr}\big \{\widehat{\mf G}_{k} (i/n, j/N) \big \}, \label{eq:estimator}
\end{align}
where $\mathrm{tr}\{A\}$ denotes the trace of  a square matrix $A$,
$s_n$ is the number of lags under consideration which is increasing with the sample size $n$.  Here  $\widehat{\mf G}_{k} (t, u)$ is a  nonparametric  estimate of 
${\mf G}^{\circ}_{k} (t, u)= 2
\bs \Gamma_k(t,u)\bs \Gamma^{\top}_k(t,u)$ and  $\mathrm{tr}\big \{\widehat{\mf G}_{k} (t, u)\big \}$ is therefore the  nonparametric  estimate of $2|\bs \Gamma_k(t,u)|_F^2$,
where $| \cdot |_F$ denotes the Frobenius norm of a matrix.
In the multivariate functional time series, there could be dense signals on different lags and different dimensions of unknown form. Therefore, the  portmanteau-type statistic \eqref{eq:estimator}
combines a maximum with a sum 
to   enhance the power by summing up all potentially  weak signals.

For the construction of the estimator $\widehat{\mf G}_{k} (t, u)$ we  consider the nonparametric residuals 
\begin{align}
\label{det10}
    \hat{\bs \epsilon}_{i,n}(u) = \mf X_{i,n}(u) - \hat{\mf m}(i/n, u), 
\end{align}
where 
\begin{align}
    \label{det30}
\hat{\mf m}(t,u) = \frac{1}{nb}\sum_{i=1}^n \mf X_{i,n}(u) K_b(i/n-t),
\end{align}
is a 
Nadaraya Watson-type estimate with bandwidth $b$, $K_{b} (x) =K (x/b)  $,  $K$ is a symmetric
 kernel function  
 supported  on the interval on the interval $(-1,1)$ with  $\int K(t) dt = 1$. 
With the notation  
$$ 
\hat{\bs \phi}_{i,k}(u)=\hat {\bs \epsilon}_{i-k}(u) \hat{\bs \epsilon}_i^{\top} (u),
$$
the estimate of $2|\bs \Gamma_k(t,u)|_F^2$ is  defined by 
$
\mathrm{tr}\{\widehat{\mf G}_{k}(t,u)\} ,
$
where 
\begin{align}
\widehat{\mf G}_{k}(t,u)   &= 
\frac{1}{n\tau s_n} \sum_{i=1}^{n} K_{\tau}(i/n-t) \hat{\bs \phi}_{i,k} (u)\sum_{l=1 }^{n} K_{\tau}(l/n-t)\hat{\bs \phi}^{\top}_{l,k}(u) \mf 1(l \in \mathbb L_i) ,\label{eq:hatgamma}
\end{align} 
where $\mathbb L_i = \{l: M_n < |l-i| \leq s_n\}$ 
and  $\mf 1(\cdot)$ is the indicator function. 
Note that
 \begin{align}
     \label{det33} 
 \hat{\bs \Gamma}_k(t,\cdot) = \frac{1}{n\tau} \sum_{i=1}^n  K_{\tau} (i/n-t) \hat{\bs \phi}_{i,k}(\cdot) = \frac{1}{n\tau} \sum_{i=1}^n\hat{\bs \epsilon}_{i-k}(\cdot)\hat{\bs \epsilon}^{\top}_i(\cdot) K_{\tau} (i/n-t)
 \end{align}
 is the Nadaraya-Watson estimator of  $\bs \Gamma_k(t,\cdot)$ and that the set 
$\mathbb L_i $ is used to control the amount of dependence after multiplying the two sums in \eqref{eq:hatgamma}. 
\HDB{Note that the kernel is only a function of $t$,  since our method does not depend on pre-smoothing, and treats each function as an element.}
In particular, an appropriate choice of $M_n  \to \infty $ guarantees  that the fourth-order cumulants are negligible; see
Lemma \ref{lm:4th} in the appendix for more details. Then, the null hypothesis is rejected for large values of  the statistic in \eqref{eq:estimator}. 

In order to obtain critical values we will develop a difference-based block multiplier bootstrap, which is based on a Gaussian approximation result.  In the following remark, we indicate why the derivation of results of this type is extremely challenging. 

\begin{remark} \label{remu}
{\rm 
Note that the statistic in \eqref{eq:estimator} is a 
maximum of $U$-statistics with a kernel depending on the sample size $n$,  as the random variable 
$\sum_{k=1}^{s_n}\mathrm{tr}\{\widehat{\mf G}_{k} (t,u)\} $ admits the representation 
\begin{align}
    \sum_{k=1}^{s_n} \mathrm{tr}\{\widehat{\mf G}_k(t,u)\} =  \frac{1}{n\tau s_n} \sum_{i\neq l, 1\leq i,l\leq n} k_n\big (\{ \hat{\bs \phi}_{i,k} (u)\}_{k=1}^{s_n},  \{\hat{\bs \phi}_{l,k} (u)\}_{k=1}^{s_n} , t\big ) .  \label{eq:U}
\end{align}
Here, for sequences of 
$p\times p$ matrices $\{ x_{i,k} \}_{k=1}^{s_n}$ and $\{ x_{l,k} \}_{k=1}^{s_n}$, the kernel $k_n(\cdot, \cdot, t): (\R^{p\times p})^{s_n}\times (\R^{p\times p})^{s_n}\to \mathbb R$   depends on the sample size $n$ and is defined by 
$$
k_n\big ( \{ x_{i,k} \}_{k=1}^{s_n}, \{ x_{l,k} \}_{k=1}^{s_n}, t \big ) = \sum_{k=1}^{s_n} \mathrm{tr}(x_{i, k} x_{l, k}^{\top}) K_{\tau}(i/n - t)K_{\tau}(l/n - t) \mf 1(M_n <|l-i| \leq s_n).
$$
Note that $k_n(\cdot,\cdot,t)$ depends also on $M_n$ and $s_n$, which appear in the indicator. This dependence is not reflected in the notation.
To see that \eqref{eq:U} can be approximated by a degenerate $U$-statistic,
we first note that the estimation error is negligible, such that $\hat{\bs \phi}_{i,k} (u)$ can be asymptotically substituted by ${\bs \phi}_{i,k} (u)= {\bs \epsilon}_{i-k}(u) {\bs \epsilon}_i^{\top} (u)$
(details for this argument can be found in the  proof of \cref{prop:esti} in the appendices). Under the assumption of short-range dependence, we can approximate ${\bs \phi}_{i,k} (u)$ by the counterpart with  $m$-dependent errors, 
i.e., $\check {\bs \phi}_{i,k} (u)=\check {\bs \epsilon}_{i-k}(u) \check {\bs \epsilon}_i^\top (u)$,  where 
$\check {\bs \epsilon}_i (u) = \E( {\bs \epsilon}_i (u)|(\eta_{i-m}, \ldots, \eta_i)
)$.  In summary,  under mild conditions  and due to the symmetry of $k_n$, we obtain for  $m/s_n \to 0$ the approximation  
\begin{align}
      \sum_{k=1}^{s_n} \mathrm{tr}\{\hat {\mf G}_k(t,u)\}&\approx \frac{1}{n\tau s_n} \sum_{i \neq l, i,l=1}^{n}k_n(\{\check{\bs \phi}_{i,k} (u)\}_{k=1}^{s_n},  \{\check{\bs \phi}_{l,k} (u)\}_{k=1}^{s_n}, t )\\ 
      & = \frac{2}{n\tau s_n} \sum_{i=1}^{n}\sum_{l=1}^{i-1}k_n(\{\check{\bs \phi}_{i,k} (u)\}_{k=1}^{s_n},  \{\check{\bs \phi}_{l,k} (u)\}_{k=1}^{s_n}, t ). 
\end{align}
To see the reason why the right hand side of the above is degenerate, 
we note that a straightforward calculation shows for  $M_n > 4m$, $M_n/s_n \to 0$,
\begin{align}
    &\E[k_n(\{\check {\bs \phi}_{i,k} (u)\}_{k=1}^{s_n},  \{\check {\bs \phi}_{l,k} (u)\}_{k=1}^{s_n}, t)|\{\check {\bs \phi}_{l,k} (u)\}_{k=1}^{s_n}] \\ 
    &= \E\Big\{K_{\tau}(i/n - t)K_{\tau}(l/n - t) \mf 1(M_n <|l-i| \leq s_n)\\ 
    & \times \Big (\sum_{k=3m+1}^{s_n} \E[\mathrm{tr}\{ \check{\bs \epsilon}_{i-k}(u) \check {\bs \epsilon}^{\top}_{i}(u)\check {\bs \epsilon}_{l-k}(u) \check {\bs \epsilon}^{\top}_{l}(u)\}|\F_{l-s_n-m}^{l}] \\ &  +\sum_{k=1}^{3m} \E[\mathrm{tr}\{ \check{\bs \epsilon}_{i-k}(u) \check {\bs \epsilon}^{\top}_{i}(u)\check {\bs \epsilon}_{l-k}(u) \check {\bs \epsilon}^{\top}_{l}(u)\}|\F_{l-s_n-m}^{l}]\Big)|\{\check {\bs \phi}_{l,k} (u)\}_{k=1}^{s_n}
   \Big \}\\
    & = 0,
\end{align}
since under $M_n< |l-i|\leq s_n$, $l\leq i-1$, when $k > 3m$, $\check{\bs \epsilon}_{i}(u)$ is $\F_{i-m}^{i}$ measurable and hence independent of $\F_{l-s_n-m}^{l}$  and  $\F_{i-k-m}^{i-k}$, and when $k \leq 3m$, $\check{\bs \epsilon}_{i-k}(u) \check {\bs \epsilon}^{\top}_{i}(u)$ is $\F_{i-k-m}^{i}$ measurable and hence independent of $\F_{l-s_n-m}^{l}$.

Note  that $\sum_{l=1}^{i-1} k_n \big( \{\check{\bs \phi}_{i,k} (u) \}_{k=1}^{s_n},  \{\check{\bs \phi}_{l,k} (u)\}_{k=1}^{s_n}, t \big )$ is $\F_i$ measurable. Therefore, observing the  degeneracy of the statistic under consideration a natural  approach to derive the asymptotic distribution 
would be constructing martingale differences and using  a martingale central limit theorem \citep[see, for example,][]{HALL19841}.
However, a  multivariate version of  the martingale central limit theorem is in general not available, see \cite{phillipp1991}. 
Moreover, the present situation is even more complicated as the dimension 
of the vectors  $\{\check{\bs \phi}_{i,k} (u) \}_{k=1}^{s_n} \in (\mathbb{R}^{p \times p})^{s_n} $ is increasing with the sample size and the dependence over $u\in(0,1)$ among these  functional vectors 
has to  be taken into account.
}
 \end{remark}
 
Despite the difficulties indicated in the previous remark  we  prove  in  
\cref{thm:ga}  under the null hypothesis  the existence of  a    Gaussian vector  $(\mf Z_i)_{i=1}^{2\nt}  \in \mathbb{R}^{N(n-2\nt+1)}$ such that
\begin{align}
    &\sup_{x \in \mathbb R}\Big  | \mathbb{P} \Big ( \sqrt{n\tau}\underset{\substack{ u\in [0,1] \\ t
    \in [\tau, 1-\tau] } }{\sup}  \Big |\sum_{k=1}^{s_n}\mathrm{tr}\big \{\widehat{\mf G}_{k} (t, u)\big \} \Big |\leq x\Big )   
    - \mathbb{P} \Big  (\frac{1}{s_n\sqrt{n\tau} }\Big   |\sum_{i=1}^{2\nt}\mf Z_i \Big  |_{\infty} \leq x\Big  ) \Big| = o(1),\label{eq:ganulla} 
\end{align}
where $\lceil x \rceil$ is the smallest integer that is larger than or equal to $x$
and $| \cdot  |_\infty$ denotes the maximum norm of a vector (its dimension will always be clear from the context).

It  will be shown below that the statistic defined in \eqref{eq:hatgamma}  is also asymptotically unbiased for ${\mf G}^{\circ}_{k} (t, u)$. 
In this way,
we can obtain a statistic which is close to $0$  under the null hypothesis
and large under the alternative. Moreover, by \eqref{eq:ganulla}, the critical values for the statistic
$$
\sqrt{n\tau}\underset{{ u\in [0,1], t
    \in [\tau, 1-\tau] } }{\sup}  \Big |\sum_{k=1}^{s_n}\mathrm{tr}\big \{\widehat{\mf G}_{k} (t, u)\big \} \Big |
$$
can be obtained from the quantiles of the distribution of $\frac{1}{s_n\sqrt{n\tau}} |\sum_{i=1}^{2\nt}\mf Z_i  |_{\infty}$.
The autocovariance structure of  the sequence $(\mf Z_i)_{i=1}^{2\nt}$  is complicated and will be described in detail in Section \ref{sec3} below,
but we briefly motivate this result here. To be precise, we recall the notation of the nonparametric residuals in 
\eqref{det10} and the formulation of $U$-statistic in \eqref{eq:U}. In order to linearize the  $U$-statistic in \eqref{eq:U}, we introduce the notation
\begin{align}
    \hat{ \mf U}_i (t, u) =\sum_{k=1}^{s_n} \big \{   \hat{\bs \phi}_{i,k}(u) \hat{\bs \psi}^{\top}_{i,k}(t,u) +  \hat{\bs \psi}_{i,k}(t,u) \hat{\bs \phi}^{\top}_{i,k}(u)\big \} ~, 
 \end{align}
 where 
 \begin{eqnarray}
 \hat{\bs \psi}_{i,k}(t,u) & = & \sum_{\ell = (i-s_n) \vee 1}^{i- M_n - 1}  K_{\tau}( \ell/n - t) \hat{\bs \phi}_{\ell, k}(u).  \label{det12}
 \end{eqnarray} 
This yields  the representation 
\begin{align}
   \label{hol123} 
\sum_{k=1}^{s_n} \hat {\mf G}_k(t,u) =  \frac{1}{n\tau s_n} \sum_{i=1}^{n}  K_{\tau}(i/n - t)  \hat{\mf  U}_i(t, u).
\end{align}
Next we consider the partition $\{ j/N \}_{j=1 , \ldots , N} $ of the  interval $[0,1]$, 
define the $N$-dimensional vectors 
\begin{align}
    \hat {\mf V}_{i,l}:= K_{\tau}(i/n - l/n)  \big (\mathrm{tr} (\hat{\mf U}_{i}(l/n, j/N) \big )_{1 \leq j \leq N} \in \mathbb{R}^N  \label{eq:hatV1}
\end{align}
and  introduce the $N(n - 2\nt+1)$-dimensional  vectors
\begin{equation} \label{det25}
\hat{\mf V}_i = (\hat {\mf V}_{i,\nt} , \hat {\mf V}_{i + 1,\nt + 1}, \cdots,\hat {\mf V}_{i + n - 2\nt, n-\nt} )^{\top}  ~~~~~ (i=1, \ldots ,  {2\nt} ). 
\end{equation}

Together with \eqref{hol123}, we obtain   the approximation  
$$
\sup_{t \in [\tau, 1 - \tau], u \in [0,1]}\sum_{k=1}^{s_n}\mathrm{tr}\big \{ \widehat{\mf G}_k(t,u) \big \} \approx \frac{1}{n\tau s_n} \Big |\sum_{i=1}^{2\nt} \hat{\mf V}_i \Big  |_{\infty} =  \frac{1}{n\tau s_n}  \max_{\nt \leq l \leq n- \nt}\Big  |\sum_{i=1}^{2\nt} \hat{\mf V}_{i,l}  \Big  |_{\infty} ,
$$
for which \eqref{eq:ganulla} provides a Gaussian approximation,  where the vector
$({ \mf Z}_i)_{i=1}^{2\nt}$ has the same covariance structure as 
$(\hat{ \mf V}_i)_{i=1}^{2\nt}$.
\\

The Gaussian approximation result \eqref{eq:ganulla} inspires us to consider   the test statistic 
\begin{align}
    Q_n = \underset{\nt \leq i \leq n - \nt, 1 \leq j \leq N}{\max}\Big  |\sqrt{n\tau s_n} \sum_{k=1}^{s_n}\mathrm{tr}\big \{\widehat{\mf G}_{k} (i/n, j/N)\big \} \Big  |,\label{eq:testQ}
\end{align}
for the construction of a test for the white noise hypothesis  \eqref{eq:H0}
and to reject the null hypotheisis for large values of $Q_n$.
\HDB{ Since it is not clear if the limiting distribution of  $Q_n$ exists,} we devise a bootstrap procedure to implement a test for the hypotheses  \eqref{eq:H0}, 
which adapts to the non-stationarity of the underlying high-dimensional functional time series by mimicking the distribution of the high-dimensional Gaussian process $(\mf Z_k)_{k=1}^{2\nt}$ using the estimable process \textbf{$(\hat{ \mf V}_i)_{i=1}^{2\nt}$}.  
More precisely, we propose to reject the null hypothesis in
\eqref{eq:H0} whenever
\begin{align}
    \label{hol1}
    Q_n/\sqrt{s_n} > \tilde  r_{\bt}~,
\end{align}
where $\tilde  r_{\bt}$ is a quantile  
obtained by a multiplier bootstrap procedure yielding an asymptotic level $\alpha$ test which is described in detail \HDB{in  
Algorithm \ref{alg:boot2} of the following section.}

\begin{remark}
    {\rm
Portmanteau-type statistics have been developed  by 
\cite{kokoszka2017inference} and \cite{rice2020tests}  for (stationary) functional time series, among others, who propose a sum of squared $\mathcal L^2$-norms  of estimates of the auto-covariance functions under conditional heteroscedasticity. Our approach differs from the aforementioned works with respect to several aspects.  First, we consider multivariate functional time series. Second, we do not assume stationarity,  working under the locally stationary framework instead. Third,  we consider the maximum of the functional estimates of the Frobenius norms of   the auto-covariance functions  over different lags.  Fourth, we allow  the number of considered lags  $s_n$ in $Q_n$   to diverge as sample size increases.  Theorem \ref{thm:Qn}  below  provides  a Gaussian approximation for the distribution of the statistic $Q_n$ in this case, which motivates the} bootstrap procedure in Algorithm \ref{alg:boot2}.  
\end{remark}

\section{Main results} \label{sec3} 
  \def\theequation{3.\arabic{equation}}	
  \setcounter{equation}{0}
  Throughout this section we assume that the error process  $\{ \bs \epsilon_{i,n} \}_{i=1, \ldots ,n}$ in model \eqref{eq:model} is a multivariate locally stationary process  in sense of Definition \ref{defloc}.
Moreover, for our main results we  require  the following assumptions.   
 
\begin{assumption} ~~ 
\begin{enumerate}
    \item[(1)] There exist constants $t_0 > 0$, $c_0 > 0$ such that $\sup_{u, t \in [0,1]}\mathbb E (\exp(t_0|H_j(t, u, \mathcal F_0)|)) < c_0$  uniformly with respect to $j = 1, \ldots, p$.
    \item[(2)]  
   For any  $t \in (0,1)$ the function $u\to H_j(t, u, \F_i)$ is almost surely differentiable  with partial derivative   
    $L_j(t, u, \FF_i) = \frac{\partial H_j(t, u, \FF_i)}{\partial u} $. For $t=0,1$, we 
    define $L_j(t, 0,  \FF_i) = L_j(t, 0+,  \FF_i)$, $L_j( t, 1, \FF_i) = L_j( t, 1-, \FF_i)$. There exists a constant $\chi_0 \in (0,1)$, and $s^*\geq 4$, for all $j$ such that $$
    \delta_{s^*, i}(L_j) = O(\chi_0^i),
    $$
    and there exists a constant $M_0$ such that 
    \begin{align}
       \sup_{t, u \in [0,1]} \| L_j( t, u, \FF_i)\|_{s^*} < M_0.
    \end{align}
    \item[(3)]For each fixed $u \in [0,1]$, the function $t \to \mf m(t, u)$ is third-order
 differentiable and there exists a positive constant $M_1$ such that
\begin{align}
    \sup_{t, u \in [0,1]}\Big |\frac{\partial^3}{\partial^3 t} \mf m(t,u) \Big |_F \leq M_1.
\end{align}
\item[(4)] There is a positive constant $c$ such that for all $u \in [0,1]$, $\lambda_{\min}(\bs \Gamma_0(t,u)\bs \Gamma_0^{\top}(t,u)) \geq c > 0$, where, for a symmetric matrix $\mf A$,  $\lambda_{\min} (\mf A)$ denotes   the smallest eigenvalue of  $\mf A$. 
\end{enumerate}
\label{ass:expo}
\end{assumption}
\HDB{Assumption \ref{ass:expo}(1) ensures sub-Gaussian tails of the error process. The second condition of Assumption  \ref{ass:expo} guarantees the dependence and moment condition for  the derivative process with respect to the second argument of the function $\mathbf H$ (which is defined almost surely).   It is also satisfied if  $\mf H$ is almost surely differentiable with respect to its second argument with uniformly bounded derivatives, i.e., $\sup_{t,u\in[0,1]}
\left|\partial_u \left|\mathbf H(t,u,\mathcal F_i)\right|_F\right|
\leq L < \infty $ almost surely.  Assumption \ref{ass:expo}(3) requires the smoothness of the mean function with respect to $t$, which is a regularity condition for kernel estimation for the mean function.  Assumption \ref{ass:expo}(4)  guarantees  that the  matrix $\bs \Gamma_0(t,u)\bs \Gamma_0^{\top}(t,u)$ is non-degenerate, which is related to the variance of $Q_n$.}

Under this assumption,  we provide an upper bound for the  joint cumulants of the locally stationary functional time series, which implies a uniform summability condition. This will be crucial for the analysis of the expectation and variance of the proposed statistic and hence the Gaussian approximation; see \cref{lm:cumk} below for more details.

\begin{assumption}
There exists positive constants $g_1, g_2$ such that 
\begin{align}
   \sup_{t \in (0,1), u \in [0,1]}  \sum_{k=1}^{\infty}|\bs \Gamma_k(t,u)|_F^2 < g_1,
\end{align}
for $t \in [0,1]$, the derivatives of $\sum_{k=1}^{\infty}|\bs \Gamma_k(t,u)|_F^2$ with respect to $u$ are uniformly bounded, and for $t_1, t_2 \in (0,1)$,
\begin{align}
    \sup_{ u \in [0,1]}  \left|\sum_{k=1}^{\infty}\{|\bs \Gamma_k(t_1,u)|_F^2 -|\bs \Gamma_k(t_2,u)|_F^2\}\right|\leq g_2|t_1-t_2|.
\end{align}
\label{ass:diff}
\end{assumption}

\HDB{The first part of Assumption \ref{ass:diff}
 guarantees that the autocovariance matrices are summable in terms of the Frobenius norm (uniformly with respect to $u,t$), while the second part  of Assumption \ref{ass:diff} ensures that the sum is uniformly Lipschitz  continuous with respect to $t$. It is  satisfied under  weak spatial-temporal dependence. For example, the auto-covariance function  $\mathbf \Gamma_k (t,u)  = \chi^k  \sin(t) \cos(u) \mathbf \mathbf \mathrm{diag}(\rho, \rho^2, \cdots, \rho^p)$ satisfies the two conditions in Assumption \ref{ass:diff} if $\kappa,\rho \in (0,1)$.  }

In the following, we establish the Gaussian approximation scheme for the distribution of the statistic $Q_n$.
For this purpose, we introduce the analogues of vectors \eqref{eq:hatV1}, 
where the nonparametric residuals have been replaced by the true errors in model
\eqref{eq:model}, that is 
\begin{align}
    {\mf V}_{i,l}:= K_{\tau}(i/n - l/n)  \big (\mathrm{tr} ({\mf U}_{i}(l/n, j/N) \big )_{1 \leq j \leq N} \in \mathbb{R}^N   \label{eq:defVil}
\end{align}
with
\begin{align}
     \mf U_i(t, u) =\sum_{k=1}^{s_n}(\bs \phi_{i,k}(u)\bs \psi^{\top}_{i,k}(t,u) + \bs \psi_{i,k}(t,u)\bs \phi^{\top}_{i,k}(u)) ,\label{eq:defU}
 \end{align}
and 
\begin{align}
   & \bs \phi_{i,k}(u) = \bs \epsilon_{i-k}(u) \bs \epsilon^{\top}_{i}(u) , \\ 
    & \bs \psi_{i,k}(t,u) =\sum_{\ell = (i-s_n) \vee 1}^{i- M_n - 1}  \bs \phi_{\ell ,k}(u)K_{\tau}(\ell /n - t).
    \end{align}
    Finally, we define
the vector 
\begin{align}
\label{det27}
\mf V_i = ({\mf V}_{i,\nt} ,\cdots, {\mf V}_{i + n - 2\nt, n-\nt} )^{\top},
\end{align}
as the analogue of \eqref{det25}.

For some  $q^{\prime} \geq 2$, for some $2\leq q\leq s^*/2$, define the quantity
\begin{align}
      f_n & ={s_n} (nN)^{1/q} ((n\tau)^{-1} + N^{-1}) + s_n \tau^{-2/q^{\prime}} \sqrt{n\tau}\left\{b^2 + (nb)^{-1}\right\} + \tau^{-2/q^{\prime}}\sqrt{s_n \tau/b}.\label{eq:fn}
\end{align}

\begin{theorem}\label{thm:Qn}\label{thm:ga}
If  Assumption \ref{ass:expo} 
is satisfied, $N  = O(n^{\alpha})$, $\alpha \geq 0$, for some $0 \leq \theta < 1/11$, $s_n^{2} = o((n\tau)^{\theta})$,  $\tau \to 0$, $f_n \to 0$,  $(\log n)^4/(nb) \to 0$, $M_n > c \log n$, $M_n = o(s_n)$, $s_n/\log n \to \infty$. Then,  under the null hypothesis there exists a sequence of  Gaussian process $({\mf Z}_k)_{k=1}^{2\nt}$ which shares the autocovariance structure of $({\mf V}_{k})_{k=1}^{2\nt} \in \mathbb R^{N(n-2\nt+1)}$ defined in \eqref{det27} such that 
    \begin{align}
    \sup_{x \in \mathbb R}\Big | \mathbb{P}\Big( \frac{1}{\sqrt{s_n}}Q_n \leq x\Big ) - \mathbb{P}\Big (\frac{1}{ s_n \sqrt{n\tau} }\Big |\sum_{i=1}^{2\nt}{\mf Z}_i \Big |_{\infty} \leq x\Big )\Big| = o(1),\nonumber
\end{align}
where the statistic $Q_n$ is  defined in \eqref{eq:testQ}. 
\end{theorem}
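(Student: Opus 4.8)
The plan is to establish the Gaussian approximation in three stages: (i) reduce the supremum-over-$(t,u)$ statistic $Q_n$ to the finite-dimensional maximum over the grid $\{i/n\}\times\{j/N\}$ with a controlled discretization error; (ii) replace the nonparametric residuals $\hat{\bs\epsilon}_{i,n}$ by the true errors $\bs\epsilon_{i,n}$, thereby replacing $\hat{\mf V}_i$ by $\mf V_i$, again with a negligible error; and (iii) prove the Gaussian approximation for the maximum of the (partial sums of the) high-dimensional, dependent vectors $\mf V_i$ by combining an $m$-dependent approximation, a blocking argument, and a high-dimensional CLT for sums of (approximately) independent blocks. The quantity $f_n$ is precisely the bookkeeping device that aggregates the errors from steps (i) and (ii): the term $s_n(nN)^{1/q}((n\tau)^{-1}+N^{-1})$ controls the grid refinement in $(t,u)$ via a maximal inequality (using the moment/cumulant bounds from \cref{ass:expo} and \cref{lm:cumk}), while $s_n\tau^{-2/q'}\sqrt{n\tau}\{b^2+(nb)^{-1}\}$ and $\tau^{-2/q'}\sqrt{s_n\tau/b}$ control the mean-function estimation bias and stochastic error entering through $\hat{\mf m}$, using Assumption \ref{ass:expo}(3) for the bias and the exponential dependence \eqref{dette1} together with (2) of \cref{ass:expo} for the variance. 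The condition $f_n\to 0$ is exactly what makes all of these contributions $o_{\mathbb P}(1)$ after normalization by $s_n\sqrt{n\tau}$.

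For step (iii) I would first invoke the $m$-dependent approximation already sketched in \cref{remu}: under the exponential dependence $\delta_{1,i,j}=O(\chi^i)$, choosing $m\asymp C\log n$ with $m/s_n\to 0$ (feasible since $s_n/\log n\to\infty$ and $M_n>c\log n$, $M_n=o(s_n)$) replaces $\bs\phi_{i,k}$ by $\check{\bs\phi}_{i,k}$ up to an error that is negligible after summation over the $s_n$ lags and the $n$ time points. The resulting statistic is a sum $\sum_{i=1}^{2\nt}\check{\mf V}_i$ of $m^*$-dependent (for some $m^*\asymp s_n$, since each $\check{\mf U}_i$ aggregates lags up to $s_n$ and a window of length $s_n$) $N(n-2\nt+1)$-dimensional vectors, which by the degeneracy computation in \cref{remu} form a martingale-difference-like array when summed in the natural order. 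I would then partition $\{1,\dots,2\nt\}$ into big blocks of length $\ell_n$ and small separating blocks of length $\asymp s_n$, discard the small blocks (their total contribution is $O(\nt s_n/\ell_n)$ in an appropriate sense, hence negligible for $\ell_n$ chosen with $s_n/\ell_n\to 0$ and $\nt s_n/\ell_n$ small relative to the normalization), and treat the big-block sums as independent. To the resulting sum of independent high-dimensional vectors I apply a Gaussian approximation for maxima of sums of independent random vectors in the spirit of Chernozhukov–Chetverikov–Kato; the non-standard point — and the reason the existing results in \cite{wu2017}, \cite{zhang2018}, \cite{miessteland2023} do not apply — is that the effective dimension $N(n-2\nt+1)$ and the block "variance proxies" both grow with $n$ through $s_n$, so the Gaussian comparison must be carried out with explicit tracking of how the $s_n$-aggregation inflates the variances; this is where the constraint $s_n^2=o((n\tau)^\theta)$ with $\theta<1/11$ enters, ensuring the Berry–Esseen-type remainder (which scales like a power of $s_n^2$ over a power of $n\tau$, with the $1/11$ reflecting the third-moment rate in the max-Gaussian approximation) tends to zero. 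Finally I match the covariance of the Gaussian surrogate across these reductions: since each reduction only perturbs the statistic by $o_{\mathbb P}(1)$ and the $m$-dependent covariance converges to the full covariance of $(\mf V_k)_{k=1}^{2\nt}$ at a geometric rate, I can take $(\mf Z_k)_{k=1}^{2\nt}$ to be Gaussian with exactly the autocovariance structure of $(\mf V_k)_{k=1}^{2\nt}$, and a final Gaussian-to-Gaussian comparison (Gaussian comparison inequality, controlling the difference of covariance matrices in sup-norm) absorbs the discrepancy.

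The main obstacle is step (iii), and within it the Gaussian approximation for the blocked sum: one must simultaneously (a) control the dependence between big blocks — which is not exactly zero because the window in $\hat{\bs\psi}_{i,k}$ reaches back $s_n$ steps, so the separating blocks must have length $\gtrsim s_n$, yet $s_n$ itself diverges; (b) verify the Lindeberg/third-moment conditions for the high-dimensional CLT with the growing effective variances caused by summing $s_n$ lags, which forces the polynomial-in-$s_n$ control encoded by $\theta<1/11$; and (c) ensure the fourth-order cumulant terms are asymptotically negligible so that the limiting object is genuinely governed by second-order structure — this is the role of $M_n\to\infty$ with $M_n=o(s_n)$ and is quantified in \cref{lm:4th}. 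Handling the interplay of these three growing scales ($N$, $s_n$, $n\tau$) in a single Gaussian comparison, while keeping the covariance bookkeeping exact enough to land on the autocovariance structure of $(\mf V_k)_{k=1}^{2\nt}$, is the technical heart of the argument; everything in steps (i) and (ii) is, by comparison, a careful but routine application of maximal inequalities and the cumulant bound of \cref{lm:cumk}.
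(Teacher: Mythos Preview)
Your proposal is correct and follows essentially the same route as the paper: reduction to the grid plus replacement of $\hat{\bs\epsilon}$ by $\bs\epsilon$ (handled via \cref{prop:esti} and a chaining argument for the discretization, with $f_n\to 0$ absorbing both), followed by a high-dimensional Gaussian approximation built from an $M_0$-dependent approximation and Bernstein-type blocking in the style of \cite{zhang2018}. The only refinement worth flagging is that the paper isolates step (iii) into a standalone extension of \cite{zhang2018} (\cref{prop:GAhigh}) whose key relaxation is not primarily the growing \emph{variances} you emphasize but the growing \emph{cumulative dependence} $\sum_j j\theta_{j,k,3}\asymp s_n^2$ (quantified in \cref{lm:delta1}); this is what forces $s_n^2=o((n\tau)^\theta)$ with $\theta<1/11$, and the variance non-degeneracy is handled separately via the explicit limit in \cref{lm:cum}.
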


\HDB{ Theorem \ref{thm:Qn} provides the basis for the following Algorithm \ref{alg:boot2}, which calculates the quantile $\tilde r_{\bt}$ for the decision rule \eqref{hol1}.}

\begin{algorithm} 
\caption{Difference-based block multiplier bootstrap}\label{alg:boot2}
\begin{algorithmic}
\Require The statistics $ Q_n$
and the vectors $\hat{\mf V}_{r,l}$ defined in 
\eqref{eq:testQ}  and \eqref{eq:hatV1}, respectively.
\State For window size $L$, $L \leq j \leq 2\nt -L, 0 \leq l \leq n-2\nt$, calculate  the vectors 
\begin{align}
    \tilde{\mf S}_{j,l} \gets \frac{1}{\sqrt{2s_n^2 L}} \Big ( \sum_{r = j}^{j + L - 1} \hat{\mf V}_{r,l} - \sum_{r = j-L}^{j- 1} \hat{\mf V}_{r,l} \Big ).\label{eq:tildeS}
\end{align}
\For{$r = 1, \cdots, B$}
        \State Generate independent standard normal random variables $R_j^{(r)}$, $j=1,\cdots,n$. 
         \State  Calculate 
          \begin{align}
    \label{det26}
          \tilde Z_{\bt}^{(r)} =\frac{\underset{0 \leq l \leq n- 2\nt}{\max}  \left| \sum_{j= L}^{ 2\nt - L} {\tilde{\mf S}}_{j,l} R^{(r)}_{l+j}\right|_{\infty}}{\sqrt{ \nt  - L}}.
                \end{align}
    \EndFor
    \State Define   $\tilde  r_{\bt}$ as  the empirical $(1-\alpha)$-quantile of the bootstrap samples $\tilde Z_{\bt}^{(1)}, \cdots, \tilde Z_{\bt}^{(B)}$.
\end{algorithmic}
\end{algorithm}

\begin{remark}\label{lm:cumk}\label{cor:cumk}
{\rm  An important step in the proof of Theorem \ref{thm:ga} is the calculation of the variance  of the $U$-statistics in \eqref{eq:U}. In particular, this requires  the evaluation  of the covariances $\cov(G_k^{r_1, r_1}(t,u), G_h^{r_2, r_2}(t,u))$, $1 \leq r_1, r_2 \leq p$, $1 \leq k, h \leq s_n$,
where $G_k^{r_1, r_1}(t,u)$ denotes the  element in the position $(r_1, r_1)$  of the matrix $\mf G_k(t, u)$, which is  the counterpart of the matrix $\widehat{\mf G}_k(t, u)$
defined in 
\eqref{eq:hatgamma} with $\hat{\bs \epsilon}_{i}(u)$ substituted by ${\bs \epsilon}_{i}(u)$.
The covariances  can be further written in the form 
\begin{align}
  &\cov(G_k^{r_1, r_1}(t,u), G_h^{r_2, r_2}(t,u))]\\ & =\mathbb{E}(G_k^{r_1, r_1}(t,u) G_h^{r_2, r_2}(t,u)) -  \mathbb{E}(G_k^{r_1, r_1}(t,u))\mathbb{E}( G_h^{r_2, r_2}(t,u)) \\
  &\approx \sum_{i,j=1}^n K_{\tau}(i/n-t) K_{\tau}(j/n-t)   \\ &\times  \sum_{l \in \mathbb L_i, l^{\prime} \in \mathbb L_j} \left\{\mathrm{Cum}(\phi_{i,k}^{r_1, q_1}(t,u), \phi_{l,k}^{r_1, q_1}(t,u), \phi_{j,h}^{r_2, q_2}(t,u), \phi_{l^{\prime},h}^{r_2, q_2}(t,u)) \right.\\ & \left.+ \E(\phi_{i,k}^{r_1, q_1}(t,u)\phi_{j,h}^{r_2, q_2}(t,u))\E( \phi_{l,k}^{r_1, q_1}(t,u)\phi_{l^{\prime},h}^{r_2, q_2}(t,u) )\right.\\ & \left.+ \E(\phi_{i,k}^{r_1, q_1}(t,u)\phi_{l^{\prime},h}^{r_2, q_2}(t,u))\E( \phi_{l,k}^{r_1, q_1}(t,u)\phi_{j,h}^{r_2, q_2}(t,u) ) \right\},
\end{align}
where $  \mathrm{Cum}(\phi_{i,k}^{r_1, q_1}(t,u), \phi_{l,k}^{r_1, q_1}(t,u), \phi_{j,h}^{r_2, q_2}(t,u), \phi_{l^{\prime},h}^{r_2, q_2}(t,u))$
denotes  the cumulant of 
\linebreak 
$\phi_{i,k}^{r_1, q_1}(t,u),$ $ \phi_{l,k}^{r_1, q_1}(t,u),$ $ \phi_{j,h}^{r_2, q_2}(t,u)$ and  $ \phi_{l^{\prime},h}^{r_2, q_2}(t,u))$
\citep[for the definition and simple properties of cumulants we refer to monograph of][]{Brillinger1965}.
To investigate covariances of this and other type we will derive  in the appendices (see \cref{sec73} for details) 
 bounds on the cumulants appearing in these  expressions, which  are of own interest and   generalize corresponding results  for  stationary processes    \citep[see, for example, Proposition 2 in][] {wu2004limit}. 
\\
More precisely, we have under  \cref{ass:expo} for  a fixed number $K$ and $0 \leq m_0 \leq \cdots \leq  m_{K-1}$, $1 \leq r_0 \leq \ldots \leq r_{K-1} \leq p$,  
    \begin{align}
      & \sup_{u \in [0,1]}\sup_{0 \leq t_0,\ldots, t_{K-1} \leq 1} |\mathrm{Cum}(H_{r_0}(t_0, u , \FF_{m_0}) , \ldots, H_{r_{K-1}}(t_{K-1}, u , \FF_{m_{K-1}}))|\\ &\leq C \chi^{(m_{K-1}-m_0)/(2K(K-1))},
    \end{align}
    where the positive constant $C$ is independent of $m_0, \ldots, m_{K-1}$, $\chi\in (0,1)$.
    Consequently, we obtain 
    \begin{align}
        &\sum_{s_1,\ldots, s_{K-1} \in \mathbb Z}  \sup_{u \in [0,1]}\sup_{0 \leq t_0,\ldots, t_{K-1} \leq 1} |\mathrm{Cum}(H_{r_0}(t_0, u , \FF_{m_0}) ,H_{r_1}(t_1, u , \FF_{m_0+s_1}),\\  & \ldots, H_{r_{K-1}}(t_{K-1}, u , \FF_{m_0+s_K}))| < \infty.
    \end{align}
    }
\end{remark}
 
\begin{remark} 
{\rm 
Note that we  derive a portmanteau-type test statistic with asymptotic zero mean under the null hypothesis and derive a Gaussian approximation of the maximum of an $L_2$-type statistic 
to approximate its distribution.
 One of our main technical contributions is  a uniform bound  for the dependence measure of the random variables    $\mf U_1(t,u), \ldots ,  \mf  U_{2\nt}(t,u)$  in \eqref{eq:defU}, which are sums of $s_n$ terms of fourth-order processes, see Lemma \ref{lm:delta1}. In addition, since the condition of sub-exponential tails is too restrictive for fourth-order processes, we extend the Gaussian approximation developed in \cite{zhang2018} to allow for high-moment conditions, which yields a polynomial rate for $N$.  
}

\end{remark}

To investigate the consistency of the test,  we consider for a  positive constant $c_2$,
the  alternative hypothesis
\begin{align}
    H_A:  \underset{\substack{ u\in [0,1] \\ t
    \in [\tau, 1-\tau] } }{\sup} \sum_{k =1}^{s_n} |\bs \Gamma_k(t, u)|_F^2 > c_2. \label{eq:Ha}
\end{align}
  Note that this alternative hypothesis contains dependent non-stationary multivariate functional time series, such as time-varying functional AR (autoregressive) model and the time-varying functional AR-GARCH(1,1) process.

\begin{theorem}\label{thm:alt}
Under \cref{ass:diff}, assume that  the conditions of Theorem \ref{thm:Qn} hold and that  $s_n/\sqrt{n\tau^3} \to 0$. Then,  under the alternative hypothesis \eqref{eq:Ha} with probability approaching $1$, we have 
\begin{align}
    Q_n/\sqrt{s_n (n \tau)^d }\to \infty 
\end{align}
 for any $d \in (0,1)$.
\end{theorem}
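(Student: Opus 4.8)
The plan is to bound $Q_n$ from below by its value at a single, well‑chosen grid point and to show that after the $\sqrt{n\tau}$ rescaling this quantity diverges, because it concentrates on a strictly positive population functional. Since the supremum in \eqref{eq:Ha} is taken over $t\in[\tau,1-\tau]$, it supplies, for each $n$, a point $(t_0,u_0)\in[\tau,1-\tau]\times[0,1]$ with $\sum_{k\in\B}\big|\mathrm{tr}(\bs\Gamma_k(t_0,u_0))\big|^2>c_2$; I would take $(i^\ast/n,j^\ast/N)$ to be a nearest grid point with $\nt\le i^\ast\le n-\nt$ and $1\le j^\ast\le N$, so that $|i^\ast/n-t_0|+|j^\ast/N-u_0|=O(n^{-1}+N^{-1})=o(1)$. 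Using that $\sum_{k\ge1}|\bs\Gamma_k(t,u)|_F^2$ is Lipschitz in $(t,u)$ (\cref{ass:diff}), that $|\bs\Gamma_k|_F$ decays geometrically (the case $K=2$ of \cref{cor:cumk}), that $|\mathrm{tr}(A)|^2\le p\,|A|_F^2$, and that $\B\subseteq\{1,\dots,s_n\}$, one gets
\[
\sum_{k=1}^{s_n}\big|\bs\Gamma_k(i^\ast/n,j^\ast/N)\big|_F^2\ \ge\ \tfrac1p\sum_{k\in\B}\big|\mathrm{tr}(\bs\Gamma_k(t_0,u_0))\big|^2-o(1)\ >\ \tfrac{c_2}{p}-o(1).
\]
Writing $S_n:=\sum_{k=1}^{s_n}\mathrm{tr}\{\widehat{\mf G}_k(i^\ast/n,j^\ast/N)\}$, the definition of $Q_n$ in \eqref{eq:testQ} gives $Q_n/\sqrt{s_n(n\tau)^d}\ge(n\tau)^{(1-d)/2}|S_n|$, and since $s_n\to\infty$ with $s_n^2=o((n\tau)^\theta)$ forces $n\tau\to\infty$, so that $(n\tau)^{(1-d)/2}\to\infty$ for every $d<1$, it suffices to show that $|S_n|$ stays bounded away from $0$ in probability.

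For the mean I would invoke the asymptotic unbiasedness of $\widehat{\mf G}_k$ for $\mf G_k^{\circ}=2\bs\Gamma_k\bs\Gamma_k^{\top}$, proved via \cref{lm:4th} and \cref{prop:esti} under the bandwidth and lag conditions of \cref{thm:Qn}. That argument carries over to \eqref{eq:Ha} essentially unchanged: the only new contributions are those produced by the non‑vanishing lagged covariances, and they are $O(\chi^{M_n})=o(1)$ because $|\bs\Gamma_k|_F$ decays geometrically while $M_n>c\log n$. Hence
\[
\E[S_n]\ =\ 2\sum_{k=1}^{s_n}\big|\bs\Gamma_k(i^\ast/n,j^\ast/N)\big|_F^2+o(1)\ \ge\ \tfrac{2c_2}{p}-o(1),
\]
so $\liminf_n\E[S_n]\ge 2c_2/p>0$. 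For the variance I would start from the identity \eqref{hol123}, $S_n=(n\tau s_n)^{-1}\sum_i K_{\tau}(i/n-i^\ast/n)\,\mathrm{tr}\,\widehat{\mf U}_i(i^\ast/n,j^\ast/N)$, replace $\widehat{\mf U}_i$ by its version $\mf U_i$ built from the true errors (the error is $\op(1)$ by \cref{prop:esti}), expand $\var(\cdot)$ into covariances of the fourth‑order blocks $\mf U_i$, and bound these with the cumulant and dependence‑measure estimates of \cref{cor:cumk} and \cref{lm:delta1}. All contributions are negligible under the hypotheses of \cref{thm:Qn} save one term of order $s_n^2/(n\tau^3)$, and it is exactly this term that the extra assumption $s_n/\sqrt{n\tau^3}\to0$ eliminates, yielding $\var(S_n)=o(1)$.

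To finish, Chebyshev's inequality applied at the single point $(i^\ast/n,j^\ast/N)$ — no uniformity over the grid is needed for a lower bound on the maximum $Q_n$ — gives $\PP(S_n\ge c_2/(2p))\to1$ (using $\E[S_n]\ge 2c_2/p$ for large $n$ together with $\var(S_n)\to0$), hence $|S_n|\ge c_2/(2p)$ with probability tending to $1$; combined with the reduction above this shows $Q_n/\sqrt{s_n(n\tau)^d}\ge(n\tau)^{(1-d)/2}\,c_2/(2p)\to\infty$ with probability tending to $1$, for every $d\in(0,1)$. The genuinely hard step is the variance bound above: tracking how the $s_n$‑fold aggregation over lags hidden inside $\bs\psi_{i,k}$ interacts with the non‑parametric smoothing is what produces the $s_n^2/(n\tau^3)$ term, and controlling all of the remaining covariance sums requires the full strength of \cref{cor:cumk}; everything else is routine.
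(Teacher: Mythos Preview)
Your strategy---lower-bound $Q_n$ at a single well-chosen grid point, show the mean there is bounded away from $0$, and control the centered piece by Chebyshev---is correct and in fact simpler than the paper's route. The paper does essentially the same decomposition (replace $\widehat{\mf U}$ by $\mf U$ via \cref{prop:esti}; bound the mean below via the calculation in \cref{lm:4th}; show the fluctuation is small) but controls the centered part \emph{uniformly} over the whole grid: it proves
\[
\max_{i,j}\Big|\sum_l K_\tau(l/n-i/n)\,\mathrm{tr}\{\mf U_l-\E\mf U_l\}\Big|=O_p\big(s_n^2\sqrt{n\tau}\,\tau^{-2/s}\big)
\]
via Burkholder plus the chaining bound of \cref{prop:b2}, and it is precisely the chaining penalty $\tau^{-2/s}$ (taking $s=2$) that turns $s_n/\sqrt{n\tau}$ into $s_n/\sqrt{n\tau^3}$ and calls for the extra hypothesis.

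Here is where your write-up is inaccurate: at a \emph{single} deterministic point no chaining is needed, and Burkholder together with \cref{lm:delta1} already gives $\|S_n-\E S_n\|_2=O(s_n/\sqrt{n\tau})=o(1)$ under the conditions of \cref{thm:Qn} alone. There is no $s_n^2/(n\tau^3)$ term in the pointwise variance; the factor $\tau^{-2}$ you describe does not arise from ``the $s_n$-fold aggregation inside $\bs\psi_{i,k}$ interacting with the smoothing'' but purely from the sup over $t$ in \cref{prop:b2}. This misattribution does not break your proof (the hypothesis is available whether or not you use it), but it does mean your ``genuinely hard step'' discussion mislocates the role of $s_n/\sqrt{n\tau^3}\to0$. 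Your single-point argument buys simplicity; the paper's uniform bound buys a statement about the deviation everywhere on the grid, which it re-uses in the bootstrap analysis under $H_A$ (\cref{thm:bs}(ii)).
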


By Theorem \ref{thm:Qn}, we obtain an asymptotic level $\alpha$ for the hypothesis \eqref{eq:H0}
by rejecting $H_0$, whenever 
$Q_n/\sqrt{s_n}$ exceeds the $(1-\alpha)$-quantile, say  $q_{1-\alpha}$, of the distribution of
$\frac{1}{n\tau \sqrt{ s_n} }\left|\sum_{i=1}^{2\nt}{\mf Z}_i \right|_{\infty}$. 
\cref{thm:alt} guarantees that $Q_n/\sqrt{s_n}$ will diverge to infinity faster than $(n \tau)^d$, for any $d \in (0,1)$, which implies consistency of this decision rule.
The quantile $q_{1-\alpha}$ is not available and will be estimated by \cref{alg:boot2}, resulting in the bootstrap test \eqref{hol1}. In order to show that this test  is consistent, we 
define for some $q \geq 4$ the quantities 
\begin{align}
 & \vartheta_n = \frac{s^4_n(\log n)^2}{L} + \frac{s_n^3L\log n}{n\tau}+ \sqrt{\frac{L}{n\tau}}s_n^2 (nN)^{4/q} \label{eq:vtheta_n},\\
    & l_n =  (nb)^{-1/2} b^{-1/q} + b^2.
\end{align}
\begin{theorem}
  \label{thm:bs}
 If Assumptions \ref{ass:expo} and \ref{ass:diff} 
 are satisfied, $M_n > c\log n $, for some positive constant $c$, $s_n/(\log n) \to \infty$, then   the following results hold for the statistic $\tilde Z_{\bt}$ defined in \eqref{det26}: 
\par
(i) If the null hypothesis holds and there exists a sequence $\eta_n \to \infty$, $q \geq 4$, such that 
\begin{align}
  \sqrt{L s_n \log(nN) }\{l_n\eta_n(n^2N)^{1/q} +s_n/n\}+\vartheta_n^{1/3} \left\{1 \vee \log (nN/\vartheta_n) \right\}^{2/3} \to 0, \label{eq:bootstrap_ass}
\end{align}
 we have 
\begin{align}
       & \sup_{x \in \R} \Big |\mathbb{P} \Big  (\tilde Z_{\bt} < x|\FF_n \Big  ) -  \mathbb{P}\Big  (\frac{1}{s_n\sqrt{n\tau}}   \Big  | \sum_{i=1}^{2\nt} {\mf Z}_i  \Big  |_\infty  < x\Big )  \Big | = \op( 1  ),
\end{align}
 where $\FF_n$ denotes the $\sigma$-field generated by $\mf X_{1,n} ,  \ldots , \mf X_{n,n}$.

(ii) If  the alternative hypothesis \eqref{eq:Ha} 
holds,  $s_n (\log n)^2 /L^2 \to 0$ and 
\begin{align}
\sqrt{L  s_n\log(nN) }  l_n\eta_n (n^2N)^{1/q} + (L\log (n  N))^{1/2}\{ L/(n\tau) + Ls_n/n\} = o(s_n \log(nN)), \label{eq:bootstrap_ass1}
\end{align} 
we have  
\begin{align}
     \tilde Z_{\bt} 
     = \Op( s_n\log (nN) ).
\end{align}
\end{theorem}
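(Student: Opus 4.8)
\textbf{Proof plan for Theorem \ref{thm:bs}.}
The plan is to treat the two parts separately, in both cases conditioning on $\FF_n$ and viewing $\tilde Z_{\bt}$ as the maximum over $l$ and $j$ of a Gaussian (conditionally on the data) linear form $\sum_{j=L}^{2\nt-L}\tilde{\mf S}_{j,l}R_{l+j}$. For part (i), the natural route is a two-step comparison: first replace the data-based statistics $\hat{\mf V}_{r,l}$ entering $\tilde{\mf S}_{j,l}$ by their population counterparts $\mf V_{r,l}$ (controlling the estimation error through \cref{prop:esti}-type bounds and the quantities $l_n$, $\eta_n$), and then compare the resulting conditionally-Gaussian maximum to $\frac{1}{s_n\sqrt{n\tau}}|\sum_{i=1}^{2\nt}\mf Z_i|_\infty$ via a Gaussian comparison inequality (Chernozhukov--Chetverikov--Kato type), which reduces everything to showing that the difference-based block covariance estimator built from the $\tilde{\mf S}_{j,l}$ is uniformly close to the target covariance of $(\mf Z_i)_{i=1}^{2\nt}$. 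The key identity here is that, under $H_0$, differencing $\sum_{r=j}^{j+L-1}\hat{\mf V}_{r,l}-\sum_{r=j-L}^{j-1}\hat{\mf V}_{r,l}$ removes the (asymptotically vanishing but non-negligible at this scale) mean while, by the martingale-difference/degeneracy structure exhibited in Remark \ref{remu} and the dependence bounds of Lemma \ref{lm:delta1} and Lemma \ref{lm:cumk}, the covariance of the block sums reproduces that of partial sums of $(\mf V_i)$ up to an error governed by $\vartheta_n$; the block length $L$ then balances bias ($\sim s_n^3 L\log n/(n\tau)$ and $\sqrt{L/(n\tau)}s_n^2(nN)^{4/q}$) against variance ($\sim s_n^4(\log n)^2/L$), which is exactly the content of condition \eqref{eq:bootstrap_ass}.

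For part (ii), under $H_A$ the statistics $\hat{\mf V}_{r,l}$ now carry a non-vanishing signal of order $s_n$ (coming from $\sum_{k\in\B}|\mathrm{tr}(\bs\Gamma_k(t,u))|^2>c_2$), but the crucial point is that the differencing in \eqref{eq:tildeS} annihilates the \emph{smooth} part of this signal: since the signal contributions are Lipschitz in $t=r/n$ by \cref{ass:diff}, the difference of two adjacent blocks of length $L$ leaves a residual of order $L/(n\tau)$ (plus a term $Ls_n/n$ from the truncation set $\mathbb L_i$/$\mathbb L_j$ boundary), rather than the full signal. Hence $\tilde{\mf S}_{j,l}$ remains of the same order as under $H_0$ up to these deterministic remainders, and taking the conditional-Gaussian maximum over the $O(nN)$ indices produces the claimed $\Op(s_n\log(nN))$ bound once \eqref{eq:bootstrap_ass1} ensures the residual drift and the estimation error are $o(s_n\log(nN))$. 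The condition $s_n(\log n)^2/L^2\to0$ is what guarantees the maximal inequality for the conditionally Gaussian field has the right normalization after the differencing.

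The technical backbone in both parts is a conditional maximal/anti-concentration inequality for $\max_{l,j}|\sum_j \tilde{\mf S}_{j,l}R_{l+j}|_\infty$ together with a uniform bound on $\max_{l}\big|\widehat{\operatorname{Cov}}(l,l')-\operatorname{Cov}(\mf Z_\cdot,\mf Z_\cdot)\big|$, where $\widehat{\operatorname{Cov}}$ is assembled from the $\tilde{\mf S}_{j,l}$'s; this is where the increasing dimension $N(n-2\nt+1)$, the growing number of lags $s_n$, and the increasing dependence among the $\mf U_i(t,u)$ all interact. I expect the main obstacle to be precisely this covariance-consistency step: one must show that block-averaging difference-based cross-products of the $\hat{\mf V}$'s, which are degenerate $U$-statistic-type quantities in $s_n$-dimensional fourth-order functional objects, consistently estimate the non-stationary, lag-aggregated covariance structure of $(\mf Z_i)$ uniformly over all $O(n)$ location pairs — requiring the cumulant bounds of Remark \ref{lm:cumk}, the truncation control via $M_n$ from Lemma \ref{lm:4th}, and a careful accounting of how the Lipschitz-in-$t$ structure survives differencing; the error terms $\vartheta_n$, $l_n$, $\eta_n$ are designed to make this accounting close.
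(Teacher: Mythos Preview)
Your plan is correct and mirrors the paper's proof almost exactly: for part (i) the paper also proceeds by (a) replacing $\hat{\mf V}_{r,l}$ by $\mf V_{r,l}$ on the high-probability event $\{\sup_{t,u}|\hat{\mf m}-\mf m|_F\le \eta_n l_n\}$, (b) using the difference structure to remove the mean of $\mf V_{r,l}$ (under $H_0$ this mean is $O(Ls_n^{5/2}/n)$ by the argument of Lemma \ref{lm:4th}), and (c) applying a Chernozhukov--Chetverikov--Kato Gaussian comparison after showing that the block-difference covariance $\bar\sigma^2_{k_1,k_2,j_1,j_2}$ approximates the target $\sigma^2_{k_1,k_2,j_1,j_2}$ uniformly with error $O(\vartheta_n)$; for part (ii) the paper likewise decomposes $\mathrm{tr}\{\mf U_i\}$ into a degenerate part, a linear part, and the deterministic signal $u_{i3}$, uses the Lipschitz continuity from \cref{ass:diff} to show differencing reduces the signal to order $(L\log(nN))^{1/2}(L/(n\tau)+Ls_n/n)$, and then bounds the conditional variance $\max\bar\sigma^2=O(s_n^2)$ directly (your $S_1,S_2$ are exactly the paper's). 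Your identification of the covariance-consistency step as the main technical obstacle, and of the roles of the cumulant bounds, $M_n$-truncation, and the three components of $\vartheta_n$, is accurate.
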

Conditions \eqref{eq:bootstrap_ass} and  \eqref{eq:bootstrap_ass1}  are satisfied for $N = O(n^{\lambda})$, $\lambda > 0$, $b \asymp n^{-1/4} $, $\tau \asymp  n^{-2/5}$, $s_n = o(n^{1/55})$, $L \asymp  \lfloor n^{1/5} \rfloor$ and sufficiently large $q$, where $\lfloor x \rfloor$ is the largest integer that is smaller than or equal to $x$.
 The following result justifies the theoretical validity of the proposed test under the null and alternative hypothesis and  is a direct consequence  of  \cref{thm:Qn}, \cref{thm:alt} and \cref{thm:bs}.

\begin{theorem}
\label{thm:alg}
Suppose that the conditions of \cref{thm:Qn}, \cref{thm:alt} and \cref{thm:bs} are satisfied,   then 
the following assertions are true.
\\
(i) Under the null hypothesis \eqref{eq:H0}, we have
\begin{align}
    P(Q_n/\sqrt{s_n} > \tilde r_{\bt}|\mathcal F_n) \to \alpha
\end{align}
in probability, where $\FF_n$ denotes the $\sigma$-field generated by $\mf X_{1,n} ,  \ldots , \mf X_{n,n}$.
\\
(ii) Under the alternative hypothesis \eqref{eq:Ha}, we have
\begin{align}
   P(Q_n/\sqrt{s_n} > \tilde r_{\bt}|\mathcal F_n) \to 1
\end{align}
in probability.
\end{theorem}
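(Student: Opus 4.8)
The plan is to assemble the three preceding results. Write $T_n:=Q_n/\sqrt{s_n}$, let $W_n:=\frac{1}{s_n\sqrt{n\tau}}\big|\sum_{i=1}^{2\nt}\mf Z_i\big|_\infty$ be the approximating Gaussian maximum produced by \cref{thm:Qn}, and denote by $F_n$, $G_n$ and $\hat G_n$ the distribution functions of $T_n$, of $W_n$, and of $\tilde Z_{\bt}$ conditional on $\FF_n$, respectively. Under $H_0$, \cref{thm:Qn} gives $\sup_x|F_n(x)-G_n(x)|=o(1)$ and \cref{thm:bs}(i) gives $\sup_x|\hat G_n(x)-G_n(x)|=\op(1)$, so $T_n$ and the bootstrap statistic are, in Kolmogorov distance, close to the same drifting Gaussian law $G_n$; moreover $G_n$ is non-degenerate (hence continuous, with finite quantiles) because condition (4) of \cref{ass:expo} forces a uniform lower variance bound on the coordinates of $\frac{1}{s_n\sqrt{n\tau}}\sum_i\mf Z_i$. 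It then suffices to show $P(T_n>\tilde r_{\bt})\to\alpha$ under $H_0$ and $P(T_n>\tilde r_{\bt})\to1$ under $H_A$, which are assertions (i) and (ii).

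For (i), fix $\nu>0$ and let $q_{n,p}$ be the $p$-quantile of $G_n$. From $\sup_x|\hat G_n-G_n|=\op(1)$ and the infimum characterisation of the quantile $\tilde r_{\bt}=\inf\{x:\hat G_n(x)\ge1-\alpha\}$ one obtains $q_{n,1-\alpha-\nu}\le\tilde r_{\bt}\le q_{n,1-\alpha+\nu}$ on an event of probability $1-o(1)$. Hence, using $\{T_n>\tilde r_{\bt}\}\subseteq\{T_n>q_{n,1-\alpha-\nu}\}$ on that event, $P(T_n>\tilde r_{\bt})\le P(T_n>q_{n,1-\alpha-\nu})+o(1)=1-F_n(q_{n,1-\alpha-\nu})+o(1)\le1-G_n(q_{n,1-\alpha-\nu})+o(1)\le\alpha+\nu+o(1)$, and symmetrically $P(T_n>\tilde r_{\bt})\ge1-F_n(q_{n,1-\alpha+\nu})-o(1)\ge1-G_n(q_{n,1-\alpha+\nu})-o(1)=\alpha-\nu-o(1)$, the last equality using continuity of $G_n$. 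Letting $n\to\infty$ and then $\nu\to0$ gives $P(T_n>\tilde r_{\bt})\to\alpha$.

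For (ii), \cref{thm:bs}(ii) gives $\tilde Z_{\bt}=\Op(s_n\log(nN))$, hence, applying Markov's inequality to its conditional tail, $\tilde r_{\bt}=\Op(s_n\log(nN))$. On the other hand \cref{thm:alt} shows that under $H_A$, $Q_n/\sqrt{s_n(n\tau)^d}\to\infty$ in probability for every $d\in(0,1)$, i.e.\ $T_n\gg(n\tau)^{d/2}$ in probability. Under the rates prescribed for \cref{thm:bs} ($\tau$ a negative power of $n$ with $n\tau\to\infty$ at a polynomial rate, $s_n$ only a small polynomial power of $n$, $N$ polynomial in $n$) one may fix $d\in(0,1)$ with $s_n\log(nN)=o((n\tau)^{d/2})$; then $T_n/\tilde r_{\bt}\to\infty$ in probability, so $P(T_n>\tilde r_{\bt})\to1$.

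Since the statement is essentially a bookkeeping combination of \cref{thm:Qn}, \cref{thm:alt} and \cref{thm:bs}, there is no single deep step; the point I expect to deserve the most care is the quantile control in (i). Although the bare limit $P(T_n>\tilde r_{\bt})\to\alpha$ follows from the two Kolmogorov bounds plus the non-degeneracy in condition (4) of \cref{ass:expo}, one must be sure that the $\op(1)$ error in $q_{n,1-\alpha-\nu}\le\tilde r_{\bt}\le q_{n,1-\alpha+\nu}$ is genuinely uniform over the growing dimension $N(n-2\nt+1)$, and for any quantitative size-distortion bound (rather than just the limit) one would in addition invoke a Nazarov-type anti-concentration estimate $G_n(x+\delta)-G_n(x-\delta)\le C\delta\sqrt{\log(N(n-2\nt+1))}$ for the drifting maximum, again licensed by condition (4), and check that the rate restrictions of \cref{thm:bs} leave room for the extra $\sqrt{\log(N(n-2\nt+1))}$ factor. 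Everything else is direct substitution of rates already recorded in the cited theorems.
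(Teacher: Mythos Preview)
Your proposal is correct and follows exactly the route the paper intends: the paper states the theorem as ``a direct consequence of \cref{thm:Qn}, \cref{thm:alt} and \cref{thm:bs}'' and gives no further proof, so your quantile-sandwich argument for (i) and rate comparison $s_n\log(nN)=o((n\tau)^{d/2})$ for (ii) are precisely the bookkeeping the paper leaves to the reader. Your remark that non-degeneracy of $G_n$ (needed for the quantile argument) comes from condition~(4) of \cref{ass:expo} via the lower variance bound established in \cref{lm:cum} is also the right justification.
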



\section{Finite sample properties} 
\label{sec5} 
 \def\theequation{4.\arabic{equation}}	
  \setcounter{equation}{0}

  \subsection{Selection of tuning  parameters}\label{sec:sel}
As a rule-of-thumb  we propose to chose  $s_n$ and $M_n$ as  $s_n = \lfloor (\log n)^2/6 \rfloor$  and  $M_n = \lfloor (\log n)/5 \rfloor$, 
 respectively. To select $b_n$, we adopt the Generalized Cross Validation (GCV) proposed by \cite{craven1978smoothing}.
To be precise, let  $\hat{\mf X}_{i,n}(j/N) =  (\hat{ X}_{i,n,1}(j/N), \ldots, \hat{X}_{i,n,p}(j/N))^\top = \hat {\mf m}(i/n, j/N)$ denote the estimator of $ \mf m (i/n, j/N) $  defined in  \eqref{det30} with bandwidth $b$. For $l=1,\cdots, p$, write $\hat{\mf X}_l(j/N) = (\hat X_{1,n,l}(j/N), \ldots,  \hat X_{n,n,l}(j/N))^{\T}$. Note that we can represent this estimator  in the form  $\hat{\mf X}_l(j/N) = \mf Q(b)\mf X_l(j/N)$ for some square matrix $\mf Q$, where $\mf X_l(j/N) = (  X_{1,n,l}(j/N), \ldots,   X_{n,n,l}(j/N))^{\T}$.
Then we select $\hat b_n$ by
   \begin{equation}
     \hat{b}_{n}=\underset{b\in [b_L ,b_U ]}{\arg \min }\{\operatorname{GCV}(b)\},    \end{equation}
     where 
        \begin{equation}
        \operatorname{GCV}(b)=\frac{n^{-1}\sum_{j=1}^N\sum_{l=1}^p |\mathbf{X}_l(j/N) -\hat{\mathbf{X}}_l(j/N)|^{2}}{[1 - n^{-1}\operatorname{tr}\{\mf Q(b)\}]^{2}}.\label{GCV}
     \end{equation}
Following a similar approach, we can also obtain a data-driven criterion for choosing  $\tau$ in \eqref{eq:ganulla}. 
To be precise, recall the definition of  $ \hat{\bs \Gamma}_k(\cdot, \cdot)$ in \eqref{det33} and denote its vectorization by  
$$
\hat{\hat{\bs \Gamma}}_k(j/N, \tau)=(\mathrm{vec}(\hat{\bs \Gamma}_k({(s_n+1)}/n, j/N)), \ldots, \mathrm{vec}(\hat{\bs \Gamma}_k(n/n, j/N)))^{\T}.
$$
Note that 
for some square matrix $\mf Q_1$,
$$
\hat{\hat{\bs \Gamma}}_k(j/N, \tau) = \mf Q_1(\tau) \mf E_k(j/N)~,
$$
 where $\mf E_k(j/N) = (\mathrm{vec}(\hat{\bs \epsilon}_{{s_n+1}-k,n}(j/N)\hat{\bs \epsilon}^{\top}_{h,n}(j/N) ), \ldots, \mathrm{vec}(\hat{\bs \epsilon}_{n-k,n}(j/N)\hat{\bs \epsilon}_{n,n}^{\top}(j/N)))^{\T}$.
If $p\leq 3 $  we propose to select $\hat \tau_n$ by  
   \begin{equation}
     \hat{\tau}_{n}
     =\underset{\tau\in [\tau_L ,\tau_U]}{\arg \min } \frac{n^{-1}\sum_{j=1}^N\sum_{k=1}^{s_n} |\mathbf{E}_k(j/N) -\hat{\hat{\bs \Gamma}}_k(j/N, \tau)|_F^{2}}{[1-n^{-1} \operatorname{tr}\{\mf Q_1(\tau)\}]^{2}}.\label{GCV1}
     \end{equation}
To save computational time and memory, we use a slightly different rule for $p>3$, i.e.,
\begin{align}
     \hat{\tau}_{n}
     =\underset{\tau\in [\tau_L ,\tau_U]}{\arg \min }\frac{n^{-1}|\sum_{j=1}^N\sum_{k=1}^{s_n} (\mathbf{E}_k(j/N) - \hat{\hat{\bs \Gamma}}_k(j/N, \tau))|_F^{2}}{[1-n^{-1} \operatorname{tr}\{\mf Q_1(\tau)\}]^{2}}.
\end{align}

Finally, for the choice  of $L$, we employ the MV method (extended minimum volatility), proposed in Chapter 9 of \cite{politis1999subsampling}. To be precise, we define 
$$\hat \gamma(L) = \sum_{j= L}^{ 2\nt - L} |{\tilde{\mf S}}_{j,l}|^2/(2\nt - 2L),$$ and calculate this  function on  a grid, say, $\{L_1, \cdots, L_M\}$ and obtain $(\hat \gamma(L_i))_{i=1}^M$. Finally, we choose $L_i$ such that the index $i \in \{ 2, \ldots , M-1 \} $ minimizes the  objective function
\begin{align}
2^{-1}\sum_{k=-1}^1\Big (\hat \gamma(L_{i+k}) - \sum_{k=-1}^1 \hat \gamma(L_{i+k})/3\Big )^2 .
\end{align}
  \subsection{Simulation study}
  \label{sec51}
We consider the following functional time series model
\begin{align}
    \mf X_{i,n}(u) = \mf  m(i/n, u) + \bs \epsilon_{i,n}(u),\label{eq:simulation}
\end{align}
where $\mf m(t, u) = (m_1(t,u), \cdots, m_p(t,u))^{\top}$, $\bs \epsilon_{i,n}(u) = (\epsilon_{i, n, 1}(u), \cdots, \epsilon_{i, n, p}(u))^{\top}$.

Let $(\eta_{i,j})_{i\in \mathbb Z, 1 \leq j \leq p}$ denote a sequence of $i.i.d.$ standard normal random variables. Define $\bs \eta_i = (\eta_{i,1},\ldots, \eta_{i,p})^{\top}$, $\F_{i,j} = (\ldots,  \eta_{i-1, j},  \eta_{i, j})$, and $\F_i = (\ldots, \bs \eta_{i-1}, \bs \eta_i)$. In the simulation study, we consider the following settings for model \eqref{eq:simulation}.  
\medskip

\noindent
\textbf{Model 1:} The trend functions are 
given by
\begin{align}
    m_j(t,u) =(1+u)\{10\sin(\pi(t-0.5)) + 1\},  ~~~(j=1, \ldots , p).\label{eq:trendm1}
\end{align}
Under the null hypothesis, the error processes $\bs \epsilon_{i,n}$ are generated from 
the model 
\begin{align}
   \bs \epsilon_{i,n}(u) = \mf  H(i/n, u, \F_i)  = \{0.1(u-0.5)^2+0.8 \} \bs \eta_{i}. \label{eq:model1}
\end{align}
\HDB{Note that the error process is a stationary functional time series and that the right hand side of \eqref{eq:model1} does not depend on the sample size $n$. Although the index  $n$ could be omitted in this case, we prefer to keep the notation $\boldsymbol{\epsilon}_{i,n}$ to be consistent with the general model consider in this  paper.}
Under the alternative they are given by the functional AR-processes
 \begin{align} \label{alt1}
     \bs \epsilon^{H_A}_{i,n}(u) =  6 A_1 \sin\{\pi i/(2n)\} (u-0.5)^2
     \bs \epsilon^{H_A}_{i-1,n}(u) + \bs \epsilon_{i,n}(u),
 \end{align}
 where 
 $ A_1 =( A_{1,i,j}) $ is a tridiagonal $p \times p$ matrix with elements  $A_{1,j,j} = \delta$, $A_{1,j,j-1} = A_{1,j-1,j} = 0.1$, when $p>3$, and $A_1 = I$ otherwise. 
\medskip

\noindent
\textbf{Model 2:} The trend functions are again given by  \eqref{eq:trendm1}.  For $\bs   \epsilon_{i,n}(u)$, 
 assign each  component $ \epsilon_{i,n,j}(u) = H_j(i/n, u, \FF_i)$ 
 with equal 
probability to one of the two  models: 
\begin{align}
    Z_1(t,u, \FF_{i,j}) = a_1(u) \sigma(t) \eta_{i,j},\quad Z_2(t,u, \FF_{i,j}) = a_2(u) \sigma(t) \eta_{i,j},
    \label{eq:model2}
\end{align}
 where $j=1, \ldots , p$, and 
\begin{align}
    a_1(u) = 0.5\cos(\pi u/3) + 0.2, \quad a_2(u) = 0.4 u, \quad \sigma(t) = 0.8+ 0.2\sin(\pi t/2).
    \label{eq:model2a}
\end{align}
Under the null hypothesis, we set the error processes as $\bs   \epsilon_{i,n}(u)$.
Under the alternative hypothesis, 
the error processes are given by 
\eqref{alt1},  
where the innovation processes  $\bs \epsilon_{i,n}(u)$  are generated according to \eqref{eq:model2}.
\medskip

\noindent
\textbf{Model 3: }
Here we consider a time-varying functional GARCH(1,1) model which is defined as follows. The trend functions are given in \eqref{eq:trendm1}.   
Under the null hypothesis,  each  component $ \epsilon_{i,n,j}(u) = H_j(i/n, u, \FF_i)$ of the vector $ \bs  \epsilon_{i,n}(u)$ is assigned with equal 
probability to one of the two  models 
\begin{align}
    Z_3(t,u, \FF_{i,j}) =  a_1(u) \sigma_{3,i}(t) \eta_{i,j},\quad Z_4(t,u, \FF_{i,j}) = a_2(u) \sigma_{4,i}(t) \eta_{i,j},
\end{align}
 where $j=1, \ldots , p$,  $a_1$ and $a_2$ are defined in \eqref{eq:model2a} and for $k=3,4$,
$$\sigma_{k,i}^2(t) = 0.9+0.1\cos(\pi/3 + 2\pi t) + (0.1+0.2t) (\sigma_{k,i-1}(t)\eta_{i-1,j})^2 + (0.1 + 0.2t)\sigma^2_{k, i-1}(t).
$$  
Under the alternative hypothesis, 
we  consider a  functional AR-type processes of the form  
\begin{align}
     \bs \epsilon^{H_A}_{i,n}(s) =  A_{2}(i/n)\int \exp(-(u^2 + s^2)/2) \bs \epsilon^{H_A}_{i-1,n}(u)du  + \bs \epsilon_{i,n}(s),
 \end{align}
 where $ A_{2}(t) =\delta(1+0.25 \cos(\pi t/2))/c$, with $c = \int \exp(-(u^2 + s^2)/2) duds$.

\bigskip

We first illustrate the performance of the test \eqref{hol1} for small dimension, that is $p=5$.  
In Table \ref{tab1} and \ref{tab2}, we show the empirical rejection probabilities of the test \eqref{hol1} in Model 1  - 3 for various sample sizes under the null hypothesis and one  alternative ($\delta=0.4$), respectively.  We observe that the nominal level is very well approximated and that the test has reasonable power. Note that the alternatives in Model 3 are  a little easier to detect.

\begin{table}[ht]
\centering
\begin{tabular}{rrrrrrr}
  \hline
&\multicolumn{2}{c}{Model 1}&\multicolumn{2}{c}{Model 2}&\multicolumn{2}{c}{Model 3}\\
$n$ & 5\% & 10\% & 5\% & 10\% & 5\% & 10\% \\ 
  \hline
200 & 4.8 & 9.5 & 5.0 & 10.1 & 4.1 & 10.0 \\ 
  400 & 4.3 & 9.4 & 5.7 & 11.9 & 4.8 & 10.5 \\ 
  600 & 3.7 & 7.5 & 4.8 & 9.0 & 3.4 & 8.1 \\ 
  800 & 4.3 & 9.3 & 5.3 & 10.2 & 6.2 & 11.7 \\ 
  1000 & 4.9 & 8.9 & 5.0 & 10.6 & 5.3 & 10.7 \\ 
   \hline
\end{tabular}
\caption{\it Approximation of the nominal level of the test \eqref{hol1} for different sample sizes in Model 1 - 3. The dimension is  $p=5$ and   $N=50$  points have been used for a grid in the interval $[0,1]$.}
\label{tab1}
\end{table}

\begin{table}[ht]
\centering
\begin{tabular}{rrrrrrr}
  \hline
  &\multicolumn{2}{c}{Model 1}&\multicolumn{2}{c}{Model 2}&\multicolumn{2}{c}{Model 3}\\
$n$ & 5\% & 10\% & 5\% & 10\% & 5\% & 10\% \\ 
  \hline
200 & 7.7 & 13.4 & 10.7 & 19.2 & 61.5 & 71.6 \\ 
  400 & 31.1 & 48.7 & 21.0 & 31.9 & 86.9 & 88.9 \\ 
  600 & 77.6 & 88.4 & 45.3 & 61.2 & 92.2 & 93.5 \\ 
  800 & 95.1 & 98.2 & 77.0 & 86.3 & 93.4 & 95.7 \\ 
  1000 & 98.4 & 99.6 & 87.7 & 93.9 & 95.2 & 96.7 \\  
   \hline
\end{tabular}
\caption{\it Empirical rejection probabilities of the test \eqref{hol1} under the alternative for  different sample sizes. Here the dimension is $p=5$, $\delta$ in Model 1 - 3 is chosen as $ 0.4$ and $N=50$  points have been used for a grid in the interval $[0,1]$.}
\label{tab2}
\end{table}

In order to study the finite-sample performance of the test for larger dimensions,  we consider the cases  $p=10$ and $p=20$.  The corresponding results are displayed in Table \ref{tab3}. We observe a reasonable approximation of the nominal level for   dimension $p=10$.  If $p=20$, the simulated level is slightly too large for the sample size $n=200$ in Model 1 and 2, but the approximation improves quickly. with increasing sample size. 
In Model 3, the nominal level  for $p=20$ is overestimated for sample sizes smaller than $n=800$, but the approximation improves for $n=1000$.

As for the effect of the dimension on the power of the test \eqref{hol1},  we display   simulated rejection probabilities of Model 1-3 in Figure \ref{fig1}, where the 
sample size is chosen as  $n=400$ 
and  $N=50$  points have been used for a grid in the interval $[0,1]$.
For all models, the power of the test \eqref{hol1} increases with $\delta$ and 
the  dimension $p$. We observe lower power for the alternatives considered in Model 2.
Comparing Model 1 and 3, we see that for  $\delta \geq 0.5 $ the power is comparable, while for $\delta < 0.5$ the alternatives in model 3 yield larger rejection probabilities.

\begin{table}[ht]
\centering
\begin{tabular}{rrrr r   rrrr   rrrr}
  \hline
  &\multicolumn{4}{c}{Model 1} &\multicolumn{4}{c}{Model 2}&\multicolumn{4}{c}{Model 3}\\
  &\multicolumn{2}{c}{$p=10$}&\multicolumn{2}{c}{$p=20$}  &\multicolumn{2}{c}{$p=10$}&\multicolumn{2}{c}{$p=20$}&\multicolumn{2}{c}{$p=10$}&\multicolumn{2}{c}{$p=20$}\\
    \hline
 & 5\% & 10\% & 5\% & 10\% & 5\% & 10\%  & 5\% & 10\%  & 5\% & 10\% & 5\% & 10\%  \\ 
  \hline
200 & 5.5 & 9.8 & 7.0 & 13.6 & 6.3 & 11.5 & 6.1 & 12.7 & 5.9 & 13.0 & 6.1 & 12.9 \\ 
  400 & 4.2 & 9.8 & 5.8 & 10.7 & 4.7 & 9.6 & 6.1 & 12.5 & 6.0 & 10.8 & 8.8 & 14.8 \\ 
  600 & 4.6 & 8.7 & 4.9 & 9.6 & 4.3 & 9.4 & 5.7 & 10.4 & 3.5 & 8.9 & 7.5 & 15.8 \\ 
  800 & 4.9 & 9.7 & 5.6 & 11.0 & 4.2 & 9.1 & 5.1 & 10.1 & 4.8 & 9.2 & 7.7 & 14.1 \\ 
  1000 & 4.3 & 8.2 & 4.8 & 9.2 & 5.2 & 8.8 & 4.3 & 9.7 & 4.1 & 8.8 & 5.0 & 11.1 \\ 
   \hline
\end{tabular}
\caption{ \it 
 Approximation of the nominal level of the test \eqref{hol1} for different sample sizes in Model 1 - 3. The dimensions are   $p=10$ and $p=20$ and  $N=50$ points have been used for a grid in the interval $[0,1]$.
}
\label{tab3}
\end{table}

\begin{figure}[ht]
    \centering
    \includegraphics[width = 0.45\linewidth]{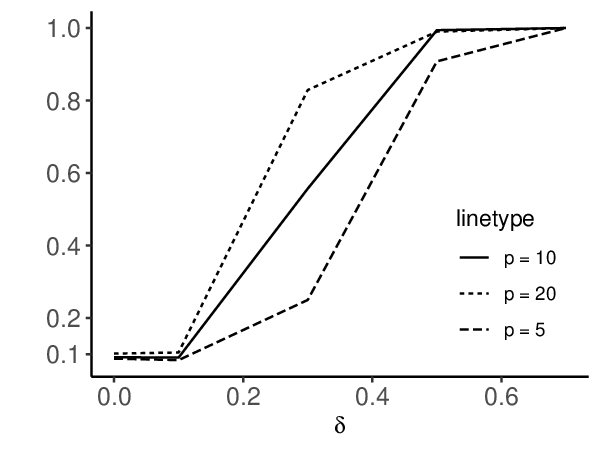} 
     \includegraphics[width = 0.45\linewidth]{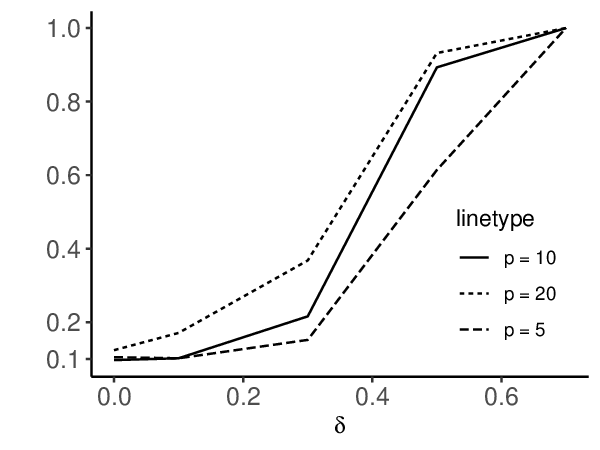} \\
      \includegraphics[width = 0.45\linewidth]{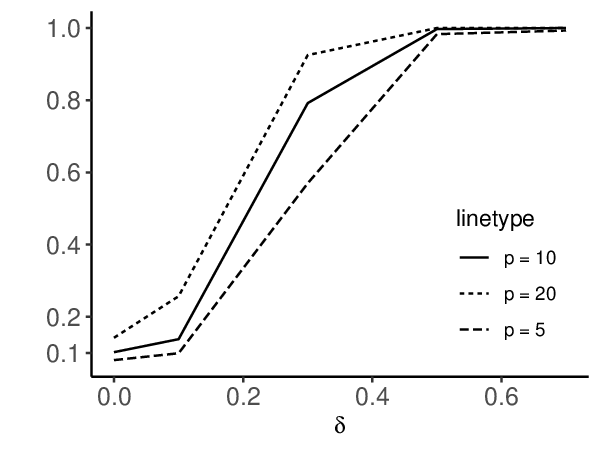}

    \caption{ \it  Simulated rejection rates of the test \eqref{hol1} in  Model 1 (upper left panel), Model 2 (upper right panel) and Model 3 (lower panel). The   nominal level is $\alpha=0.1$, the sample size is  $n=400$ and the number of grid points is $N=50$. }
    \label{fig1}
\end{figure}

\section{An extension to covariance functions including  \HDB{cross dependence between functional locations}  }\label{sec:extend} 
 \def\theequation{5.\arabic{equation}}	
  \setcounter{equation}{0}
The developed methodology can be extended to hypotheses regarding the auto-covariance functions 
\begin{align}
\bs  \Gamma_k:  \left\{ 
\begin{array}{l}
       [0,1]^3 \to  \mathbb R \\
     (t, u, v) \mapsto     \bs \Gamma_k(t, u, v) = \mathrm{Cov}(\mf H(t, u,  \mathcal F_0), \mf H(t, v,\mathcal  F_k))
\end{array}
\right.  ~, 
\nonumber
\end{align}
which include the  \HDB{cross dependence between functional locations} $u$ and $v$.  To be precise, we consider the null hypothesis 
\begin{align}
    H_0:  \bs \Gamma_k(t, u, v) = \mf 0, \text{   for all  } ~k \in \mathbb N ,\ (t,u, v)^\top \in [0,1]^3.\label{eq:H0extend}
\end{align}
and the alternative
\begin{align}
    H_A:  \underset{\substack{ u, v\in [0,1]^2 \\ t
    \in [\tau, 1-\tau] } }{\sup} \sum_{k =1}^{s_n} |\bs \Gamma_k(t, u, v)|_F^2 > c_2, \label{eq:Haextend}
\end{align}
where $c_2$ is a  positive constant.
For testing the hypotheses \eqref{eq:H0extend} against \eqref{eq:Haextend}, we can directly extend the methodology  for  $\boldsymbol \Gamma_k(t,u)$  by adding an extra variable for $v$. As a counterpart of the statistic \eqref{eq:estimator}, we consider the estimator 
\begin{align}
\max_{1 \leq i \leq n, 1 \leq j,j^{\prime} \leq N} \sum_{k=1}^{s_n}\mathrm{tr}\big \{\widehat{\mf G}_{k} (i/n, j/N, j^{\prime}/N) \big \}, \label{eq:estimator_extend}
\end{align}
where 
\begin{align}
\widehat{\mf G}_{k}(t,u, v)   &= 
\frac{1}{n\tau s_n} \sum_{i=1}^{n} K_{\tau}(i/n-t) \hat{\bs \phi}_{i,k} (u, v)\sum_{l=1 }^{n} K_{\tau}(l/n-t)\hat{\bs \phi}^{\top}_{l,k}(u, v) \mf 1(l \in \mathbb L_i) , \\\label{eq:hatgammaextend}
\hat{\bs \phi}_{i,k}(u, v)& =\hat {\bs \epsilon}_{i-k}(u) \hat{\bs \epsilon}_i^{\top} (v),
\nonumber
\end{align}
and  $\mathbb L_i = \{l: M_n < |l-i| \leq s_n\}$. The set 
$\mathbb L_i $ is used to control the amount of dependence after multiplying the two sums as in \eqref{eq:hatgamma}. Note that 
$\mathrm{tr}\{\widehat{\mf G}_{k}(t,u, v)\}$ is an estimator of $2|\bs \Gamma_k(t,u, v)|_F^2$. 
In analogy  to \eqref{det33}, we define
 \begin{align}
 \hat{\bs \Gamma}_k(t,\cdot, \star) = \frac{1}{n\tau} \sum_{i=1}^n  K_{\tau} (i/n-t) \hat{\bs \phi}_{i,k}(\cdot, \star) = \frac{1}{n\tau} \sum_{i=1}^n\hat{\bs \epsilon}_{i-k}(\cdot)\hat{\bs \epsilon}^{\top}_i(\star) K_{\tau} (i/n-t),
 \end{align}
 which is the Nadaraya-Watson estimator of  $\bs \Gamma_k(t,\cdot, \star)$.

Then, the test statistic in \eqref{eq:testQ} and the bootstrap multiplier \eqref{eq:hatV1}  in \cref{alg:boot2} could be modified as 
\begin{align}
    Q_n = \underset{\nt \leq i \leq n - \nt, 1 \leq j, j^{\prime} \leq N}{\max}\Big  |\sqrt{n\tau s_n} \sum_{k=1}^{s_n}\mathrm{tr}\big \{\widehat{\mf G}_{k} (i/n, j/N, j^{\prime} /N)\big \} \Big  |,\label{eq:testQextend}
\end{align}
and 
\begin{align}
    \hat {\mf V}_{i,l}:= K_{\tau}(i/n - l/n)  \big (\mathrm{tr} (\hat{\mf U}_{i}(l/n, j/N, j^{\prime} /N) \big )_{1 \leq j, j^{\prime} \leq N} \in \mathbb{R}^{N^2},  \label{eq:hatV1extend}
\end{align}
respectively, 
where 
\begin{align}
    \hat{ \mf U}_i (t, u, v) =\sum_{k=1}^{s_n} \big \{   \hat{\bs \phi}_{i,k}(u,v) \hat{\bs \psi}^{\top}_{i,k}(t,u,v) +  \hat{\bs \psi}_{i,k}(t,u,v) \hat{\bs \phi}^{\top}_{i,k}(u,v)\big \} ~, 
 \end{align}
and 
 \begin{eqnarray}
 \hat{\bs \psi}_{i,k}(t,u,v) & = & \sum_{\ell = (i-s_n) \vee 1}^{i- M_n - 1}  K_{\tau}( \ell/n - t) \hat{\bs \phi}_{\ell, k}(u,v). 
 \end{eqnarray} 
The portmanteau test for the extended auto-covariance function $\bs \Gamma_k(t,u,v)$ is listed in \cref{alg:boot3}. The major difference between \cref{alg:boot2} and \cref{alg:boot3} is to discretize the extra dimension $v$ by replacing the test statistic \eqref{eq:testQ} and the vector for bootstrap multipliers \eqref{eq:hatV1} by \eqref{eq:testQextend} and \eqref{eq:hatV1extend}, respectively.  The effectiveness of \cref{alg:boot3} is established in the following \cref{thm:extend}, which is an extension of \cref{thm:alg}. Before we present this result we  introduce a counterpart assumption  \cref{ass:expo}(4) and \cref{ass:diff}.

\begin{algorithm} 
\caption{Difference-based block multiplier bootstrap for general covariance functions}\label{alg:boot3}
\begin{algorithmic}
\Require The statistics $ Q_n$
and the vectors $\hat{\mf V}_{r,l}$ defined in 
\eqref{eq:testQextend}  and \eqref{eq:hatV1extend}, respectively.
\State For window size $L$, $L \leq j \leq 2\nt -L, 0 \leq l \leq n-2\nt$, calculate  the vectors 
\begin{align}
    \tilde{\mf S}_{j,l} \gets \frac{1}{\sqrt{2s_n^2 L}} \Big ( \sum_{r = j}^{j + L - 1} \hat{\mf V}_{r,l} - \sum_{r = j-L}^{j- 1} \hat{\mf V}_{r,l} \Big ).
\end{align}
\For{$r = 1, \cdots, B$}
        \State Generate independent standard normal random variables $R_j^{(r)}$, $j=1,\cdots,n$. 
         \State  Calculate 
          \begin{align}
          \tilde Z_{\bt}^{(r)} =\frac{\underset{0 \leq l \leq n- 2\nt}{\max}  \left| \sum_{j= L}^{ 2\nt - L} {\tilde{\mf S}}_{j,l} R^{(r)}_{l+j}\right|_{\infty}}{\sqrt{ \nt  - L}}.
                \end{align}
    \EndFor
    \State Define   $\tilde  r_{\bt}$ as  the empirical $(1-\alpha)$-quantile of the bootstrap samples $\tilde Z_{\bt}^{(1)}, \cdots, \tilde Z_{\bt}^{(B)}$.
\end{algorithmic}
\end{algorithm}

\begin{assumption}
~~

(1) There is a positive constant $c^*$ such that for all $u,v \in [0,1]$, $\lambda_{\min}(\bs \Gamma_0(t,u,v)\bs \Gamma_0^{\top}(t,u,v)) \geq c^* > 0$. 
 
(2) There exists positive constants $g_3, g_4$ such that 
\begin{align}
   \sup_{t \in (0,1), u,v \in [0,1]}  \sum_{k=1}^{\infty}|\bs \Gamma_k(t,u,v)|_F^2 < g_3,
\end{align}
for $t \in [0,1]$, the derivatives of $\sum_{k=1}^{\infty}|\bs \Gamma_k(t,u,v)|_F^2$ with respect to $u$ and $v$ are uniformly bounded, and for $t_1, t_2 \in (0,1)$,
\begin{align}
    \sup_{ u,v \in [0,1]}  \left|\sum_{k=1}^{\infty}\{|\bs \Gamma_k(t_1,u,v)|_F^2 -|\bs \Gamma_k(t_2,u,v)|_F^2\}\right|\leq g_4|t_1-t_2|.
\end{align}
\label{ass:diff_extend}
\end{assumption}
As counterpart of the sequence $f_n$  in \eqref{eq:fn} we define,   for some  $q^{\prime} \geq 2$ and $2\leq q\leq s^*/2$,  the quantity
\begin{align}
      f_n^* & ={s_n} (nN^2)^{1/q} ((n\tau)^{-1} + N^{-1}) + s_n \tau^{-3/q^{\prime}} \sqrt{n\tau}\left\{b^2 + (nb)^{-1}\right\} + \tau^{-3/q^{\prime}}\sqrt{s_n \tau/b},
\end{align}
and as counterpart for $\vartheta_n$ \eqref{eq:vtheta_n} we consider, for some $q \geq 4$, 
\begin{align}
    \vartheta_n^* = \frac{s^4_n(\log n)^2}{L} + \frac{s_n^3L\log n}{n\tau}+ \sqrt{\frac{L}{n\tau}}s_n^2 (nN^2)^{4/q}.
\end{align}
Note that the effect of $(u, v)$ instead of $u$ is reflected in the factor in the sequences $N^2$ in $f_n^*$ and  $\vartheta_n^*$. This factor also appears in conditions \eqref{eq:bootstrap_ass_extend1} and \eqref{eq:bootstrap_ass_extend} of our final result, which are the analogs of \eqref{eq:bootstrap_ass} and \eqref{eq:bootstrap_ass1}, respectively.  

\begin{theorem}
\label{thm:extend}
Under the conditions of \cref{thm:alg} and \cref{ass:diff_extend}, and additionally assuming $f_n^* \to 0$,   
the following assertions are true for \cref{alg:boot3}.
\\
(i) Under the null hypothesis \eqref{eq:H0extend},  if there exists a sequence $\eta_n \to \infty$, $q \geq 4$, such that 
\begin{align}
  \sqrt{L s_n \log(nN^2) }\{l_n\eta_n(n^2N^2)^{1/q} +s_n/n\}+(\vartheta_n^*)^{1/3} \left\{1 \vee \log (nN^2/\vartheta^*_n) \right\}^{2/3} \to 0, \label{eq:bootstrap_ass_extend1}
\end{align} we have
\begin{align}
    P(Q_n/\sqrt{s_n} > \tilde r_{\bt}|\mathcal F_n) \to \alpha
\end{align}
in probability, where $\FF_n$ denotes the $\sigma$-field generated by $\mf X_{1,n} ,  \ldots , \mf X_{n,n}$.
\\
(ii) Under the alternative hypothesis \eqref{eq:Haextend}, if \begin{align}
\sqrt{L  s_n\log(nN^2) }  l_n\eta_n (n^2N^2)^{1/q} + (L\log (n  N^2))^{1/2}\{ L/(n\tau) + Ls_n/n\} = o(s_n \log(nN^2)), \label{eq:bootstrap_ass_extend}
\end{align} we have
\begin{align}
   P(Q_n/\sqrt{s_n} > \tilde r_{\bt}|\mathcal F_n) \to 1
\end{align}
in probability, where $ \tilde r_{\bt}$ is as defined in \cref{alg:boot3}.
\end{theorem}

\HDB{
\begin{remark} \label{rem51}
    {\rm For the case $p=1$  hypotheses of the form  \eqref{eq:H0extend} have been considered  by \cite{Axel2023}  for a finite number of lags $k=1, \ldots , H$ and centered locally  stationary time series. An ad-hoc alternative test can therefore constructed by applying the test  proposed in \cite{Axel2023} to the residuals in \eqref{det10}. However, in a simulation study  we observed that this test  exceeds its nominal  level under the null hypothesis substantially (these results are not depicted for the sake of brevity).
    }
\end{remark}}

\HDB{
In Table \ref{tb:3}, we compare the rejection rates of the our extended test in this section with the results of the original test under Model 2 in Section \ref{sec5} with  the same procedure for selecting tuning parameters. We found that for both tests the rejection rates are close to their nominal levels under the null hypothesis. For the alternative hypothesis,  the concurrent test  (Algorithm \ref{alg:boot2}) and the extended test  (Algorithm \ref{alg:boot3})  have comparable power performance. In particular, for the tests with the significance level $10\%$, the extended test  is slightly more powerful. 
\begin{table}[ht]
\centering
\begin{tabular}{rrrrr| rrrr}
  \hline 
  & \multicolumn{4}{c|}{$H_0$}  & \multicolumn{4}{c}{$H_A$} \\  
  & \multicolumn{2}{c}{Test \ref{alg:boot2}}  & \multicolumn{2}{c| }{Test \ref{alg:boot3}} & \multicolumn{2}{c}{Test \ref{alg:boot2}}  & \multicolumn{2}{c}{Test \ref{alg:boot3}}\\
   \hline
 & 5\% & 10\% & 5\% & 10\% & 5\% & 10\% & 5\% & 10\% \\ 
  \hline
200 & 5.0 & 10.1 & 4.4 & 8.8 & 10.7 & 19.2 & 9.9 & 16.8 \\ 
  400 & 5.7 & 11.9 & 5.4 & 10.0 & 21.0 & 31.9 & 20.6 & 34.4 \\ 
  600 & 4.8 & 9.0 & 4.9 & 10.3 & 45.3 & 61.2 & 47.4 & 64.1 \\ 
  800 & 5.3 & 10.2 & 4.7 & 9.8 & 77.0 & 86.3 & 74.3 & 87.6 \\ 
  1000 & 5.0 & 10.6 & 5.8 & 10.1 & 87.7 & 93.9 & 89.0 & 95.7 \\ 
   \hline
\end{tabular}
\label{tb:3}
\vskip .1cm 
\caption{Comparison of the rejection rates of  the  proposed tests 
for the hypothesis \eqref{eq:H0}
(Algorithm \ref{alg:boot2})  and 
for the hypothesis \eqref{eq:H0extend} (Algorithm \ref{alg:boot3}) in  the locally stationary Model 2 introcuced in Section \ref{sec5} ($p=5$) under the null hypothesis and the alternative hypothesis.}
\end{table}
}
\HDB{
\section{Data analysis} 
We analyze a three-dimensional time series of hourly energy consumption, measured in megawatts (MW), for American Electric Power, The Dayton Power and Light Company, and Northern Illinois Hub. The data is available in \href{https://www.kaggle.com/code/robikscube/time-series-forecasting-with-machine-learning-yt/input?select=pjm_hourly_est.csv}{kaggle}. Each day’s consumption profile is treated as a continuous function over 24 hours.
Given that 2006 ranked as the sixth warmest year globally, we focus on the period from February 16, 2006, to July 4, 2007. This interval comprises 501 days with complete 24-hour observations, and the data for $6$ days is displayed in Figure \ref{fig:placeholder}.
\begin{figure}
    \centering
\includegraphics[width=1\linewidth]{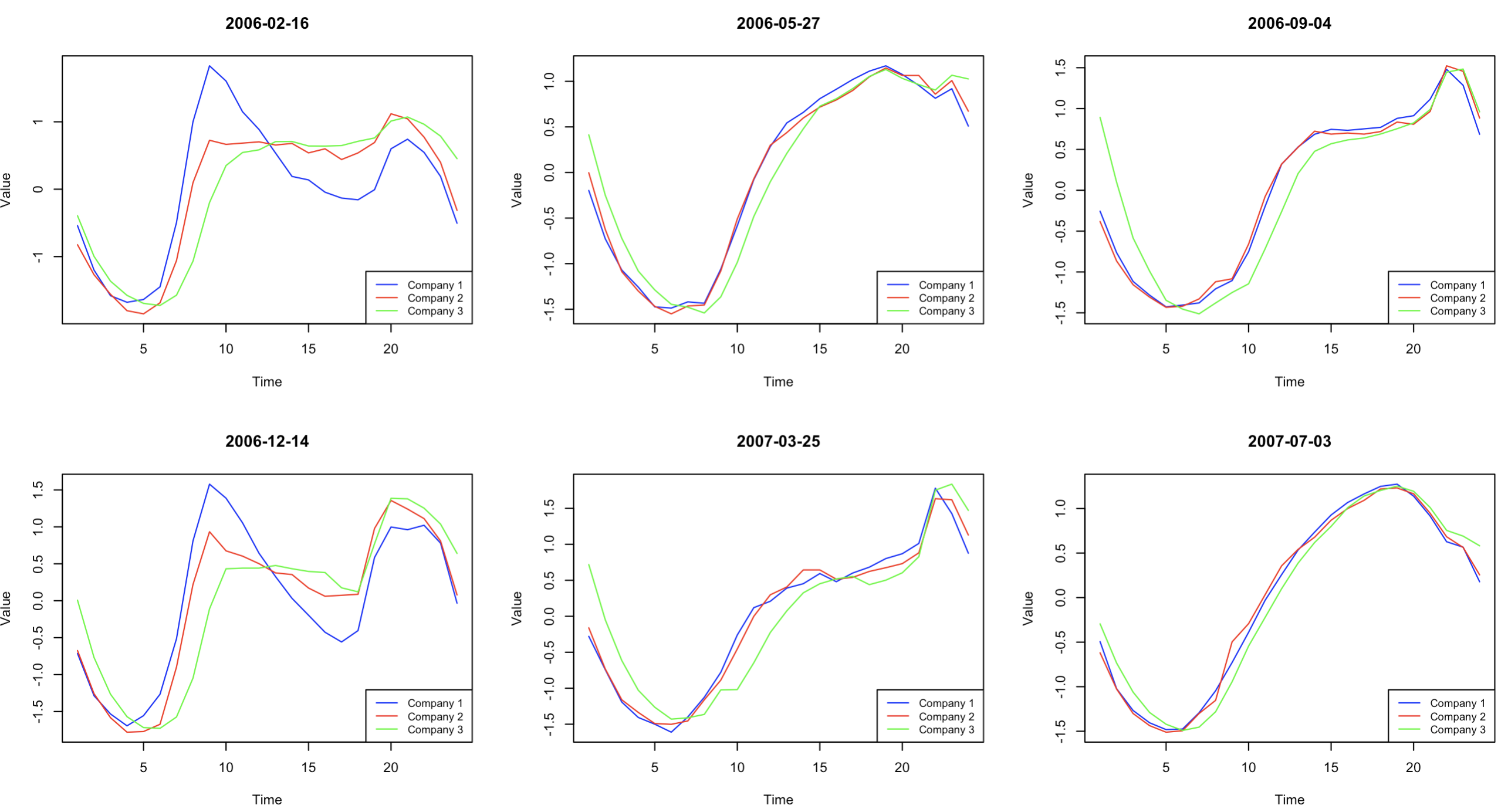}
    \caption{Normalized consumption curves from 6 days from 2006-02-16 to 2007-07-04 from  American Electric Power(blue), The Dayton Power and Light Company(red), Northern Illinois Hub (green). }
    \label{fig:placeholder}
\end{figure}
We conduct our functional portmanteau test using selecting block sizes from $\lfloor n^{1/5} \rfloor, \lfloor  n^{1/5} \rfloor + 1, \ldots, \lfloor 3 n^{1/5} \rfloor$ as in the simulation, yielding p-value 0.066, while the extended functional portmanteau test yields 0.089.  Therefore, we reject  the null hypothesis of white noise of the function time series at the significance level of $0.1$. There can be correlations among the daily energy consumption curves of different companies from 2006-02-16 to 2007-07-04. }
\section*{Appendix}

\begin{appendix}
This appendix contains the detailed proofs of the results as well as the necessary auxiliary results. Some of our main technical contributions are of independent interest, in particular: 
\begin{itemize}
    \item[(1)] 
We provide verifiable dependence conditions to guarantee bounds for  the joint cumulants for locally stationary functional time series.
         \item[(2)]  We provide  a high-dimensional Gaussian approximation and bootstrap consistency  for the estimates of 
    $\mf G_k(t,u) = \bs \Gamma_k(t,u) \bs \Gamma^{\top}_k(t,u) $
    with increasing dependence between 
    the two factors appearing in the representation \eqref{eq:hatgamma}.
    The  limit of the variance of the proposed statistic is derived to ensure non-degeneracy.   
    \item[(3)]  We uniformly control the dependence between the  terms 
$\hat{\mf  U}_i(t, u)$ in the representation 
\eqref{hol123} of the estimate $\sum_{k=1}^{s_n} \hat {\mf G}_k(t,u)$ for an increasing $s_n$.
\end{itemize}

In \cref{sec7}, we introduce some preliminaries and notation.  The proof of Theorem \ref{thm:ga} is given in \cref{sec73}. \cref{sec6} provides the proofs of Theorems \ref{thm:alt}  and \ref{thm:bs}. The proof of \cref{thm:extend} is discussed in  \cref{sec:exproof}.

\section{Preliminaries}
 \def\theequation{A.\arabic{equation}}	
  \setcounter{equation}{0}
\label{sec7} 

In this section we provide and prove a  technical result, which will be essential for the proofs of the main statements in Section \ref{sec3}.

For this purpose we introduce some notation, which used in the remaining part of the paper.
For a vector $\bs v \in \R^p$, let $v^{s}$ denote its $s$-th element. For example, let $\epsilon^i_{j,n}(u)$ denote the $i$-th element of $\bs \epsilon_{j,n}(u)$. Let $(\cdot)_+$ denote $\max\{\cdot, 0\}$, $a \vee b$ denote $\max\{a, b\}$. In the proof,  we use  $C$ as a  sufficiently large constant depending on the context.  Let $\FF_{i}^j = (\eta_i,\cdots, \eta_j)$. 
 Define the projection operators $\proj^j(\cdot) = \E(\cdot|\FF_{-\infty}^j) -\E(\cdot|\FF_{-\infty}^{j-1})$.
 Then,  $(\proj^j(\cdot))_{j \in \mathbb Z}$ and $(\proj_i(\cdot))_{i \in \mathbb Z}$ are martingale differences with respect to the filtrations $(\FF_{-\infty}^j)_{j\in \mathbb Z}$ and $(\FF_{-i}^{\infty})_{i\in \mathbb Z}$. Write $\|X\|_{F,p}:= \||X|_F\|_p$ if $X$ is a matrix. For simplicity, we omit the index $n$  in $M_n$ in the definition of the set  $\mathbb L_i$ in  \eqref{eq:hatgamma}.

\begin{proposition}
If \cref{ass:expo} is satisfied and  for some  $q \geq 2$,  $$\delta_n= s_n^2(n\tau) \{b^2 + 1/(nb)\}\tau^{-2/q} +s_n^{3/2} \sqrt{n/b} \tau^{1-2/q} = o(s_n \sqrt{n\tau}),$$ and   $\tau \to 0$,  $n\tau \to \infty$, $(\log n)^4/(nb) \to 0$, then we have  
\begin{align}
    \left\| \sup_{t \in [\tau, 1-\tau], u \in [0,1]}\left|  \sum_{i=1}^{n}\mathrm{tr} \left\{\hat{\mf U}_i(t,u)- \mf U_i(t,u)\right\}K_{\tau}(i/n - t) \right| \right\|_q &=  O(\delta_n) = o(s_n \sqrt{n\tau}).
\end{align}
\label{prop:esti}
\end{proposition}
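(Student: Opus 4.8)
The plan is to write $\hat{\bs\epsilon}_{i,n}=\bs\epsilon_{i,n}+\mf r_{i,n}$, where $\mf r_{i,n}(u)=\mf m(i/n,u)-\hat{\mf m}(i/n,u)$ is the estimation error of the Nadaraya--Watson smoother, and to show that after substituting this into $\hat{\mf U}_i-\mf U_i$ every resulting correction term has $q$-norm of order $o(s_n\sqrt{n\tau})$ uniformly in $(t,u)$. First I would split $\mf r_{i,n}(u)=\mf b_{i,n}(u)+\tilde{\bs\epsilon}_{i,n}(u)$ into a deterministic bias $\mf b_{i,n}(u)=\mf m(i/n,u)-\E\hat{\mf m}(i/n,u)$ and a stochastic part $\tilde{\bs\epsilon}_{i,n}(u)=-\frac1{nb}\sum_{j}K_b(j/n-i/n)\bs\epsilon_{j,n}(u)$, and record the standard uniform rates: $\sup_{t,u}|\mf b_{i,n}(u)|_F=O(b^{2})$ for interior $t$ (from the third-order differentiability of $\mf m$ in \cref{ass:expo}(3) and the symmetry of $K$; boundary indices enter only through the support of $K_\tau(\cdot-t)$ with $t\in[\tau,1-\tau]$ and $b=o(\tau)$ and are treated separately), together with $\bigl\|\sup_{t,u}|\tilde{\bs\epsilon}_{i,n}(u)|_F\bigr\|_q=O\bigl((\log n)^{1/2}(nb)^{-1/2}\bigr)$, which follows from the sub-exponential tails in \cref{ass:expo}(1), the $u$-regularity in \cref{ass:expo}(2), the geometric decay of the functional dependence measure, and a maximal inequality for kernel-weighted sums of weakly dependent functional arrays.

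Recalling $\mathrm{tr}\{\mf U_i(t,u)\}=2\sum_{k=1}^{s_n}\mathrm{tr}\{\bs\phi_{i,k}\bs\psi_{i,k}^{\top}\}$ and the analogous identity for $\hat{\mf U}_i$, I would then expand $\mathrm{tr}\{\hat{\bs\phi}_{i,k}\hat{\bs\psi}_{i,k}^{\top}-\bs\phi_{i,k}\bs\psi_{i,k}^{\top}\}$ bilinearly in the increments $\hat{\bs\phi}_{i,k}-\bs\phi_{i,k}=\bs\epsilon_{i-k}\mf r_i^{\top}+\mf r_{i-k}\bs\epsilon_i^{\top}+\mf r_{i-k}\mf r_i^{\top}$ and $\hat{\bs\psi}_{i,k}-\bs\psi_{i,k}=\sum_{\ell}K_\tau(\ell/n-t)(\hat{\bs\phi}_{\ell,k}-\bs\phi_{\ell,k})$, and finally substitute $\mf r=\mf b+\tilde{\bs\epsilon}$ everywhere. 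This writes the quantity to be bounded as a fixed finite sum of multilinear statistics, each of the shape $\sum_i K_\tau(i/n-t)\sum_{k\le s_n}\sum_{\ell}K_\tau(\ell/n-t)\,\mathrm{tr}(\text{product})$, where the product consists of two or three genuine error factors $\bs\epsilon$ together with one or two bias vectors $\mf b$ or localized error averages $\tilde{\bs\epsilon}$. I would organize these terms according to (a) the number of bias factors $\mf b$, each contributing a deterministic factor $O(b^{2})$ and lowering the stochastic order, and (b) the number of fluctuation layers $\tilde{\bs\epsilon}$, each of which is an average of $\asymp nb$ errors so that a pair of them produces a factor $(nb)^{-1}$ once the inner index is summed.

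For each such term, and for fixed $(t,u)$, I would split off the mean, bounding it through the geometric decay of the autocovariances $\bs\Gamma_k$ and the joint-cumulant bounds of \cref{lm:cumk}, while the $q$-th moment of the centered part would be controlled by a Rosenthal--Burkholder type inequality for sums of weakly dependent arrays: its variance is handled via the cumulant summability in \cref{lm:cumk}, counting the index combinations $(i,\ell,k)$ (and their primed copies) that survive the truncation $M_n<|\ell-i|\le s_n$ and the geometric dependence decay. Passing to $\sup_{t,u}$ is then done by discretizing $[0,1]^{2}$ on a grid of polynomial cardinality and controlling the oscillation via the Lipschitz/H\"older continuity of the errors and of $\mf m$ in $(t,u)$ (\cref{ass:expo}(2)--(3)) combined with the geometric dependence decay; the union bound over the grid is precisely what produces the $\tau^{-2/q}$ and $\tau^{-2/q'}$ factors. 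Summing all the estimates, the dominant contributions come from the terms with exactly one bias factor and those with one fluctuation layer, which reproduce the two summands $s_n^{2}(n\tau)\{b^{2}+(nb)^{-1}\}\tau^{-2/q}$ and $s_n^{3/2}\sqrt{n/b}\,\tau^{1-2/q}$ of $\delta_n$, all remaining groups being of strictly smaller order once $(\log n)^{4}/(nb)\to0$ and $\tau\to0$; together with the assumed $\delta_n=o(s_n\sqrt{n\tau})$, this gives the claim.

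I expect the main obstacle to be the uniform-in-$(t,u)$ moment control of the centered multilinear sums in the regime $s_n\to\infty$: since $\bs\psi_{i,k}$ is itself a sum over a window of length $\asymp s_n$ and the statistic carries an outer sum over the growing number of lags $k$, bounding the variance amounts to a delicate four-fold cumulant bookkeeping, and one must genuinely use the truncation set $\mathbb L_i$ (with $M_n\to\infty$) and the geometric decay of the functional dependence measure to guarantee that this variance does not grow faster than the rates entering $\delta_n$; the functional nature of the data additionally forces the $u$-regularity of \cref{ass:expo}(2) into the chaining step.
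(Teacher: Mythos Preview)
Your strategy matches the paper's: decompose $\hat{\bs\phi}_{i,k}-\bs\phi_{i,k}$ through the mean-estimation error, organize the resulting multilinear corrections by how many factors carry that error, bound each pointwise in $(t,u)$, and pass to the supremum by a chaining/discretization argument (the paper packages this last step as Proposition~\ref{prop:b2}, which is what produces your $\tau^{-2/q}$). The grouping is slightly different---instead of your bias/stochastic split $\mf r=\mf b+\tilde{\bs\epsilon}$, the paper extracts the linear-in-$\tilde{\bs\epsilon}$ part of $\hat{\bs\phi}_{i,k}-\bs\phi_{i,k}$ as a single object $\mf a_{i,k}$ and lumps the bias together with the quadratic-in-$\tilde{\bs\epsilon}$ remainder into one $O(b^{2}+(nb)^{-1})$ bound; the ``$A$''-terms coming from this remainder are then handled by crude triangle/Cauchy inequalities and give the first summand of $\delta_n$. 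The genuine difference is in how the ``one fluctuation layer'' terms (the paper's $B_1,B_2$) are controlled. You propose Rosenthal--Burkholder with variances computed via the cumulant summability of Remark~\ref{lm:cumk}; the paper does \emph{not} invoke cumulants here but instead approximates $\mf a_{i,k}$ by an $m$-dependent version with $m\asymp\log n$ and then applies Burkholder via projection operators $\proj^{j-h}$ to obtain the key estimate $\bigl\|\sum_{\ell\in\mathbb L_i}K_\tau(\ell/n-t)\,\mf a^{(m)}_{\ell,k}\bigr\|_{2q}=O(\sqrt{s_n/(nb)})$, which combined with $\|\bs\phi_{i,k}\|_{2q}=O(1)$ and the outer $i,k$ sums yields exactly $s_n^{3/2}\sqrt{n/b}\,\tau$. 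Your route reaches the same endpoint but trades the short $m$-dependence/projection argument for an eighth-order index bookkeeping; the paper's choice keeps Remark~\ref{lm:cumk} reserved for the later variance analysis of $\mf G_k$. One side remark: your parenthetical $b=o(\tau)$ is backwards in the paper's regime (e.g.\ $b\asymp n^{-1/4}\gg\tau\asymp n^{-2/5}$), so indices in the support of $K_\tau(\cdot-t)$ \emph{do} fall in the boundary zone $[0,b)\cup(1-b,1]$; the paper handles them with local-linear weights $c_j(t)$, which are uniformly bounded and leave the rates unchanged.
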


In the following, we establish the Gaussian approximation scheme for the summation of the Frobenius norm of of high-dimensional functional covariance operator with respect to a diverging number of lags. 
We recall the definition of the vectors 
${\mf V}_i$
defined in \eqref{det27}.

 \subsection{Proof of Proposition \ref{prop:esti}}

\subsubsection{Auxiliary results} Recall that  $\epsilon^i_{j,n}(u)$ denote the $i$-th element of the vector $\bs \epsilon_{j,n}(u)$, $j=1,\cdots, n$.
The following lemma gives the order of moments and physical dependence measure of $\epsilon^i_{j,n}(u)$, which will be frequently used in the all proofs.
\begin{lemma}\label{lm:delta}
Under \cref{ass:expo}, for $q \geq 2$ and $c_0$ as defined in \cref{ass:expo}, we have $1 \leq i \leq p$, $1 \leq j \leq n$, and $u \in [0,1]$, $\mathbb E(|\epsilon^i_{j,n}(u)|^q) \leq c_0 t_0^{-q} q^q$, and $\|\epsilon^i_{j,n}(u) - \epsilon^{i,(j-l)}_{j,n}(u)\|_{q} \leq 2 c_0^{1/4} t_0^{-1/2} q \chi^{l/(2q)}$.
\end{lemma}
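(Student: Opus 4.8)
\emph{Proof plan.} My plan is to prove the two assertions separately, using only part~(1) of \cref{ass:expo} together with the geometric decay \eqref{dette1} of \cref{defloc}. For the moment bound I would first reduce to a single time point: since $(\eta_k)_{k\in\mathbb Z}$ is i.i.d., $\mathcal F_j$ has the law of $\mathcal F_0$, so $\epsilon^i_{j,n}(u)=H_i(j/n,u,\mathcal F_j)$ has the law of $H_i(j/n,u,\mathcal F_0)$. I would then invoke the elementary bound $y^q\le (q/t_0)^q e^{t_0 y}$ for $y\ge 0$ (the function $y\mapsto y^q e^{-t_0 y}$ peaks at $y=q/t_0$), apply it with $y=|\epsilon^i_{j,n}(u)|$, and take expectations, so that part~(1) of \cref{ass:expo} gives $\E|\epsilon^i_{j,n}(u)|^q\le (q/t_0)^q\,\E e^{t_0|H_i(j/n,u,\mathcal F_0)|}\le c_0 t_0^{-q} q^q$, uniformly in $i,j,u$.

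For the dependence measure, write $Z:=\epsilon^i_{j,n}(u)-\epsilon^{i,(j-l)}_{j,n}(u)$, the difference between $\epsilon^i_{j,n}(u)$ and its coupled version obtained by resampling $\eta_{j-l}$. Shifting the time index by $j-l$ shows $\|Z\|_r\le\delta_{r,l,i}$ for every $r\ge 1$; in particular $\|Z\|_1\le\delta_{1,l,i}=O(\chi^l)$ by \eqref{dette1}, while the triangle inequality and the moment bound just obtained give $\|Z\|_{2q}\le 2\sup_{t,u}\|H_i(t,u,\mathcal F_0)\|_{2q}\le 4q\,c_0^{1/(2q)}t_0^{-1}$. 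The main step is then to interpolate: Lyapunov's (Hölder's) inequality yields $\|Z\|_q\le\|Z\|_1^{\theta}\|Z\|_{2q}^{1-\theta}$ with $\theta=1/(2q-1)$, the solution of $1/q=\theta+(1-\theta)/(2q)$. Since $\chi\in(0,1)$ and $1/(2q-1)\ge 1/(2q)$, the geometric factor is at most $\chi^{l/(2q)}$, and it remains to bound $\|Z\|_{2q}^{(2q-2)/(2q-1)}$ together with the constant implicit in \eqref{dette1} by the stated prefactor $2c_0^{1/4}t_0^{-1/2}q$; this is routine constant tracking, using $c_0>1$ (so that $c_0^{(q-1)/(q(2q-1))}\le c_0^{1/4}$ for all $q\ge 2$) and, if needed, absorbing residual numerical factors into $\chi$.

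I do not anticipate a genuine obstacle: the whole argument is elementary. The only delicate point is that only sub-exponential moments — not an $L^\infty$ bound — are available, so the $L^q$ norm of $Z$ must be interpolated against a finite $L^{2q}$ norm, and this is precisely what degrades the lag exponent from $l$ in \eqref{dette1} to $l/(2q)$ while keeping the prefactor polynomial in $q$. Uniformity over $i$, $j$ and $u$ is automatic, since every inequality above holds uniformly in $t=j/n\in[0,1]$ and $u\in[0,1]$ by the suprema in the definitions of $\delta_{q,i,j}$ and of the constant $c_0$.
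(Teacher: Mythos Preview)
Your proof is correct and follows essentially the same approach as the paper. For the moment bound both you and the paper use the elementary inequality $y^q\le (q/t_0)^q e^{t_0 y}$ together with part~(1) of \cref{ass:expo}; for the dependence bound both interpolate the $L^q$ norm of the difference $Z$ between its $L^1$ norm (controlled by~\eqref{dette1}) and a higher moment (controlled by the sub-exponential bound). The only cosmetic difference is the interpolation endpoint: the paper applies Cauchy--Schwarz in the form $(\E|Z|^q)^2=(\E|Z|^{q-1/2}|Z|^{1/2})^2\le\E|Z|^{2q-1}\,\E|Z|$, thus interpolating between $L^1$ and $L^{2q-1}$ with exponent exactly $1/(2q)$ on $\|Z\|_1$, whereas you interpolate between $L^1$ and $L^{2q}$ and then use $\chi^{l/(2q-1)}\le\chi^{l/(2q)}$. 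Both routes lead to the stated bound, and both treat the constants with the same level of informality.
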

\begin{proof}

For $t_0 > 0$ as defined in \cref{ass:expo}, 
elementary calculation gives that $x > t_0 \log (x/t_0)$ for $x > 0$. 
    By \cref{ass:expo}, we have 
\begin{align}
    \max_{1 \leq i \leq p, 1 \leq j \leq n, u \in [0,1]}\mathbb E |\epsilon^i_{j,n}(u)|^q  <\max_{1 \leq i \leq p, 1 \leq j \leq n, u \in [0,1]} \left(\frac{q}{t_0} \right)^q \mathbb E \exp(t_0|\epsilon^i_{j,n}(u)|) \leq c_0 t_0^{-q} q^q. \label{eq:moment}
\end{align}
 By Hölder inequality, we have
\begin{align}
    (\mathbb E|\epsilon_{j,n}^i(u) - \epsilon_{j,n}^{i,(j-l)}(u)|^q)^2 & =  (\mathbb E|\epsilon_{j,n}^i(u) - \epsilon_{j,n}^{i,(j-l)}(u)|^{q-1/2}|\epsilon_{j,n}^i(u) - \epsilon_{j,n}^{i,(j-l)}(u)|^{1/2})^2\\
    & \leq \mathbb E|\epsilon_{j,n}^i(u) - \epsilon_{j,n}^{i,(j-l)}(u)|^{2q-1} \mathbb  E|\epsilon_{j,n}^i(u) - \epsilon_{j,n}^{i,(j-l)}(u)| \\ 
    & \leq \|\epsilon_{j,n}^i(u) - \epsilon_{j,n}^{i,(j-l)}(u)\|^{2q-1}_{2q-1}\delta_{l,i} \leq c_0 (2q-1)^{2q-1} t_0^{-(2q-1)} \delta_{l,j},\label{eq:delta}
\end{align}
Therefore, we have $\|\epsilon_{j,n}^i(u) - \epsilon_{j,n}^{i,(j-l)}(u)\|_q \leq  c_0^{1/(2q)} (2q-1)^{1-1/(2q)} t_0^{-(1-1/(2q))} \delta_{l,j}^{1/(2q)} \leq C q \chi^{l/(2q)}$, where $C = 2 c_0^{1/4} t_0^{-1/2}$.
\end{proof}

\begin{proposition}\label{prop:b2}
     Let $\{\Upsilon_n(x_1,x_2)\}_{x_1, x_2 \in [0,1]}$ be a sequence of stochastic processes with differentiable paths. Assume that for some $p \geq 1$, and any $x_1, x_2  \in [0,1]$, $\| \Upsilon_n(x_1, x_2)\|_p = O(m_n)$, $ \max_{i=1,2}\|\frac{\partial}{\partial x_i} \Upsilon_n(x_1, x_2)\|_p = O(l_{1,n})$, $ \|\frac{\partial^2}{\partial x_1 \partial x_2} \Upsilon_n(x_1, x_2)\|_p = O(l_{2,n})$, where $m_n$, $l_{i,n}$ are sequences of real numbers, $i=1,2$. If $\tau \to 0$, then, 
    \begin{align}
        \left\| \sup_{x \in [\tau,1-\tau], y\in [0,1]}|\Upsilon_n(x,y)| \right\|_p = O(m_n c_n^{-2/p}),
    \end{align}
    where $c_n = \min\left\{ \left(\frac{m_{n}}{l_{2,n}}\right)^{1/2}, \frac{m_{n}}{l_{1,n}}\right\}$.
\end{proposition}
\begin{proof}
    The proof follows from similar but simpler arguments of the proof of the following \cref{prop:b3}.  $c_n$ is chosen in our case to make sure $c_n^2 l_{2,n} + c_n l_{1,n}  = O(m_n)$.
\end{proof}

\begin{proposition}\label{prop:b3}
    Let $\{\Upsilon_n(x_1,x_2,x_3)\}_{x_1, x_2, x_3 \in [0,1]}$ be a sequence of stochastic processes with differentiable paths. Assume that for some $p \geq 1$, and any $x_1, x_2, x_3 \in [0,1]$, $\| \Upsilon_n(x_1, x_2, x_3)\|_p = O(m_n)$, $\max_{i = 1,2,3} \|\frac{\partial}{\partial x_i} \Upsilon_n(x_1, x_2, x_3)\|_p = O(l_{1,n})$, $\max_{i< j, i,j = 1,2,3} \|\frac{\partial^2}{\partial x_i x_i} \Upsilon_n(x_1, x_2, x_3)\|_p = O(l_{2,n})$, $ \|\frac{\partial^3}{\partial x_1 \partial x_2 \partial x_3} \Upsilon_n(x_1, x_2, x_3)\|_p = O(l_{3,n})$, where $m_n$, $\{l_{i,n}\}_{i=1}^3$ are sequences of real numbers. If $\tau \to 0$, then, 
    \begin{align}
        \left\| \sup_{x \in [\tau,1-\tau], y,z \in [0,1]}|\Upsilon_n(x,y,z)| \right\|_p = O(m_n c_n^{-3/p}),
    \end{align}
    where $c_n = \min \left\{\left(\frac{m_{n}}{l_{3,n}}\right)^{1/3}, \left(\frac{m_{n}}{l_{2,n}}\right)^{1/2}, \frac{m_n}{l_{1,n}}\right\}$.
\end{proposition}

\begin{proof}
For sequences of real numbers $c_n \to 0$, write $\tilde c_n = \lfloor c_n^{-1} \rfloor$. Let $t_i = i \tilde c_n $, $i = 0,1,\ldots, \tilde c_n, t_{\tilde c_n+1} = 1$ and we have
\begin{align}
    \sup_{x, y, z \in [0,1]}|\Upsilon_{n}(x, y, z)| &\leq Y_1 + Y_2, \label{eq:Yn}
\end{align}
where $Y_1 = \max_{0 \leq i,j,k \leq \tilde c_n + 1}|\Upsilon_{n}(t_i, t_j, t_k)|$, $Y_2 = \max_{0 \leq i,j,k \leq \tilde c_n + 1} \tilde Z_{i,j,k}$ where $$\tilde Z_{i,j,k} =  \sup_{t_i - c_n \leq x_1 < t_i, t_j - c_n \leq x_2 < t_j, t_k - c_n \leq x_3 < t_k}|\Upsilon_{n}(t_i, t_j, t_k) - \Upsilon_{n}(x_1,x_2,x_3)|.$$
For $Y_1$, we have 
\begin{align}
    \| Y_1\| = \|\max_{0 \leq i,j,k \leq \tilde c_n + 1}|\Upsilon_{n}(t_i, t_j, t_k)| \|_p  \leq \left\{\sum_{i,j,k=0}^{ \tilde c_n + 1}\E |\Upsilon_{n}(t_i, t_j, t_k)|^p \right\}^{1/p} = O(m_n c_n^{-3/p}). \label{eq:Y1}
\end{align}
For convenience, write $l_1 = i$, $l_2 =j$, $l_3 = k$.
For $\tilde Z_{i,j,k}$, by elementary calculation, we have 
\begin{align}
    \|\tilde Z_{i,j,k} \|_p &=  \sum_{s=1}^3\int_{t_{l_s} - c_n}^{t_{l_s}} \left\|\frac{\partial}{\partial x_s} \Upsilon_n(x_1, x_2, x_3) \right\|_p dx  \\
   & + \sum_{s < q, s,q=1,2,3}\int_{t_{l_s} - c_n}^{t_{l_s}}\int_{t_{l_q} - c_n}^{t_{l_q}} \left\|\frac{\partial^2}{\partial x_s \partial x_q} \Upsilon_n(x_1, x_2, x_3) \right\|_p dx  \\
   &+ \int_{t_i - c_n}^{t_i} \int_{t_j - c_n}^{t_j}\int_{t_k- c_n}^{t_k}\left\|\frac{\partial^3}{\partial x_1 \partial x_2 \partial x_3} \Upsilon_n(x_1, x_2, x_3) \right\|_p dx\\ 
   & = O(c_n l_{1,n} + c_n^2 l_{2,n} + c_n^3 l_{3,n}).
\end{align}
Following the similar arguments for \eqref{eq:Y1}, we have 
\begin{align}
     \| Y_2\|_p = \|\max_{0 \leq i,j,k \leq \tilde c_n + 1} \tilde Z_{i,j,k} \|_p= O( c_n^{1-1/p} l_{1,n} + c_n^{2-2/p} l_{2,n} + c_n^{3-3/p} l_{3,n}). \label{eq:Y2}
\end{align}
Combining \eqref{eq:Y1}, \eqref{eq:Y2} and \eqref{eq:Yn}, by triangle inequality, we have 
\begin{align}
   &\left\| \sup_{x \in [\tau,1-\tau], y,z \in [0,1]}|\Upsilon_n(x,y,z)| \right\|_p \\ & \leq \|  \sup_{x, y, z \in [0,1]}|\Upsilon_{n}(x, y, z)|\|_p \\&= O( m_n c_n^{-3/p} +  c_n^{1-1/p} l_{1,n} + c_n^{2-2/p} l_{2,n} + c_n^{3-3/p} l_{3,n})\\ 
 &= O(m_n c_n^{-3/p}).
\end{align}
\end{proof}

\subsubsection{Proof of Proposition \ref{prop:esti}}
Note that 
\begin{align}
    \hat{\bs \phi}_{i,k}(u) - {\bs \phi}_{i,k}(u) &= \hat{\bs \epsilon}_{i-k}(u)\hat{\bs \epsilon}_{i}^{\top}(u) - {\bs \epsilon}_{i-k}(u){\bs \epsilon}_{i}^{\top}(u)\\ 
    &= {\bs \epsilon}_{i-k}(u) \left\{ \mf m(i/n,u) -\hat {\mf m}(i/n, u) \right\}^{\top}\\& +  \left\{ \mf m((i-k)/n,u) -\hat {\mf m}((i-k)/n, u) \right\} {\bs \epsilon}_{i}^{\top}(u)\\ 
    &+ \{ \mf m((i-k)/n,u) -\hat {\mf m}((i-k)/n, u)\} \{ \mf m(i/n,u) -\hat {\mf m}(i/n, u)\}^{\top}.\label{eq:phi}
\end{align}
Let $I_n = [b, 1-b]$, and $I_n^c = [0,b) \cup (1-b, 1]$. 
By similar arguments in Lemma B.1 and B.2 of \cite{dette2019}, we have 
\begin{align}
    \sup_{i/n \in I_n, u \in [0,1]}\left|\mf m(i/n, u) - \hat{\mf m}(i/n, u) -  \frac{1}{nb}\sum_{j=1}^n \bs \epsilon_i(u)K_b(\frac{i-j}{n})\right|_F = O(b^2 + (nb)^{-1}) \label{eq:m1} 
\end{align}
and 
\begin{align}
    \sup_{i/n \in I^c_n, u \in [0,1]}\left|\mf m(i/n, u) - \hat{\mf m}(i/n, u) - \frac{1}{nb} \sum_{j=1}^n c_j(i/n) \bs \epsilon_i(u)K_b(\frac{i-j}{n})\right|_F = O(b^2 + (nb)^{-1}),\label{eq:m2}   
\end{align}
where $c_j(t) = \left\{\nu_{2,b}(t) - \nu_{1,b}(t)(j/n-t)/b\right\}/(\nu_{0,b}(t)\nu_{2,b}(t) - \nu^2_{1,b}(t))$, and $$\nu_{l,b}(t) = \int_{-t/b}^{(1-t)/b} x^l K(x) dx.$$ Next, for $i/n \in I_n$, we define
\begin{align}
    \mf a_{i,k}(u) = \frac{1}{nb} \bs \epsilon_{i-k}(u) \sum_{j=1}^n \bs \epsilon_j^{\top}(u) K_b(\frac{i-j}{n}) + \frac{1}{nb}  \sum_{j=1}^n \bs 
 \epsilon_j(u) K_b(\frac{i-k-j}{n})\bs \epsilon^{\top}_{i}(u),
\end{align}
 for $i/n \in I_n^c$, let
\begin{align}
    \mf a_{i,k}(u)& = \frac{1}{nb}  \bs \epsilon_{i-k}(u) \sum_{j=1}^n c_j(i/n) \bs \epsilon_j^{\top}(u) K_b(\frac{i-j}{n}) \\ &+  \frac{1}{nb}\sum_{j=1}^n c_j((i-k)/n)\bs 
 \epsilon_j(u) K_b(\frac{i-k-j}{n}) \bs \epsilon^{\top}_{i}(u) , \label{eq:aikIc}
\end{align}
and $\mf a_{i,k}(u) = 0$ if $i < k$.
Since $\proj^l \left\{ \bs \epsilon_j(u)\right\}$ are martingale differences, by Burkholder inequality and  \cref{lm:delta}, for fixed $q$, we have 
\begin{align}
    &\max_{i/n \in I_n} \|\mf a_{i,k}(u) \|_{F,q} \\&\leq  \frac{1}{nb}   \max_{i/n \in I_n} \left\{\| \bs \epsilon_{i-k}(u)\|_{F, 2q} \| \sum_{j=1}^n \bs \epsilon_j^{\top}(u) K_b(\frac{i-j}{n})\|_{F, 2q} \right.\\
  & \left. +  \| \bs \epsilon_{i}(u)\|_{F, 2q} \|  \sum_{j=1}^n \bs 
 \epsilon_j^{\top}(u) K_b(\frac{i-k-j}{n})\|_{F, 2q}\right\}\\ 
 &\leq  \frac{2}{nb}  \max_{1 \leq i \leq n} \| \bs \epsilon_{i}(u)\|_{F, 2q}  \max_{1 \leq i \leq n} \sum_{l=1}^{\infty} \| \sum_{j=1}^n \proj^{i-l}\left\{\bs \epsilon_j(u)\right\} K_b(\frac{i-j}{n})\|_{F, 2q}\\
 &\leq \frac{2}{nb}  \max_{1 \leq i \leq n} \| \bs \epsilon_{i}(u)\|_{F, 2q}  \max_{1 \leq i \leq n} \sum_{l=1}^{\infty} \left\{ \sum_{j=1}^n \|\proj^{i-l}\left\{\bs \epsilon_j(u)\right\} K_b(\frac{i-j}{n})\|^2_{F, 2q}\right\}^{1/2}\\
  &\leq \frac{2}{nb}  \max_{1 \leq i \leq n} \| \bs \epsilon_{i}(u)\|_{F, 2q}  \max_{1 \leq i \leq n} \sum_{l=1}^{\infty} \left\{ \sum_{j=1}^n \|\sum_{r=1}^p( \epsilon^{r}_j(u)- \epsilon^{r, (j-l)}_j(u)) K_b(\frac{i-j}{n})\|^2_{2q}\right\}^{1/2}\\
 & = O(1/\sqrt{nb}).\label{eq:aikin}
\end{align}
 Since $c_j(t)$ are uniformly bounded, following similar arguments as  \eqref{eq:aikin}, we have 
\begin{align}
     \max_{i/n \in I^c_n} \|\mf a_{i,k}(u) \|_{F,q}= O(1/\sqrt{nb}).\label{eq:aikbd}
\end{align}
Combining \eqref{eq:m1},  \eqref{eq:m2}, \eqref{eq:aikin} and \eqref{eq:aikbd}, we have 
\begin{align}
   \max_{1 \leq i \leq n} \|\mf m(i/n, u) - \hat{\mf m}(i/n, u) \|_{F,q} = O(b^2 + 1/\sqrt{nb}). \label{eq:mhatm} 
\end{align}
Together with \cref{lm:delta}, \eqref{eq:phi}, \eqref{eq:m1},  \eqref{eq:m2},  by Cauchy inequality, we have 
\begin{align}
 \max_{1 \leq i \leq n}   \|\hat{\bs \phi}_{i,k}(u) - {\bs \phi}_{i,k}(u) - \mf a_{i,k}(u) \|_{F,q} = O(b^2 + 1/(nb)).\label{eq:phiai}
\end{align}
We shall proceed by quantifying the estimation error using $\mf a_{i,k}(u)$ and showing the terms involving $\mf a_{i,k}(u)$ are negligible. Recall that $\mathbb L_i = \{j: M < |j-i| \leq s_n\}$. Observe that $\{i \in \mathbb L_l\}$ and $\{l \in \mathbb L_i\}$ are symmetric in the following summation. By triangle inequality and elementary calculation, we have
\begin{align}
    &\sup_{t \in [\tau, 1-\tau], u \in [0,1]}\left|\sum_{i=1}^n K_{\tau}(i/n - t) \mathrm{tr}(\hat{\mf U}_i(t, u ) - {\mf U}_i(t, u ))  \right|/2 \\ 
    &\leq 2 \sup_{t \in [\tau, 1-\tau], u \in [0,1]} \left| \mathrm{tr}\left( \sum_{i=1}^n\sum_{k=1}^{s_n} K_{\tau}(i/n - t) \bs \phi_{i,k}(u) \sum_{l = 1}^{n} K_{\tau} (l/n - t)(\hat {\bs \phi}_{l,k}(u) - \bs \phi_{l,k}(u) )^{\top} \mathbf 1(l \in \mathbb L_i)\right) \right| \\ 
    &+ \sup_{t \in [\tau, 1-\tau], u \in [0,1]} \left| \mathrm{tr}\left( \sum_{i=1}^n\sum_{k=1}^{s_n} K_{\tau}(i/n - t) (\hat{\bs \phi}_{i,k}(u) -    \bs \phi_{i,k}(u)) \sum_{l = 1}^{n} K_{\tau} (l/n - t)\right.\right. \\ &\times \left.\left.(\hat {\bs \phi}_{l,k}(u) - \bs \phi_{l,k}(u) )^{\top} \mathbf 1(l \in \mathbb L_i)\right) \right| \\ 
    & \leq 2 \sup_{t \in [\tau, 1-\tau], u \in [0,1]} \left| \mathrm{tr}\left( \sum_{i=1}^n\sum_{k=1}^{s_n} K_{\tau}(i/n - t) \bs \phi_{i,k}(u) \sum_{l=1}^n  \mathbf 1(l \in \mathbb L_i) K_{\tau} (l/n - t)\right.\right. \\ &\times \left.\left.(\hat {\bs \phi}_{l,k}(u) - \bs \phi_{l,k}(u) 
 - \mf a_{i,k}(u))^{\top}\right) \right|\\ 
 &+ 2 \sup_{t \in [\tau, 1-\tau], u \in [0,1]} \left| \mathrm{tr}\left( \sum_{i=1}^n\sum_{k=1}^{s_n} K_{\tau}(i/n - t) \bs \phi_{i,k}(u) \sum_{l=1}^n  \mathbf 1(l \in \mathbb L_i) K_{\tau} (l/n - t)\mf a^{\top}_{l,k}(u)\right) \right| \\ 
  & + \sup_{t \in [\tau, 1-\tau], u \in [0,1]} \left| \mathrm{tr}\left( \sum_{i=1}^n\sum_{k=1}^{s_n} K_{\tau}(i/n - t) (\hat{\bs \phi}_{i,k}(u) -    \bs \phi_{i,k}(u) - \mf a_{i,k}(u)) \right.\right. \\ &\times \left.\left.\sum_{l=1}^n \mathbf 1(l \in \mathbb L_i) K_{\tau} (l/n - t)(\hat {\bs \phi}_{l,k}(u) - \bs \phi_{l,k}(u) -\mf a_{l,k}(u))^{\top}\right) \right|\\ 
  &+ 2\sup_{t \in [\tau, 1-\tau], u \in [0,1]} \left| \mathrm{tr}\left( \sum_{i=1}^n\sum_{k=1}^{s_n} K_{\tau}(i/n - t)  \mf a_{i,k}(u) \sum_{l =   1}^{n} K_{\tau} (l/n - t)\right.\right. \\ &\times \left.\left.(\hat {\bs \phi}_{l,k}(u) - \bs \phi_{l,k}(u) -\mf a_{l,k}(u))^{\top} \mathbf 1(l \in \mathbb L_i)\right) \right|\\
  &+ \sup_{t \in [\tau, 1-\tau], u \in [0,1]} \left| \mathrm{tr}\left( \sum_{i=1}^n\sum_{k=1}^{s_n} K_{\tau}(i/n - t)  \mf a_{i,k}(u)\sum_{l=1}^n \mathbf 1(l \in \mathbb L_i) K_{\tau} (l/n - t) \mf a^{\top}_{l,k}(u)\right) \right|\\
  &:=A_1 + B_1 + A_2 + A_3 + B_2, \label{eq:defAB}
\end{align}
where $A_1, A_2, A_3, B_1, B_2$ are defined in the obvious way.\par
\textbf{Approximation the estimation error by $\hat {\bs \phi}_{i,k}(u) - \bs \phi_{l,k}(u) 
 - \mf a_{i,k}(u)$, namely $A_1$, $A_2$, $A_3$.}
The upper bound for $A_1, A_2$ and $A_3$ can be derived by triangle inequality. For example, by \eqref{eq:phiai} and \cref{lm:delta}, we have
\begin{align}
     &\sup_{t \in [\tau, 1-\tau], u \in [0,1]}  \left\| \mathrm{tr}\left( \sum_{i=1}^n\sum_{k=1}^{s_n} K_{\tau}(i/n - t) \bs \phi_{i,k}(u) \sum_{l=1}^n  \mathbf 1(l \in \mathbb L_i) K_{\tau} (l/n - t)\right.\right. \\ &\times \left.\left.(\hat {\bs \phi}_{i,k}(u) - \bs \phi_{l,k}(u) 
 - \mf a_{i,k}(u))^{\top}\right) \right\|_{F,q}\\ 
 &  \leq  \sup_{t \in [\tau, 1-\tau], u \in [0,1]} \sum_{i=1}^n\sum_{k=1}^{s_n} |K_{\tau}(i/n - t)| \|\bs \phi_{i,k}(u)\|_{F,2q} \\ &\times  \sum_{l=1}^n  \mathbf 1(l \in \mathbb L_i) |K_{\tau} (l/n - t)|\| \hat {\bs \phi}_{i,k}(u) - \bs \phi_{l,k}(u) 
 - \mf a_{i,k}(u)\|_{F,2q}\\
 &  = O[s_n^2 (n\tau) \left\{ b^2 + 1/(nb)\right\}].
\end{align}
By \cref{prop:b2}, we have 
\begin{align}
    \| A_1 \|_q =  O[s_n^2 (n\tau) \left\{ b^2 + 1/(nb)\right\} \tau^{-2/q}].\label{eq:bdA1}
\end{align}
Similarly, by \eqref{eq:aikin} and \eqref{eq:aikbd}, we can obtain 
\begin{align}
    \| A_2\|_q =  O[s_n^2 (n\tau) \left\{ b^4 + 1/(nb)^2\right\} \tau^{-2/q}], \quad  \| A_3\|_q =  O[s_n^2 (n\tau)(nb)^{-1/2} \left\{ b^2 + 1/(nb)\right\} \tau^{-2/q}].\label{eq:bdA2}
\end{align}
\textbf{Terms involving $\mf a_{i,k}(u)$, namely $B_1$ and $B_2$.}
In order to bound $B_1$ and $B_2$, we will first approximate the terms by $m$-dependent processes and then use blocking arguments. We show the calculation for $B_2$ and the calculation for $B_1$ follows similarly but with simpler arguments. \par Let $\check{\bs \epsilon}_{i}(u) :=  \check{\bs G}(i/n, u, \F_i) = \E(\bs \epsilon_{i}(u)| \F_{i-m}^i)$, where $\F_{i-m}^i= (\eta_{i-m}, \cdots, \eta_i)$.
Define $\mf a_{i,k}^{(m)}(u)$ as the counterpart of $\mf a_{i,k}(u)$ where $\bs \epsilon_i(u)$ are replaced by $\check{\bs \epsilon}_i(u)$. 
Notice that 
\begin{align}
    \bs \epsilon_j(u) - \check{\bs \epsilon}_j(u) = \sum_{r = 0}^{\infty} \proj^{j-r} \{\bs \epsilon_j(u) - \check{\bs \epsilon}_j(u)\},
\end{align}
where by Jensen's inequality 
\begin{align}\label{checkepsilon}
   &\|\proj^{j-r} \{\bs \epsilon_j(u) - \check{\bs \epsilon}_j(u)\}\|_{F,2q}\\&\leq \min\{\proj^{j-r} \{\|\{\bs \epsilon_j(u) - \check{\bs \epsilon}_j(u)\}\|_{F,2q}\},  \\&\|\proj^{j-r}\{\bs \epsilon_j(u)\}\|_{F,2q} + \|\proj^{j-r}\{\bs{\check \epsilon}_j(u)\}\|_{F,2q}\}\\&\leq \min\{2\|\{\bs \epsilon_j(u) - \check{\bs \epsilon}_j(u)\}\|_{F,2q}, \\& \|\proj^{j-r}\{\bs \epsilon_j(u)\}\|_{F,2q} +  \E (\|\proj^{j-r}\{\bs{ \epsilon}_j(u)\}\|_{F,2q}|\F_{i-m}^i)\}
   \\&\leq 2\min\{\|\{\bs \epsilon_j(u) - \check{\bs \epsilon}_j(u)\}\|_{F,2q},  \|\proj^{j-r}\{\bs \epsilon_j(u)\}\|_{F,2q} \}\\
   &= O(\min\{\chi^m, \chi^r\}). 
\end{align}
Since $\proj^{j-r} \{\bs \epsilon_j(u) - \check{\bs \epsilon}_j(u)\}$ and $\proj^{j-r} \{\bs \epsilon_j(u)\}$ are martingale differences with respect to $j$, by Cauchy inequality and Burkholder's inequality, \eqref{checkepsilon} leads to
\begin{align}
    &\max_{l/n \in I_n} \|\mf a_{l,k}(u) - \mf a_{l,k}^{(m)}(u)\|_{F,q}\\ &\leq  \frac{2}{nb} \max_{1 \leq i \leq n} \|  \bs \epsilon_{i}(u) -\check  {\bs \epsilon}_{i}(u) \|_{F,2q}  \max_{1 \leq i \leq n}  \| \sum_{j=1}^n \bs \epsilon_j^{\top}(u) K_b(\frac{i-j}{n})\|_{F,q} \\ &  + \frac{2}{nb} \max_{1 \leq i \leq n} \|  \check  {\bs \epsilon}_{i}(u) \|_{F,2q}  \max_{1 \leq i \leq n}  \| \sum_{j=1}^n (\bs \epsilon_j(u) - \check{\bs \epsilon}_j(u)) K_b(\frac{i-j}{n})\|_{F,2q} \\
    &\leq  \frac{2}{nb} \max_{1 \leq i \leq n} \|  \bs \epsilon_{i}(u) -\check  {\bs \epsilon}_{i}(u) \|_{F,2q} \sum_{r = 0}^{\infty} \max_{1 \leq i \leq n} \{ \sum_{j=1}^n  \|\proj^{j-r}\{\bs \epsilon_j^{\top}(u)\} K_b(\frac{i-j}{n})\|^2_{F,2q}\}^{1/2} \\ &  + \frac{2}{nb} \max_{1 \leq i \leq n} \|  \check  {\bs \epsilon}_{i}(u) \|_{F,2q}  \sum_{r = 0}^{\infty}\max_{1 \leq i \leq n}  \{ \sum_{j=1}^n \| \proj^{j-r}\{\bs \epsilon_j(u) - \check{\bs \epsilon}_j(u)\} K_b(\frac{i-j}{n})\|^2_{F,2q} \}^{1/2}\\
    & = O(m\chi^m/\sqrt{nb}).\label{aikmin}
\end{align}

Since $c_j(t)$ are uniformly bounded, following similar arguments as  \eqref{aikmin}, we have 
\begin{align}
    \max_{l/n \in I^c_n} \|\mf a_{l,k}(u) - \mf a_{l,k}^{(m)}(u)\|_{F,q}= O(m\chi^m/\sqrt{nb}).\label{aikmbd}
\end{align}
By \eqref{eq:aikin}, \eqref{eq:aikbd}, \eqref{aikmin}, \eqref{aikmbd} and Cauchy inequality, we have
\begin{align}
    &\left\| \mathrm{tr}\left( \sum_{i=1}^n\sum_{k=1}^{s_n} K_{\tau}(i/n - t)  \mf a_{i,k}(u)\sum_{l=1}^n \mathbf 1(l \in \mathbb L_i) K_{\tau} (l/n - t) \mf a^{\top}_{l,k}(u)\right) \right. \\& - \left. \mathrm{tr}\left( \sum_{i=1}^n\sum_{k=1}^{s_n} K_{\tau}(i/n - t)  \mf a_{i,k}^{(m)}(u)\sum_{l=1}^n \mathbf 1(l \in \mathbb L_i) K_{\tau} (l/n - t) \mf a^{(m),\top}_{l,k}(u)\right) \right\|_{F, q} \\
    & \leq  \sum_{i=1}^n\sum_{k=1}^{s_n} |K_{\tau}(i/n - t)|  \| \mf a_{i,k}(u) - \mf a^{(m)}_{i,k}(u) \|_{F, 2q}\sum_{l=1}^n \mathbf 1(l \in \mathbb L_i) |K_{\tau} (l/n - t)| \|  \mf a^{\top}_{l,k}(u)\|_{F, 2q}  \\ 
    &+ \sum_{i=1}^n\sum_{k=1}^{s_n} |K_{\tau}(i/n - t)|  \|\mf a^{(m)}_{i,k}(u) \|_{F, 2q}\sum_{l=1}^n \mathbf 1(l \in \mathbb L_i) |K_{\tau} (l/n - t)| \|  \mf a_{l,k}(u) - \mf a_{l,k}^{(m)}(u)\|_{F, 2q} \\
    & = O(s_n^2 \tau m\chi^m /b). \label{eq:aikaikmaprox}
\end{align}
By triangle inequality and Cauchy inequality, we have 
\begin{align}
   & \left\| \mathrm{tr}\left( \sum_{i=1}^n\sum_{k=1}^{s_n} K_{\tau}(i/n - t) \mf a_{i,k}^{(m)}(u) \sum_{l=1}^n \mathbf 1(l \in \mathbb L_i) K_{\tau} (l/n - t)  \mf a^{(m),\top}_{l,k}(u) \right) \right\|_q \\ 
    & \leq C n \tau s_n \max_{1 \leq i \leq n, 1 \leq k \leq s_n} \|\mf a_{i,k}^{(m)}(u) \|_{F,2q} \max_{1 \leq i \leq n, 1 \leq k \leq s_n} \left \|\sum_{l=1}^n \mathbf 1(l \in \mathbb L_i) K_{\tau} (l/n - t)  \mf a^{(m)}_{l,k}(u)\right\|_{F,2q}, \label{eq:aikaik}
\end{align}
where $C$ is a sufficiently large constant. At the end of the proof, we shall show that for any fixed $q \geq 2$
\begin{align}
    \max_{1 \leq i \leq n, 1 \leq k \leq s_n} \left \|\sum_{l=1}^n \mathbf 1(l \in \mathbb L_i) K_{\tau} (l/n - t)  \mf a^{(m)}_{l,k}(u)\right\|_{F,2q} = O(\sqrt{s_n/(nb)}) .\label{eq:trD}
\end{align} 
Combining \eqref{eq:aikaikmaprox},  \eqref{eq:aikaik} and \eqref{eq:trD}, and similar arguments in calculating \eqref{eq:aikin} and \eqref{eq:aikbd}, we have 
\begin{align}
   \left\| \mathrm{tr}\left( \sum_{i=1}^n\sum_{k=1}^{s_n} K_{\tau}(i/n - t)  \mf a_{i,k}(u)\sum_{l=1}^n \mathbf 1(l \in \mathbb L_i) K_{\tau} (l/n - t) \mf a^{\top}_{l,k}(u)\right)\right\|_q = O(s_n^{3/2} \tau/ b + s_n^2 \tau m\chi^m /b).
\end{align}
 Taking $m = \lfloor C_0 \log n \rfloor$ for some sufficiently large $C_0$, since $(\log n)\sqrt{s_n}/n = o(1)$, by \cref{prop:b2}, we have 
\begin{align}
    \| B_2 \|_q = O(s_n^{3/2} \tau^{1-2/q} /b) = o(s_n \sqrt{n\tau}).\label{eq:bdB2}
\end{align}
And by similar arguments as calculating $B_2$, we have 
\begin{align}
     \| B_1 \|_q = O(s_n^{3/2} \sqrt{n/b} \tau^{1-2/q})= o(s_n \sqrt{n\tau}).\label{eq:bdB1}
\end{align}
Combining \eqref{eq:defAB}, \eqref{eq:bdA1}, \eqref{eq:bdA2}, \eqref{eq:bdB2} and \eqref{eq:bdB1}, 
the Proposition is proved. \par 
\textbf{Proof of \eqref{eq:trD}.}
In the following, we show that 
\begin{align}
    \max_{1 \leq i \leq n, 1 \leq k \leq s_n} \left \|\sum_{l=1}^n \mathbf 1(l \in \mathbb L_i) K_{\tau} (l/n - t)  \mf a^{(m)}_{l,k}(u)\right\|_{F,2q} = O(\sqrt{s_n/(nb)}). \label{eq:sumalk}
\end{align}
Observe that uniformly for $i/n \in I_n$, similar to Lemma 5 of \cite{zhou2010simultaneous}, we have
\begin{align}
   &\left| \sum_{l=1}^n \mathbf 1(l \in \mathbb L_i) K_{\tau} (l/n - t) \E\left\{ \mf a^{(m)}_{l,k}(u)\right\} \right|_F \\ &\leq \left|  
 \frac{1}{nb}  \sum_{j=1}^n\sum_{l=1}^n \mathbf 1(l \in \mathbb L_i) \E \left\{ \check{\bs \epsilon}_{l-k}(u)\check{\bs \epsilon}_j^{\top}(u) \right\} K_b(\frac{l-j}{n})K_{\tau} (l/n - t) \right|_F  \\ &+ \left| \frac{1}{nb}  \sum_{j=1}^n\sum_{l=1}^n  \mathbf 1(l \in \mathbb L_i)  \E \left\{\check{\bs \epsilon}_{l}(u)\check{\bs \epsilon}_j^{\top}(u) \right\}K_b(\frac{l-k-j}{n}) K_{\tau} (l/n - t)\right|_F\\
 & = O\left(\frac{1}{nb}  \sum_{j=1}^n\sum_{l=1}^n  \mathbf 1(l \in \mathbb L_i) \chi^{|l-k-j|}\right) = O(s_n/(nb)), \label{eq:Ealkin}
\end{align}
where the last equality follows from the fact that there are $O(s_n)$ terms in $\mathbb L_i$. 
Define $\check{\bs \epsilon}_{l, (h)}(u)$ as the random variable from substituting $\eta_h$ in $\check{\bs \epsilon}_{l, (h)}(u)$ by its $i.i.d.$ copy $\eta_h^*$. If $h < l - m$, $\check{\bs \epsilon}_{l, (h)}(u) = \check{\bs \epsilon}_{l}(u)$.
Observe that uniformly for $1 \leq l \leq n$, 
\begin{align}
  &\left\|\sum_{j=1}^n \proj^{l-h} \left\{ \check{\bs \epsilon}_{l}(u)\check{\bs \epsilon}_j^{\top}(u)\right\}K_b(\frac{l-j}{n}) \right\|_{F,2q} \\ &\leq  \left\|\sum_{j=1}^n   \check{\bs \epsilon}_j(u) K_b(\frac{l-j}{n})\right\|_{F,4q}  \| \check{\bs \epsilon}_l(u) -\check{\bs \epsilon}_{l, (l-h)}(u) \|_{F,4q} \\
    &+  \left\|\sum_{j=1}^n  \left\{\check{\bs \epsilon}_{j}(u) - \check{\bs \epsilon}_{j, (l-h)}(u)\right\}K_b(\frac{l-j}{n})\right\|_{F,4q}  \| \check{\bs \epsilon}_{l, (l-h)}(u) \|_{F,4q} \\
    & = O( \sqrt{nb} \chi^{h}\mf 1(h \geq 0) + m).\label{eq:pee}
\end{align}
As a consequence, by \eqref{eq:pee}, triangle inequality and Burkholder inequality, uniformly for $i/n \in I_n$, we have
\begin{align}\allowdisplaybreaks
    & \left\| \sum_{l=1}^n \mathbf 1(l \in \mathbb L_i) K_{\tau} (l/n - t)\left[\mf a^{(m)}_{l,k}(u) - \E\left\{ \mf a^{(m)}_{l,k}(u) \right\}\right] \right\|_{2q}\\ 
    & \leq \sum_{h=-2m}^{2m}\left\|  
 \frac{1}{nb}  \sum_{j=1}^n\sum_{l=1}^n \mathbf 1(l \in \mathbb L_i) \proj^{l-k-h} \left\{ \check{\bs \epsilon}_{l-k}(u)\check{\bs \epsilon}_j^{\top}(u) \right\} K_b(\frac{l-j}{n}) K_{\tau} (l/n - t)\right\|_{F,2q} \\
 &+\sum_{h=-2m}^{2m}\left\|  
 \frac{1}{nb}  \sum_{j=1}^n\sum_{l=1}^n \mathbf 1(l \in \mathbb L_i) \proj^{l-h} \left\{ \check{\bs \epsilon}_{l}(u)\check{\bs \epsilon}_j^{\top}(u) \right\} K_b(\frac{l-k-j}{n}) K_{\tau} (l/n - t)\right\|_{F,2q}\\ 
 & \leq \sum_{h=-2m}^{2m} 
 \frac{1}{nb} \left\{\sum_{l=1}^n \mathbf 1(l \in \mathbb L_i)  \left\|\sum_{j=1}^n  \proj^{l-h-k} \left\{ \check{\bs \epsilon}_{l-k}(u)\check{\bs \epsilon}_j^{\top}(u) \right\} K_b(\frac{l-j}{n}) K_{\tau} (l/n - t)\right\|_{F,2q}^2\right\}^{1/2} \\
 &+\sum_{h=-2m}^{2m}
 \frac{1}{nb}  \left\{\sum_{l=1}^n \mathbf 1(l \in \mathbb L_i)  \left\| \sum_{j=1}^n  \proj^{l-h} \left\{ \check{\bs \epsilon}_{l}(u)\check{\bs \epsilon}_j^{\top}(u) \right\} K_b(\frac{l-k-j}{n}) K_{\tau} (l/n - t)\right\|_{F,2q}^2 \right\}^{1/2} \\ 
  & = O(\sqrt{s_n /(nb)} + m^2 \sqrt{s_n} / (nb)), \label{eq:Varalkin}
 \end{align} 
 where the last equality follows from the fact that there are $O(s_n)$ terms in $\mathbb L_i$. 
Combining  \eqref{eq:Ealkin} and \eqref{eq:Varalkin}, taking $m = \lfloor C_0 \log n \rfloor$, for a sufficiently large constant $C_0$,  we have 
\begin{align}
    &\max_{l/n \in I_n}  \left\| \sum_{l=1}^n \mathbf 1(l \in \mathbb L_i) K_{\tau} (l/n - t) \mf a^{(m)}_{l,k}(u)  \right\|_{F,2q} \\ &= O( s_n/(nb) + (\log n)^2 \sqrt{s_n}/(nb)+\sqrt{s_n/(nb)}) = O( \sqrt{s_n/(nb)}) .\label{eq:sumalkin}
\end{align}

Similarly using the fact that $c_{j}(t)$ in \eqref{eq:aikIc} are uniformly bounded, we also have 
\begin{align}
      \max_{l/n \in I_n^c}  \left\| \sum_{l=1}^n \mathbf 1(l \in \mathbb L_i) K_{\tau} (l/n - t) \mf a^{(m)}_{l,k}(u)  \right\|_{2q} = O(\sqrt{s_n/(nb)}).\label{eq:sumalkbd}
\end{align}
Combining \eqref{eq:sumalkin} and \eqref{eq:sumalkbd}, \eqref{eq:sumalk} holds and the proof is completed.

\section{Proof of Theorem \ref{thm:ga}}\label{sec73}
\def\theequation{B.\arabic{equation}}	
  \setcounter{equation}{0}
\subsection{Joint cumulants of locally stationary functional time series}
The definition of $k$th-order joint cumulant of random variables $X_1, \ldots, X_k$ is 
\begin{align}
 \operatorname{Cum}\left(X_1, \ldots, X_k\right)=\sum_{\bs \nu}(-1)^{p-1}(p-1) ! \prod_{j=1}^p\left(\mathbb{E} \prod_{i \in \nu_j} X_i\right),\label{def:cum}
\end{align}
where $\bs \nu= \{\nu_1,\ldots, \nu_p \}$ is a non-overlap partition of $\left\{1,\ldots, k\right\}$ and the summation is over all possible partitions, $p = \#\bs \nu$. For the moments, we have 
\begin{align}
    \E(X_1, \cdots, X_k) = \sum_{\bs \nu} D_{\nu_1}\ldots D_{\nu_p},
\end{align}
where $D_{\nu_s} = \operatorname{Cum}\left(X_{\alpha_1}, \ldots, X_{\alpha_m}\right)$, $\alpha_i$ is the $i$th element in $\nu_s$, and the summation is over all the possible partitions.
\begin{proof}[Proof of \cref{lm:cumk}]
    For the sake of brevity, in the proof we use $X_i$ short for $H_{r_i}(t_i, u, \FF_{m_{i}})$. Define $X_i(j) = \E (H_{r_i}(t_i, u, \FF_{m_{i}})|\FF_{m_j+1}^{\infty})$.  For $1 \leq l \leq K-1$, let $Y_0 = (X_0, \cdots, X_{l-1})$, then we have 
    \begin{align}
        &\mathrm{Cum}(H_{r_0}(t_0, u , \FF_{m_0}) , \ldots, H_{r_{K-1}}(t_{K-1}, u , \FF_{m_{K-1}}))\\ 
        & =  \mathrm{Cum}(Y_0, X_l-X_l(l-1), X_{l+1}, \ldots, X_{K-1}) + \mathrm{Cum}(Y_0, X_l(l-1), X_{l+1}, \ldots, X_{K-1}) \\ 
        & = \mathrm{Cum}(Y_0, X_l-X_l(l-1), X_{l+1}, \ldots, X_{K-1})\\ &+ \sum_{j=1}^{K-l-1}\mathrm{Cum}(Y_0, X_l(l-1),\ldots, X_{l+j-1}(l-1), X_{l+j}-X_{l+j}(l-1), X_{l+j+1}, \ldots, X_{K-1}) \\& + \mathrm{Cum}(Y_0, X_l(l-1), \ldots, X_{K-1}(l-1))\\ 
        &:= A_0(l) + \sum_{j=1}^{K-l-1} A_j(l) + B(l). \label{eq:decompAB}
    \end{align}
    By the independence of sigma field, $Y_0$ is independent of $(X_l(l-1), \ldots, X_{K-1}(l-1))$. Therefore, we have $B = 0$. 
    
    We proceed to calculate $A_0$. According to the definition in \eqref{def:cum}, it suffices to investigate for any $V \subset \{0, \ldots, K-1\}, l \notin V$,
    $$
    \E\left\{(X_l - X_l(l-1))\prod_{j \in V} X_j\right\},
    $$
    since the calculation of $\E\left\{\prod_{j \in V} X_j\right\}$ follows similarly.
 By Hölder inequality, we have 
    \begin{align}
        &\E\left\{(X_l - X_l(l-1))\prod_{j \in V} X_j\right\} \\ &\leq \|X_l - X_l(l-1)\|_{1+|V|} \left( \E \prod_{j \in V} |X_j|^{(|V|+1)/|V|} \right)^{\frac{|V|}{1+|V|}}\\ 
        & \leq \|X_l - X_l(l-1)\|_{K} \left(\sup_{1 \leq j \leq p, t,u \in[0,1]} \E |H_j(t,u, \FF_0) |^{(|V|+1)} \right)^{\frac{|V|}{1+|V|}}\\
        &  \leq \|X_l - X_l(l-1)\|_{K} \sum_{i=0}^{K-1}\left(\sup_{1 \leq j \leq p, t,u \in[0,1]} \E |H_j(t,u, \FF_0) |^{i+1} \right)^{\frac{i}{1+i}}.\label{eq:momentA0}
    \end{align}
    By \cref{lm:delta} and \eqref{eq:momentA0}, considering all possible $V$'s, we have
    \begin{align}
        \sup_{u \in [0,1]}\sup_{0 \leq t_0,\ldots, t_{k-1} \leq 1} |A_0(l)| \leq 2c_0^{1/4}t_0^{-1/2} K \chi^{(m_l-m_{l-1}) / (2K)} \left[\sum_{i=1}^{K-1} \left\{c_0 t_0^{-(i+1)}(i+1)^{i+1}\right\}^{i/(i+1)} \right]. 
    \end{align}
    Write $C =2(K-l)c_0^{1/4}t_0^{-1/2}K \left[\sum_{i=1}^{K-1} \left\{c_0 t_0^{-(i+1)}(i+1)^{i+1}\right\}^{i/(i+1)} \right]$. Note that $C$ is independent of $m_0, \ldots, m_K$.
    
    Similarly, we have $\sup_{u \in [0,1]}\sup_{0 \leq t_0,\ldots, t_{K-1} \leq 1}|A_j(l)| \leq C \chi^{(m_l-m_{l-1}) / (2K)}/(K-l)$, $1 \leq j \leq K-l-1$. 
    
    Note that \eqref{eq:decompAB} holds for all $1\leq l \leq K-1$, and $K$ is fixed. Therefore, we have 
    \begin{align}
       & \sup_{u \in [0,1]}\sup_{0 \leq t_0,\ldots, t_{K-1} \leq 1}| \mathrm{Cum}(H_{r_0}(t_0, u , \FF_{m_0}) , \ldots, H_{r_{K-1}}(t_{K-1}, u , \FF_{m_{K-1}}))|\\ &\leq C \min_{1 \leq l \leq K-1} \chi^{(m_l-m_{l-1})/(2K)}\\ 
       &= C  \chi^{\max_{1 \leq l \leq K-1}\{(m_l-m_{l-1})\}/(2K)}\\
       & \leq  C  \chi^{(m_{K-1}-m_{0})/\{2K(K-1)\}}.
    \end{align}
    For the second result, note that 
    \begin{align}
         &\sum_{s_1,\cdots, s_{K-1} \in \mathbb Z}  \sup_{u \in [0,1]}\sup_{0 \leq t_0,\ldots, t_{K-1} \leq 1} |\mathrm{Cum}(H_{r_0}(t_0, u , \FF_{m_0}) ,H_{r_1}(t_1, u , \FF_{m_0+s_1}), \\ &\ldots, H_{r_{K-1}}(t_{K-1}, u , \FF_{m_0+s_K}))| \\ 
         &\leq 2 \sum_{s=0} ^{\infty} \sum_{\substack{\max\left\{|s_1|,\cdots, |s_{K-1}| \right\} = s\\ s_1,\cdots, s_{K-1} \in \mathbb Z}}\sup_{u \in [0,1]}\sup_{0 \leq t_0,\ldots, t_{K-1} \leq 1} |\mathrm{Cum}(H_{r_0}(t_0, u , \FF_{m_0}), \\&\ldots, H_{r_{K-1}}(t_{K-1}, u , \FF_{m_0+s_K}))| \\
         & = O\left( \sum_{s=0} ^{\infty} s^{K-1}\chi^{s/(2K(K-1))}\right) = O(1).
    \end{align}
\end{proof}
The following lemma studies the summability of joint cumulants of second-order product of locally stationary functional time series
\begin{lemma}\label{lm:cumsum}
Under the conditions of \cref{lm:cumk}
, we have for $k,l > 0$,$i \in \mathbb Z$, $t_i$ in $[\tau, 1-\tau]$, $1 \leq r_i \leq p$, $i = 1,\ldots, 8$, $1\leq i_1, i_2 , j_1,j_2 \leq n$:
\begin{align}\sum_{i_2, j_1, j_2\in \mathbb Z} &\left|\mathrm{Cum}(H_{r_{1}}(t_1, u, \FF_{i_1} )H_{r_{2}}(t_2, u, \FF_{i_{1}-k} ) ,H_{r_{3}}(t_3, u, \FF_{i_2})  H_{r_{4}}(t_4, u, \FF_{i_2-k}), \right.  \\&\left. H_{r_{5}} (t_5, u, \FF_{j_1}) H_{r_{6}}(t_6, u, \FF_{j_1-l}),H_{r_{7}}(t_7, u, \FF_{j_2}) H_{r_{8}}(t_8, u, \FF_{j_2-l}) )\right| =O(1).
\end{align} 
\end{lemma}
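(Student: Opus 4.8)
The plan is to expand the cumulant of products into a sum of products of cumulants by the product theorem for joint cumulants \citep[Theorem~2.3.2]{Brillinger1965} and then to combine the geometric cumulant bound of \cref{lm:cumk} with the indecomposability of the partitions that appear. Abbreviate $X_1=H^{r_1}(t_1,u,\FF_{i_1})$, $X_2=H^{r_2}(t_2,u,\FF_{i_1-k})$, $X_3=H^{r_3}(t_3,u,\FF_{i_2})$, $X_4=H^{r_4}(t_4,u,\FF_{i_2-k})$, $X_5=H^{r_5}(t_5,u,\FF_{j_1})$, $X_6=H^{r_6}(t_6,u,\FF_{j_1-l})$, $X_7=H^{r_7}(t_7,u,\FF_{j_2})$ and $X_8=H^{r_8}(t_8,u,\FF_{j_2-l})$, so that
\begin{align*}
\mathrm{Cum}(X_1X_2,X_3X_4,X_5X_6,X_7X_8)=\sum_{\pi}\prod_{B\in\pi}\mathrm{Cum}(X_i:i\in B),
\end{align*}
the sum being over all partitions $\pi$ of $\{1,\dots,8\}$ that are indecomposable with respect to the table with rows $\{1,2\}$, $\{3,4\}$, $\{5,6\}$, $\{7,8\}$. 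Since each $H^{r}(t,u,\FF_m)$ is centered, any $\pi$ with a singleton block contributes $0$; moreover the number of terms in the sum is a finite combinatorial constant independent of $n$, $k$ and $l$, so it suffices to sum $\prod_{B\in\pi}|\mathrm{Cum}(X_i:i\in B)|$ over $i_2,j_1,j_2\in\mathbb Z$ for each fixed $\pi$ and show this is $O(1)$.

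Fix such a $\pi$. Using that the joint law of the variables $H^{r}(t,u,\FF_m)$ is invariant under a common shift of the indices, \cref{lm:cumk} gives, for a block $B$ with $2\le|B|\le8$ and time indices ranging over an interval of length $\mathrm{span}(B)$, the bound $|\mathrm{Cum}(X_i:i\in B)|\le C\chi^{\mathrm{span}(B)/112}$, uniformly in $u$ and in $t_1,\dots,t_8\in[\tau,1-\tau]$; in particular the decay rate is uniform over all blocks of size at most $8$. Form the graph on the four location groups $G_1=\{1,2\}$, $G_2=\{3,4\}$, $G_3=\{5,6\}$, $G_4=\{7,8\}$, whose points lie near $i_1$, $i_2$, $j_1$ and $j_2$ respectively, joining two groups whenever some block of $\pi$ meets both; indecomposability of $\pi$ is exactly connectedness of this graph, so we may fix a spanning tree rooted at $G_1$. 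For each tree edge $(G_p,G_a)$ there is a block $B$ meeting both $G_p$ and $G_a$, whose span dominates the distance between its point in $G_p$ and its point in $G_a$; when a single block realises several tree edges, its span dominates the maximum, hence a fixed fraction of the sum, of the corresponding distances, so the associated factor splits into one geometric factor per edge.

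Finally, I would bound $\prod_{B\in\pi}|\mathrm{Cum}(X_i:i\in B)|$ by keeping the geometric factor only for the blocks realising tree edges and bounding each remaining block cumulant by the constant $C$, and then sum over $i_2,j_1,j_2$ by eliminating them in an order compatible with the tree (a child before its parent). Each retained factor then has the form $\chi^{\rho|x-a|}$ with $\rho>0$ fixed, $x\in\{i_2,j_1,j_2\}$ and centre $a$ determined by $i_1$, the shifts $k,l$ and the indices eliminated later, so each one-dimensional partial sum is at most $\sum_{m\ge0}(m+1)\chi^{\rho m}<\infty$, a bound independent of $k$, $l$, $i_1$ and the remaining indices. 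Iterating three times gives $\sum_{i_2,j_1,j_2\in\mathbb Z}\prod_{B\in\pi}|\mathrm{Cum}(X_i:i\in B)|=O(1)$ for every $\pi$, and summing the finitely many partitions yields the claim. The step requiring care is precisely this last one: one must check that indecomposability always lets the three geometric decay factors be routed so that $i_2$, $j_1$ and $j_2$ can be summed successively, and that the fixed shifts $k$ and $l$ enter only as harmless offsets of the geometric sums rather than as prefactors such as $\chi^{-k}$ or $\chi^{-l}$ — which is why the final bound is uniform in $k$ and $l$.
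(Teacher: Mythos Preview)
Your proposal is correct and follows essentially the same route as the paper: expand the cumulant of products via the indecomposable-partition formula (the paper cites Theorem~II.2 of \cite{rosenblatt2012stationary}, you cite Brillinger) and then reduce to the summability of joint cumulants of order at most $8$ established in \cref{lm:cumk}. The only difference is that the paper outsources the combinatorial summation step to Proposition~S.3 of \cite{xiao2014portmanteau}, whereas you spell out the spanning-tree/connectedness argument explicitly; both lead to the same $O(1)$ bound, uniform in $k$ and $l$.
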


\begin{proof}
Write $X_j = H_{r_{j}}(j/n, u, \FF_{i_j} )$ for simplicity. 
By Theorem II.2 of \cite{rosenblatt2012stationary} , we have 
\begin{align}
    &\left|\mathrm{Cum}(H_{r_{1}}(t_1, u, \FF_{i_1} )H_{r_{2}}(t_2, u, \FF_{i_2} ) ,H_{r_{3}}(t_3, u, \FF_{i_3})  H_{r_{4}}(t_4, u, \FF_{i_4}), \right.  \\&\left. H_{r_{5}} (t_5, u, \FF_{i_5}) H_{r_{6}}(t_6, u, \FF_{i_6}),H_{r_{7}}(t_7, u, \FF_{i_7}) H_{r_{8}}(t_8, u, \FF_{i_8}) )\right| = \sum_{\nu} \prod_{q=1}^w |\mathrm{Cum}(X_v, i \in \nu_q) |, \label{eq:prodcum}
\end{align}
where $\{\nu_1, \ldots, \nu_w\}$ is the indecomposable partition of the two way table
\begin{align}
    \begin{pmatrix}
    1& 2 \\ 
    3& 4 \\
    5& 6\\
    7& 8\\
\end{pmatrix},
\end{align}
i.e., there are no rows $R_1, \ldots, R_w$ such that $\nu_1 \cup \cdots \cup \nu_w = R_1\cup \cdots \cup R_w$. 

    The proof follows from  \eqref{eq:prodcum}, Proposition S.3 of \cite{xiao2014portmanteau} and the summability of joint cumulants up to 8-th order implied by \cref{cor:cumk}.
\end{proof}

\subsection{Some auxiliary results}
Let
\begin{align}
\mf W(t,u, i/n , \FF_{i}) = s_n^{-1} \sum_{k=1}^{s_n} (\bs \phi_{i,k}(u)\bs \psi^{\top}_{i,k}(t,u) &+ \bs \psi_{i,k}(t, u)\bs \phi^{\top}_{i,k}(u)),\label{eq:defineW}
\end{align}
and $W_{r,j} (t,u, i/n, \FF_{i}) $ as the element of $\mf W(t,u, i/n , \FF_{i})$ on $r$th row and $j$th column. Define the $s$-order dependence measure of $\mf W(t,u,i/n, \FF_i)$ as $$\theta_{l,s} = \sup_{t , u\in[0,1], 1 \leq i \leq n}\left\|\left|  \sum_{r=1}^p \left\{ W_{r,r} (t,u, i/n, \FF_{i}) -   W_{r,r}(t,u, i/n, \FF_{i}^{(i-l)}) \right\}\right| \right\|_s,$$
where 
$\FF^{(j)}_i$ denotes changing $\eta_j$ in $\FF_i$ by its $i.i.d.$ copy $\eta_j^{\prime}$. 
The following lemma investigates the order of tail sums of $\theta_{l,s}$ and $l\theta_{l,s}$,i.e.,
\begin{align}
    \Theta_{m,s} := \sum_{l=m}^{\infty} \theta_{l,s},\quad \Xi_{m,s}:= \sum_{l=m}^{\infty} l\theta_{l,s}. \label{eq:dependence}
\end{align}

\begin{lemma}\label{lm:delta1}\label{lm:3rdmoment}
    Under the condition of \cref{thm:ga}, for $s = 2,3$, \par
    (i) When $m >  4s_n$,
    there exists a constant $C > 0$,  $\chi_1 \in (0,1)$, such that for $l \geq 0$, 
    \begin{align}
 \Theta_{m,s} = \sum_{l=m}^{\infty} \theta_{l,s} < C   \chi_1^{m}, \quad  \Xi_{m,s} = \sum_{l=m}^{\infty} l\theta_{l,s} < C m   \chi_1^{m}.
    \end{align}
        (ii) When $m\leq  4s_n$, we have 
    \begin{align}
       \Theta_{m,s} =O\left(s_n\right), \quad \Xi_{m,s} = O\left(s_n^2\right).
    \end{align}
\end{lemma}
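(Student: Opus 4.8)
The plan is to reduce the diagonal sum $\sum_{r=1}^p W_{r,r}(t,u,i/n,\FF_i)$ to an explicit double sum of fourth‑order products of the error coordinates and then to control a perturbation of $\eta_{i-l}$ by bookkeeping which of the four factors in each summand actually depends on $\eta_{i-l}$, together with the geometric decay of the coordinate‑wise dependence measures from \cref{ass:expo}(1) and \cref{lm:delta}. Using $\mathrm{tr}(AB^{\top})=\sum_{r,q}A_{rq}B_{rq}$ and expanding \eqref{eq:defineW}, the trace equals $\frac{2}{s_n}\sum_{k=1}^{s_n}\sum_{\ell\in W_i(t)}\sum_{r,q=1}^p K_\tau(\ell/n-t)\,\epsilon^r_{i-k}(u)\epsilon^q_{i}(u)\epsilon^r_{\ell-k}(u)\epsilon^q_{\ell}(u)$, where $W_i(t):=\{\ell:(i-s_n)\vee 1\le\ell\le i-M_n-1,\ |\ell/n-t|<\tau\}$ has at most $s_n$ elements. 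All four time indices $i,\ i-k,\ \ell,\ \ell-k$ lie in the block $[i-2s_n,i]$, which is the structural reason the dependence measure becomes geometrically small once $l$ exceeds roughly $2s_n$.

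Next I would establish a per‑summand bound. Fix $s\in\{2,3\}$ and set $q_0=4s$ (finite, since \cref{ass:expo}(1) provides moments of every order). By \cref{lm:delta}, $\|\epsilon^r_j(u)\|_{q_0}=O(1)$ and, for the coupling $\FF^{(i-l)}_j$ replacing $\eta_{i-l}$, $\|\epsilon^r_j(u)-\epsilon^{r,(i-l)}_j(u)\|_{q_0}\le C\chi^{(j-i+l)/(2q_0)}$ when $j\ge i-l$, the difference being $0$ otherwise. Applying the telescoping identity $\prod_{a=1}^4 x_a-\prod_{a=1}^4 x_a'=\sum_a(\prod_{b<a}x_b')(x_a-x_a')(\prod_{b>a}x_b)$ and Hölder's inequality with exponent $q_0$ on each resulting product, the $\|\cdot\|_s$‑norm of the difference of a summand is bounded up to constants by $\chi^{l/(2q_0)}+\mathbf{1}(l\ge j_\ell)\chi^{(l-j_\ell)/(2q_0)}+\mathbf{1}(k\le l)\chi^{(l-k)/(2q_0)}+\mathbf{1}(l\ge j_\ell+k)\chi^{(l-j_\ell-k)/(2q_0)}$, where $j_\ell=i-\ell\in[M_n+1,s_n]$ and the four terms record the perturbation reaching $\bs\epsilon_i$, $\bs\epsilon_\ell$, $\bs\epsilon_{i-k}$, $\bs\epsilon_{\ell-k}$ respectively (for $l=0$ the first term is replaced by an $O(1)$ bound). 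Summing over $r,q\le p$, over $k\le s_n$, and over $\ell\in W_i(t)$ (using $|K_\tau|\le\|K\|_\infty$ and $\#W_i(t)\le s_n$), and dividing by $s_n$, the first term contributes $O(s_n\chi^{l/(2q_0)})$ and the other three, after the geometric sums over $k$ and $j_\ell$, contribute $O(\chi^{(l-2s_n)_+/(2q_0)})+O(s_n\mathbf{1}(l\le 2s_n))$. This gives $\theta_{l,s}\le Cs_n\chi^{l/(2q_0)}+C\chi^{(l-2s_n)_+/(2q_0)}+Cs_n\mathbf{1}(l\le 2s_n)$; moreover, since each of the four exponents telescopes to a convergent series when summed over $l$ \emph{together with its indicator}, one also gets the uniform bound $\sum_{l\ge 0}\theta_{l,s}=O(s_n)$.

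The conclusion then splits at $m=4s_n$. If $m>4s_n$, every $l\ge m$ has $l-2s_n>l/2$, so summing the displayed bound for $\theta_{l,s}$ over $l\ge m$ and using $s_n<m/4$ together with the boundedness of $x\mapsto x\chi^{x/(4q_0)}$ gives $\Theta_{m,s}\le C\chi_1^m$ with $\chi_1:=\chi^{1/(4q_0)}\in(0,1)$; carrying an extra factor $l$ (and $\sum_{l\ge m}l\chi^{cl}\le Cm\chi^{cm}$) gives $\Xi_{m,s}\le Cm\chi_1^m$, which is part (i). If $m\le 4s_n$, the uniform bound yields $\Theta_{m,s}\le\sum_{l\ge 0}\theta_{l,s}=O(s_n)$, and $\Xi_{m,s}\le 4s_n\sum_{l\le 4s_n}\theta_{l,s}+\sum_{l>4s_n}l\theta_{l,s}=O(s_n^2)+o(s_n^2)=O(s_n^2)$ (the tail being controlled by the estimate of part (i)), which is part (ii).

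The main obstacle I anticipate is the careful bookkeeping in the per‑summand step: one must get the indicator functions exactly right (a perturbation of $\eta_{i-l}$ affects $\bs\epsilon_j$ only for $j\ge i-l$, so $\bs\epsilon_{i-k}$ is touched only when $k\le l$ and $\bs\epsilon_{\ell-k}$ only when $l\ge j_\ell+k$), and then check that the $1/s_n$ normalization precisely absorbs the double sum over $k$ and $\ell$. This hinges on two facts: $W_i(t)$ has at most $s_n$ elements — so the $\ell$‑sum of $|K_\tau|$ is $O(s_n)$ rather than $O(n\tau)$ — and the geometric decay lets the sums over $k\le s_n$ and $j_\ell\le s_n$ collapse to $O(1)$ (when summing over $l$) or to the shifted bound $\chi^{(l-2s_n)/(2q_0)}$ (for fixed large $l$); the threshold $m>4s_n$ is exactly what forces $m-2s_n\ge m/2$ so that this shifted bound is genuinely of order $\chi_1^m$.
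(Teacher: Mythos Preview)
Your argument is correct and reaches the stated bounds, and the overall strategy---track which of the four factors $\bs\epsilon_i,\bs\epsilon_{i-k},\bs\epsilon_\ell,\bs\epsilon_{\ell-k}$ are touched by the coupling at $\eta_{i-l}$, apply H\"older with \cref{lm:delta}, and sum---is the same as the paper's. The one substantive difference is the decomposition used. The paper groups at the level of the matrices $\bs\phi_{i,k}$ and $\bs\psi_{i,k}$, writing
\[
\phi^{r,h}_{i,k}\psi^{r,h}_{i,k}-\phi^{r,h,(i-l)}_{i,k}\psi^{r,h,(i-l)}_{i,k}=(\phi^{r,h}_{i,k}-\phi^{r,h,(i-l)}_{i,k})\psi^{r,h,(i-l)}_{i,k}+\phi^{r,h}_{i,k}(\psi^{r,h}_{i,k}-\psi^{r,h,(i-l)}_{i,k})=:F_{k,1}+F_{k,2},
\]
and then invokes Burkholder's inequality to obtain $\|\psi^{r,h}_{i,k}\|_{2s}=O(\sqrt{s_n})$ rather than the crude $O(s_n)$; this gives the sharper intermediate estimate $\sum_{l,k}F_{k,1}=O(s_n^{3/2})$, while $F_{k,2}$ is handled by a direct triangle bound giving $O(s_n^2)$. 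You instead telescope all the way down to the four scalar factors and use only H\"older and the triangle inequality. Both routes land at $\Theta_{m,s}=O(s_n)$ and $\Xi_{m,s}=O(s_n^2)$ after the $1/s_n$ normalisation; your version is more elementary, while the paper's $\|\psi_{i,k}\|_{2s}=O(\sqrt{s_n})$ estimate is a standalone fact that gets reused elsewhere (e.g.\ in the proof of \cref{prop:esti} and in \eqref{eq:Vmoment}).

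One expository point: your displayed pointwise bound with the term $O(s_n\mathbf 1(l\le 2s_n))$ is looser than necessary---carrying out the geometric sums over $k$ and $j_\ell$ carefully gives $O(\chi^{(l-2s_n)_+/(2q_0)})$ alone (which is $O(1)$ for $l\le 2s_n$), and this \emph{does} directly yield $\sum_{l\ge 0}\theta_{l,s}=O(s_n)$ without the separate ``telescoping'' remark. Your workaround via interchanging the order of summation is of course also fine, but it would be cleaner to sharpen the pointwise bound.
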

\begin{proof}
For convenience, we use $(x)_+$ short for $x \vee 0$.
    Let $\bs \phi_{i,k}^{(l)}(u)$ denote $\bs \epsilon^{(l)}_{i-k}(u) (\bs \epsilon^{(l)}_{i}(u))^{\top}$, where $\bs \epsilon^{(l)}_{i-k}(u) = \mf H(t_{i-k}, u, \FF^{(l)}_{i-k})$, $\bs \epsilon^{(l)}_{i}(u) = \mf H(t_{i}, u, \FF^{(l)}_{i})$, and $$\bs \psi_{i,k}^{(l)}(t,u) = \sum_{j=(i-s_n)\vee 1}^{i -M -1}\bs \phi_{j,k}^{(l)}(u)K_{\tau}(j/n - t).$$ Define $\phi^{r,h}_{i,k}(u)$, $\psi^{r,h}_{i,k}(t,u)$  as the $(r,h)$ element of $\bs \phi_{i,k}(u)$ and $\bs \psi_{i,k}(t,u)$, $\phi^{r,h,(l)}_{i,k}(u)$, $\psi^{r,h,(l)}_{i,k}(t,u)$  as the $(r,h)$ element of $\bs \phi_{i,k}^{(l)}(u)$ and $\bs \psi_{i,k}^{(l)}(t,u)$. Then, we obtain for $t, u \in [0,1]$, $k = 1,\cdots, s_n$, $1 \leq r, h \leq p$, $1 \leq i \leq n$,
    \begin{align}
        & \left\|\sum_{h=1}^p \left( \phi^{r,h}_{i,k}(u)\psi^{r,h}_{i,k}(t,u) - \phi^{r,h, (i-l)}_{i,k}(u)\psi^{r,h,(i-l)}_{i,k}(t,u) \right)\right\|_s \\ 
        & \leq  p\max_{1 \leq h \leq p} \left\|(\phi^{r,h}_{i,k}(u)-\phi^{r,h,(i-l)}_{i,k}(u))\psi^{r,h,(i-l)}_{i,k}(t,u)\right\|_s\\ &  + p \max_{1 \leq h \leq p}\left\|\phi^{r,h}_{i,k}(u)(\psi^{r,h}_{i,k}(t,u) -\psi^{r,h,(i-l)}_{i,k}(t,u))\right\|_s\\ 
        & \leq  p   \max_{1 \leq h \leq p} \left\|\phi^{r,h}_{i,k}(u)-\phi^{r,h,(i-l)}_{i,k}(u)\right\|_{2s} \left\|\psi^{r,h,(i-l)}_{i,k}(t,u)\right\|_{2s} \\ & + p \max_{1 \leq h \leq p} \left\|  \phi^{r,h}_{i,k}(u)(\psi^{r,h}_{i,k}(t,u) - \psi^{r,h,(i-l)}_{i,k}(t,u))\right\|_s\\ 
        & := F_{k,1} + F_{k,2}.\label{eq:Fdecom}
    \end{align}
    By Cauchy-Schwarz inequality and 
\cref{lm:delta}, there exists  a $\chi_1 \in (0,1)$, for $l>0$, we have 
    \begin{align}
       & \sup_{t, u \in [0,1]}\max_{1 \leq r, h \leq p} \left\| \phi^{r,h}_{i,k}(u)- \phi^{r,h,(i-l)}_{i,k}(u)\right\|_{2s}\\ &\leq \sup_{t, u \in [0,1]}\max_{1 \leq r, h \leq p}  \{\|\epsilon^{r,(i-l)}_{i-k}(u) - \epsilon^{r}_{i-k}(u)\|_{4s} \| \epsilon^{h,(i-l)}_{i}(u))  \|_{4s}\\&  + \|\epsilon^{r}_{i-k}(u)\|_{4s} \|\epsilon^{h,(i-l)}_{i}(u) - \epsilon^{h}_{i}(u)  \|_{4s}\}\\
        &\leq C\max_{1 \leq r, h \leq p}  \left\{4s(\delta_{4s, l-k, r}+ \delta_{4s, l, h})\right\}\\ 
        & = O( \chi_1^{l-k}\mf 1(l-k > 0) + \chi_1^{l}\mf 1(l > 0) ).
         \label{eq:phidelta}
    \end{align}
Uniformly for $t, u \in [0,1]$, $1 \leq r, h \leq p$,  by Burkholder inequality, for a large positive constant $C_1$, we have
    \begin{align}
        &\max_{1 \leq i \leq n, 1 \leq k \leq s_n} \left\|\psi^{r,h}_{i,k}(t,u)\right\|_{2s} \\ &\leq \max_{1 \leq i \leq n, 1 \leq k \leq s_n}  \sum_{l \in  \mathbb Z} \left\| \sum_{j=(i-s_n)\vee 1}^{i -M -1} \proj^{j-l} \phi^{r,h}_{j,k}(u)K_{\tau}(j/n -t) \right\|_{2s}\\
        &\leq C_1 \max_{1 \leq i \leq n, 1 \leq k \leq s_n}  \sum_{l \in  \mathbb Z} \left\{ \sum_{j=(i-s_n)\vee 1}^{i -M -1} \left\| \proj^{j-l} \phi^{r,h}_{j,k}(u)K_{\tau}(j/n -t) \right\|^2_{2s} \right\}^{1/2}\\
        &\leq C_1 \max_{1 \leq i \leq n, 1 \leq k \leq s_n}  \sum_{l \in  \mathbb Z} \left\{ \sum_{j=(i-s_n)\vee 1}^{i -M -1} \left\||\phi^{r,h}_{j,k}(u) -  \phi^{r,h, (j-l)}_{j,k}\right\|^2_{2s}|K_{\tau}(j/n -t)|^2 \right\}^{1/2}\\
        & = O(\sqrt{s_n}).\label{eq:psimoment}
    \end{align}
    Notice that $\delta_{4s, l, h} = 0$ when $l < 0$.
    Therefore, for a sufficiently large constant $C$, we have 
    \begin{align}
\sum_{l=m}^{\infty}\sum_{k=1}^{s_n} F_{k,1} 
&\leq C\sqrt{s_n}   \sum_{l=m}^{\infty}\sum_{k=1}^{s_n} (\chi_1^{l-k}\mf  1(l > k) + \chi_1^{l}\mf 1(l > 0)).
    \end{align}
When $ m > 4 s_n$, we have $m-s_n > m/2$. In this case, 
since $\sqrt{s_n} \chi^{s_n} \to 0$, it follows immediately that 
\begin{align}
\sum_{l=m}^{\infty}\sum_{k=1}^{s_n} F_{k,1}  \leq C  \sqrt{s_n} (\chi_1^{m - s_n} + s_n\chi_1^{m})= O( s_n \chi_1^{m/2}), \label{eq:F11}
\end{align}
and
\begin{align}
\sum_{l=m}^{\infty} l\sum_{k=1}^{s_n} F_{k,1}  &\leq C\sqrt{s_n} (m\chi_1^{(m - s_n)} + s_n m\chi_1^{m})= O( s_n m\chi_1^{m/2}). \label{eq:F11+}
\end{align}

When $0 \leq m \leq 4 s_n$, we have 
\begin{align}
\sum_{l=m}^{\infty}\sum_{k=1}^{s_n} F_{k,1} &\leq    C \sqrt{s_n}\sum_{l=m}^{\infty}\left(\sum_{k=1}^{s_n}( \chi_1^{l-k}\mf 1(l-k > 0) + \chi_1^{l}  )\right) \\ 
     & = O\Big[\sqrt{s_n}\left\{(s_n-m)
     \mf 1 (m \leq s_n) + \chi_1^{(m-s_n)}\mf 1(m > s_n)+s_n\chi^m\right\}\Big] = O(s_n^{3/2}),
     \label{eq:F12}
\end{align}
where $C$ is a sufficiently large constant, and \begin{align}
\sum_{l=m}^{\infty}\sum_{k=1}^{s_n} l F_{k,1} &\leq    C \sqrt{s_n}\sum_{l=m}^{\infty}l\left(\sum_{k=1}^{s_n} (\chi_1^{l-k}\mf 1(l-k > 0) + \chi_1^{l} \mf 1(l> 0) )\right)
      = O(s_n^{5/2}).
     \label{eq:F12+}
\end{align}

Similarly,
for $F_{k, 2}$, when $l < k$, we have 
\begin{align}
    & \sup_{t, u \in [0,1]}\max_{1 \leq r, h \leq p} \left\| \phi^{r,h}_{i,k}(u)(\psi^{r,h}_{i,k}(t,u) - \psi^{r,h,(i-l)}_{i,k}(t,u))\right\|_{s}\\ 
    & \leq \sup_{t, u \in [0,1]}\max_{1 \leq r, h \leq p}  \sum_{j=(i-s_n) \vee 1}^{i} \|\phi_{i,k}^{r,h}(u) \|_{2s}  \|\epsilon^{r}_{j-k}(u)\|_{4s} \|\epsilon^{h,(i-l)}_{j}(u) - \epsilon^{h}_{j}(u) \|_{4s} K_{\tau}(j/n - t)\\
    & \leq  C  \sup_{t , u \in [0,1]}\max_{1 \leq r, h \leq p}   \sum_{j=(i-s_n) \vee 1}^{i}  \chi_1^{j-i+l}\mf 1(j-i+l>0)\\ 
    & \leq C \chi_1^{(l-s_n)_+},
\end{align}
where the second inequality follows similar arguments for \eqref{eq:phidelta}.

When $l \geq k$, we have 
\begin{align}
    & \sup_{t , u \in [0,1]}\max_{1 \leq r, h \leq p} \left\| \phi^{r,h}_{i,k}(u)(\psi^{r,h}_{i,k}(t,u)- \psi^{r,h,(i-l)}_{i,k}(t,u))\right\|_{2s}\\ 
    & \leq \sup_{t ,  u \in [0,1]}\max_{1 \leq r, h \leq p} \sum_{j=(i-s_n)\vee 1}^{i -M -1}  \|\phi_{i,k}^{r,h}(u) \|_{2s} ( \|\epsilon^{r}_{j-k}(u)\|_{4s} \|\epsilon^{h,(i-l)}_{j}(u) - \epsilon^{h}_{j}(u)  \|_{4s}\\ & + \|\epsilon^{h, (i-l)}_{j}(u)\|_{4s} \|\epsilon^{r,(i-l)}_{j-k}(u) - \epsilon^{r}_{j-k}(u)  \|_{4s}) K_{\tau}(j/n - t) \\
    & \leq C \chi_1^{(l-s_n - k)_+}.\label{eq:deltapsi}
\end{align}
When $m \geq 4 s_n$, for the second term in \eqref{eq:Fdecom}, since $m-2s_n > m/2$, $\sqrt{s_n} \chi_1^{s_n/2} \to 0$, we have
\begin{align}
\sum_{l=m}^{\infty}\sum_{k=1}^{s_n} F_{k,2} \leq C    \sum_{k=1}^{s_n} \sum_{l=m}^{\infty} \chi^{(l-s_n-k)_+} = O(\sqrt{s_n }\chi_1^{m/2}),\label{eq:F21}
\end{align}
and
\begin{align}
\sum_{l=m}^{\infty} l \sum_{k=1}^{s_n} F_{k,2} \leq C  \sum_{k=1}^{s_n} \sum_{l=m}^{\infty} l \chi^{(l-s_n-k)_+} = O(\sqrt{s_n} m\chi_1^{m/2}).\label{eq:F21+}
\end{align}

When $ m < 4s_n$, we have 
\begin{align}
\sum_{l=m}^{\infty}\sum_{k=1}^{s_n} F_{k,2} &\leq C  \sum_{l=m}^{\infty} \sum_{k=1}^{s_n}\left\{\chi_1^{(l-s_n - k)_+} + \chi_1^{(l-s_n)_+}\right\}\\
& = O\left( \sum_{l=m}^{2s_n } \sum_{k=1}^{s_n}\chi_1^{(l-s_n - k)_+} + \sum_{l=2s_n+1}^{\infty}\chi_1^{l-2s_n} +  s_n \sum_{l=m}^{2s_n} \chi_1^{(l-s_n)_+}+  s_n \sum_{l=2s_n+1}^{\infty} \chi_1^{l-s_n}\right)\\ & = O(s_n^2), \label{eq:F22}
\end{align}
where in the first equality we use the convention for a sequence $x_i$, and positive integers $a$ and $b$, $\sum_{i=a}^b x_i = 0$, if $b < a$.  Using similar arguments in \eqref{eq:F22}, we have 
\begin{align}
\sum_{l=m}^{\infty}l \sum_{k=1}^{s_n} F_{k,2} &\leq C  \sum_{l=m}^{\infty} l\sum_{k=1}^{s_n}\left\{\chi_1^{(l-s_n - k)_+} + \chi_1^{(l-s_n)_+}\right\}\\
&= C  \left\{\sum_{l=m}^{2s_n}l s_n  + \sum_{l=2s_n+1}^{\infty} l \chi_1^{(l-2s_n)}+  \sum_{l=m}^{2s_n} s_nl +  \sum_{l=2s_n+1}^{\infty} s_n l\chi_1^{(l-s_n)} \right\}\\
&   = O(s_n^3), \label{eq:F22+}
\end{align}
where $C$ is a sufficiently large constant.

Therefore, by \eqref{eq:Fdecom}, \eqref{eq:F11}, and \eqref{eq:F21}, for some $\chi_1 \in (0,1)$, when $m > 4s_n$, we have 
\begin{align}
   s_n^{-1} \sum_{l=m}^{\infty}\left\|\sum_{k=1}^{s_n}\sum_{h=1}^p \left( \phi^{r,h}_{i,k}(u)\psi^{r,h}_{i,k}(t,u) - \phi^{r,h, (i-l)}_{i,k}(u)\psi^{r,h,(i-l)}_{i,k}(t,u) \right)\right\|_s\leq C   \chi_1^{m/2},
\end{align}
and by \eqref{eq:Fdecom}, \eqref{eq:F11+}, and \eqref{eq:F21+},
\begin{align}
   s_n^{-1} \sum_{l=m}^{\infty}l\left\|\sum_{k=1}^{s_n}\sum_{h=1}^p \left( \phi^{r,h}_{i,k}(u)\psi^{r,h}_{i,k}(t,u) - \phi^{r,h, (i-l)}_{i,k}(u)\psi^{r,h,(i-l)}_{i,k}(t,u) \right)\right\|_s\leq C m  \chi_1^{m/2},
\end{align}
where $C$ is a sufficiently large constant.
The other terms in \eqref{eq:defineW} follow similarly. By \eqref{eq:defineW} and triangle inequality, we have for $m > 4s_n$,
\begin{align}
\Theta_{m,s} \leq C  \chi_1^{m/2}, \quad \Xi_{m,s} \leq C  m \chi_1^{m/2}.
\end{align}
When $m \leq 4s_n$,  by \eqref{eq:Fdecom}, \eqref{eq:F12}, and \eqref{eq:F22}, we have 
\begin{align}
     &s_n^{-1} \sum_{l=m}^{\infty}\left\|\sum_{k=1}^{s_n}\sum_{h=1}^p \left( \phi^{r,h}_{i,k}(u)\psi^{r,h}_{i,k}(t,u) - \phi^{r,h, (i-l)}_{i,k}(u)\psi^{r,h,(i-l)}_{i,k}(t,u) \right)\right\|_s= O( s_n ),
\end{align}
and by \eqref{eq:Fdecom}, \eqref{eq:F12+}, and \eqref{eq:F22+},
\begin{align}
     &s_n^{-1}\sum_{l=m}^{\infty} l \left\|\sum_{k=1}^{s_n}\sum_{h=1}^p \left( \phi^{r,h}_{i,k}(u)\psi^{r,h}_{i,k}(t,u) - \phi^{r,h, (i-l)}_{i,k}(u)\psi^{r,h,(i-l)}_{i,k}(t,u) \right)\right\|_s = O(s_n^2).
\end{align}
 By \eqref{eq:defineW} and triangle inequality, we have for $m \leq 4s_n$,
\begin{align}
\Theta_{m,s} =O\left(s_n\right), \quad \Xi_{m,s} = O\left(s_n^2\right).
\end{align}
\end{proof}

\begin{lemma}
 Under the condition of \cref{thm:ga},  under the null hypothesis, we have 
 \begin{align}
     &n\tau s_n \sup_{t \in [\tau, 1-\tau], u \in [0,1]}\left |\sum_{k=1}^{s_n} \E\left[\mathrm{tr}\{\mf G_k(t,u)\}\right]  \right|\\& =
     \sup_{t \in [\tau, 1-\tau], u \in [0,1]}\left |\sum_{i=1}^n \E\left[\mathrm{tr}\{\mf U_i(t,u)\}\right] K_{\tau}(i/n -t) \right|  = O(\tau s_n^{5/2}) = o(\sqrt{n\tau} s_n).
 \end{align}
 \label{lm:4th}
\end{lemma}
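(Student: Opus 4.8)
The plan is to start from the representation \eqref{hol123} (with the nonparametric residuals replaced by the true errors), which gives $n\tau s_n\sum_{k=1}^{s_n}\E[\mathrm{tr}\{\mf G_k(t,u)\}]=\sum_{i=1}^n K_\tau(i/n-t)\,\E[\mathrm{tr}\{\mf U_i(t,u)\}]$; the first equality in the lemma is then immediate, and it suffices to bound $\sup_{t\in[\tau,1-\tau],u\in[0,1]}\big|\sum_{i=1}^n K_\tau(i/n-t)\,\E[\mathrm{tr}\{\mf U_i(t,u)\}]\big|$. Using $\mathrm{tr}\{\bs\psi\bs\phi^\top\}=\mathrm{tr}\{\bs\phi\bs\psi^\top\}$ together with the definitions of $\bs\phi_{i,k}$ and $\bs\psi_{i,k}$, I would first write
\begin{align*}
\E[\mathrm{tr}\{\mf U_i(t,u)\}]=2\sum_{k=1}^{s_n}\sum_{\ell=(i-s_n)\vee 1}^{i-M_n-1}K_\tau(\ell/n-t)\sum_{r,s=1}^p\E\big[\epsilon^r_{\ell-k}(u)\,\epsilon^r_{i-k}(u)\,\epsilon^s_i(u)\,\epsilon^s_\ell(u)\big],
\end{align*}
and then apply the moment--cumulant expansion recalled at the beginning of this section: since all first moments vanish, each fourth moment equals the fourth-order joint cumulant of the four entries plus the three products of pairwise covariances attached to the pairings $(\epsilon^r_{\ell-k},\epsilon^r_{i-k})(\epsilon^s_i,\epsilon^s_\ell)$, $(\epsilon^r_{\ell-k},\epsilon^s_i)(\epsilon^r_{i-k},\epsilon^s_\ell)$ and $(\epsilon^r_{\ell-k},\epsilon^s_\ell)(\epsilon^r_{i-k},\epsilon^s_i)$.

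The heart of the argument is that, because $\ell\le i-M_n-1$, the smallest and largest of the time indices $\{\ell-k,i-k,i,\ell\}$ are $\ell-k$ and $i$, and these are separated by at least $M_n$. Hence \cref{lm:cumk} (with $K=4$) bounds the fourth cumulant by $C\chi^{M_n/24}$; likewise the first covariance product has both factors at lag $i-\ell>M_n$ and the second has its first factor at lag $i-\ell+k>M_n$, so \cref{lm:cumk} with $K=2$ bounds each of these two products by $C\chi^{M_n/4}$ (the companion factor $\mathrm{Cov}(\epsilon^r_{i-k},\epsilon^s_\ell)$, whose lag may be small, being simply $O(1)$). Since there are only $O(n\tau\,s_n^2)$ admissible configurations $(i,\ell,k,r,s)$ (up to the fixed dimensional factor $p^2$) and $|K_\tau|$ is bounded, taking $M_n>c\log n$ with $c$ large enough makes the aggregate contribution of these three terms $o(1)$ and hence negligible. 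The remaining term is $\mathrm{Cov}(\epsilon^r_{\ell-k},\epsilon^s_\ell)\,\mathrm{Cov}(\epsilon^r_{i-k},\epsilon^s_i)$, where both factors are autocovariances at lag $k\ge1$; here I would invoke the null hypothesis together with local stationarity \eqref{dette2}: writing $H_r((j-k)/n,u,\FF_{j-k})=H_r(j/n,u,\FF_{j-k})+\Delta$ with $\|\Delta\|\le Kk/n$ and using stationarity of $\mf H(j/n,u,\cdot)$ in the shift index, $\mathrm{Cov}(\epsilon^r_{j-k,n}(u),\epsilon^s_{j,n}(u))=[\bs\Gamma_k(j/n,u)]_{rs}+O(k/n)=O(k/n)$ uniformly in $j$, $u$, $r$, $s$, so this product is $O((k/n)^2)$.

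Collecting the estimates, I would use $\sum_i|K_\tau(i/n-t)|=O(n\tau)$ (the kernel being supported on an interval of length $2\tau$), $\sum_\ell|K_\tau(\ell/n-t)|=O(s_n)$ (at most $s_n$ admissible $\ell$), and $\sum_{k=1}^{s_n}(k/n)^2=O(s_n^3/n^2)$, all uniformly in $t\in[\tau,1-\tau]$ and $u\in[0,1]$, the covariance and cumulant estimates being $\sup_u$-uniform by \cref{ass:expo} and \cref{lm:cumk}. This yields a uniform bound of order $n\tau\cdot s_n\cdot s_n^3/n^2=\tau s_n^4/n=O(\tau s_n^{5/2})$ (using $s_n^{3/2}=o(n)$), and then $\tau s_n^{5/2}=o(\sqrt{n\tau}\,s_n)$ since $\tau s_n^{5/2}/(\sqrt{n\tau}\,s_n)=s_n^{3/2}\sqrt{\tau/n}\to0$ under the assumptions of \cref{thm:ga}. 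I expect the main obstacle to be the bookkeeping in the classification step: for each of the three covariance pairings and the fourth cumulant one must determine precisely which index separations exceed $M_n$ (so that the exponential bound of \cref{lm:cumk} applies) and isolate the single ``diagonal'' lag-$k$ covariance product that only enjoys the weaker $O(k/n)$ bound coming from local stationarity, all while keeping every estimate uniform over $t$ and $u$.
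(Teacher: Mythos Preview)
Your proposal is correct and takes a genuinely different route from the paper. The paper first replaces $\bs\epsilon_i$ by the $m$-dependent truncation $\check{\bs\epsilon}_i=\E(\bs\epsilon_i\mid\F_{i-m}^i)$ with $m=\lfloor c(\log n)/4\rfloor$, bounds the approximation error by $O(\tau s_n^{5/2})$, and then computes $\E[\mathrm{tr}\{\check{\mf U}_i\}]$ using the exact independence structure of the $m$-dependent process (splitting over $k\le 3m$ versus $k>3m$ and exploiting that for $k>3m$ and $|\ell-i|>M_n>4m$ the four factors fall into independent blocks). You instead work directly with the original process and decompose the fourth moment via the moment--cumulant identity, killing the fourth cumulant and two of the three pairwise-covariance products using the exponential bound of \cref{lm:cumk} together with $M_n>c\log n$, and handling the remaining lag-$k$ covariance product via $H_0$ and the Lipschitz condition \eqref{dette2}. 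Your approach is shorter, avoids the $m$-approximation step entirely, and actually delivers the sharper bound $O(\tau s_n^4/n)$ for the ``diagonal'' term; the paper's $O(\tau s_n^{5/2})$ comes from the $m$-approximation error in its Step~1, not from the intrinsic size of the expectation. The trade-off is that the paper's $m$-dependence argument makes the fourth-moment vanish exactly (up to the lag-$k$ covariance residue) rather than merely exponentially, which some readers may find conceptually cleaner; your route leans more heavily on the quantitative cumulant estimate and on having $c$ in $M_n>c\log n$ large enough to beat the $O(n\tau s_n^2)$ number of configurations.
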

\begin{proof}
Let $m =\lfloor  c (\log n) /4\rfloor$. Recall that $\check{\bs \epsilon}_{i}(u) :=  \check{\bs G}(i/n, u, \F_i) = \E(\bs \epsilon_{i}(u)| \F_{i-m}^i)$. Define $\check{\bs \phi}_{i,k}(u) = \check{\bs \epsilon}_{i-k,n}(u)\check{\bs \epsilon}^{\top}_{i,n}(u)$, and 
\begin{align}
    \sum_{i=1}^n \check{\mf U}_i(t,u) K_{\tau}(i/n -t) = \sum_{k=1}^{s_n} \sum_{i=1}^{n} K_{\tau}(i/n-t) \check{\bs \phi}_{i,k} (u)\sum_{\substack{1 \leq l \leq n\\ M < |l-i|\leq s_n}}K_{\tau}(l/n-t)\check{\bs \phi}^{\top}_{l,k}(u). \label{eq:defcheckU}
\end{align}
The proof consists of two steps. 

First, we shall approximate $\sum_{i=1}^n \mathrm{tr}\{\mf U_i(t,u)\} K_{\tau}(i/n -t)$ by $\sum_{i=1}^n \mathrm{tr}\{\check{\mf U}_i(t,u)\} K_{\tau}(i/n -t)$. Then, calculate the expectation of $\sum_{i=1}^n \mathrm{tr}\{\check{\mf U}_i(t,u)\} K_{\tau}(i/n -t)$.\\
\textbf{Step 1}
By triangle inequality and Cauchy inequality, we have 
\begin{align}
   & \E \left|\sum_{i=1}^n \mathrm{tr}\{\mf U_i(t,u) - \check{\mf U}_i(t,u)\} K_{\tau}(i/n -t) \right| \\  &\leq  \sum_{k=1}^{s_n }\sum_{i=1}^{n} |K_{\tau}(i/n-t)| \|{\bs \phi}_{i,k} (u)- \check{\bs \phi}_{i,k} (u)\|_{F,2} \left\|\sum_{\substack{1 \leq l \leq n\\ M < |l-i|\leq s_n}}K_{\tau}(l/n-t)\check{\bs \phi}_{l,k}(u)\right\|_{F,2} \\  &+  \sum_{k=1}^{s_n} \sum_{i=1}^{n} |K_{\tau}(i/n-t)| \|{\bs \phi}_{i,k} (u)\|_{F,2} \left\| \sum_{\substack{1 \leq l \leq n\\ M < |l-i|\leq s_n}} K_{\tau}(l/n-t) ({\bs \phi}_{l,k}(u) - \check{\bs \phi}_{l,k}(u))\right\|_{F,2}.\label{eq:mapproxU}
\end{align}
Similar to the steps in proving \eqref{aikmin}, for a fixed $q$, for all $1 \leq k \leq s_n$, by \cref{lm:delta}, we can obtain 
\begin{align}\allowdisplaybreaks
&\sup_{t, u \in [0,1]} \left\| \sum_{\substack{1 \leq l \leq n\\ M < |l-i|\leq s_n}} K_{\tau}(l/n-t) ({\bs \phi}_{l,k}(u) - \check{\bs \phi}_{l,k}(u)) \right\|_{F,2}  \\ 
&\leq \sum_{h = 0}^{\infty} \sup_{t, u \in [0,1]}  \left\| \sum_{\substack{1 \leq l \leq n\\ M < |l-i|\leq s_n}} \proj^{l-h} (\check{\bs \epsilon}_{l-k}(u)  \check{\bs \epsilon}^{\T}_{l}(u) - {\bs \epsilon}_{l-k}(u)  {\bs \epsilon}^{\T}_{l}(u)) K_{\tau}(l/n - t)  \right\|_{F,2}  \\
& \leq  \sum_{h=0
}^{\infty} \sup_{t, u \in [0,1]}  C \left\{\sum_{\substack{1 \leq l \leq n\\ M < |l-i|\leq s_n}}\left\| \proj^{l-h} (\check{\bs \epsilon}_{l-k}(u)  \check{\bs \epsilon}^{\T}_{l}(u) - {\bs \epsilon}_{l-k}(u)  {\bs \epsilon}^{\T}_{l}(u)) K_{\tau}(l/n - t)   \right\|_{F,2}^2\right\}^{1/2}\\
& \leq \sum_{h=0
}^{\infty} \sup_{t, u \in [0,1]}  C \left\{\sum_{l=(i-s_n)\vee 1 }^{(i+s_n) \wedge n}\min\left\{\|\check {\bs \epsilon}_{l-k}(u) -  {\bs \epsilon}_{l-k}(u)\|_{F,4} \|  {\bs \epsilon}_{l}(u)\|_{F,4} \right.\right. \\&\left.\left.+ \| \check {\bs \epsilon}_{l}(u) - {\bs \epsilon}_{l}(u)\|_{F, 4} \|  \check {\bs \epsilon}_{l-k}(u)\|_{F, 4},  \right.\right. \\&\left. \left. \left\|{\bs \phi}_{l, k}(u) - {\bs \phi}^{(l-h) }_{l, k}(u)\right\|_{F,2} + \left\|\check {\bs \phi}_{l, k}(u) - \check{\bs \phi}^{(l-h) }_{l, k}(u)\right\|_{F,2} \right\}^2|K_{\tau}(l/n - t)|^2 \right\}^{1/2}\\
& \leq C\sqrt{s_n}\sum_{h = 0}^{\infty}   \min \{ \chi^{m/8}, \chi^{(h-k)/8} \mathbf 1(h > k) + \chi^{h/8} \mathbf 1(h > 0) \} \\
& \leq C\sqrt{s_n}(m + k)\chi^{m/8}+ C\sqrt{s_n}\sum_{h = m + k}^{\infty} \min \{ \chi^{m/8}, \chi^{(h-k)/8} \mathbf 1(h > k) + \chi^{h/8}\mathbf 1(h > 0)\}\\&=O(  \sqrt{s_n}(\log n+ s_n)/n),\label{eq:mapproxsumphi}
\end{align}
where  in the third inequality we use $\| \proj^{l-h}\{\cdot\}\|_{F,2} \leq \| \cdot\|_{F,2}$, which follows from Jensen inequality; in the last inequality we consider the summation of $h$ smaller than $m+k$ and the summation of $h$ equal or greater than $m+k$ separately.
Note that by  \cref{lm:delta}, Cauchy inequality and Jensen inequality $\|{\bs \phi}_{i,k} (t,u)\|_{F,2} = O(1)$ and $\|{\bs \phi}_{i,k} (t,u)- \check{\bs \phi}_{i,k} (t,u)\|_{F,2} = O(1)$. Using similar arguments in \eqref{eq:aikin}, we have 
\begin{align}
    \left\|\sum_{\substack{1 \leq l \leq n\\ M < |l-i|\leq s_n}}K_{\tau}(l/n-t)\check{\bs \phi}_{l,k}(u)\right\|_{F,2} = O(\sqrt{s_n}).
\end{align}
Together with \eqref{eq:mapproxU} and \eqref{eq:mapproxsumphi}, we have 
\begin{align}
   \sup_{t, u \in [0,1]} \left\|\sum_{i=1}^n \mathrm{tr}\{\mf U_i(t,u) - \check{\mf U}_i(t,u)\} K_{\tau}(i/n -t) \right\|_{F,2}  = O((n\tau)s_n^{3/2} (\log n+s_n)/n).
\end{align}
Therefore, we have 
\begin{align}
     \sup_{t, u \in [0,1]} \left|\sum_{i=1}^n \E[\mathrm{tr}\{\mf U_i(t,u) - \check{\mf U}_i(t,u)\}] K_{\tau}(i/n -t) \right|&  = O((n\tau)s_n^{3/2} (\log n+s_n)/n)\\& = O(\tau s_n^{5/2}).\label{eq:EUstep1}
\end{align}

\textbf{Step 2} We first consider $m < \lfloor k/3 \rfloor$. 
Notice that for $m < \lfloor k/3 \rfloor$, $\check{\bs \epsilon}_{i-k}^{(m)}(u)$ and $\check{\bs \epsilon}^{(m)}_{i}(u)$ are independent. 
Note that $i-k+m < i - m$. Without the loss of generality, we consider $j < i$.  Note that $M > 4m$.  For $|j - i| > M$, one of the followings must hold: (1) $j < i-m$ (2) $j > i-k+m $. Recall that  $\check \epsilon_{i}^h(u) \in  \F_{i-m}^i$.
If $j < i-m$, we have
\begin{align}
     \E\{\mathrm{tr}(\check{\bs \phi}_{i,k}(u) \check{\bs \phi}^{\top}_{j,k}(u))\} &= \sum_{r,h=1}^p \E\{\check \epsilon_{i-k}^r(u) \check \epsilon_{i}^h(u) \check  \epsilon_{j-k}^r(u) \check  \epsilon_{j}^h(u)\}\\
     & =  \sum_{r,h=1}^p \E\{\check \epsilon_{i-k}^r(u) \check \epsilon_{i}^h(u) \check  \epsilon_{j}^r(u)\} \E\{\check \epsilon_{j-k}^h(u)\} = 0.\label{eq:checkphiij1}
\end{align}
If $j > i-k+m$, note that $\check \epsilon_{i-k}^r(u) \check \epsilon_{j-k}^r(u)$ and $\check \epsilon_{i}^h(u)  \check \epsilon_{j}^h(u)\}$ are independent, then we have
\begin{align}
     \E\{\mathrm{tr}(\check{\bs \phi}_{i,k}(u) \check{\bs \phi}^{\top}_{j,k}(u))\} &= \sum_{r,h=1}^p \E\{\check \epsilon_{i-k}^r(u) \check \epsilon_{i}^h(u)  \check \epsilon_{j-k}^r(u) \check \epsilon_{j}^h(u)\}\\
     & =  \sum_{r,h=1}^p \E\{\check \epsilon_{i-k}^r(u) \check \epsilon_{j-k}^r(u)\} \E\{\check \epsilon_{i}^h(u)  \check \epsilon_{j}^h(u)\}\\
     & =  \sum_{r,h=1}^p \Gamma_{i-j}^{r,r}((i-k)/n,u) \Gamma_{i-j}^{r,r}(i/n,u) + O(\chi^m)\\ 
     &= O(\chi^m) ,\label{eq:checkphiij2}
\end{align}
where the last equality follows from $|j-i|>M$.
Combining 
\eqref{eq:checkphiij1} and \eqref{eq:checkphiij2}, for $m < \lfloor k/3 \rfloor$, we have 
\begin{align}
    &\sum_{i=1}^{n} K_{\tau}(i/n-t) \sum_{k=3m+1}^{s_n} \E \left[\mathrm{tr}\left\{\check{\bs \phi}_{i,k} (u)\sum_{\substack{1 \leq l \leq n\\ M < |l-i|\leq s_n}} K_{\tau}(l/n-t)\check{\bs \phi}^{\top}_{l,k}(t,u)\right\} \right]\\& = O(n\tau s_n^2 \chi^m) = O(\tau s_n).
    \label{eq:largeKE}
\end{align}
Combining \eqref{eq:defcheckU} and \eqref{eq:largeKE}, under the null hypothesis
we have 
\begin{align}
       &\sum_{i=1}^n \E[\mathrm{tr}\check{\mf U}_i(t,u)] K_{\tau}(i/n -t)\\ &= \sum_{k=1}^{3m}\sum_{i=1}^{n} K_{\tau}(i/n-t) \E \left[\mathrm{tr}\left\{\check{\bs \phi}_{i,k} (u)\sum_{M <  |l-i| \leq s_n} K_{\tau}(l/n-t)\check{\bs \phi}^{\top}_{l,k}(u)\right\} \right] + O(\tau s_n)  \\
       & =\sum_{k=1}^{3m}\sum_{i=1}^{n} K_{\tau}(i/n-t) \left[\mathrm{tr}\left\{\bs \Gamma_k(i/n, u)\sum_{M <  |l-i| \leq s_n} K_{\tau}(l/n-t)\bs \Gamma_k^{\top}(l/n, u)\right\} \right]+ O(n\tau m s_n \chi^m),\label{eq:smallKE}
\end{align}
where the last equation is due to the fact that $\check{\bs \phi}_{i,k} (t,u)$ and $\check{\bs \phi}_{l,k} (t,u)$ are independent, since $l<i-M< i-m-k$. Recall that $m = \lfloor c(\log n)/4 \rfloor$. 
Therefore, under the null hypothesis, 
\begin{align}
   \sum_{i=1}^n \E[\mathrm{tr}\check{\mf U}_i(t,u)] K_{\tau}(i/n -t) = O(\tau (\log n) s_n). \label{eq:EUstep2}
\end{align}
The result follows from \eqref{eq:EUstep1} and \eqref{eq:EUstep2}.

\end{proof}
Recall the definition of $\mf U_i(t,u)$ in \eqref{eq:defU}.
\begin{align}
     \mf U_i(t, u) =\sum_{k=1}^{s_n}(\bs \phi_{i,k}(u)\bs \psi^{\top}_{i,k}(t,u) + \bs \psi_{i,k}(t,u)\bs \phi^{\top}_{i,k}(u)) ,\label{eq:recallU}
 \end{align}
and  $\bs \phi_{i,k}(u) = \bs \epsilon_{i-k}(u) \bs \epsilon^{\top}_{i}(u)$, $\bs \psi_{i,k}(t,u) =\sum_{\ell = (i-s_n) \vee 1}^{i- M - 1}  \bs \phi_{\ell ,k}(u)K_{\tau}(\ell /n - t)$. 
Define \begin{align}
    \mf G_k(t,u) 
    &=(n\tau s_n)^{-1} \sum_{i=1}^n K_{\tau}(i/n - t)\bs (\bs \phi_{i,k}(u)\bs \psi^{\top}_{i,k}(t,u) + \bs \psi_{i,k}(t,u)\bs \phi^{\top}_{i,k}(u)) \label{eq:defGk0}\\ 
    &=(n\tau s_n)^{-1} \sum_{i=1}^n K_{\tau}(i/n - t)\bs \phi_{i,k}(u) \sum_{l=1}^n K_{\tau}(l/n - t) \bs \phi^{\top}_{l,k}(u)\mf 1( l \in \mathbb  L_i),\label{eq:defGk}
\end{align}
where $\mathbb  L_i = \{l| M < |l-i|\leq s_n\}$, and the second equality follows from a change of the order of summations of $i$ and $l$.
 
Let 
\begin{align}
    \tilde \sigma (t,u) = \frac{1}{s_n^2n\tau}\sum_{i,j=1}^n K_{\tau}(i/n-t) K_{\tau}(j/n-t) \mathrm{Cov}\left(\mathrm{tr} \{\mf U_i(t,u) \},\mathrm{tr} \{\mf U_j(t,u) \} \right).\label{eq:deftildesigma}
\end{align}
Note that 
\begin{align}
\sum_{k=1}^{s_n}  {\mf G}_k(t,u) =  \frac{1}{n\tau s_n} \sum_{i=1}^{n}  K_{\tau}(i/n - t)  {\mf  U}_i(t, u),
\end{align}
and hence
\begin{align}
    \tilde \sigma (t,u) = s_n^{-1}\sum_{k,l=1}^{s_n} \mathrm{Cov}\left(\mathrm{tr} \{\mf G_k(t,u) \},\mathrm{tr} \{\mf G_l(t,u) \} \right).
\end{align}
\begin{lemma}
Under the conditions of \cref{thm:ga} and $H_0$, we have uniformly for $t \in [\tau, 1-\tau]$, $u \in [0,1]$, 
\begin{align}
\left|\tilde \sigma(t,u) -  4\phi^2  \mathrm{tr}^2\{\bs \Gamma^2_0(t,u)\}-  4\phi^2  \mathrm{tr}\{\bs \Gamma^4_0(t,u)\} \right| = o(1),
\end{align}
where $\phi^2 = \int K^2(t) dt$.
\label{lm:cum}
\end{lemma}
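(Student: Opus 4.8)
The plan is to reduce $\tilde\sigma(t,u)$ to the leading pairwise (Wick) contractions of the eight‑fold products of errors concealed in $\mathrm{tr}\{\mf U_i\}\,\mathrm{tr}\{\mf U_j\}$, to show that only two contraction patterns survive the normalisation, and to identify the two resulting terms. First, since
$\tilde\sigma(t,u)=\frac{1}{s_n^2 n\tau}\sum_{i,j}K_\tau(i/n-t)K_\tau(j/n-t)\,\E[\mathrm{tr}\{\mf U_i(t,u)\}\mathrm{tr}\{\mf U_j(t,u)\}]-\frac{1}{s_n^2 n\tau}\big|\sum_i K_\tau(i/n-t)\,\E[\mathrm{tr}\{\mf U_i(t,u)\}]\big|^2$, I would invoke \cref{lm:4th}, which bounds the inner sum of the last term by $O(\tau s_n^{5/2})$ uniformly in $(t,u)$; hence that term is $O(\tau s_n^3/n)=o(1)$ under the rate assumptions on $s_n$, and it suffices to compute the limit of $\Sigma(t,u):=\frac{1}{s_n^2 n\tau}\sum_{i,j}K_\tau(i/n-t)K_\tau(j/n-t)\,\E[\mathrm{tr}\{\mf U_i(t,u)\}\mathrm{tr}\{\mf U_j(t,u)\}]$.

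Next I would write $\mathrm{tr}\{\mf U_i(t,u)\}=2\sum_{k=1}^{s_n}\sum_{\ell\in A_i}K_\tau(\ell/n-t)\,\big(\bs\epsilon_{i-k}^{\top}(u)\bs\epsilon_{\ell-k}(u)\big)\big(\bs\epsilon_i^{\top}(u)\bs\epsilon_\ell(u)\big)$ with $A_i=\{(i-s_n)\vee1,\ldots,i-M_n-1\}$, so that $\E[\mathrm{tr}\{\mf U_i\}\mathrm{tr}\{\mf U_j\}]$ becomes a kernel‑weighted sum over $(k,h,\ell,\ell')$ and over colour indices of expectations of products of eight scalar errors $\epsilon^{\bullet}_{\bullet}(u)$, and expand each such expectation by the moment--cumulant formula of Section~B.6. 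I would split the partitions into (a) those whose blocks all have size two and (b) those with a block of size $\ge 3$ (there are no singletons, the errors being centred). For (b): the constituent joint cumulants decay exponentially in the spread of their time arguments by \cref{cor:cumk}, the eighth‑order joint cumulants are uniformly summable by \cref{lm:cumsum}, and any size‑$\ge 3$ block bridging the two copies of $\bs\phi_{\cdot,k}$ at a given lag must span at least $M_n\to\infty$ steps because of the constraint $|l-i|>M_n$ in $A_i$; combining these, the weighted, $s_n^2 n\tau$‑normalised (b)‑contribution is $o(1)$ uniformly in $(t,u)$, by the same mechanism used in \cref{lm:4th} but now at the level of eighth‑order cumulants. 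Within (a), a pair $\{\epsilon_s^a(u),\epsilon_t^b(u)\}$ equals $\mathrm{Cov}(\epsilon_s^a(u),\epsilon_t^b(u))$, which under $H_0$ is $\Gamma_0^{a,b}((s\wedge t)/n,u)$ if $s=t$ and $O(|s-t|/n)$ if $s\ne t$ (the Lipschitz continuity in \cref{defloc} together with $\bs\Gamma_k\equiv\mathbf 0$ for $k\ge1$); a pairing using a mismatched pair therefore carries a factor $O(s_n/n)$, and the $O(n s_n^2)$ such summands contribute $o(1)$ after normalisation.

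The main term then comes from pure pairings all of whose pairs are matched. Here the largest of the eight time indices is $\max(i,j)$, and it occurs only once unless $i=j$, forcing $i=j$; the two factors $\epsilon_i^{\bullet}(u)$ then pair with each other and the remaining six indices form three cross‑pairs. A short enumeration shows that the only matched patterns retaining two free summation variables apart from $i$ — hence $\Theta(n s_n^2)$ summands, i.e. an $\Theta(1)$ contribution after normalisation — are (A) $h=k$, $\ell'=\ell$, whose colour contraction is $\big(\sum_{a,b}(\Gamma_0^{a,b})^2\big)\big(\sum_{c,d}(\Gamma_0^{c,d})^2\big)=\mathrm{tr}^2\{\bs\Gamma_0^2(t,u)\}$, and (F) $\ell'=i-k$, $h=i-\ell$, whose colour contraction is $\sum_{a,b,c,d}\Gamma_0^{a,b}\Gamma_0^{b,c}\Gamma_0^{c,d}\Gamma_0^{d,a}=\mathrm{tr}\{\bs\Gamma_0^4(t,u)\}$; every other matched pattern fixes an extra relation (e.g.\ $k=i-\ell$), leaves at most one free variable, hence $O(n\tau s_n)$ summands, and contributes $O(1/s_n)=o(1)$. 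For (A) and (F) I would replace all arguments $\bs\Gamma_0(\cdot/n,u)$ by $\bs\Gamma_0(t,u)$ (Lipschitz, error $o(1)$ after normalisation) and all of $K_\tau(\ell/n-t)$, $K_\tau(\ell'/n-t)$ by $K_\tau(i/n-t)$ (legitimate since $|\ell-i|,|\ell'-i|\le s_n=o(n\tau)$ and $K$ is continuous); the leftover kernel Riemann sums, together with the factor $2^2$ and the combinatorial counting factors of the two free variables, then yield the normalising constant $\phi^2=\int K^2$, producing $4\phi^2\,\mathrm{tr}^2\{\bs\Gamma_0^2(t,u)\}$ from (A) and $4\phi^2\,\mathrm{tr}\{\bs\Gamma_0^4(t,u)\}$ from (F); summing the two gives the assertion, and the convergence is uniform in $(t,u)$ because every estimate above is uniform.

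The delicate part is the second step: proving, uniformly in $(t,u)$, that every partition containing a cumulant of order $\ge 3$ is asymptotically negligible. This is precisely where the uniform summability of eighth‑order joint cumulants (\cref{lm:cumsum}, itself resting on \cref{cor:cumk}) and the truncation $M_n\to\infty$ are indispensable, and the accounting is heavier than in \cref{lm:4th} because one must control products of up to four cumulants while keeping all bounds uniform; in addition, the combinatorial census identifying (A) and (F) as the only matched patterns that survive the normalisation must be carried out carefully enough to rule out all the ``boundary'' configurations.
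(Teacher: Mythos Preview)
Your approach is correct in principle and genuinely different from the paper's. The paper does not carry out a direct Wick expansion of the eight--fold product. Instead it (i) rewrites $\tilde\sigma$ as $(n\tau)\sum_{k,h}\mathrm{Cov}(G_k^{r_1,r_1},G_h^{r_2,r_2})$, (ii) separates small lags $k,h\le 3m$ (with $m\asymp\log n$) from large lags, (iii) on the large--lag block replaces the errors by their $m$--dependent conditional expectations $\check{\bs\epsilon}_i=\E(\bs\epsilon_i\mid\FF_{i-m}^i)$, and (iv) uses the resulting \emph{exact} independence to compute $\E(\varrho_{i,k}^{r_1}\varrho_{j,h}^{r_2})$ by a case analysis on $|i-j|$ and on $k-h$ (the cases $k=h$ and $|k-h|>2m$ give your contractions (A) and (F), while $0<|k-h|\le 2m$ and $j\ne i$ are shown to be negligible). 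So the paper reaches the same two surviving terms, but via $m$--approximation and case splitting rather than a global pairing enumeration.

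What each route buys: your moment--cumulant argument is structurally cleaner and makes the combinatorial reason for exactly two surviving patterns transparent; it also avoids the small/large--lag split and the separate $m$--approximation error estimates. The paper's $m$--approximation, by contrast, converts the ``delicate'' higher--cumulant step you flag into elementary independence calculations: once the errors are $m$--dependent and $M_n>4m$, most eight--fold expectations vanish identically, so cumulant control beyond what is in \cref{lm:cumsum} is never invoked. One caution on your sketch: the sentence ``any size--$\ge 3$ block bridging the two copies of $\bs\phi_{\cdot,k}$ must span at least $M_n$'' is not literally true (for instance a block $\{i,i-k,j\}$ with small $k$ and $|i-j|$ has small span), so your step (b) really requires an enumeration over \emph{all} partition shapes using the general bound of \cref{cor:cumk}, not only \cref{lm:cumsum}; this is feasible but is precisely the bookkeeping the paper's $m$--approximation was designed to bypass.
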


\begin{proof}
Define $$\sigma_{r_1,q_1, r_2, q_2}(t,u) = \frac{1}{s_n^2n \tau} \sum_{i,j=1}^{n} K_{\tau}(i/n-t) K_{\tau}(j/n-t) \mathrm{Cov}(U_i^{r_1, q_1}(t,u), U_j^{r_2, q_2}(t,u)),$$
so that by definition in \eqref{eq:deftildesigma}
\begin{align}
    \tilde \sigma(t,u) =  \sum_{r_1, r_2 = 1}^p \sigma_{r_1,r_1, r_2, r_2}(t,u).\label{eq:tildeSigma}
\end{align}
Recall that in \eqref{eq:defGk}
\begin{align}
    \mf G_k(t,u) =(n\tau s_n)^{-1} \sum_{i=1}^n K_{\tau}(i/n - t)\bs \phi_{i,k}(u) \sum_{l=1}^n K_{\tau}(l/n - t) \bs \phi^{\top}_{l,k}(u)\mf 1( l \in \mathbb  L_i).
\end{align}
Let $G_k^{r_1, q_1}(t,u)$ denote the $(r_1, q_1)$th element of $\mf G_k(t,u)$.
Note that 
\begin{align}
\sigma_{r_1,q_1, r_2, q_2}(t,u) = (n\tau) \sum_{k,h=1}^{s_n}  \mathrm{Cov}(G_k^{r_1, q_1}(t,u), G_h^{r_2, q_2}(t,u)).
    \label{eq:sigmatu}
\end{align}
We shall consider small lags and large lags separately. Recall that $M > \lfloor c \log n \rfloor$. Take $m = \lfloor c \log n/4 \rfloor$ for some large constant $c$. Then, we have $\chi^m = O(1/\sqrt{n\tau})$ and $s_n /m \to \infty$. By \cref{lm:4th}, we have
\begin{align}
    &(n\tau s_n)^{2} \sum_{k,h=1}^{3m}\mathrm{Cov}(G_k^{r_1, r_1}(t,u), G_h^{r_2, r_2}(t,u)) \\ &=   (n\tau s_n)^{2}\sum_{k,h=1}^{3m} \left\{\mathbb{E}(G_k^{r_1, r_1}(t,u) G_h^{r_2, r_2}(t,u)) -  \mathbb{E}(G_k^{r_1, r_1}(t,u))\mathbb{E}( G_h^{r_2, r_2}(t,u))  \right\}\\
    &=   (n\tau s_n)^{2}\sum_{k,h=1}^{3m} \mathbb{E}(G_k^{r_1, r_1}(t,u) G_h^{r_2, r_2}(t,u)) + o(n\tau s_n^2)\\
    & =  (n\tau s_n)^{2} \sum_{k,h=1}^{3m}  \mathrm{Cum}(G_k^{r_1, r_1}(t,u), G_h^{r_2, r_2}(t,u)) + o(n\tau s_n^2)\\
    & = \sum_{q_1, q_2 = 1}^p \sum_{k,h=1}^{3m} \sum_{i,j=1}^n K_{\tau}(i/n-t) K_{\tau}(j/n-t)   \\ &\times  \sum_{l \in \mathbb L_i, l^{\prime} \in \mathbb L_j} \left\{\mathrm{Cum}(\phi_{i,k}^{r_1, q_1}(t,u), \phi_{l,k}^{r_1, q_1}(t,u), \phi_{j,h}^{r_2, q_2}(t,u), \phi_{l^{\prime},h}^{r_2, q_2}(t,u)) \right.\\ & \left.+ \E(\phi_{i,k}^{r_1, q_1}(t,u)\phi_{j,h}^{r_2, q_2}(t,u))\E( \phi_{l,k}^{r_1, q_1}(t,u)\phi_{l^{\prime},h}^{r_2, q_2}(t,u) )\right.\\ & \left.+ \E(\phi_{i,k}^{r_1, q_1}(t,u)\phi_{l^{\prime},h}^{r_2, q_2}(t,u))\E( \phi_{l,k}^{r_1, q_1}(t,u)\phi_{j,h}^{r_2, q_2}(t,u) ) \right\} 
    + o(n\tau s_n^2) \\ 
    & = O(n\tau m^2) + o(n\tau s_n^2) = o(n\tau s_n^2),\label{eq:mm}
\end{align}
where the fourth equality follows from \cref{lm:cumsum} and similar arguments in  \cref{lm:4th}.  
Similarly, we have 
\begin{align}
&(n\tau s_n)^{2} \sum_{k=1}^{3m}\sum_{h=3m+1}^{s_n}\mathrm{Cov}(G_k^{r_1, r_1}(t,u), G_h^{r_2, r_2}(t,u))\\ &= \sum_{q_1, q_2 = 1}^p \sum_{k=1}^{3m}\sum_{h=3m+1}^{s_n} \sum_{i,j=1}^n K_{\tau}(i/n-t) K_{\tau}(j/n-t) \\&\times \sum_{l \in \mathbb L_i, l^{\prime} \in \mathbb L_j} \left\{\mathrm{Cum}(\phi_{i,k}^{r_1, q_1}(t,u), \phi_{l,k}^{r_1, q_1}(t,u), \phi_{j,h}^{r_2, q_2}(t,u), \phi_{l^{\prime},h}^{r_2, q_2}(t,u)) \right.\\ & \left.+ \E(\phi_{i,k}^{r_1, q_1}(t,u)\phi_{j,h}^{r_2, q_2}(t,u)\E( \phi_{l,k}^{r_1, q_1}(t,u)\phi_{l^{\prime},h}^{r_2, q_2}(t,u) )\right.\\ & \left.+ \E(\phi_{i,k}^{r_1, q_1}(t,u)\phi_{l^{\prime},h}^{r_2, q_2}(t,u)\E( \phi_{l,k}^{r_1, q_1}(t,u)\phi_{j,h}^{r_2, q_2}(t,u) ) \right\} 
    +o(n\tau s_n^2) \\ 
    & = O(n\tau m s_n)+o(n\tau s_n^2) = o(n\tau s_n^2).\label{eq:Gmlarge}
\end{align}
Finally, for $k, h > 3m$, by \cref{lm:4th}, we have 
\begin{align}
&(n\tau s_n)^{2}  \sum_{k, h=3m+1}^{s_n}\mathrm{Cov}(G_k^{r_1, r_1}(t,u), G_h^{r_2, r_2}(t,u)) \\ 
& =  (n\tau s_n)^{2} \sum_{k,h=3m+1}^{s_n}  \left\{\mathbb{E}(G_k^{r_1, r_1}(t,u) G_h^{r_2, r_2}(t,u))\right\} + o(n\tau s_n^2).\label{eq:approxG}
\end{align}
At the end of the proof, we shall show that 
\begin{align}
    & (n\tau s_n)^{2} 
 \sum_{k,h=3m+1}^{s_n}  \left\{\mathbb{E}(G_k^{r_1, r_1}(t,u)G_h^{r_2, r_2}(t,u))\right\}\\
 &=4s_n^2 n\tau \phi^2 \sum_{q_1, q_2 = 1}^p [\{\Gamma_0^{r_1, q_2} (t, u) \Gamma_0^{r_2, q_1}(t, u)\}\times \{\Gamma_0^{q_1, q_2} (i/n, u) \Gamma_0^{r_1, r_2}(i/n, u)\}\\
 &+\{\Gamma_0^{r_1, r_2} (t, u) \Gamma_0^{q_1, q_2}(t, u)\}^2] + o(s^2_n n\tau).\label{eq:Glargelarge}
\end{align}
Let $\Gamma_0^{2,r_1, q_1} (t, u)$ denote the $(r_1, q_1)$-th element of $\bs \Gamma_0^{2} (t, u)$. Combining \eqref{eq:tildeSigma}, \eqref{eq:sigmatu}, \eqref{eq:mm}, \eqref{eq:Gmlarge}, \eqref{eq:Glargelarge}, we have 

\begin{align}
\tilde \sigma(t,u) &=  \sum_{r_1, r_2 = 1}^p \sigma_{r_1,r_1, r_2, r_2}(t,u)\\ 
&= (n\tau) \sum_{r_1, r_2 = 1}^p\sum_{k,h=1}^{s_n}  \mathrm{Cov}(G_k^{r_1, r_1}(t,u), G_h^{r_2, r_2}(t,u))\\
& = 4 \phi^2 \sum_{r_1, r_2, q_1, q_2 = 1}^p [\{\Gamma_0^{r_1, q_2} (t, u) \Gamma_0^{r_2, q_1}(t, u)\}\times \{\Gamma_0^{q_1, q_2} (i/n, u) \Gamma_0^{r_1, r_2}(i/n, u)\}\\
&+\{\Gamma_0^{r_1, r_2} (t, u) \Gamma_0^{q_1, q_2}(t, u)\}^2]  + o(1)\\
& = 4 \phi^2 \sum_{r_1, q_1= 1}^p [\Gamma_0^{2,r_1, q_1} (t, u) \times \Gamma_0^{2,r_1, q_1} (t, u)] + 4 \phi^2  \mathrm{tr}^2\{\bs \Gamma^2_0(t,u)\}    + o(1)\\
&=  4 \phi^2  \mathrm{tr}\{\bs \Gamma^4_0(t,u)\} + 4 \phi^2  \mathrm{tr}^2\{\bs \Gamma^2_0(t,u)\}   +  o(1).
\end{align}
\par

\textbf{Proof of \eqref{eq:Glargelarge}. }
Define 
\begin{align}
   \check{\mf G}_k(t,u) &=(n\tau s_n)^{-1} \sum_{i=1}^n K_{\tau}(i/n - t)\bs (\check{\bs \phi}_{i,k}(u)\check{\bs \psi}^{\top}_{i,k}(t,u) + \check{\bs \psi}_{i,k}(t,u)\check{\bs \phi}^{\top}_{i,k}(u))\label{eq:defcheckG0} \\ &= (n\tau s_n)^{-1} \sum_{i,l =1}^n K_{\tau}(i/n - t)  K_{\tau}(l/n - t) \check{\bs \phi}_{i,k}(u) \check{\bs \phi}^{\top}_{l,k}(u)\mf 1( l \in \mathbb  L_i),\label{eq:defcheckG}
\end{align}
where $\check{\bs \psi}_{i,k}(t,u) = \sum_{\ell = (i-s_n) \vee 1}^{i- M - 1}  K_{\tau}( \ell/n - t) \check{\bs \phi}_{\ell, k}(u),$ and the second equality follows from a change of the order of summations of $i$ and $l$.
Using $m$-approximation, uniformly for $1 \leq r_1, r_2 \leq p$, we have
\begin{align}
   & \left|  \sum_{k,h=3m+1}^{s_n}  \left\{\mathbb{E}(G_k^{r_1, r_1}(t,u) G_h^{r_2, r_2}(t,u))\right\} - \sum_{k,h=3m+1}^{s_n}  \left\{\mathbb{E}(\check G_k^{r_1, r_1}(t,u)\check G_h^{r_2, r_2}(t,u))\right\} \right| \\ 
   & \leq \sum_{k,h=3m+1}^{s_n}  \left| \left\{\mathbb{E}(G_k^{r_1, r_1}(t,u), G_h^{r_2, r_2}(t,u))\right\} - \left\{\mathbb{E}(\check G_k^{r_1, r_1}(t,u) \check G_h^{r_2, r_2}(t,u))\right\} \right| \\ 
   & \leq \sum_{k,h=3m+1}^{s_n} \| G_k^{r_1, r_1}(t,u) - \check G_k^{r_1, r_1}(t,u)\|\|G_h^{r_2, r_2}(t,u) \|\\ 
   &   + \|\check G_k^{r_1, r_1}(t,u)\|\|G_h^{r_2, r_2}(t,u) -\check G_h^{r_2, r_2}(t,u) \| .\label{eq:mapproxEGG}
\end{align}
Similar to the calculation in \eqref{eq:mapproxU} and \eqref{eq:mapproxsumphi}, for all $3m+1 \leq k \leq s_n$,  we have 
\begin{align}
    \| G_k^{r_1, r_1}(t,u) - \check G_k^{r_1, r_1}(t,u)\| = O( (\log n+s_n) /(n\sqrt{s_n}) ), \label{eq:EGGm1}
\end{align}
and by similar arguments in \eqref{eq:mm}, for all $3m+1 \leq k \leq s_n$, we have
\begin{align}
    &\|G_k^{r_1, r_1}(t,u)\|^2 
=O((n\tau s_n^2)^{-1}),\label{eq:EGGm2}
\end{align}
and similar results hold for $\|\check G_k^{r_1, r_1}(t,u)\|^2$.
Therefore, since $s_n/\sqrt{n\tau} \to 0$ and $s_n/ (\log n) \to \infty$, by \eqref{eq:mapproxEGG}, \eqref{eq:EGGm1} and \eqref{eq:EGGm2}, we have 
\begin{align}
     &(n\tau s_n)^2\left|  \sum_{k,h=3m+1}^{s_n}  \left\{\mathbb{E}(G_k^{r_1, r_1}(t,u) G_h^{r_2, r_2}(t,u))\right\} - \sum_{k,h=3m+1}^{s_n}  \left\{\mathbb{E}(\check G_k^{r_1, r_1}(t,u) \check G_h^{r_2, r_2}(t,u))\right\} \right|\\&  = o(n\tau s_n^2). \label{eq:mapproxG}
\end{align}
Recall the definition of  $\check{\mf G}_k(t,u)$ in \eqref{eq:defcheckG0}. We can further write the element on the $r_1$-th column and $r_1$-th row of $\check{\mf G}_k(t,u)$ as $\check G_k^{r_1,r_1}(t,u) = \sum_{i=1}^n K(i/n -t) \varrho_{i,k}^{r_1}(t,u)$, where
\begin{align}
    \varrho_{i,k}^{r_1}(t,u) = 2\sum_{q_1 = 1}^p \check{ \phi}^{r_1, q_1}_{i,k}(u) \sum_{l=(i-s_n)
    \vee 1}^{i-M-1} K_{\tau}(l/n - t) \check{\phi}^{r_1, q_1}_{l,k}(u), 
\end{align}
Therefore, we can write 
\begin{align}
    &(n\tau s_n)^{2}  \sum_{k,h=3m+1}^{s_n}  \left\{\mathbb{E}(\check G_k^{r_1, r_1}(t,u) \check G_h^{r_2, r_2}(t,u))\right\} \\
    &= \sum_{k,h=3m+1}^{s_n} \sum_{i,j=1}^n K(i/n -t)K(j/n -t)  \mathbb{E}(\varrho_{i,k}^{r_1}(t,u),\varrho_{j,h}^{r_2}(t,u)).\label{eq:Glarge}
\end{align}
In the following, we proceed to discuss $\mathbb{E}(\varrho_{i,k}^{r_1}(t,u),\varrho_{j, h}^{r_2}(t,u))$, where $k,h >3m$.
Note that $\check{\bs \epsilon}_i(u) \in \FF_{i-m}^i$, $\check{\bs \epsilon}_{i-k}(u) \in \FF_{i-k-m}^{i-k}$ and $\sum_{l=(i-s_n)
    \vee 1}^{i-M-1} K_{\tau}(l/n - t) \check{\bs \phi}^{\top}_{l,k}(u) \in \FF_{i-k-s_n-m}^{i-M-1}$, $M > 4m$. Therefore, for $k, h > 3m$,  $j\leq i$, if $j < i-m$, for $1 \leq r_1, r_2 \leq p$, we have
\begin{align}
   \mathbb{E}(\varrho_{i,k}^{r_1}(t,u)\varrho_{j,h}^{r_2}(t,u)) = 0.\label{eq:case(0)}
\end{align}
For $k, h > 3m$, if $i-m \leq j \leq i$, $\{\check{\bs \epsilon}_{i}(u), \check{\bs \epsilon}_{j}(u)\} $ and $$\{\check{\bs \epsilon}_{i-k}(u), \sum_{l=(i-s_n)
    \vee 1}^{i-M-1} K_{\tau}(l/n - t) \check{\bs \phi}^{\top}_{l,k}(u), \ \check{\bs \epsilon}_{j-h}(u), \sum_{l=(j-s_n)
    \vee 1}^{j-M-1} K_{\tau}(l/n - t) \check{\bs \phi}^{\top}_{l,h}(u)\}$$ are independent.\par 
Under null hypothesis, using the arguments similar to \eqref{eq:checkphiij2} but simpler, $\E (\check \epsilon^{q_1}_i(u) \check 
 \epsilon^{q_2}_j(u)) = O(\chi^m)$ if $i \neq j$. Therefore, for $i \neq j$, we have 
\begin{align}
   & \mathbb{E}(\varrho_{i,k}^{r_1}(t,u)\varrho_{j,h}^{r_2}(t,u))\\ 
    & = 4\sum_{q_1, q_2 = 1}^p \E (\check \epsilon^{q_1}_i(u) \check \epsilon^{q_2}_j(u))\\& \times \E(
    \check \epsilon^{r_1}_{i-k}(u)
    \sum_{l=(i-s_n)
    \vee 1}^{i-M-1} K_{\tau}(l/n - t) \check{\phi}^{r_1, q_1}_{l,k}(u)  \check \epsilon^{r_2}_{j-h}(u) \sum_{l=(j-s_n)
    \vee 1}^{j-M-1} K_{\tau}(l/n - t) \check{ \phi}^{r_2, q_2}_{l,h}(u)) \\ &= O(s_n^2 \chi^m) = o(s_n^2/\sqrt{n\tau}).\label{eq:caseij}
\end{align}
It remains to consider the case where $i = j$. Namely, the quantity
\begin{align}
     &\mathbb{E}(\varrho_{i,k}^{r_1}(t,u)\varrho_{i,h}^{r_2}(t,u))\\ 
     & = O(s_n^2 \chi^m) +4 \sum_{q_1, q_2 = 1}^p \Gamma_0^{q_1, q_2}(i/n, u)\\ & \times \E(
    \epsilon^{r_1}_{i-k}(u)
    \sum_{l=(i-s_n)
    \vee 1}^{i-M-1} K_{\tau}(l/n - t) \check{\phi}^{r_1, q_1}_{l,k}(u)  \epsilon^{r_2}_{i-h}(u) \sum_{l^{\prime}=(i-s_n)
    \vee 1}^{i-M-1} K_{\tau}(l^{\prime}/n - t) \check{ \phi}^{r_2, q_2}_{l^{\prime},h}(u))\label{eq:rhoii}
\end{align}Without loss of generality, consider $h \leq k$. Note that $l-k < i-k-m$, $l^{\prime} -h < i-k-m$.
Hence, $\check \epsilon_{l-k}^{r_1}(u)$ is independent of $\{\check \epsilon^{r_1}_{i-k}(u),  \check \epsilon^{r_2}_{i-h}(u), \check \epsilon_{l}^{r_1}(u) \}$, since $k,h > 3m$. In the following, we will discuss three cases:
(1) $k=h$; (2) $k > h+2m$; (3) $h<k \leq h+2m$ . \par 
For (1), we  have 
\begin{align}
    &\mathbb{E}(\varrho_{i,k}^{r_1}(t,u)\varrho_{i,h}^{r_2}(t,u))\\ 
    & =O(s_n^2 \chi^m) +4\sum_{q_1, q_2 = 1}^p \Gamma_0^{q_1, q_2}(i/n, u)\\ & \times \E(
    \epsilon^{r_1}_{i-k}(u)
    \sum_{l=(i-s_n)
    \vee 1}^{i-M-1} K_{\tau}(l/n - t) \check{\phi}^{r_1, q_1}_{l,k}(u) \epsilon^{r_2}_{i-k}(u) \sum_{l^{\prime}=(i-s_n)
    \vee 1}^{i-M-1} K_{\tau}(l^{\prime}/n - t) \check{ \phi}^{r_2, q_2}_{l^{\prime}, k}(u))\\ & = 4 \sum_{q_1, q_2 = 1}^p \Gamma_0^{q_1, q_2}(i/n, u)\sum_{l, l^{\prime}=(i-s_n)
    \vee 1}^{i-M-1} K_{\tau}(l/n - t)K_{\tau}(l^{\prime}/n - t) \E (  \check  \epsilon_{l-k}^{r_1}(u) \check \epsilon_{l^{\prime} -k}^{r_2}(u)) \\
&\times \E ( \check  \epsilon_{i-k}^{r_1}(u) \check \epsilon_{i -k}^{r_2}(u)  \check 
 \epsilon_{l}^{q_1}(u) \check \epsilon_{l^{\prime}}^{q_2}(u))+O(s_n^2 \chi^m) \\
  & =4 \sum_{q_1, q_2 = 1}^p\Gamma_0^{q_1, q_2}(i/n, u)\sum_{l=(i-s_n)
    \vee 1}^{i-M-1} K^2_{\tau}(l/n - t) \Gamma_0^{r_1, r_2}((l-k)/n, u) \\ &\times \E ( \check  \epsilon_{i-k}^{r_1}(u) \check \epsilon_{i -k}^{r_2}(u)  \check 
 \epsilon_{l}^{q_1}(u) \check \epsilon_{l}^{q_2}(u)) + O(s_n^2\chi^m)\\
 & = 4\sum_{q_1, q_2 = 1}^p\Gamma_0^{q_1, q_2}(i/n, u) \sum_{l=(i-s_n)
    \vee 1}^{i-M-1} K^2_{\tau}(l/n - t) \Gamma_0^{r_1, r_2}((l-k)/n, u) \\&  \times \left\{\Gamma_0^{q_1, q_2}(l/n, u)\Gamma_0^{r_1, r_2}((i-k)/n, u) + O(\chi^m)+\mathrm{Cum}(  \check  \epsilon_{i-k}^{r_1}(u) \check \epsilon_{i -k}^{r_2}(u),  \check 
 \epsilon_{l}^{q_1}(u) \check \epsilon_{l}^{q_2}(u))\right\} \\& + O(s_n^2\chi^m)\\
 & = 4 K^2_{\tau}(i/n - t)s_n\sum_{q_1, q_2 = 1}^p \{\Gamma_0^{q_1, q_2} (i/n, u) \Gamma_0^{r_1, r_2}(i/n, u)\}^2  + O(M), \label{eq:case(1)}
\end{align}
where in the fourth equality $O(\chi^m)$ takes into consideration  the approximation error of $\Gamma_0^{q_1, q_2}(l/n, u)\Gamma_0^{r_1, r_2}((i-k)/n, u)$ to  $\E (\check  \epsilon_{i-k}^{r_1}(u) \check \epsilon_{i -k}^{r_2}(u))\E(   \check 
 \epsilon_{l}^{q_1}(u) \check \epsilon_{l}^{q_2}(u))$, and the last equality similar arguments as in \cref{lm:cumk} are used. \par 
For (2), $k-h > 2m$. In the subsequent analysis, we consider $\{(i-s_n)
    \vee 1 \}\leq l, l^{\prime} \leq i-M-1$, as used in \eqref{eq:rhoii}. Note that if $l-k < l^{\prime} -h -m$, $\check \epsilon_{l-k}^{r_1}(u)$ is independent of $$\{\check \epsilon_{i-k}^{q_1}(u),\check \epsilon_{i-h}^{q_2}(u), \check \epsilon_{l^{\prime}-h}^{r_2}(u), \check \epsilon_{l}^{q_1}(u), \check \epsilon_{l^{\prime}}^{q_2}(u)\}.$$
Therefore, we have 
\begin{align}
   & \E (  \check  \epsilon_{l-k}^{r_1}(u) \check \epsilon_{l^{\prime} -h}^{r_2}(u) \check  \epsilon_{i-k}^{r_1}(u) \check \epsilon_{i -h}^{r_2}(u)  \check 
 \epsilon_{l}^{q_1}(u) \check \epsilon_{l^{\prime}}^{q_2}(u))\\ &= \E (  \check  \epsilon_{l-k}^{r_1}(u))  \E ( \check \epsilon_{l^{\prime} -h}^{r_2}(u) \check  \epsilon_{i-k}^{r_1}(u) \check \epsilon_{i -h}^{r_2}(u)  \check 
 \epsilon_{l}^{q_1}(u) \check \epsilon_{l^{\prime}}^{q_2}(u))  = 0.
\end{align}
Similarly, if $l^{\prime} -h< l-k  -m$,
\begin{align}
  &  \E (  \check  \epsilon_{l-k}^{r_1}(u) \check \epsilon_{l^{\prime} -h}^{r_2}(u) \check  \epsilon_{i-k}^{r_1}(u) \check \epsilon_{i -h}^{r_2}(u)  \check 
 \epsilon_{l}^{q_1}(u) \check \epsilon_{l^{\prime}}^{q_2}(u))\\ & =   \E(\check \epsilon_{l^{\prime} -h}^{r_2}(u))  \E (  \check  \epsilon_{l-k}^{r_1}(u)  \check  \epsilon_{i-k}^{r_1}(u) \check \epsilon_{i -h}^{r_2}(u)  \check 
 \epsilon_{l}^{q_1}(u) \check \epsilon_{l^{\prime}}^{q_2}(u)) = 0.
\end{align}
We proceed to consider $k-h-m \leq l -l^{\prime}\leq k-h+m $.
If $l^{\prime} - h \geq i-k-m$, then we have $l  \geq l^{\prime}+k-h-m \geq i - 2m$, which is contradictory, since $l \leq i-M$, $M > 4m$. Therefore, we always have $l^{\prime} - h < i-k-m$. Hence, $\{\check \epsilon_{l-k}^{r_1}(u), \check \epsilon_{l^{\prime}-h}^{r_2}(u)\} $ is independent of $\{\check \epsilon^{r_1}_{i-k}(u),  \check \epsilon^{r_2}_{i-h}(u) \}$. 
Note that $l^{\prime} -m \geq l-k+h-2m> l-k$.  Together with $l^{\prime} - h\leq l - k + m\leq l - 2 m$, it follows that $\{\check  \epsilon_{l-k}^{r_1}(u), \check \epsilon_{l^{\prime} -h}^{r_2}(u)\}$ is independent of $\{\check  \epsilon_{i-k}^{r_1}(u), \check \epsilon_{i -h}^{r_2}(u),  \check 
 \epsilon_{l}^{q_1}(u), \check \epsilon_{l^{\prime}}^{q_2}(u)\}$, which leads to 
\begin{align}
    &\E (  \check  \epsilon_{l-k}^{r_1}(u) \check \epsilon_{l^{\prime} -h}^{r_2}(u) \check  \epsilon_{i-k}^{r_1}(u) \check \epsilon_{i -h}^{r_2}(u)  \check 
 \epsilon_{l}^{q_1}(u) \check \epsilon_{l^{\prime}}^{q_2}(u)) \\ &= \E (  \check  \epsilon_{l-k}^{r_1}(u) \check \epsilon_{l^{\prime} -h}^{r_2}(u))\E( \check  \epsilon_{i-k}^{r_1}(u) \check \epsilon_{i -h}^{r_2}(u)  \check 
 \epsilon_{l}^{q_1}(u) \check \epsilon_{l^{\prime}}^{q_2}(u)).
\end{align}
It follows that
\begin{align}
    &\mathbb{E}(\varrho_{i,k}^{r_1}(t,u)\varrho_{i,h}^{r_2}(t,u))\\ 
    & = 4\sum_{q_1, q_2 = 1}^p \Gamma_0^{q_1, q_2}(i/n, u)\sum_{l, l^{\prime}=(i-s_n)
    \vee 1}^{i-M-1}   K_{\tau}(l/n - t)K_{\tau}(l^{\prime}/n - t) \E (  \check  \epsilon_{l-k}^{r_1}(u) \check \epsilon_{l^{\prime} -h}^{r_2}(u))\\ &\times \E ( \check  \epsilon_{i-k}^{r_1}(u) \check \epsilon_{i -h}^{r_2}(u)  \check 
 \epsilon_{l}^{q_1}(u) \check \epsilon_{l^{\prime}}^{q_2}(u)) + O(s_n^2\chi^m)\\ 
 & = 4 \sum_{q_1, q_2 = 1}^p \Gamma_0^{q_1, q_2}(i/n, u)\sum_{l=(i-s_n)
    \vee 1}^{i-M-1}  K_{\tau}(l/n - t)K_{\tau}((l-k+h) /n - t) \E (  \check  \epsilon_{l-k}^{r_1}(u) \check \epsilon_{l-k}^{r_2}(u)) \\ &\times \E ( \check  \epsilon_{i-k}^{r_1}(u) \check \epsilon_{i -h}^{r_2}(u)  \check 
 \epsilon_{l}^{q_1}(u) \check \epsilon_{l-k+h}^{q_2}(u)) + O(s^2_n\chi^m)\\
  & = 4 \sum_{q_1, q_2 = 1}^p\Gamma_0^{q_1, q_2}(i/n, u)\sum_{l=(i-s_n)
    \vee 1}^{i-M-1}  K_{\tau}(l/n - t)K_{\tau}((l-k+h) /n - t) \Gamma_0^{r_1,r_2} ((l-k)/n, u) \\ &\times \E ( \check  \epsilon_{i-k}^{r_1}(u) \check \epsilon_{i -h}^{r_2}(u)  \check 
 \epsilon_{l}^{q_1}(u) \check \epsilon_{l-k+h}^{q_2}(u)) \left\{\mf 1( |l - i+h|  > m)+\mf 1(|l - i+h|  \leq m)\right\} +O(s^2_n\chi^m) \\ 
 & = 4 \sum_{q_1, q_2 = 1}^p\Gamma_0^{q_1, q_2}(i/n, u)\sum_{l=(i-s_n)
    \vee 1}^{i-M-1} K_{\tau}(l/n - t)K_{\tau}((l-k+h) /n - t) \Gamma_0^{r_1,r_2} ((l-k)/n, u) \\ &\times \E ( \check  \epsilon_{i-k}^{r_1}(u)  \check \epsilon_{l-k+h}^{q_2}(u)) \E(\check \epsilon_{i -h}^{r_2}(u)  \check 
 \epsilon_{l}^{q_1}(u)) \left\{\mf 1(|l - i+h|  \leq m)\right\} +O(s^2_n\chi^m) \\ 
 & = 4 \sum_{q_1, q_2 = 1}^p \Gamma_0^{q_1, q_2}(i/n, u)\sum_{l=(i-s_n)
    \vee 1}^{i-M-1}\mf 1 (l=i-h)  K_{\tau}(l/n - t)K_{\tau}((l-k+h) /n - t) \\ &\times \Gamma_0^{r_1,r_2} ((l-k)/n, u) 
 \Gamma_0^{r_1, q_2}((i-k)/n, u)\Gamma_0^{r_2, q_1}((i-h)/n, u) +O(s^2_n\chi^m)\\
 &=4  K^2_{\tau}(i/n - t)\mf 1 (h > M)\sum_{q_1, q_2 = 1}^p \{\Gamma_0^{r_1, q_2} (i/n, u) \Gamma_0^{r_2, q_1}(i/n, u)\}\times \{\Gamma_0^{q_1, q_2} (i/n, u) \Gamma_0^{r_1, r_2}(i/n, u)\} \\ & + O(s^2_n\chi^m + s_n/(n\tau)),\label{eq:case(2)}
\end{align}
 where in the fourth equality when $ |l - i+h|  > m$,  $\E ( \check  \epsilon_{i-k}^{r_1}(u) \check \epsilon_{i -h}^{r_2}(u)  \check 
 \epsilon_{l}^{q_1}(u) \check \epsilon_{l-k+h}^{q_2}(u))$ can be further written into $\E ( \check  \epsilon_{i-k}^{r_1}(u) \check \epsilon_{i -h}^{r_2}(u) \check \epsilon_{l-k+h}^{q_2}(u))\E( \check 
 \epsilon_{l}^{q_1}(u) )$ or $\E ( \check  \epsilon_{i-k}^{r_1}(u) \check 
 \epsilon_{l}^{q_1}(u) \check \epsilon_{l-k+h}^{q_2}(u))\E( \check \epsilon_{i -h}^{r_2}(u) )$, both of which are zero; when $ |l - i+h| \leq m$, we have $l-(i-k) > i-h-m - (i-k) > m$, $i-h - (l-k+h) = i-l + k - 2h >k-h-m > m$, since $k-h > 2m$.

For (3), we have $0<  k-h \leq 2m$. Note that
\begin{align}
    &\mathbb{E}(\varrho_{i,k}^{r_1}(t,u)\varrho_{i,h}^{r_2}(t,u))\\ 
     & =   O(s_n^2\chi^m) + 4 \sum_{q_1, q_2 = 1}^p \Gamma_0^{q_1, q_2}(i/n, u)\\ & \times \E(
    \epsilon^{r_1}_{i-k}(u)
    \sum_{l=(i-s_n)
    \vee 1}^{i-M-1} K_{\tau}(l/n - t) \check{\phi}^{r_1, q_1}_{l,k}(u) \epsilon^{r_2}_{i-h}(u) \sum_{l=(i-s_n)
    \vee 1}^{i-M-1} K_{\tau}(l/n - t) \check{ \phi}^{r_2, q_2}_{l,h}(u))\\
    &=  O(s_n^2\chi^m) + 4\sum_{q_1, q_2 = 1}^p \Gamma_0^{q_1, q_2}(i/n, u)\\ & \times 
    \sum_{l, l^{\prime}=(i-s_n)
    \vee 1}^{i-M-1} K_{\tau}(l/n - t) K_{\tau}(l^{\prime}/n - t) (\mathrm{Cum}(
    \epsilon^{r_1}_{i-k}(u)\epsilon^{r_2}_{i-h}(u), \check{\phi}^{r_1, q_1}_{l,k}(u),     \check{ \phi}^{r_2, q_2}_{ l^{\prime},h}(u))+O(\chi^m))\\
    &= O(1),
\end{align}
 where in the second equality $O(\chi^m)$ accounts for $\E(\epsilon^{r_1}_{i-k}(u)\epsilon^{r_2}_{i-h}(u)\check{\phi}^{r_1, q_1}_{l,k}(u))\E(\check{ \phi}^{r_2, q_2}_{ l^{\prime},h}(u))$,  the term $\E(\epsilon^{r_1}_{i-k}(u)\epsilon^{r_2}_{i-h}(u))
\E(\check{\phi}^{r_1, q_1}_{l,k}(u)\check{ \phi}^{r_2, q_2}_{ l^{\prime},h}(u))$, $\E(\epsilon^{r_1}_{i-k}(u)\epsilon^{r_2}_{i-h}(u)\check{ \phi}^{r_2, q_2}_{l,h}(u))\E(\check{\phi}^{r_1, q_1}_{ l^{\prime},k}(u))$ and  the term $\E(\epsilon^{r_1}_{i-k}(u)\epsilon^{r_2}_{i-h}(u))\E(\check{\phi}^{r_1, q_1}_{l,k}(u))\E(\check{ \phi}^{r_2, q_2}_{ l^{\prime},h}(u))$, and  similar arguments as in \cref{lm:cumsum} are used; in the last equality we use the fact that $m = \lfloor c \log n/4 \rfloor$ for some large constant $c$. 
Therefore, combining \eqref{eq:Glarge}, \eqref{eq:case(0)}, \eqref{eq:caseij}, \eqref{eq:case(1)} and \eqref{eq:case(2)}, we have 
\begin{align}
     &(n\tau s_n)^{2} \sum_{k,h=3m+1}^{s_n}  \left\{\mathbb{E}(\check G_k^{r_1, r_1}(t,u)\check G_h^{r_2, r_2}(t,u))\right\}\\& = \sum_{k,h=3m+1}^{s_n} \sum_{i,j=1}^n K_{\tau}(i/n -t)K_{\tau}(j/n -t)  \mathbb{E}(\varrho_{i,k}^{r_1}(t,u)\varrho_{j,l}^{r_2}(t,u))\\
      & = 4s_n^2 \sum_{i=1}^n K^2_{\tau}(i/n - t)\sum_{q_1, q_2 = 1}^p [\{\Gamma_0^{r_1, q_2} (i/n, u) \Gamma_0^{r_2, q_1}(i/n, u)\}\times \{\Gamma_0^{q_1, q_2} (i/n, u) \Gamma_0^{r_1, r_2}(i/n, u)\} \\
      &+\{\Gamma_0^{r_1, r_2} (i/n, u) \Gamma_0^{q_1, q_2}(i/n, u)\}^2]+ O(n\tau M s_n +  (n\tau) s_n^4\chi^m  + s^3_n)\\ 
     & = 4s_n^2 n\tau \phi^2 \sum_{q_1, q_2 = 1}^p [\{\Gamma_0^{r_1, q_2} (t, u) \Gamma_0^{r_2, q_1}(t, u)\}\times \{\Gamma_0^{q_1, q_2} (i/n, u) \Gamma_0^{r_1, r_2}(i/n, u)\}\\
      &+\{\Gamma_0^{r_1, r_2} (t, u) \Gamma_0^{q_1, q_2}(t, u)\}^2]+ o(n\tau s_n^2).
\end{align}

\end{proof}

\subsection{A Gaussian approximation with moment assumptions and increasing
dependence}
\label{eq:GAsn}
In this section, we extend Theorem 2.1 of \cite{zhang2018} to  allow non-stationarity and increasing dependence in the time series. 

Recall that $x_i=\mathcal{G}_i\left(\mathcal{F}_i\right)$, where $\mathcal{F}_i=\left(\ldots, \epsilon_{i-1}, \epsilon_i\right)$ and $\mathcal{G}_i=\left(\mathcal{G}_{i 1}, \ldots, \mathcal{G}_{i p}\right)^{\prime}:[0,1]\times \mathcal S^{\mathbb Z}$, and $y_i = (y_{i1},\cdots, y_{ip})^{\top}$ is a Gaussian process which is independent of $x_i$ and preserve the auto-covariance structure of $x_i$. Let $\left\{\epsilon_i^{\prime}\right\}$ be an i.i.d. copy of $\left\{\epsilon_i\right\}$. For $\left\|x_{i j}\right\|_q<\infty$, define the dependence measure as 
$$
\theta_{k, j, q}=\sup _{i \in \mathbb Z}\left\|\mathcal{G}_{i j}\left(\mathcal{F}_i\right)-\mathcal{G}_{i j}\left( \mathcal{F}_{i, i-k}\right)\right\|_q, \quad \Theta_{k, j, q}=\sum_{l=k}^{+\infty} \theta_{l, j, q},
$$
where $\mathcal{F}_{i, k}=\big (\ldots, \epsilon_{k-1}, \epsilon_k^{\prime}, \epsilon_{k+1}, \ldots, \epsilon_{i-1}, \epsilon_i\big )$ is a coupled version of $\mathcal{F}_i$. 
\begin{assumption}\label{ass:GA} We need the following assumptions:
\begin{enumerate}
    \item \label{A:moment} (Moment condition) For a large enough $s > 0$, there exists a $C_n$ such that $$
    \max_{1 \leq i \leq n} \max_{1 \leq j \leq p} \|x_{ij}\|_s \leq C_n.
$$ 
\item (Dependence condition) For a  large positive constant $c$, $\theta \in [0,1/11)$, 
\begin{align}
&\sum_{j=c\lfloor n^\theta \rfloor+1}^{\infty} \max_{1 \leq k \leq p} j\theta_{j,k,2}<c_{2,n}, \quad \sum_{j=1}^{\infty} \max_{1 \leq k \leq p} j\theta_{j,k,3} < c_{3,n},
\end{align}
where $c_{2,n} \to 0$, $c_{3,n} \to \infty$, $c_{3,n}/n^\theta \to 0$, as $n \to \infty$. \label{A:depend}
\item (Rate condition of dimensions) $(n p/\gamma)^{1/s}C_n \leq c_{0} n^{(3-9\theta)/8}$, where $\gamma \asymp n^{-(1-11\theta)/8}$, $c_0 > 0$. \label{A:rate}
\item (Variance condition) For $\sigma_{j, k}:=\operatorname{cov}\left(X_j, X_k\right)=\sum_{i, l=1}^n \operatorname{cov}\left(x_{i j}, x_{l k}\right) / n$, there exist positive constants $c_1$ and $c_2$, 
$$
c_1<\min _{1 \leq j \leq p} \sigma_{j, j} \leq \max _{1 \leq j \leq p} \sigma_{j, j}<c_2.
$$ \label{A:var}
\end{enumerate}
\end{assumption}
Condition \ref{A:moment} only requires finite moment condition instead of the sub-Gaussian condition in Assumption 2.1 of  \cite{zhang2018}. Condition \ref{A:depend} relaxes (10) of Assumption 2.3 of \cite{zhang2018},  which assumes there exists $c_3 > 0$ such that $\sum_{j=1}^{\infty} \max_{1 \leq k \leq p} j\theta_{j,k,3} < c_{3}$. Condition \ref{A:rate} gives the rate condition under finite moment condition, where the dimension can grow at a polynomial rate. Condition \ref{A:var} is the same as (9) of Assumption 2.3 of \cite{zhang2018}.
\begin{proposition}\label{prop:GAhigh}
    Under \cref{ass:GA}, we have 
    \begin{align}
   &\rho_n:= \sup_{t \in \mathbb R}\left|P\left(\max_{1 \leq j \leq p}\frac{1}{\sqrt{n}} \sum_{i=1}^n x_{ij} \leq t \right) - P\left(\max_{1 \leq j \leq p}\frac{1}{\sqrt{n}} \sum_{i=1}^n y_{ij} \leq t\right)\right| \\ & = O\left\{n^{-(1-11\theta)/8}+\left(n^{(1 - 7\theta)q/\left\{8(q+1)\right\}} \right)\left(\sum_{j=1}^p \Theta_{c\lfloor n^\theta \rfloor, j, q}^q\right)^{1 /(1+q)} +c_{2,n}^{1 / 3}\left(1 \vee \log \left(p / c_{2,n}\right)\right)^{2 / 3}\right\}. \label{eq:increase}
\end{align}
\end{proposition}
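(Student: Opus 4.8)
\textbf{Proof plan for Proposition \ref{prop:GAhigh}.}
The strategy is to follow the blocking scheme of \cite{zhang2018} (their Theorem 2.1) but to track carefully how the constants depend on $n$ through the moment bound $C_n$ and the dependence sums $c_{2,n},c_{3,n}$, so that the sub-Gaussian tail assumption can be replaced by a finite $s$-th moment. Write $S_n = n^{-1/2}\sum_{i=1}^n x_i$ and let $q \ge 2$ be the order in the statement. The first step is an $m$-dependent approximation: replace $x_{ij}$ by $\tilde x_{ij} = \E(x_{ij}\mid \epsilon_{i-m},\dots,\epsilon_i)$ with $m = c\lfloor n^\theta\rfloor$. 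By the dependence condition \ref{A:depend}, $\|x_{ij}-\tilde x_{ij}\|_q \le \Theta_{m,j,q}$, and summing the contributions of the tail and using a maximal inequality (Nagaev/Rosenthal-type, valid under the $q$-th moment bound from \ref{A:moment}) gives
\begin{align}
\Big\| \max_{1\le j\le p}\Big|\frac{1}{\sqrt n}\sum_{i=1}^n (x_{ij}-\tilde x_{ij})\Big|\Big\|_q
= O\Big( n^{(1-7\theta)q/\{8(q+1)\}}\Big(\sum_{j=1}^p \Theta_{m,j,q}^q\Big)^{1/(1+q)}\Big),
\nonumber
\end{align}
which, via an anti-concentration inequality for the Gaussian maximum (Nazarov's inequality, using the variance lower bound in \ref{A:var}), contributes the second term in \eqref{eq:increase}. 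The same smoothing is applied to the Gaussian proxy $y_i$, whose covariance structure matches that of $x_i$, so the replacement error on the Gaussian side is of the same order.

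Second, I would partition $\{1,\dots,n\}$ into big blocks of length $\asymp n^{(1+\theta)/2}$ separated by small gap blocks of length $m$; because the $\tilde x_i$ are $m$-dependent, the partial sums over big blocks are independent, and the contribution of the gap blocks is negligible by the same maximal-inequality bound. The number of big blocks is $L \asymp n^{(1-\theta)/2}$. Third, on the block sums I apply the high-dimensional CLT for independent random vectors under a polynomial moment condition (Chernozhukov–Chetverikov–Kato type, in the form used by \cite{zhang2018}): the Gaussian approximation error for a sum of $L$ independent mean-zero vectors in $\R^p$ with the $(np/\gamma)^{1/s}C_n$ control is $O(\gamma)$ with $\gamma \asymp n^{-(1-11\theta)/8}$, precisely by the rate condition \ref{A:rate}; this yields the first term $n^{-(1-11\theta)/8}$. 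Finally, one compares the Gaussian vector obtained from the block sums with the target Gaussian $n^{-1/2}\sum_i y_i$; since both are Gaussian with covariances differing only by the omitted gap-block and long-range ($>m$) covariance terms, the discrepancy in covariance entries is $O(c_{2,n}/n)$ by \ref{A:depend}, and the Gaussian comparison lemma (Chernozhukov–Chetverikov–Kato, controlled by $\max_{j,k}|\Delta_{jk}|^{1/3}(\log p)^{2/3}$) produces the last term $c_{2,n}^{1/3}(1\vee\log(p/c_{2,n}))^{2/3}$. Collecting the three error sources gives \eqref{eq:increase}.

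The main obstacle is the bookkeeping in the $m$-dependent approximation step and in the high-dimensional CLT for the block sums: one must verify that every constant appearing in the Rosenthal/Nagaev maximal inequalities and in the CCK bounds depends on $n$ only through the explicitly allowed quantities $C_n$, $\Theta_{m,\cdot,q}$, $c_{2,n}$ and $c_{3,n}$, and that the exponents of $n$ in the block lengths and in $\gamma$ are chosen so that the gap-block error and the $m$-dependent truncation error are both dominated by the stated rates whenever $\theta < 1/11$. In particular the interplay between the truncation level $m \asymp n^\theta$, the block length, and the moment order $q$ (which controls how fast $\Theta_{m,j,q}$ must decay) is the delicate part; once the exponent $\theta < 1/11$ is fixed this reduces to checking a handful of inequalities between powers of $n$, but getting those powers to line up with the claimed $n^{-(1-11\theta)/8}$ and $n^{(1-7\theta)q/\{8(q+1)\}}$ requires care. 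The remaining steps — the independence of big-block sums, Nazarov anti-concentration, and the Gaussian comparison lemma — are essentially quotations of known results once the moment control is in place.
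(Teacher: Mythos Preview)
Your three-stage outline ($m$-dependent approximation $\to$ Gaussian approximation for the $m$-dependent part $\to$ Gaussian comparison) matches the paper's architecture, and the first and third stages are handled essentially as the paper does. The gap is in the middle stage and in the bookkeeping of the second error term.

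The paper does \emph{not} obtain the rate $n^{-(1-11\theta)/8}$ by blocking into big blocks of length $\asymp n^{(1+\theta)/2}$ and applying a CCK-type CLT for independent vectors. Instead it follows Zhang--Wu's Lindeberg/Slepian interpolation machinery (their Proposition~A.1 and the error decomposition (41)), which requires two ingredients you omit: (i) a truncation of both $x^{(M_0)}$ and $y^{(M_0)}$ at levels $M_x=M_y\asymp n^{(3-9\theta)/8}$, and (ii) smoothing parameters $\psi\asymp n^{(1-7\theta)/8}$ and $\beta\asymp n^{(1+\theta)/8}$ for the soft-max. The rate $n^{-(1-11\theta)/8}$ emerges only after balancing all the terms in (41) with these specific choices; a direct block-CCK argument with your block sizes would give something like $n^{-(1-\theta)/12}$ up to logs, which is strictly worse. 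The new work in the paper's proof is precisely Steps~2--3: quantifying the truncation errors $\phi^{(M_0)}(M_x)$ and $\varphi^{(M_0)}(M_x)$ when the third-moment dependence sum $c_{3,n}\asymp s_n^2$ is allowed to grow, and checking that the tail-probability bound $P(\max_{i,j}|x_{ij}^{(M_0)}|>u_n)\le\gamma$ holds under the polynomial moment assumption via the rate condition \ref{A:rate}. Your proposal skips truncation entirely, so there is no mechanism to control the Lindeberg remainder under mere $s$-th moments.

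Separately, your derivation of the second term is not how it arises. The factor $n^{(1-7\theta)q/\{8(q+1)\}}$ is exactly $\psi^{q/(q+1)}$, coming from the last line of Zhang--Wu's (41), where the $m$-dependent approximation error is measured \emph{after} passing through the smoothed max with parameter $\psi$; it is not produced by a Nagaev/Rosenthal maximal inequality applied directly to $\max_j|n^{-1/2}\sum_i(x_{ij}-\tilde x_{ij})|$ (that route would drop the $\psi$ factor). So while your displayed bound happens to coincide with the paper's, the mechanism you describe does not generate it.
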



\begin{proof}
The key of the proof is to investigate the error in the approximation of $M_0$-dependent sequences under increasing dependence.  We follow the steps in the proof of Theorem 2.1 of \cite{zhang2018}.
\par 
    Step 1: Construct the $M_0$-dependent sequence as
$$
x_i^{(M_0)}=\mathbb{E}\left[\mathcal{G}_i\left( \ldots, \epsilon_{i-1}, \epsilon_i\right) \mid \epsilon_{i-M_0}, \epsilon_{i-M_0+1}, \ldots, \epsilon_i\right]. 
$$
Let $M_0=c\lfloor n^\theta \rfloor$, $\gamma = n^{-(1-11\theta)/8}$, $\theta \in [0, 1/11)$,
Define the $\psi, \beta, M_x, M_y, u_n$ as
\begin{align}
    &\psi \asymp n^{(1-7\theta)/8}, \quad u_n  = M_x=M_y \asymp n^{(3-9\theta)/8},\quad \beta \asymp n^{(1+\theta)/8}.\label{eq:phirate}
\end{align} 
Under \eqref{eq:phirate}, we first show that the conditions of Proposition A.1 of \cite{zhang2018} can be satisfied. Let $M_{x y} = \max\{M_x, M_y\}$. By construction, $2 \sqrt{5} \beta(6 M+1) M_{x y} / \sqrt{n} \leq 1$.  After a careful investigation of Lemma A.2 of \cite{zhang2018}, we have
\begin{align}
    \max _{1 \leq j \leq p} n^{-1} \sum_{i, k=1}^n\left|\mathbb{E} x_{i j}^{(M_0)} x_{k j}^{(M_0)}-\mathbb{E} x_{i j} x_{k j}\right| \rightarrow 0
\end{align}
since  $\sum_{j=c\lfloor n^\theta \rfloor+1}^{\infty} \max_{1 \leq k \leq p} j\theta_{j,k,2} \to 0$, which leads to the result that $c_1 / 2<\min _{1 \leq j \leq p} \sigma_{j, j}^{(M_0)} \leq \max _{1 \leq j \leq p} \sigma_{j, j}^{(M_0)}<2 c_2$ under Assumption \ref{ass:GA}, where $\sigma_{j, k}^{(M_0)}=\sum_{i,l=1}^n\operatorname{cov}\left(x_{ij}^{(M_0)}, x_{lk}^{(M_0)}\right)/n$.

Recall that $u_n  = M_x=M_y \asymp n^{(3-9\theta)/8}$. We proceed to show that,
$$
P\left(\max _{1 \leq i \leq n} \max _{1 \leq j \leq p}\left|x_{i j}^{(M_0)}\right| \leq u_n\right) \geq 1-\gamma, \quad P\left(\max _{1 \leq i \leq n} \max _{1 \leq j \leq p}\left|y_{i j}^{(M_0)}\right| \leq u_n\right) \geq 1-\gamma.
$$
For the first term, we have
\begin{align}
    P\left(\max _{1 \leq i \leq n} \max _{1 \leq j \leq p}\left|x_{i j}^{(M_0)}\right| > u_n\right) &\leq \sum_{i=1}^n \sum_{j=1}^p  P\left(\left|x_{i j}^{(M_0)}\right| > u_n\right) \\  &\leq (np)\max _{1 \leq i \leq n} \max _{1 \leq j \leq p} \E |x_{ij}^{(M_0)}|^s/ u_n^s \\ &\leq (np)\max _{1 \leq i \leq n} \max _{1 \leq j \leq p} \E |x_{ij}|^s/ u_n^s \\ 
    & \leq (np)C_n^s/ u_n^s \leq \gamma,
\end{align}
where the third inequality follows from the convexity of $|\cdot|^s$, $s \geq 2$.
For the second term, since $y_{ij}$ are Gaussian random variables, we have 
\begin{align}
    &P\left(\max _{1 \leq i \leq n} \max _{1 \leq j \leq p}\left|y_{i j}^{(M_0)}\right| >  u_n\right)\\ 
    & \leq (np) \max _{1 \leq i \leq n} \max _{1 \leq j \leq p} \E \exp(|y_{ij}|)/\exp(u_n) \leq \gamma,
\end{align}
where $u_n   \asymp n^{(3-9\theta)/8}$.  Denote $\tilde x_{ij}^{(M_0)}$ by $$(x^{(M_0)}_{ij} \wedge M_x) \vee (-M_x) - \E \left\{ (x_{ij}^{(M_0)} \wedge M_x) \vee (-M_x)\right\},$$ and $\tilde y^{(M_0)}_{ij}$ by $(y_{ij}^{(M_0)} \wedge M_y) \vee (-M_y)$. Write $\breve x_{ij}^{(M_0)} = x_{ij}^{(M_0)} - \tilde x^{(M_0)}_{ij}$. 
Let $\phi^{(M_0)}(M_x)$ be the upper bound of the truncation, i.e.,
\begin{align}
    \max_{1 \leq j, k \leq p}\frac{1}{n} \sum_{i=1}^n |\sum_{|l - i| \leq M_0}\left\{\E(x_{ij}^{(M_0)}x_{lk}^{(M_0)})-\E(\tilde x_{ij}^{(M_0)}\tilde x_{lk}^{(M_0)}) \right\}|  \leq \phi^{(M_0)}(M_x).\label{eq:defphi}
\end{align}
$\phi^{(M_0)}(M_y)$ can be defined analogously. Let $\phi^{(M_0)}\left(M_x, M_y\right)=\phi^{(M_0)}\left(M_x\right) +  \phi^{(M_0)}\left(M_y\right)$. Suppose $n = (N+M_0)r$, $N \geq M_0$, and $N, M_0, r \to +\infty$ as $n$ goes to infinity. Define the block sums as 
\begin{align}
    A_{i j}^{(M_0)}=\sum_{l=i N+(i-1) M_0-N+1}^{i N+(i-1) M_0} x_{l j}^{(M_0)}, \quad B^{(M_0)}_{i j}=\sum_{l=i(N+M_0)-M_0+1}^{i(N+M_0)} x^{(M_0)}_{l j},\label{eq:A}
\end{align}
\begin{align}
    \widetilde A^{(M_0)}_{i j}=\sum_{l=i N+(i-1) M_0-N+1}^{i N+(i-1) M_0} \tilde x^{(M_0)}_{l j}, \quad  \widetilde B^{(M_0)}_{i j}=\sum_{l=i(N+M_0)-M_0+1}^{i(N+M_0)} \tilde x^{(M_0)}_{l j},\label{eq:Atilde}
\end{align}
and the truncated versions as 
\begin{align}
    \breve A_{i j}=\sum_{l=i N+(i-1) M_0-N+1}^{i N+(i-1) M_0} (x_{l j}^{(M_0)} \wedge M_x) \vee (-M_x),\\ \breve B_{i j}=\sum_{l=i(N+M_0)-M_0+1}^{i(N+M_0)} (x_{l j}^{(M_0)} \wedge M_x) \vee (-M_x).\label{eq:Abreve}
\end{align}
Let $\varphi^{(M_0)}(M_x)$ be the smallest finite constant such that uniformly for $i$ and $j$ 
\begin{align}
   \max\{ \E (A_{ij}^{(M_0)} -   \breve A^{(M_0)}_{i j})^2 /N,   \E (B^{(M_0)}_{ij} -   \breve B^{(M_0)}_{i j})^2 /M_0\} \leq (\varphi^{(M_0)}(M_x))^2.\label{eq:defvarphi}
\end{align}
Similarly, $\varphi^{(M_0)}(M_y)$ can be defined and $ \varphi^{(M_0)}\left(M_x, M_y\right) =\varphi^{(M_0)}\left(M_x\right) +  \varphi^{(M_0)}\left(M_y\right)$.
Following (41) of the proof of Theorem 2.1  of \cite{zhang2018}, we have 
\begin{align}
\rho_n  &=O\left\{ \left(\psi^2+\psi \beta\right) \phi^{(M_0)}\left(M_x, M_y\right)+\left(\psi^3+\psi^2 \beta+\psi \beta^2\right) \frac{(2 M_0+1)^2}{\sqrt{n}}\left(\bar{m}_{x, 3}^3+\bar{m}_{y, 3}^3\right)\right. \\
& +\psi \varphi^{(M_0)}\left(M_x, M_y\right) \sqrt{\log (p / \gamma)}+\gamma+\left(\beta^{-1} \log (p)+\psi^{-1}\right) \sqrt{1 \vee \log (p \psi)} \\
& \left.+\psi^{q /(1+q)}\left(\sum_{j=1}^p \Theta_{M_0, j, q}^q\right)^{1 /(1+q)}\right\},\label{eq:rhon}
\end{align}
where $\bar m_{x,3}^3 = \max_{1 \leq j \leq p}(\sum_{i=1}^n \E |x_{ij}|^3/n)^{1/3}$, and $\bar m_{y,3}^3 = \max_{1 \leq j \leq p}(\sum_{i=1}^n \E|y_{ij}|^3/n)^{1/3}$.

Step 2: quantify $\phi^{(M_0)}(M_x)$ and $\psi^{(M_0)}(M_x)$ as defined in \eqref{eq:defphi} and \eqref{eq:defvarphi}. Following the calculation in Step 2 of  Theorem 2.1 of \cite{zhang2018}, we have
\begin{align}
& \max _{1 \leq j, k \leq p} \frac{1}{n} \sum_{i=1}^n\left|\sum_{l=(i-M_0) \vee 1}^{(i+M_0) \wedge n}\left(\mathbb{E} x_{i j}^{(M_0)} x_{l k}^{(M_0)}-\mathbb{E} \tilde{x}_{i j}^{(M_0)} \tilde{x}_{l k}^{(M_0)}\right)\right| \\
& \leq \max _{1 \leq j, k \leq p} \frac{1}{n} \sum_{i=1}^n\left|\sum_{l=(i-M_0) \vee 1}^{(i+M_0) \wedge n}\left(\mathbb{E} x_{i j}^{(M_0)} \breve{x}_{l k}^{(M_0)}+\mathbb{E} \tilde{x}_{i j}^{(M_0)} \tilde{x}_{l k}^{(M_0)}\right)\right| \\
& \leq C \max _{1 \leq k \leq p} \sum_{l=1}^{M_0} \sum_{j=l}^{M_0} \theta_{j, k, 3} / M_x \\ 
& = C c_{3,n} / M_x,
\end{align}
where $C$ is a constant that is large enough. Therefore, we can take $\phi^{(M_0)}(M_x)$ as $C c_{3,n} / M_x$.
Based on $(x_i^{(M_0)})$, we can define the quantities $A_{ij}^{(M_0)}$, $B_{ij}^{(M_0)}$, $\widetilde A_{ij}^{(M_0)}$,  $\widetilde B_{ij}^{(M_0)}$, $\breve A_{ij}^{(M_0)}$,  $\breve B_{ij}^{(M_0)}$ as in \eqref{eq:A}, \eqref{eq:Atilde} and \eqref{eq:Abreve}. Define 
$$
\chi_{ij}^{(M_0)} = (x_{ij}^{(M_0)} \wedge M_x) \vee (-M_x).
$$ After a careful investigation of (42) in the proof in Step 2 of  Theorem 2.1  of \cite{zhang2018}, we have 
\begin{align}
 \mathbb{E}\left(A_{i j}^{(M_0)}-\widetilde{A}_{i j}^{(M_0)}\right)^2 &= \sum_{l=(i-M_0) \vee 1}^{(i+M_0) \wedge n}\sum_{k=(i-M_0) \vee 1}^{(i+M_0) \wedge n}\mathbb{E}(\breve x_{l j}^{(M_0)}\breve x_{k j}^{(M_0)})\\
    & \leq C_0( \max _{1 \leq k \leq p} \sum_{l=1}^{N} \sum_{j=l}^{M_0} \theta_{j, k, 3} / M_x^{5/3} + 1/M_x).\label{eq:Atildebd}
\end{align}
For a large positive constant $C_1$, using \eqref{eq:Atildebd}, we have
\begin{align}
& \mathbb{E}\left(A_{i j}^{(M_0)}-\breve{A}_{i j}^{(M_0)}\right)^2 / N  \\
& =  \mathbb{E}\left(A_{i j}^{(M_0)}-\widetilde{A}_{i j}^{(M_0)}-\E\breve{A}_{i j}^{(M_0)}\right)^2 / N \\
& =  \mathbb{E}\left(A_{i j}^{(M_0)}-\widetilde{A}_{i j}^{(M_0)}\right)^2/ N-2 \E (A_{i j}^{(M_0)} - \widetilde{A}_{i j}^{(M_0)})\E\breve{A}_{i j}^{(M_0)} / N+(\E\breve{A}_{i j}^{(M_0)})^2  / N \\
& =  \mathbb{E}\left(A_{i j}^{(M_0)}-\widetilde{A}_{i j}^{(M_0)}\right)^2/ N-2 \E (A_{i j}^{(M_0)})\E\breve{A}_{i j}^{(M_0)} / N+(\E\breve{A}_{i j}^{(M_0)})^2  / N \\
&\leq \mathbb{E}\left(A_{i j}^{(M_0)}-\widetilde{A}_{i j}^{(M_0)}\right)^2 / N+\left\{\mathbb{E} (A_{ij} ^{(M_0)}-\breve{A}_{i j}^{(M_0)})\right\}^2 / N\\ 
&\leq C_1 c_{3,n} / M_x^{5 / 3}+\left\{\sum_{l=i N+(i-1) M-N+1}^{i N+(i-1) M} \mathbb{E}\left(\chi_{l j}^{(M_0)}-x_{l j}^{(M_0)}\right) \mathbf{I}\left\{\left|x_{l j}^{(M_0)}\right|>M_x\right\}\right\}^2 / N \\
 &\leq C_1 c_{3,n} / M_x^{5 / 3}+ C_1 N (\max_{ij} E|x_{ij}|^4/M_x^3)^2.
\end{align}
Similarly, for a large positive constant $C_2$,  we have $$\mathbb{E}\left(B_{i j}^{(M_0)}-\breve{B}_{i j}^{(M_0)}\right)^2 / M \leq C_2 c_{3,n} / M_x^{5 / 3} +C_2 M_0 (\max_{ij} E|x_{ij}|^4/M_x^3)^2.$$
Therefore, we can take 
\begin{align}
    \psi^{(M_0)}(M_x) = C^{\prime} c_{3,n}^{1/2} / M_x^{5 / 6} + C^{\prime} \sqrt{N}/ M_x^3,
\end{align}
where $C^{\prime}$ is a large enough constant.\par
Step 3: Quantify $\phi^{(M_0)}(M_y)$ and $\varphi^{(M_0)}(M_y)$.  Following the arguments in Step 2, we have 
\begin{align}
    \phi^{(M_0)}(M_y) \leq C_3 c_{3,n}/M_y^2,\quad  \varphi^{(M_0)}(M_y) \leq C_4 c_{3,n}/M_y^2.
\end{align}
\par Step 4:
Under \eqref{eq:phirate}, we have 
\begin{align}
    &\left(\psi^3+\psi^2 \beta+\psi \beta^2\right) \frac{(2 M_0+1)^2}{\sqrt{n}} =O (n^{-(1-11\theta) / 8}), \label{eq:47}\\ 
    & \left(\beta^{-1} \log (p)+\psi^{-1}\right) \sqrt{1 \vee \log (p \psi)} =O (n^{-(1-11\theta) / 8} ).\label{eq:49}
\end{align}
Under the conditions $c_{3,n}/M_0 \to 0$ and \eqref{eq:phirate},  we have 
\begin{align}
   & \left(\psi^2+\psi \beta\right) \phi^{(M_0)}\left(M_x, M_y\right) =O(c_{3,n}n^{(9\theta-3)/8} n^{(2-6\theta)/8}) =O( n^{-(1-11\theta) / 8}),\label{eq:46}\\
  & \psi \varphi^{(M_0)}\left(M_x, M_y\right)  \sqrt{\log (p / \gamma)} = O\left(\frac{\psi c_{3,n}^{1/2}}{u_n^{5 / 6}}+\frac{\psi \sqrt{N}}{u_n^3}\right) = O( n^{-(1-11\theta) / 8}).\label{eq:48}
\end{align}
Under the \cref{ass:GA}, we have $\bar m_{x,3}^3 + \bar m_{y,3}^3  < \infty$. Combining \eqref{eq:rhon}, \eqref{eq:46}, \eqref{eq:47}, \eqref{eq:48} and \eqref{eq:49}, for $q \geq 2$, we have
\begin{align}
& \sup_{t \in \mathbb R}\left|P\left(\max_{1 \leq j \leq p}\frac{1}{\sqrt{n}} \sum_{i=1}^n x_{ij} \leq t \right) - P\left(\max_{1 \leq j \leq p}\frac{1}{\sqrt{n}} \sum_{i=1}^n y_{ij}^{(M_0)} \leq t\right)\right| \\
& = O\left\{n^{-(1-11\theta)/8}+\left(n^{(1 - 7\theta)q/\left\{8(q+1)\right\}} \right)\left(\sum_{j=1}^p \Theta_{M_0, j, q}^q\right)^{1 /(1+q)}\right\}.
\end{align}
Step 5: 
By Step 5 of the proof of Theorem 2.1 of \cite{zhang2018} and Theorem 2 of \cite{Chernozhukov2013}, we have 
\begin{align}
&\sup _{t \in \mathbb{R}}\left|P\left(\max_{1 \leq j \leq p}\frac{1}{\sqrt{n}} \sum_{i=1}^n y_{ij}^{(M_0)} \leq t\right)-P\left(\max_{1 \leq j \leq p}\frac{1}{\sqrt{n}} \sum_{i=1}^n y_{ij} \leq t\right)\right| \\ &=O\left\{c_{2,n}^{1 / 3}\left(1 \vee \log \left(p / c_{2,n}\right)\right)^{2 / 3}\right\},
\end{align}
where $\max_{1 \leq j \leq p}\sum_{i=M_0+1}^{\infty} l \theta_{l,j,2}< c_{2,n}$.
\end{proof}
\subsection{Proof of  Theorem \ref{thm:ga}}

Recall that \begin{align}
  \mf U_i(t, u) &= \sum_{k=1}^{s_n}\bs \epsilon_{i-k}(u) \bs \epsilon^{\top}_i(u)\left(\sum_{j=1}^{i} \bs \epsilon_{j-k}(u)\bs \epsilon^{\top}_j(u)K_{\tau}(j/n - t)\mf 1(j \in \mathbb L_i) \right)^{\top}\\
     & + \sum_{k=1}^{s_n}  \left(\sum_{j=1}^{i} \bs \epsilon_{j-k}(u)\bs \epsilon^{\top}_j(u)K_{\tau}(j/n - t)\mf 1(j \in \mathbb L_i)\right)(\bs \epsilon_{i-k}(u) \bs \epsilon^{\top}_i(u))^{\top},
 \end{align}
and for $1 \leq i \leq 2\nt$, 
\begin{align}
      {\mf V}_{i,l}:= K_{\tau}(i/n-l/n) (\mathrm{tr}(\mf U_{i}(l/n, j/N)),  1 \leq j \leq N)^{\T}, \quad \mf V_i = ({\mf V}_{i,\nt} ,\cdots, {\mf V}_{i + n - 2\nt, n-\nt} )^{\top}.
\end{align}  

Note that 
\begin{align}
    \max_{\nt \leq l \leq n - \nt, 1 \leq j \leq N}\left|\sum_{i=1}^n K_{\tau}(i/n-t) \mathrm{tr}(\mf U_i(l/n, j/N))\right| =  \left|\sum_{i=1}^{2\nt} \mf V_i\right|_{\infty}.\label{eq:gasum}
\end{align}
Write
\begin{align}
    \mf W(t, u, i/n, \F_i) = \mf U_i(t, u)/s_n, \label{def:W}
\end{align}
where $\mf W(\cdot, \cdot, \cdot, \cdot)$ is a nonlinear filter from $[0,1]\times [0,1] \times [0,1] \times \mathbb R^{\mathbb Z}$ to $\mathbb R^{p\times p}$ such that $\mf U_i(\cdot, \cdot)$ is well-defined.  Define the element-wise dependence measure as $$\theta_{l,s} = \sup_{t , u\in[0,1], 1 \leq i \leq n}\left\|\left|  \sum_{r=1}^p \left\{ W_{r,r} (t,u, i/n, \FF_{i}) -   W_{r,r}(t,u, i/n, \FF_{i}^{(i-l)}) \right\}\right| \right\|_s, $$
where $\FF^{(j)}_i$ denotes changing $\eta_j$ in $\FF_i$ by its $i.i.d.$ copy $\eta_j^{\prime}$, $W_{r,r} (t,u, i/n, \FF_{i})$ is the element on the $r$-th column and $r$-th row of $\mf W(t, u, i/n, \F_i)$.   Recall that  in\eqref{eq:dependence}, 
\begin{align}
     \Theta_{m,s} := \sum_{l=m}^{\infty} \theta_{l,s},\quad \Xi_{m,s}:= \sum_{l=m}^{\infty} l\theta_{l,s}. 
\end{align}

By \cref{lm:delta1},
we have for fixed $h \geq 2$, $m > 4s_n$,
\begin{align}
     \Theta_{m,h} < C  \chi_1^{m},\quad \chi_1 \in (0,1), \quad \Xi_{m,3} < c_3s_n^2, \label{eq:tailtheta}
\end{align}
for a positive constant $C$ and some constant $c_3 > 0$. 
Denote $V_{i,l,j}$ as the $j$-th element of $\mf V_{i,l}$. By definition of \eqref{def:W}, we have $V_{i,l,j} = s_n K_{\tau}(i/n-l/n) \mathrm{tr}(\mf W(l/n, j/N, i/n, \FF_i))$. Define 
\begin{align}
    &V_{i,l,j}^* = s_n K_{\tau}(i/n-l/n) \mathrm{tr}(\mf W(l/n, j/N, i/n, \FF_i^{(0)})).
\end{align}
Since $K(\cdot)$ is bounded, for $h \geq 2$, $m > 4s_n$, we thus have 
\begin{align}
   &\max_{1 \leq l \leq n, 1 \leq j \leq N} \sum_{i=m}^{\infty}\|V_{i,l,j} - V_{i,l,j}^*\|_h/s_n \leq C \chi_1^{m}, \label{eq:ThetasumV}\\
    &\max_{1 \leq l \leq n, 1 \leq j \leq N} \sum_{i=1}^{\infty}i\|V_{i,l,j} - V_{i,l,j}^*\|_3/s_n  < c_3s_n^2, \label{eq:ass2.3-2}
    \\
    &\max_{1 \leq l \leq n, 1 \leq j \leq N} \sum_{i=m}^{\infty}i\|V_{i,l,j} - V_{i,l,j}^*\|_h/s_n  < Cm\chi_1^{m}. \label{eq:XisumV}
\end{align}

Therefore, combining \eqref{eq:ass2.3-2} and \eqref{eq:XisumV}, with sample size $2\nt$, the dependence condition of \cref{ass:GA} holds with 
\begin{align}
    c_{2,n} = C (n\tau)^{\theta} \chi_1^{(n\tau)^{\theta}} \to 0,\quad c_{3,n} = c_3 s_n^2,\label{eq:c2nc3n}
\end{align}
and $c_{3,n} \to \infty$,  $c_{3,n} = o((n\tau)^{\theta})$, $\theta \in [0,1/11)$.
Define
\begin{align}
    \tilde \sigma_{j,j} = \frac{1}{(n\tau) s^2_n} \sum_{k,l=1}^{2\nt}\mathrm{Cov} (V_{k,j}, V_{l,j}),
\end{align}
where $V_{k,j}$ is the $j$-th element of $\mf V_k$. We shall show that there exists positive constants $c_1$ and $c_2$ such that 
\begin{align}
    c_1 \leq \min_{1 \leq j \leq  N(n-2\nt+1)} \tilde \sigma_{j,j} \leq \max_{1 \leq j \leq N(n-2\nt+1)} \tilde \sigma_{j,j} \leq c_2.\label{eq:ass2.3-1}
\end{align}
Recall that $G_k^{r_1, q_1}(t,u)$ is the $(r_1, q_1)$th element of $\mf G_k(t,u)$. 
It suffices to compute uniform upper and lower bounds of 
\begin{align}
    \tilde \sigma (t,u) = (n\tau) \sum_{k,l=1}^{s_n}  \sum_{r_1, r_2 = 1}^p \mathrm{Cov}\left(G^{r_1,r_1}_{k}(t, u), G_{l}^{r_2,r_2}(t, u)\right).
\end{align}
By \cref{lm:cum}, we have 
\begin{align}
  \left| \tilde \sigma(t,u) -  \phi^2  \mathrm{tr}^2\{\bs \Gamma_0^2(t,u)\}-  \phi^2  \mathrm{tr}\{\bs \Gamma^4_0(t,u)\}\right|  = o(1), \label{eq:varlimt}
\end{align}
where $\phi^2 = \int K^2(t) dt$, $\mathrm{tr}^2\{\bs \Gamma_0^2(t,u)\}>0$, $\mathrm{tr}\{\bs \Gamma_0^4(t,u)\}>0$.
Therefore, \eqref{eq:ass2.3-1} is proved. The variance condition of \cref{ass:GA} is satisfied. \par 
Note that $\|\phi_{i,k}(j/N)\|_{2s}= O(1)$ by Cauchy inequality and \cref{lm:delta}. Then, along with \eqref{eq:psimoment}, we obtain
\begin{align}
  \max_{\substack{\nt \leq l \leq n - \nt\\1 \leq i \leq 2\nt, 1 \leq j \leq N}}  \|V_{i,l,j} \|_s/s_n \leq  &\max_{\substack{\nt \leq l \leq n - \nt\\1 \leq i \leq 2\nt, 1 \leq j \leq N}}s_n^{-1}( 2s_n\|\phi_{i,k}(j/N)\|_{2s}   \|\psi_{i,k}(l/n, j/N)\|_{2s} ) \\&= O(\sqrt{s_n}).\label{eq:Vmoment}
\end{align}
Therefore, the moment condition of \cref{ass:GA} holds with $C_n = C \sqrt{s_n}$, where $C$ is a positive constant that is large enough.

For $0\leq \theta  < 1/11$, 
by \cref{prop:GAhigh}, under the condition $s_n^{2} = o((n\tau)^{\theta})$, for a large $s$ such that $\left\{(n\tau)^{(9-11\theta)/8} nN\right\}^{1/s} \sqrt{s_n} = O((n\tau)^{(3-9\theta)/8})$, and  we have
\begin{align}
    &\sup_{x \in \mathbb R}\left| \mathbb{P}\left( \frac{1}{s_n\sqrt{n\tau}}\underset{\substack{\nt \leq i \leq n - \nt\\ 1 \leq j \leq N }}{\max}\left|\sum_{i=1}^n K_{\tau}(i/n-l/n) \mathrm{tr}[\mf U_i(l/n, j/N) - \E\{\mf U_i(l/n, j/N)\}] \right|\leq x\right) \right.\\&\left. - \mathbb{P}\left(\frac{1}{s_n \sqrt{n \tau} }\left|\sum_{i=1}^{2\nt}\mf Z_i \right|_{\infty} \leq x\right) \right| \\ 
    & =\sup_{x \in \mathbb R}\left| \mathbb{P}\left(\underset{\substack{\nt \leq i \leq n - \nt\\ 1 \leq j \leq N  }}{\max}\frac{1}{s_n\sqrt{n \tau}} \left|\sum_{i=1}^{2\nt} (\mf V_i - \E \mf V_i)\right|_{\infty} \leq x\right)  - \mathbb{P}\left(\frac{1}{s_n \sqrt{n\tau} }\left|\sum_{i=1}^{2\nt}\mf Z_i \right|_{\infty} \leq x\right) \right| \\
    &=O\left((n\tau)^{-(1-11\theta)/8}+\left((n\tau)^{(1 - 7\theta)h/\left\{8(h+1)\right\}} \right)\left(\sum_{j=1}^p \Theta_{c\lfloor (n\tau)^\theta \rfloor, j, h}^h\right)^{1 /(1+h)} \right. \\& \left.+c_{2,n}^{1 / 3}\left(1 \vee \log \left(nN / c_{2,n}\right)\right)^{2 / 3}\right)=o(1), \label{eq:GA}
\end{align}
where the last equality follows from  \eqref{eq:c2nc3n}.
By \cref{lm:4th} and triangle inequality, we have 
\begin{align}
   &\left| \underset{\substack{\nt \leq i \leq n - \nt\\ 1 \leq j \leq N }}{\max}\left|\sum_{i=1}^n K_{\tau}(i/n-l/n) \mathrm{tr}[\mf U_i(l/n, j/N) - \E\{\mf U_i(l/n, j/N)\}] \right| \right.\\ & \left.- \underset{\substack{\nt \leq i \leq n - \nt\\ 1 \leq j \leq N }}{\max}\left|\sum_{i=1}^n K_{\tau}(i/n-l/n) \mathrm{tr}\{\mf U_i(l/n, j/N)\} \right|\right| \\ &\leq \underset{\substack{\nt \leq i \leq n - \nt\\ 1 \leq j \leq N }}{\max}\left|\sum_{i=1}^n K_{\tau}(i/n-l/n) \mathrm{tr}[\E\{\mf U_i(l/n, j/N)\}] \right| = O(\tau s^{5/2}_n) \label{eq:Etru}
\end{align}
Then, combining \eqref{eq:Etru} and \eqref{eq:GA}, by Lemma S1 of \cite{dette2021confidence}, we have
\begin{align}
    &\sup_{x \in \mathbb R}\left| \mathbb{P}\left( \frac{1}{s_n\sqrt{n\tau}}\underset{\substack{\nt \leq i \leq n - \nt\\ 1 \leq j \leq N }}{\max}\left|\sum_{i=1}^n K_{\tau}(i/n-l/n) \mathrm{tr}\{\mf U_i(l/n, j/N)\} \right|\leq x\right)  \right.\\& \left.- \mathbb{P}\left(\frac{1}{s_n \sqrt{n \tau} }\left|\sum_{i=1}^{2\nt}\mf Z_i \right|_{\infty} \leq x\right) \right| \\ 
    &=\sup_{x \in \mathbb R}\left| \mathbb{P}\left( \frac{1}{s_n\sqrt{n\tau}}\underset{\substack{\nt \leq i \leq n - \nt\\ 1 \leq j \leq N }}{\max}\left|\sum_{i=1}^n K_{\tau}(i/n-l/n) \mathrm{tr}[\mf U_i(l/n, j/N) - \E\{\mf U_i(l/n, j/N)\}] \right|\leq x\right) \right.\\&\left. - \mathbb{P}\left(\frac{1}{s_n \sqrt{n \tau} }\left|\sum_{i=1}^{2\nt}\mf Z_i \right|_{\infty} \leq x\right) \right|\\& + P\left(\frac{1}{s_n\sqrt{n\tau}}\underset{\substack{\nt \leq i \leq n - \nt\\ 1 \leq j \leq N }}{\max}\left|\sum_{i=1}^n K_{\tau}(i/n-l/n) \mathrm{tr}[\E\{\mf U_i(l/n, j/N)\}] \right| > \delta\right) \\ &+ O(\delta\sqrt{1\vee \log(nN/\delta)})\\
    &= o(1),\label{eq:Udiscrete}
\end{align}
where in the last equality we can take $\delta = s_n^{3/2} \eta_n\sqrt{\tau/n}$, where $\eta_n$ goes to infinity arbitrarily slow. 
Recall that $K_{\tau}(x) = K(x/\tau)$. Let  $g(t, u) =\frac{1}{s_n\sqrt{n\tau}} \sum_{i=1}^n K_{\tau}(i/n-t) \mathrm{tr}\{\mf U_i(t, u)\}$. Note that 
\begin{align}
    \frac{\partial}{\partial t} g(t, u) &= \frac{1}{s_n\sqrt{n\tau}}\sum_{i=1}^n \tau^{-1} K^{\prime}_{\tau}(i/n-t) \mathrm{tr}\{\mf U_i(t, u)\} \\&+ \frac{1}{s_n\sqrt{n\tau}}\sum_{i=1}^n \sum_{k=1}^{s_n}\tau^{-1} K_{\tau}(i/n-t) 2\mathrm{tr}\left\{\bs \phi_{i,k}(u) \sum_{j=(i-s_n)\vee 1}^{i-M-1}\bs \phi_{j,k}(u) K^{\prime}_{\tau}(j/n-t)\right\}.\label{eq:dgt}
\end{align}
Recall that in \cref{ass:expo}, $L_j(u,t, \F_i) = \frac{\partial H_j(u,t,\F_i)}{\partial u}$. Therefore, for $1 \leq r_1 \leq r_2 \leq p$, we obtain $$\frac{\partial}{\partial u}\phi_{i,k}^{r_1,r_2}(u) = L_{r_1}(u,(i-k)/n, \F_{i-k})H_{r_2}(u, i/n, \F_i) + L_{r_2}(u,i/n, \F_{i})H_{r_1}(u, (i-k)/n, \F_{i-k}).$$
Then,
\begin{align}
    \frac{\partial}{\partial u} g(t, u) &= \frac{1}{s_n\sqrt{n\tau}}\sum_{i=1}^n \sum_{k=1}^{s_n} K_{\tau}(i/n-t) \sum_{r_1,r_2=1}^{p} \left\{\frac{\partial}{\partial u}\phi_{i,k}^{r_1,r_2}(u)\psi_{i,k}^{r_1, r_2}(t,u)\right. \\& \left.+ \phi_{i,k}^{r_1,r_2}(u)\sum_{j=(i-s_n)\vee 1}^{i-M-1}\frac{\partial}{\partial u}\phi_{j,k}^{r_1, r_2}(u)K_{\tau}(j/n-t)\right\}.\label{eq:dgu}
\end{align}

Using \eqref{eq:dgt} and \eqref{eq:dgu}, under the \cref{ass:expo}, similar arguments in \cref{lm:delta1}, and Burkholder's inequality and Cauchy inequality, for $2q \leq s^*$, we have
\begin{align}
&\sup_{t \in [\tau, 1-\tau], u \in [0,1]}  \left\|\frac{\partial}{\partial t} g(t, u) \right\|_q \leq M {s_n}/\tau, \quad \sup_{t \in [\tau, 1-\tau], u \in [0,1]} \left\|\frac{\partial}{\partial u} g(t, u) \right\|_q \leq M {s_n}. \label{eq:dgbound}
\end{align}
Following similar arguments, we get
\begin{align}
    \sup_{t \in [\tau, 1-\tau], u \in [0,1]} \left\| \frac{\partial^2}{\partial t \partial u} g(t, u)\right\|_q \leq M {s_n}/\tau. \label{eq:dgbound2}
\end{align}
By by the definitions of $\sup$ and $\max$, we have
\begin{align}
 &\left| \underset{t \in [\tau, 1-\tau], u \in [0,1]}{\sup} g(t, u) -\underset{ \substack{1 \leq i \leq n\\ 1 \leq j \leq N  }}{\max} g(i/n,  j/N) \right| \\& \leq \left| \underset{t \in [\tau, 1-\tau], u \in [0,1]}{\sup}\underset{ \substack{1 \leq i \leq n\\ 1 \leq j \leq N  }}{\max} |g(t, u) - g(i/n,  j/N)| \right|  := \tilde Z_n.
\end{align}
Similar to (S6.3) of \cite{dette2021confidence}, combining \eqref{eq:dgbound} and \eqref{eq:dgbound2}, by triangle inequality, we have  
\begin{align}
\|\tilde Z_n\|_{q}  = O({s_n} (nN)^{1/q} ((n\tau)^{-1} + N^{-1})).\label{eq:maxg}
\end{align} 
Therefore,  by Lemma S1 of \cite{dette2021confidence}, for any small $\epsilon > 0$, there exists large $n_1 > 0$, $N_1 > 0$, such that for any $n > n_1$, $N > N_1$
\begin{align}
    &\sup_{x \in \mathbb R} \left| \mathbb{P}\left( \frac{1}{s_n\sqrt{n\tau}}\underset{t \in [\tau, 1-\tau], u \in [0,1]}{\sup} \left| \sum_{i=1}^n K_{\tau}(i/n-l/n) \mathrm{tr}\{\mf U_i(l/n, j/N)\} \right| \leq x\right)\right.\\& \left. - \mathbb{P}\left(\frac{1}{n\tau \sqrt{s_n} }\left|\sum_{i=1}^{2\nt}\mf Z_i \right|_{\infty} \leq x\right) \right|\\& \leq \epsilon/4 + C \delta\sqrt{1\vee \log(nN/\delta)}+ C P(\tilde Z_n > \delta)\\
     & \leq \epsilon/4 + C \delta\sqrt{1\vee \log(nN/\delta)}+ C [{s_n} (nN)^{1/q} \{(n\tau)^{-1} + N^{-1}\}]/ \delta^q.
    \label{eq:GA_conti}
\end{align}
Take $\delta = ({s_n}(nN)^{1/q}((n\tau)^{-1} + 1/N) )^{q/(q+1)} = o(1)$, we have by \eqref{eq:maxg} that $$P(\tilde Z_n > \delta) = O(\delta\sqrt{1 \vee \log (nN /\delta)} ).$$ Therefore, there exists large $n_2 \geq n_1$, $N_2 \geq N_1$, such that for any $n > n_2$, $N > N_2$, $\delta\sqrt{1 \vee \log (nN /\delta)}  < \epsilon/4$. \par
 By \cref{prop:esti}, we have
 \begin{align}
      \left\|\sup_{t\in[\tau, 1-\tau], u \in [0,1]}\left|  \sum_{i=1}^{n} \left\{\hat{\mf U}_i(t,u)- \mf U_i(t,u)\right\}K_{\tau}(i/n - t) \right| \right\|_q&=  O(\delta_n),\label{eq:UhatU}
 \end{align}
 where $\delta_n =  s_n^2(n\tau) \{b^2 + 1/(nb)\}\tau^{-2/q^{\prime}} +s_n^{3/2} \sqrt{n/b} \tau^{1-2/q^{\prime}} = o(s_n \sqrt{n\tau})$, for some $q^{\prime} \geq 2$.
 Combining \eqref{eq:GA_conti} and  \eqref{eq:UhatU},  and Lemma S1 of \cite{dette2021confidence} and \cref{prop:esti}, we have for any $\epsilon > 0$, there exists large $n_3 > 0$, $N_3>0$, such that for any $n > n_3$, $N > N_3$
  \begin{align}
  &\sup_{x \in \mathbb R} \left| \mathbb{P}\left( \frac{1}{s_n\sqrt{n\tau}}\underset{t \in [\tau, 1-\tau], u \in [0,1]}{\sup} \left| \sum_{i=1}^n K_{\tau}(i/n-l/n) \mathrm{tr}\{\hat{\mf U}_i(l/n, j/N)\} \right| \leq x\right)\right.\\& \left. - \mathbb{P}\left(\frac{1}{n\tau \sqrt{s_n} }\left|\sum_{i=1}^{2\nt}\mf Z_i \right|_{\infty} \leq x\right) \right|\\& \leq \epsilon/4+ C \delta\sqrt{1 \vee \log (nN /\delta)} + C (s_n \sqrt{n\tau})^{-1}\delta_n \sqrt{1 \vee \log [nN /\{\delta_n/(s_n \sqrt{n\tau})\}]}  < \epsilon,
\end{align}
where $C$ is a positive constant that is large enough.

\section{Proofs of Theorems \ref{thm:alt}  and \ref{thm:bs}} \label{sec6}
  \def\theequation{C.\arabic{equation}}	
  \setcounter{equation}{0}



\subsection{Proof of Theorem \ref{thm:alt}}
Recall that 
\begin{align}
    Q_n &= \max_{\nt \leq i \leq n - \nt, 1 \leq j \leq N}\left|\sqrt{n\tau s_n}\sum_{k=1}^{s_n} \mathrm{tr}\left\{\widehat{\mf G}_k(i/n, j/N) \right\} \right|\\ & = 
    1/\sqrt{n\tau s_n} \max_{\nt \leq i \leq n - \nt, 1 \leq j \leq N} \left|\sum_{l=1}^n K_{\tau}(l/n - i/n) \mathrm{tr}\left\{\hat{\mf U}_l(i/n, j/N) \right\} \right|.
\end{align}
By the proof of \cref{prop:esti} we have,
\begin{align}
\max_{\nt \leq i \leq n - \nt, 1 \leq j \leq N} \left|\sum_{l=1}^n K_{\tau}(l/n - i/n) \mathrm{tr}\left\{\hat{\mf U}_l(i/n, j/N) -{\mf U}_l(i/n, j/N) \right\} \right| = \op(s_n\sqrt{n\tau})
\end{align}
Therefore, we have with probability approaching one that 
\begin{align}
    Q_n/\sqrt{s_n} \geq 
    1/\sqrt{2 n\tau s^2_n} \max_{\nt \leq i \leq n - \nt, 1 \leq j \leq N} \left|\sum_{l=1}^n K_{\tau}(l/n - i/n) \mathrm{tr}\left\{{\mf U}_l(i/n, j/N) \right\} \right|.\label{eq:Qnlb}
\end{align}

Note that since $K(0) = 1$, $\int K(x) dx = 1$, following the steps in \eqref{eq:largeKE} and  \eqref{eq:smallKE} from \cref{lm:4th}, we have
\begin{align}
    &(n\tau s_n)^{-1} \sum_{l=1}^n K_{\tau}(l/n - i/n) \mathrm{tr}\left\{\E {\mf U}_l(i/n, j/N) \right\}\\ 
    & = (n\tau s_n)^{-1}  \sum_{l=1}^n K_{\tau}(l/n - i/n) \sum_{k=1}^{3m} 2 \mathrm{tr}\left[\bs \Gamma_k(l/n, j/N)\right.\\ & \times \left. \left\{\sum_{h=1}^n\boldsymbol\Gamma_k^{\top}(h/n, j/N) K_{\tau}(h/n - i/n) \mf 1(h \in \mathbb L_l)\right\} \right]\\
    &+ (n\tau s_n)^{-1}  \sum_{l=1}^n K_{\tau}(l/n - i/n) \sum_{k=3m+1}^{s_n} 2\mathrm{tr} \left\{\sum_{h=1}^n \bs \Gamma_{l-h}(l/n, j/N)  \boldsymbol\Gamma^{\top}_{l-h}((l-k)/n, j/N)\right.\\ & \times \left. K_{\tau}(h/n - i/n) \mf 1(h \in \mathbb L_l)\right\} + O(s_n^{1/2} (\log n+s_n)/n + ms_n\chi^m)\\
    & = 2\sum_{k=1}^{s_n} |\bs \Gamma_k(i/n, j/N)|_F^2 + O(n^{-1}s_n^{2}+s_n \tau+ms_n\chi^m) \geq 3/2 \HDB{\sum_{k=1}^{s_n}} |\bs \Gamma_k(i/n, j/N)|_F^2, \label{eq:QnE0}
\end{align}
where in the last equality we use \cref{ass:diff}, and  $\sup_{t,u \in [0,1]}|\bs \Gamma_k(t, u)|_F=O(\chi^k)$, following from similar arguments in Lemma 5 of \cite{zhou2010simultaneous}.  Finally, by the uniform Lipchitz continuity with respect to $t$ and $u$, along with\eqref{eq:QnE0}, we have 
\begin{align}
   & \max_{\nt \leq i \leq n - \nt, 1 \leq j \leq N}|(n\tau s_n)^{-1} \sum_{l=1}^n K_{\tau}(l/n - i/n) \mathrm{tr}\left\{\E {\mf U}_l(i/n, j/N) \right\}| \\ &\geq 3/2  \max_{\nt \leq i \leq n - \nt, 1 \leq j \leq N} \HDB{\sum_{k=1}^{s_n}} |\bs \Gamma_k(i/n, j/N)|_F^2 \\ &> \sup_{t \in [\tau,1 - \tau], u \in [0,1]}\HDB{\sum_{k=1}^{s_n}} |\bs \Gamma_k(t, u)|_F^2 >c_2 > 0.\label{eq:QnE}
\end{align}
By \cref{lm:delta1}, for a fixed $s \geq 2$, by Burkholder's inequality, we have 
\begin{align}
   &\left\|\sum_{l=1}^n K_{\tau}(l/n - i/n) \mathrm{tr}\left\{{\mf U}_l(i/n, j/N) -\E {\mf U}_l(i/n, j/N) \right\} \right\|_s/s_n \\ 
   & \leq\sum_{h=0}^{+\infty}\left\|\sum_{l=1}^n K_{\tau}(l/n - i/n) \mathrm{tr}\left\{\proj^{l-h}{\mf U}_l(i/n, j/N) \right\} \right\|_s/s_n \\ 
   & \leq\sum_{h=0}^{+\infty}C \{\sum_{l=1}^n \left\|K_{\tau}(l/n - i/n) \mathrm{tr}\left\{\proj^{l-h}{\mf U}_l(i/n, j/N) \right\} \right\|_s^2\}^{1/2} /s_n \\ 
   & = O(s_n \sqrt{n\tau}),
\end{align}
where $C$ is a positive constant that is large enough. 
Using the chaining argument as in \cref{prop:b2}, we have 
\begin{align}
     &\max_{\nt \leq i \leq n - \nt, 1 \leq j \leq N}\left|\sum_{l=1}^n K_{\tau}(l/n - i/n) \mathrm{tr}\left\{{\mf U}_l(i/n, j/N) -\E {\mf U}_l(i/n, j/N) \right\} \right|\\&=\Op(s_n^2\sqrt{n\tau}\tau^{-2/s}) = \op(s_n {n\tau}) .\label{eq:Qnvar}
\end{align}
Combining \eqref{eq:Qnlb}, \eqref{eq:QnE} and \eqref{eq:Qnvar}, for any $d \in (0,1)$, we have 
\begin{align}
    Q_n/\sqrt{s_n(n\tau)^d} \to \infty.
\end{align}

\subsection{Proof of Theorem \ref{thm:bs}}

\begin{lemma} Under \cref{ass:expo}, for a fixed $q \geq 2$, we have
\begin{align}
\left\| \sup_{t \in [b,1-b], u \in [0,1]} |\hat{\mf m}(t,u) - \mf m(t,u)|_F \right\|_q = O( l_n),\quad l_n = (nb)^{-1/2}b^{-2/q}+ b^2.
\end{align}\label{lm:supm}
\end{lemma}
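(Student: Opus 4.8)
\textbf{Proof proposal for Lemma \ref{lm:supm}.}

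The plan is to establish the uniform rate for the Nadaraya--Watson estimator $\hat{\mf m}(t,u)$ by combining a pointwise $L^q$ bound with the chaining/smoothness argument of \cref{prop:b2}, exactly as in the auxiliary bounds already used in the proof of \cref{prop:esti}. First I would recall the decomposition $\hat{\mf m}(t,u) - \mf m(t,u) = \{\E\hat{\mf m}(t,u) - \mf m(t,u)\} + \{\hat{\mf m}(t,u) - \E\hat{\mf m}(t,u)\}$. For the deterministic (bias) term, using $\int K(x)\,dx = 1$, the symmetry of $K$, Assumption \ref{ass:expo}(3) (third-order differentiability of $t\mapsto \mf m(t,u)$ with a uniform bound on the third derivative), and a Taylor expansion, one gets $\sup_{t\in[b,1-b],u\in[0,1]} |\E\hat{\mf m}(t,u) - \mf m(t,u)|_F = O(b^2)$; the restriction $t\in[b,1-b]$ ensures the kernel support lies inside $[0,1]$ so there are no boundary corrections. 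This matches the estimate \eqref{eq:m1} already invoked from \cite{dette2019}.

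For the stochastic term, I would write $\hat{\mf m}(t,u) - \E\hat{\mf m}(t,u) = (nb)^{-1}\sum_{j=1}^n \bs\epsilon_j(u) K_b(j/n - t)$ and bound it pointwise in $L^q$. Using the projection/martingale decomposition $\bs\epsilon_j(u) = \sum_{l\ge 0}\proj^{j-l}\{\bs\epsilon_j(u)\}$, Burkholder's inequality, and \cref{lm:delta} (which controls $\|\epsilon^r_j(u) - \epsilon^{r,(j-l)}_j(u)\|_q \le C q\chi^{l/(2q)}$), one obtains, uniformly in $t,u$,
\begin{align}
\left\| \frac{1}{nb}\sum_{j=1}^n \bs\epsilon_j(u) K_b(j/n - t) \right\|_{F,q}
&\le \frac{1}{nb}\sum_{l\ge 0}\left\{\sum_{j=1}^n \|\proj^{j-l}\{\bs\epsilon_j(u)\} K_b(j/n-t)\|_{F,q}^2\right\}^{1/2}
= O\big((nb)^{-1/2}\big),\nonumber
\end{align}
since $\sum_j K_b^2(j/n - t) = O(nb)$ and the sum over $l$ of $\chi^{l/(2q)}$ converges. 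This is exactly the computation carried out in \eqref{eq:aikin}.

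To pass from a pointwise bound to a uniform one over $t\in[b,1-b]$ and $u\in[0,1]$, I would apply \cref{prop:b2} to the process $\Upsilon_n(t,u) = \hat{\mf m}(t,u) - \mf m(t,u)$ (componentwise, then take a maximum over the finitely many coordinates $1\le r\le p$). For this I need the $L^q$ orders of the partial derivatives $\partial_t\Upsilon_n$, $\partial_u\Upsilon_n$, $\partial_t\partial_u\Upsilon_n$. Differentiating in $t$ brings down a factor $b^{-1}$ (from $K'_b(\cdot/b)$), differentiating in $u$ is handled by Assumption \ref{ass:expo}(2) on the derivative filter $L_j(t,u,\FF_i)$ and its dependence measure, so $\|\partial_t\Upsilon_n\|_q = O((nb)^{-1/2}b^{-1} + b)$, $\|\partial_u\Upsilon_n\|_q = O((nb)^{-1/2} + b^2)$, $\|\partial_t\partial_u\Upsilon_n\|_q = O((nb)^{-1/2}b^{-1} + b)$, with all constants uniform; here $m_n \asymp (nb)^{-1/2} + b^2$. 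Plugging these into \cref{prop:b2} gives the factor $c_n^{-2/q}$ with $c_n$ a power of $b$, which is absorbed to yield the stated $l_n = (nb)^{-1/2}b^{-2/q} + b^2$ after combining with the $O(b^2)$ bias term. The main obstacle is the careful bookkeeping of the derivative orders — in particular ensuring the $u$-derivatives genuinely satisfy the hypotheses of \cref{prop:b2} with the dependence measure $\delta_{s^*,i}(L_j)$ from Assumption \ref{ass:expo}(2), and that the resulting power of $b$ in $c_n$ combines correctly with $m_n$ — but this is routine given the machinery already developed in \cref{prop:esti} and \cite{dette2019}.
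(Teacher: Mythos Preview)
Your proposal is correct and follows essentially the same route as the paper's proof: split $\hat{\mf m}-\mf m$ into bias and stochastic parts, control the bias uniformly by $O(b^2)$ via Taylor expansion (the paper's \eqref{eq:Em}), bound the stochastic term pointwise by $O((nb)^{-1/2})$ through the martingale/Burkholder argument, compute the $L^q$ orders of $\partial_t$, $\partial_u$, $\partial_t\partial_u$ of the stochastic part (the paper's \eqref{eq:dEm1}--\eqref{eq:dEm2}), and invoke \cref{prop:b2} to upgrade to a uniform bound. The only cosmetic difference is that the paper applies \cref{prop:b2} to the centered process $\hat{\mf m}-\E\hat{\mf m}$ and adds the deterministic bias afterwards, whereas you apply it to the full $\Upsilon_n=\hat{\mf m}-\mf m$; since the bias is smooth and deterministic this changes nothing.
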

\begin{proof}
Using Taylor expansion 
\begin{align}
\sup_{u \in [b, 1-b], t \in [0,1]} \left|\E(\hat{\mf m}(t,u)) - \mf m(t,u) -b^2 \int K(v) v^2 dv \frac{\partial}{\partial u^2} \mf m(t,u)/2 \right|_F \leq C \left(\frac{1}{nb} + b^4 \right), \label{eq:Em}
\end{align}
where $C$ is a positive constant that is large enough. 
Note that 
\begin{align}
\hat{\mf m}(t,u) -\E (\hat{\mf m}(t,u)) = \frac{1}{nb} \sum_{i=1}^n  \mf G \left(\frac{i}{n}, u, \FF_i \right) K_b\left( i/n - t  \right).
\end{align}
For fixed $t$ and $u$, we have 
\begin{align}
\| \hat{\mf m}(t,u) -\E (\hat{\mf m}(t,u))  \|_{F,q} = O((nb)^{-1/2}).
\end{align}
By \cref{ass:expo}, we have 
\begin{align}
\left\| \left(\frac{\partial}{\partial t} + \frac{\partial^2}{\partial t \partial u} \right) ( \hat{\mf m}(t,u) -\E (\hat{\mf m}(t,u))  ) \right\|_{F,q}  = O( (nb)^{-1/2} b^{-1}),\label{eq:dEm1}
\end{align}
and
\begin{align}
\left\|\frac{\partial }{ \partial u} ( \hat{\mf m}(t,u) -\E (\hat{\mf m}(t,u))  ) \right\|_{F,q}  = O( (nb)^{-1/2}).\label{eq:dEm2}
\end{align}
Combining \eqref{eq:Em}, \eqref{eq:dEm1} and \eqref{eq:dEm2}, by \cref{prop:b2}, we have 
\begin{align}
\left\| \sup_{t \in [b, 1-b], u \in [0,1]} |\hat{\mf m} (t, u) - \mf m(t,u) |_F  \right\|_q = O((nb)^{-1/2} b^{-2/q} + b^2).
\end{align}
\end{proof}

For $L + 1 \leq j \leq 2\nt - L + 1, 1 \leq k_1 \leq n- 2\nt + 1, 1 \leq j_1 \leq N$, define  
\begin{align}
    \tilde {S}_{j,k_1, j_1} = \frac{1}{\sqrt{2s_n^2L}} (\sum_{r = j}^{j + L - 1} - \sum_{r = j-L}^{j- 1}) \hat {V}_{r + k_1 - 1,\nt + k_1 - 1, j_1},\label{eq:deftildeS}
\end{align}
where $\hat {V}_{i,l, j}$ denotes the $j$-th element of $\hat{\mf V}_{i,l}$ as defined in \eqref{eq:hatV1}, $L \leq i \leq 2\nt -L$, $0 \leq l \leq n - 2\nt$, $1 \leq j \leq N$. Denote the corresponding moving sum without estimation error as \begin{align}
{S}_{j,k_1, j_1} = \frac{1}{\sqrt{2s_n^2L}} \left(\sum_{r = j}^{j +L - 1} - \sum_{r = j-L}^{j- 1}\right)  {V}_{r + k_1 - 1,\nt + k_1 - 1, j_1},
    \label{eq:defS}
\end{align} 
where $L + 1 \leq j \leq 2\nt - L + 1$, $1 \leq k_1 \leq n- 2\nt + 1$, $1 \leq j_1 \leq N$. 
Let 
\begin{align}
    \tilde W(t,u, i/n, \FF_{i})  = \sum_{r=1}^p (W_{r,r}(t,u, i/n, \FF_{i}) - \E W_{r,r}(t,u, i/n, \FF_{i})),  \label{eq:tildeW}
\end{align}
where $W_{r,r}(t,u, i/n, \FF_{i})$ is as defined in \eqref{eq:defineW}. Write $$\tilde W_{i,k,j} = \tilde W\left(\frac{\nt + k-1}{n}, \frac{j}{N}, \frac{i + k -1}{n}, \F_{i + k -1} \right),$$ 
where $1 \leq i \leq 2\nt$, $1 \leq k \leq n- 2\nt + 1$, $1 \leq j \leq N$.
Define the counterpart of $S_{j,k_1, j_1}$, $L + 1 \leq j \leq 2\nt - L + 1$, $1 \leq k_1 \leq n- 2\nt + 1$, $1 \leq j_1 \leq N$, with zero mean as 
\begin{align}
    \bar S_{j, k_1, j_1} =\frac{1}{\sqrt{2 L}} \left(\sum_{r = j}^{j + L - 1}  - \sum_{r = j-L}^{j- 1}\right) K_{\tau}\left(\frac{r-\nt}{n}\right)\tilde W_{r, k_1, j_1},\label{eq:Sbar}
\end{align}
Recall the definition of $\tilde{\mf S}_{j,l}$ in \cref{alg:boot2}, i.e,
\begin{align}
    \tilde{\mf S}_{j,l} = \frac{1}{\sqrt{2s_n^2 L}} \Big ( \sum_{r = j}^{j + L - 1} \hat{\mf V}_{r,l} - \sum_{r = j-L}^{j- 1} \hat{\mf V}_{r,l} \Big ),
\end{align}
and that ${\tilde {\mf S}}_{j,l} = (\tilde {S}_{j,l, j_1}, 1 \leq j_1 \leq N)^{\top}$. Let  ${\mf S}_{j,l} = ({S}_{j,l, j_1}, 1 \leq j_1 \leq N)^{\top}$,  ${\bar {\mf S}}_{j,l} = (\bar {S}_{j,l, j_1}, 1 \leq j_1 \leq N)^{\top}$ .  For $0 \leq l \leq n - 2\nt$, define $N$-dimensional vectors
\begin{align}
\tilde{\mf Z}_l = \sum_{j=L}^{2\nt -L} \tilde{\mf S}_{j,l} R_{l+j}/\sqrt{\nt - L}, \quad \bar {\mf Z}_l = \sum_{j=L}^{2\nt -L} {\mf S}_{j,l} R_{l+j}/ \sqrt{\nt - L},\label{eq:defZ}
\end{align}
and 
\begin{align}
\check{\mf Z}_l =  \sum_{j=L}^{2\nt -L} \bar{\mf S}_{j,l} R_{l+j}/\sqrt{\nt - L}.\label{eq:defZcheck}
\end{align}
\begin{lemma}\label{lm:covW}
    Under the condition of \cref{lm:delta1}, for $1 \leq i_1, i_2 \leq 2\nt$, $1 \leq k_1, k_2 \leq n - 2\nt +1$, $1 \leq j_1, j_2 \leq N$, and some $\chi \in (0,1)$,  \par
    (i) if $|i_2 + k_2 - i_1 - k_1| > 4s_n$, we have 
    \begin{align}
        \left|\E \left(K_{\tau}\left(\frac{i_1-\nt}{n}\right) \tilde W_{i_1, k_1, j_1 }K_{\tau}\left(\frac{i_2-\nt}{n}\right) \tilde W_{i_2, k_2, j_2 } \right) \right| = O(\chi_1^{|i_2+k_2-i_1-k_1|/2}),
    \end{align}
     (ii) if $|i_2 + k_2 - i_1 - k_1| \leq 4s_n$, we have 
    \begin{align}
       \max_{\substack{1 \leq i_1, i_2 \leq 2\nt, \\1 \leq k_1, k_2 \leq n - 2\nt +1,\\ 1 \leq j_1, j_2 \leq N}} \left|\E \left(K_{\tau}\left(\frac{i_1-\nt}{n}\right) \tilde W_{i_1, k_1, j_1 }K_{\tau}\left(\frac{i_2-\nt}{n}\right) \tilde W_{i_2, k_2, j_2 } \right) \right| = O (s_n^2).
    \end{align}
\end{lemma}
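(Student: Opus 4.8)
The plan is to treat the two parts separately, reducing the whole statement to the moment and dependence bounds already established. For part (ii) I would simply invoke the Cauchy--Schwarz inequality. Since $\tilde W_{i,k,j}$ is centred by construction and satisfies the uniform bound $\|\tilde W_{i,k,j}\|_2 = O(\sqrt{s_n})$ --- this follows from \eqref{eq:Vmoment} (recall $\mathrm{tr}(\mf U_i) = s_n\sum_r W_{r,r}$), or directly from \eqref{eq:psimoment} together with \cref{lm:delta} --- and since $|K_\tau(\cdot)|\le\|K\|_\infty<\infty$, the quantity $\E(K_\tau(\cdot)\tilde W_{i_1,k_1,j_1}K_\tau(\cdot)\tilde W_{i_2,k_2,j_2})$ is bounded in absolute value by a constant times $\|\tilde W_{i_1,k_1,j_1}\|_2\|\tilde W_{i_2,k_2,j_2}\|_2 = O(s_n)$, which is in particular $O(s_n^2)$.

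For part (i) the idea is that $\tilde W_{i,k,j}$ is a measurable function of the innovations $(\eta_\ell)_{\ell\le i+k-1}$ with an effective span of only $2s_n$ time steps, so that a separation $|i_2+k_2-i_1-k_1|>4s_n$ makes the two random variables nearly independent, with the discrepancy controlled geometrically by \cref{lm:delta1}. Concretely, by symmetry I would assume $a_1:=i_1+k_1-1\le a_2:=i_2+k_2-1$ and set $D:=a_2-a_1>4s_n$. Inspecting the definitions of $\mf U_i$, $\bs\phi_{i,k}$ and $\bs\psi_{i,k}$ shows that $\tilde W_{i_1,k_1,j_1}$ involves $\bs\epsilon_\ell$ only for $a_1-2s_n\le\ell\le a_1$, hence is measurable with respect to $\FF_{a_1}$, the $\sigma$-field generated by $\eta_\ell$, $\ell\le a_1$; likewise $\tilde W_{i_2,k_2,j_2}$ is $\FF_{a_2}$-measurable, and both are centred. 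By the tower property the expectation in question equals $K_\tau(\cdot)K_\tau(\cdot)\,\E(\tilde W_{i_1,k_1,j_1}\,\E(\tilde W_{i_2,k_2,j_2}\mid\FF_{a_1}))$. Expanding $\tilde W_{i_2,k_2,j_2}=\sum_{j\le a_2}\proj^j\tilde W_{i_2,k_2,j_2}$ (valid since it is centred, square integrable and $\FF_{a_2}$-measurable) gives $\E(\tilde W_{i_2,k_2,j_2}\mid\FF_{a_1})=\sum_{j\le a_1}\proj^j\tilde W_{i_2,k_2,j_2}$, and the standard bound $\|\proj^{a_2-l}\tilde W_{i_2,k_2,j_2}\|_2\le\theta_{l,2}$ relating the martingale projections to the coupling dependence measure $\theta_{l,2}$ then yields $\|\E(\tilde W_{i_2,k_2,j_2}\mid\FF_{a_1})\|_2\le\sum_{l\ge D}\theta_{l,2}=\Theta_{D,2}$.

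It then remains to combine the pieces. Since $D>4s_n$, \cref{lm:delta1}(i) gives $\Theta_{D,2}\le C\chi_1^{D}$ for some $\chi_1\in(0,1)$, and with $\|\tilde W_{i_1,k_1,j_1}\|_2=O(\sqrt{s_n})$ and $|K_\tau(\cdot)|\le\|K\|_\infty$, a final application of Cauchy--Schwarz produces the bound $O(\sqrt{s_n}\,\chi_1^{D})$; since $D>4s_n$ and $s_n\to\infty$ force $\sqrt{s_n}\,\chi_1^{D/2}\to 0$, this is $O(\chi_1^{D/2})$, which is exactly the asserted estimate.

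I expect the only genuinely delicate point to be the very first observation --- the precise identification of the $\FF$-measurability of $\tilde W_{i,k,j}$ and of its $2s_n$-step innovation span --- because this is exactly what converts the hypothesis $|i_2+k_2-i_1-k_1|>4s_n$ into ``all relevant projection lags exceed $4s_n$'', placing us in the geometric regime of \cref{lm:delta1}(i) rather than the crude $O(s_n^2)$ regime. Everything else is routine bookkeeping: the symmetry reduction to $a_1\le a_2$, and absorbing the polynomial factor $\sqrt{s_n}$ into the geometric decay.
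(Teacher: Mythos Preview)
Your proof is correct and rests on the same ingredients as the paper's --- the projection operators $\proj^l$, the bound $\|\proj^{a-l}\tilde W\|_2\le\theta_{l,2}$, and \cref{lm:delta1} --- but you organise them a little differently. The paper expands \emph{both} factors via the orthogonal decomposition $\E(XY)=\sum_l\E((\proj^l X)(\proj^l Y))$, bounds each summand by $\theta_{a_1-l,2}\theta_{a_2-l,2}$, and then splits the sum over $l$ at $a_1-4s_n$ to separate the geometric from the polynomial regime of \cref{lm:delta1}. You instead use the $\FF_{a_1}$-measurability of the earlier factor to condition, bounding $\|\E(\tilde W_{i_2,k_2,j_2}\mid\FF_{a_1})\|_2$ directly by $\Theta_{D,2}$, and then pair this with the moment bound $\|\tilde W_{i_1,k_1,j_1}\|_2=O(\sqrt{s_n})$ via Cauchy--Schwarz. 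Your version avoids the index split and, for part (ii), the direct Cauchy--Schwarz gives the slightly sharper $O(s_n)$ in place of the paper's $\Theta_{0,2}^2=O(s_n^2)$; both are of course sufficient for the stated claim.
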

\begin{proof}
    Without the loss of generality, suppose $i_1 + k_1 < i_2 + k_2$ and we have
    \begin{align}
         &\left|\E \left(K_{\tau}\left(\frac{i_1-\nt}{n}\right) \tilde W_{i_1, k_1, j_1 }K_{\tau}\left(\frac{i_2-\nt}{n}\right) \tilde W_{i_2, k_2, j_2 } \right) \right|\\ 
         & \leq \sum_{l \in \mathbb Z} \E \left|K_{\tau}\left(\frac{i_1-\nt}{n}\right)K_{\tau}\left(\frac{i_2-\nt}{n}\right) (\proj^l \tilde W_{i_1, k_1, j_1})\times (\proj^l \tilde W_{i_2, k_2, j_2}) \right| \\ 
         & \leq \sum_{l \in \mathbb Z} \|\proj^l  \tilde W_{i_1, k_1, j_1}\| \times \|\proj^l  \tilde W_{i_2, k_2, j_2}\| \\
         & = \left(\sum_{l \leq i_1 + k_1 -4s_n} +  \sum_{l > i_1 + k_1 -4s_n}^{i_1 + k_1} \right) \theta_{i_1 + k_1 - l}\theta_{i_2 + k_2 - l}.
    \end{align}
    Using above arguments and  \cref{lm:delta1}, if $|i_2 + k_2 - i_1 - k_1| > 4s_n$, we have 
    \begin{align}
       & \left|\E \left(K_{\tau}\left(\frac{i_1-\nt}{n}\right) \tilde W_{i_1, k_1, j_1 }K_{\tau}\left(\frac{i_2-\nt}{n}\right) \tilde W_{i_2, k_2, j_2 } \right) \right|\\ &= O\left(\sum_{l \leq i_1 + k_1 -4s_n} \chi_1^{4s_n}\chi_1^{i_2+k_2-l} +  \Theta_{0,2}\chi_1^{i_2+k_2-i_1-k_1} \right) = O(\chi^{i_2 + k_2 - i_1-k_1}).
    \end{align}
    If $|i_2 + k_2 - i_1 - k_1| \leq 4s_n$, we have by \cref{lm:delta1}
    \begin{align}
        \left|\E \left(K_{\tau}\left(\frac{i_1-\nt}{n}\right) \tilde W_{i_1, k_1, j_1 }K_{\tau}\left(\frac{i_2-\nt}{n}\right) \tilde W_{i_2, k_2, j_2 } \right) \right| = O(\Theta^2_{0,2}) = O(s_n^2). 
    \end{align}
\end{proof}
\subsection{Proof of Theorem \ref{thm:bs}}\par
\begin{proof}
    \textbf{Proof of (i).}
We shall break the proof into three steps. Recall the notation of $\tilde{\mf Z}_l$, $\bar {\mf Z}_l$, and  $\check{\mf Z}_l$ in \eqref{eq:defZ} and \eqref{eq:defZcheck}. In Step 1, we shall investigate gap between the distributions caused by the estimation error, namely $$\mathbb{E}\left(\max_{0 \leq l \leq n-2\nt}|\tilde{\mf Z}_l  - \bar {\mf Z}_l|^q_{\infty} \mf 1(\mathbb A_n)|\FF_n \right) .$$ In Step 2, we shall show that the difference-based multiplier can largely cancel the trend function,  so that $$\mathbb{E}\left(\max_{0 \leq l \leq n-2\nt}|\tilde{\mf Z}_l  - \bar {\mf Z}_l|^q_{\infty} \mf 1(\mathbb A_n)|\FF_n \right).$$ In Step 3, we shall compute $$\sup_{x \in \R} \left|P \left(\max_{0 \leq l \leq n - 2\nt}|\check{\mf Z}_l|_{\infty} < x|\FF_n \right) -  \mathbb{P}\left(\frac{1}{\sqrt{s_n} n\tau} \left| \sum_{i=1}^{2\nt} {\mf Z}_i \right|_{\infty} < x\right)  \right|$$ the stochastic error induced by the bootstrap approximation, and derive the upper bound for
\begin{align}
    \sup_{x \in \R} \left|P \left(\max_{0 \leq l \leq n - 2\nt} |\tilde{\mf Z}_l|_{\infty}  < x|\FF_n \right) -  \mathbb{P}\left(\frac{1}{\sqrt{s_n} n\tau} \left| \sum_{i=1}^{2\nt} {\mf Z}_i \right| < x\right)  \right| .
\end{align}
\par
\textbf{Step 1}: 
Let $l_n = (nb)^{-1/2}b^{-1/q} + b^2$.   For some $\eta_n \to \infty$, let 
\begin{align}
\mathbb A_n = \left\{\sup_{t \in [b, 1-b], u \in [0,1]} |\hat{\mf m}(t,u) - {\mf m}(t,u) |_F \leq  \eta_n  l_n\right\}, 
\end{align}
By \cref{lm:supm}, we have $P(\mathbb A_n)  \to 1$.
Recall the definitions of $\tilde{\mf Z}_l$, $\bar {\mf Z}_l$ in \eqref{eq:defZ}, and the definitions of $\tilde S_{j,l,k}$ and $S_{j,l,k}$ in \eqref{eq:deftildeS} and \eqref{eq:defS} respectively. 
By Lemma 2.3.4 of \cite{gine2016mathematical}, the conditional Jensen inequality and concentration of the maximum of a Gaussian process, $\nt \geq 2L$, we have 
\begin{align}
&\mathbb{E}\left(\max_{0 \leq l \leq n-2\nt}|\tilde{\mf Z}_l  - \bar {\mf Z}_l|^q_{\infty} \mf 1(\mathbb A_n)|\FF_n \right)  \\ &\leq M \left|\sqrt{\frac{\log  (n  N)}{\nt-L}  }\left(\max_{\substack{0 \leq  l \leq n-2\nt,\\ 1 \leq k \leq N} }\sum_{j=L}^{2\nt -L} (\tilde S_{j,l,k} - S_{j,l,k})^2 \mf 1(\mathbb A_n) \right)^{1/2} \right|^q\\ 
& \leq M   \left|\sqrt{\frac{\log (n  N) }{2^{-1} s_n^2 (n\tau) L } } \left(\max_{\substack{0 \leq  l \leq n-2\nt,\\ 1 \leq k \leq N} }\sum_{j=L}^{2\nt -L} \left[ \left\{\sum_{i=j}^{j+L-1} (\hat V_{i + l -1, \nt + l - 1,k} -  V_{i+l-1,\nt+ l-1,k}) \right\}^2 \right. \right. \right.\\ & + \left. \left. \left. \left\{\sum_{i=j - L }^{j-1} (\hat V_{i+l-1,\nt+ l-1,k} - V_{i,\nt+ l-1,k}) \right\}^2 \right]\mf 1(\mathbb A_n) \right)^{1/2} \right|^q, \label{eq:step1.1}
\end{align}
where $\hat V_{i,l,j} = K_{\tau}(\frac{i-l}{n}) \mathrm{tr} \left\{\hat{\mf U}_i (l/n, j/N)\right\}$. Note that 
\begin{align}
    \hat V_{i,l,k} -  V_{i,l,k} & =  K_{\tau}\left(\frac{i-l}{n} \right) \{\mathrm{tr}(\hat{\mf U}_i (l/n, j/N)) - \mathrm{tr}({\mf U}_i (l/n, j/N))\},
\end{align}
where for some positive constant $C$
\begin{align}
&\mathrm{tr}(\hat{\mf U}_i (t, u) - {\mf U}_i (t, u)) \mf 1(\mathbb A_n) \\ &= \sum_{k=1}^{s_n}\sum_{j=1}^{n}\mathrm{tr} \left\{(\hat{\bs \epsilon}_{i-k}(u)- \bs \epsilon_{i-k}(u)) \bs \epsilon^{\top}_i(u)\left( \bs \epsilon_{j-k}(u)\bs \epsilon^{\top}_j(u)K_{\tau}(j/n - t) \right)^{\top} \right\} \mf 1(\mathbb A_n) \mf 1 (j \in \mathbb L_n) \\ 
& + \sum_{k=1}^{s_n}\sum_{j=1}^{n} \mathrm{tr} \left\{\hat{\bs \epsilon}_{i-k}(u) (\hat{\bs \epsilon}_i(u) - \bs \epsilon_i(u))^{\top} \left( \bs \epsilon_{j-k}(u)\bs \epsilon^{\top}_j(u)K_{\tau}(j/n - t) \right)^{\top}  \right\}\mf 1(\mathbb A_n) \mf 1 (j \in \mathbb L_n)\\ 
& +  \sum_{k=1}^{s_n}\sum_{j=1}^{n} \mathrm{tr}\left\{\hat{\bs \epsilon}_{i-k}(u) \hat{\bs \epsilon}^{\top}_i(u)\left( \hat{\bs \epsilon}_{j-k}(u) - \bs \epsilon_{j-k}(u)) \bs \epsilon^{\top}_j(u)K_{\tau}(j/n - t) \right)^{\top} \right\} \mf 1(\mathbb A_n) \mf 1 (j \in \mathbb L_n)\\
&  + \sum_{k=1}^{s_n}\sum_{j=1}^{n} \mathrm{tr}\left\{\hat{\bs \epsilon}_{i-k}(u) \hat{\bs \epsilon}^{\top}_i(u)\left( \hat{\bs \epsilon}_{j-k}(u)  (\hat{\bs \epsilon}^{\top}_j(u) - \bs \epsilon^{\top}_j(u))K_{\tau}(j/n - t) \right)^{\top} \right\} \mf 1(\mathbb A_n)\mf 1 (j \in \mathbb L_n)\\ 
&:= A_{i1}(t,u) + \cdots + A_{i4}(t,u),
\end{align}
where $A_{i1}(t,u) \ldots  A_{i4}(t,u)$ are defined in the obvious way.
For the sake of brevity, we only show the calculation related to $A_{i1}(t,u)$. For this purpose, notice that
\begin{align}
  &\left\|\max_{\substack{0 \leq  l \leq n-2\nt,\\ 1 \leq k \leq N} } \left(\sum_{j=L}^{2\nt -L} \left\{ \sum_{i=j}^{j+L-1} A_{(i+l-1)1}\left(\frac{\nt + l - 1}{n}, \frac{k}{N}\right)\right\}^2 \mf 1(\mathbb A_n)\right)^{1/2} \right\|_q \\
  & \leq \sqrt{n\tau} \left\| \max_{\substack{0 \leq  l \leq n-2\nt,\\ 1 \leq k \leq N,\\ L \leq j \leq 2\nt - L} } \left|\sum_{i=j}^{j+L-1} A_{(i+l-1)1}\left(\frac{\nt + l - 1}{n}, \frac{k}{N}\right)\mf 1(\mathbb A_n) \right| \right\|_q\\ 
  & \leq  \sqrt{n\tau} L \left\|\max_{\substack{1 \leq  i \leq n, 1 \leq k \leq N,\\ \nt \leq j \leq n -\nt + 1} } A_{i1}(j/n, k/N)  \right\|_q  \\ 
  & \leq \sqrt{n\tau} L \left\|\max_{\substack{1 \leq  i \leq n, 1 \leq k \leq N,\\ \nt \leq j \leq n -\nt + 1} } \sum_{h=1}^{s_n} |(\hat {\bs \epsilon}_{i-h}(k/N) - \bs \epsilon_{i-h}(k/N))1(\mathbb A_n)|_F\right.  \\
  & \times \left.  |\bs \epsilon_i^{\top}(k/N) \bs\psi_{i,h}(j/n, k/N)|_F  \right\|_q \\
  & \leq C \sqrt{n\tau} L \left[\sum_{\substack{1 \leq  i \leq n, 1 \leq k \leq N,\\ \nt \leq j \leq n -\nt + 1} } \E\left\{ \sum_{h=1}^{s_n} |(\hat {\bs \epsilon}_{i-h}(k/N) - \bs \epsilon_{i-h}(k/N))1(\mathbb A_n)|_F \right. \right. \\
  & \times \left. \left. |\bs \epsilon_i^{\top}(k/N) \bs\psi_{i,h}(j/n, k/N)|_F  \right\}^q \right]^{1/q} \\
  & = O(L s^{3/2}_n \sqrt{n\tau} l_n\eta_n(n^2N)^{1/q}), \label{eq:8A1}
\end{align}
where the third inequality follows from the definition of $\bs\psi_{i,h}(t, u)$, see \eqref{eq:defU}, $C$ is a positive constant that is large enough and the last equality follows from 
\eqref{eq:psimoment}.
Therefore, applying similar arguments to $A_{i2}(t,u), \ldots,  A_{i4}(t,u)$ in \eqref{eq:8A1} and triangle inequality, it follows that 
\begin{align}
& \left\|\max_{\substack{0 \leq  l \leq n-2\nt,\\ 1 \leq k \leq N} } \left(\sum_{j=L}^{2\nt -L} \left[ \left\{\sum_{i=j}^{j+L-1} (\hat V_{i + l -1, \nt + l - 1,k} -  V_{i+l-1,\nt+ l-1,k}) \right\}^2 \right. \right. \right.\\ & + \left. \left. \left. \left\{\sum_{i=j - L }^{j-1} (\hat V_{i+l-1,\nt+ l-1,k} - V_{i,\nt+ l-1,k}) \right\}^2 \right]\mf 1(\mathbb A_n) \right)^{1/2}  \right\|_q \\ 
 &= O (L s^{3/2}_n \sqrt{n\tau} l_n\eta_n(n^2N)^{1/q})  .\label{eq:step1.2}
\end{align}
By \eqref{eq:step1.1} and  \eqref{eq:step1.2},  we have 
\begin{align}
\left\{\mathbb{E}\left(\max_{0 \leq l \leq n-2\nt}|\tilde{\mf Z}_l  - \bar {\mf Z}_l|^q_{\infty} \mf 1(\mathbb A_n)|\FF_n \right) \right\}^{1/q} = \Op(\sqrt{L s_n \log(nN) }l_n\eta_n(n^2N)^{1/q}). 
\label{eq:step1rate}
\end{align}
\textbf{Step 2}: Using similar arguments in \eqref{eq:EUstep1} and \eqref{eq:EUstep2} of \cref{lm:4th}, under the null hypothesis, we have
 \begin{align}
    &\sup_{t\in[\tau, 1-\tau], u \in [0,1]} \left|\sum_{r=j}^{j+L-1} K_{\tau}(r/n - t) \E [\mathrm{tr} \left\{\mf U_{r}(t,u)\right\}]\right|\\
    & = \sup_{t\in[\tau, 1-\tau], u \in [0,1]}\left|\sum_{r=j}^{j+L-1} K_{\tau}(r/n - t) \sum_{k=1}^{3m} 2 \mathrm{tr}\left[\bs \Gamma_k(r/n, u) \left\{\sum_{h=1}^n\boldsymbol\Gamma_k^{\top}(h/n, u) K_{\tau}(h/n - t) \mf 1(h \in \mathbb L_r)\right\} \right]\right.\\
    &\left.+ \sum_{r=j}^{j+L-1} K_{\tau}(r/n - t) \sum_{k=3m+1}^{s_n} 2\mathrm{tr} \left\{\sum_{h=1}^n \bs \Gamma_{r-h}(r/n , u) \boldsymbol\Gamma^{\top}_{r-h}((r-k)/n, j/N) K_{\tau}(h/n - t) \mf 1(h \in \mathbb L_r)\right\} \right|\\
    &+ O(Ls_n^{3/2}(\log n +s_n)/n+  L(\log n)s_n/n)\\
    & = O(Ls_n^{3/2}(\log n +s_n)/n + L(\log n)s_n/n ) = O(Ls_n^{5/2}/n),\label{eq:rowLEu}
\end{align}
since $s_n/\log n \to \infty$.
Similar to the calculation in \eqref{eq:step1.1}, uniformly for $t$ and $u \in [0,1]$, by \eqref{eq:rowLEu}, we have 
\begin{align}
&\E\left(\max_{0 \leq l \leq n-2\nt}|\check{\mf Z}_l  -\bar{\mf Z}_l|^q_{\infty}|\F_n\right)\\  &\leq  M \left|\sqrt{\frac{\log (n  N)}{2^{-1} (n\tau) L } } \left[\max_{\substack{0 \leq  l \leq n-2\nt,\\ 1 \leq k \leq N} }\sum_{j=L}^{2\nt -L} \left\{\sum_{i=j - L }^{j-1} ( V_{i+l-1,\nt+l - 1,k}/s_n  -  V_{i+L+l-1,\nt + l - 1,k}/s_n  \right.\right. \right. \\&  \left. \left.\left. - \tilde W_{i,l,k} + \tilde W_{i+L,l,k}) \right\}^2 \right]^{1/2} \right|^q \\ 
& \leq  M \left|\sqrt{\frac{\log( n  N)}{2^{-1}s_n^2 (n\tau) L } } \left\{ \max_{\substack{0 \leq  l \leq n-2\nt,\\ 1 \leq k \leq N} }\sum_{j=L}^{2\nt -L} \right.\right.\\ & \left.\left.\left(\sum_{i=j - L }^{j-1} K_{\tau}\left(\frac{i-\nt}{n} \right)\E \left[\mathrm{tr}\left\{\mf U_{i+l-1}\left(\frac{\nt + l -1}{n}, \frac{k}{N}\right) \right\}\right]\right.\right.\right.\\ & \left.\left.\left.- K_{\tau}\left(\frac{i+L-\nt}{n}\right) \E \left[\mathrm{tr}\left\{\mf U_{i+L+l-1}\left(\frac{\nt + l -1}{n}, \frac{k}{N}\right)\right\}
\right] \right)^2 \right\}^{1/2} \right|^q \\ 
&  = O\left\{(L\log (n  N))^{q/2}(s_n^{3/2}/n)^q \right\}.\label{eq:step2diff}
\end{align}

Combining \eqref{eq:step1rate}, \eqref{eq:step2diff}, we have 
\begin{align}
   &\left\{ \E\left(\max_{0 \leq l \leq n-2\nt}|\check{\mf Z}_l  -\tilde{\mf Z}_l|^q_{\infty}|\F_n\right)\right\}^{1/q} \\ &= \Op\left\{(L\log (n  N))^{1/2} ( s_n^{3/2}/n) + \sqrt{L s_n \log(nN) }l_n\eta_n(n^2N)^{1/q}\right\}.\label{eq:step2rate}
\end{align}
\textbf{Step 3}:  
For $1 \leq k_1, k_2 \leq n - 2\nt +1$, $1 \leq j_1, j_2 \leq N$, let 
\begin{align}
\bar \sigma^{2}_{k_1, k_2, j_1, j_2} =(\nt - L)^{-1} \sum_{r=L}^{2\nt - L- (k_2-k_1)}  \bar S_{r + k_2-k_1,  k_1, j_1} \bar S_{r,  k_2, j_2},\label{eq:sigmabar}
\end{align} 
where $\bar S_{r,  k_2, j_2}$ is as defined in \eqref{eq:Sbar}, i.e.,
\begin{align}
    \bar S_{j, k_2, j_2} =\frac{1}{\sqrt{2L}} \left(\sum_{r = j}^{j + L - 1}  - \sum_{r = j-L}^{j- 1}\right) K_{\tau}\left(\frac{r-\nt}{n}\right)\tilde W_{r, k_2, j_2}.\label{eq:defSbar}
\end{align}
\par
We shall show that $\bar \sigma^2_{k_1,k_2,j_1,j_2}$ converges to the autocovariance structure of ${\mf Z}_i$. Recall the definition of ${\mf V}_{i}$ in \eqref{det27} that 
\begin{align}
\mf V_i = ({\mf V}_{i,\nt} ,\cdots, {\mf V}_{i + n - 2\nt, n-\nt} )^{\top}.
\end{align}
The autocovariance structure of ${\mf V}_{i}/(n \tau \sqrt{s_n})$ is
\begin{align}
    \sigma^2_{k_1, k_2, j_1, j_2} = \frac{1}{(n\tau) s_n^2} \mathrm{Cov}\left(\sum_{i_1 = 1}^{2 \nt} v_{i_1, (k_1 - 1)N + j_1},  \sum_{i_2 = 1}^{2 \nt} v_{i_2, (k_2 - 1)N + j_2} \right),
\end{align}
where  $1 \leq k_1, k_2 \leq n - 2\nt +1$, $1 \leq j_1, j_2 \leq N$, $v_{i,r}$ is the $r$-th component of $\mf V_i$, $1 \leq i \leq 2\nt$, $1 \leq r \leq N(n-2\nt+1)$.  Note that $v_{i_1, (k_1 -1)N+j_1} =  V_{i_1 + k_1 -1, \nt + k_1 -1, j_1}$, where $ V_{i_1 + k_1 -1, \nt + k_1 -1, j_1}$ is the $j_1$th element of $\mf  V_{i_1 + k_1 -1, \nt + k_1 -1}$.  Recall the definitions in \eqref{eq:defU} and \eqref{eq:defineW}, we can write 
\begin{align}
     \sigma^2_{k_1, k_2, j_1, j_2} = \frac{1}{n\tau} \E \left\{ \sum_{i_1, i_2 = 1}^{2\nt} K_{\tau}\left(\frac{i_1-\nt}{n}\right)K_{\tau}\left(\frac{i_2 -\nt}{n}\right) \tilde W_{i_1, k_1, j_1} \tilde W_{i_2, k_2, j_2} \right\}.\label{eq:sigmalimit}
\end{align}
At the end of the proof, we shall show that
\begin{align}
   \left\| \max_{\substack{1 \leq k_1, k_2 \leq n - 2\nt + 1,\\ 1 \leq j_1, j_2 \leq N}} \left| \sigma^2_{k_1, k_2, j_1, j_2}  -  \bar \sigma^{2}_{k_1, k_2, j_1, j_2}\right| \right\|_{q/2} = O\left(\vartheta_n\right),\label{eq:sigmaconverge}
\end{align}
where $$\vartheta_n = \frac{(s^2_n\log n)^2}{L} + \frac{s_n^3L\log n}{n\tau} + \sqrt{\frac{L}{n\tau}}s_n^2 (nN)^{4/q}.$$

By \cref{lm:cum}, there exist a $\eta_0 > 0$ such that
\begin{align}
    P(\max_{\substack{1 \leq k_1, k_2 \leq n - 2\nt + 1,\\ 1 \leq j_1, j_2 \leq N}} \bar \sigma^{2}_{k_1, k_2, j_1, j_2}> \eta_0) \geq 1 - O(\vartheta_n^{q/2}).
\end{align}

Similar to (S6.9) of \cite{dette2021confidence}, we have 
\begin{align}
&\sup_{x \in \R} \left|P \left(\max_{0 \leq l \leq n - 2\nt}|\check{\mf Z}_l|_{\infty} < x|\FF_n \right) -  \mathbb{P}\left(\frac{1}{\sqrt{s_n} n\tau} \left| \sum_{i=1}^{2\nt} {\mf Z}_i \right|_{\infty} < x\right)  \right| \\&= \Op\left(\vartheta_n^{1/3} \left\{1 \vee \log (nN/\vartheta_n) \right\}^{2/3} \right). \label{eq:step3rate}
\end{align}
Combining \eqref{eq:step2rate} and \eqref{eq:step3rate}, by the (conditional version) Lemma S1 of \cite{dette2021confidence}, we have 
\begin{align}
       & \sup_{x \in \R} \left|P \left(\max_{0 \leq l \leq n - 2\nt} |\tilde{\mf Z}_l|_{\infty}  < x|\FF_n \right) -  \mathbb{P}\left(\frac{1}{\sqrt{s_n} n\tau} \left| \sum_{i=1}^{2\nt} {\mf Z}_i \right| < x\right)  \right| \\ 
       & \leq  \sup_{x \in \R} \left|P \left(\max_{0 \leq l \leq n - 2\nt}|\check{\mf Z}_l|_{\infty} < x|\FF_n \right) -  \mathbb{P}\left(\frac{1}{\sqrt{s_n} n\tau} \left| \sum_{i=1}^{2\nt} {\mf Z}_i \right|_{\infty} < x\right)  \right|\\ 
       & + P(\max_{0 \leq l \leq n - 2\nt}|\tilde{\mf Z}_l - \check{\mf Z}_l|_{\infty}> \delta|\FF_n) + O(\delta \sqrt{1  \vee \log(nN/\delta)} )\\
       &= \Op[ \{\sqrt{L s_n \log(nN) }l_n\eta_n(n^2N)^{1/q} + (L\log (n  N))^{1/2} ( s_n^{3/2}/n) \}^{\frac{q}{q+1}} +\vartheta_n^{1/3} \left\{1 \vee \log (nN/\vartheta_n) \right\}^{2/3} ].
\end{align}
where $l_n =  (nb)^{-1/2} b^{-1/q} + b^2$, and in the last equality we take $\delta =  \{\sqrt{L s_n \log(nN) }l_n\eta_n(n^2N)^{1/q} + (L\log (n  N))^{1/2} ( s_n^{3/2}/n) \}^{\frac{q}{q+1}}$. \par
\textbf{Proof of \cref{eq:sigmaconverge}}. We follow the steps in  (5.12) of \cite{dette2019}, mainly investigating the influence of increasing dependence. 
If $k_2-k_1 > 2\nt -2L$, $\bar \sigma^2_{k_1, k_2, j_1, j_2} = 0$, where in \eqref{eq:rowLEu} we use the convention $\sum_{i=a}^b x_i = 0$, if $b < a$. By \cref{lm:covW}, for  $k_2-k_1 > 2\nt -2L$, we have
\begin{align}
    \sigma^2_{k_1, k_2, j_1, j_2} &= \frac{1}{n\tau} \sum_{i_1, i_2 = 1}^{2\nt} O\left\{\chi_1^{|i_2 + k_2 - i_1 - k_1|/2} \mf 1(|i_2 + k_2 - i_1 - k_1| > 4 s_n) \right. \\& + \left. s_n^2\mf 1(|i_2 + k_2 - i_1 - k_1| \leq 4 s_n)\right\}\\ 
    &= O(\chi_1^{4s_n})+ \frac{1}{n\tau} \sum_{i_1 = 2n\tau-2L-4s_n+1}^{2\nt}\sum_{i_2 = 1}^{2\nt} O\left\{ s_n^2\mf 1(|i_2 + k_2 - i_1 - k_1| \leq 4 s_n)\right\}\\ 
    & = O(\chi_1^{4s_n} + s_n^3 (L+s_n)/(n\tau)) = O(s_n^3 L/(n\tau)).\label{eq:5.20}
\end{align}
If $k_2-k_1 \leq 2\nt -2L$, after a careful investigation of the proof of (S6.27) in \cite{dette2021confidence}, take $a = s_n \log n$ therein, combining \cref{lm:covW} and \cref{lm:delta}, we have 
\begin{align}
   \max_{\substack{1 \leq k_1, k_2 \leq n - 2\nt + 1,\\ 1 \leq j_1, j_2 \leq N}}  |\E(\bar \sigma^2_{k_1, k_2, j_1, j_2}) - \sigma^2_{k_1, k_2, j_1, j_2} | = O\left((s_n \log n)^2 s_n^2/L + Ls^3_n 
   \log n/(n\tau)\right),\label{eq:5.25}
\end{align}
and after  a careful investigation of the proof of (S6.28) in \cite{dette2021confidence} we obtain
\begin{align}
   \left\|  \max_{\substack{1 \leq k_1, k_2 \leq n - 2\nt + 1,\\ 1 \leq j_1, j_2 \leq N}}  |\bar \sigma^2_{k_1, k_2, j_1, j_2} - \E(\bar \sigma^2_{k_1, k_2, j_1, j_2})| \right\|_{q/2} = O\left(\sqrt{\frac{L}{n\tau}} s_n^2 (nN)^{4/q}\right).\label{eq:5.32}
\end{align}
The result follows from \eqref{eq:5.20}, \eqref{eq:5.25} and \eqref{eq:5.32}. 
\par
\textbf{Proof (ii).}\par
Define $\bar{\bs \phi}_{i,k}(u) = \bs \epsilon_{i-k}(u)\bs \epsilon^{\top}_{i}(u) - \bs \Gamma_k(i/n, u)$, $\bar{\bs \psi}_{i,k}(t,u) =\sum_{j=(i-s_n)\vee 1}^{i-M-1} \bar{\bs \phi}_{j,k}(u) K_{\tau}(j/n-u)$. Recall the definition of ${\mf U}_i(t, u)$ in \eqref{eq:defU}. Under the alternative hypothesis \eqref{eq:Ha}, we have the following decomposition
\begin{align} 
&\mathrm{tr}\left\{\mf U_i(t,u)\right\} \\ &= \left[\sum_{k=1}^{s_n}\mathrm{tr} \left\{\bar{\bs \phi}_{i,k}(u)\bar{\bs \psi}^{\top}_{i,k}(t,u) + \bar{\bs \psi}_{i,k}(t,u)\bar{\bs \phi}^{\top}_{i,k}(u)\right\}  \right]\\ 
& + \left(\sum_{k=1}^{s_n}\mathrm{tr}\left[\bs \Gamma_k(i/n, u) \bar{\bs \psi}^{\top}_{i,k}(t,u) + \bar{\bs \phi}_{i,k}(u) \sum_{j=(i-s_n)\vee 1}^{i-M-1}  \bs \Gamma_k^{\top}(j/n, u) K_{\tau}(j/n - t) \right.\right. \\ &\left.\left.  +\left\{ \sum_{j=(i-s_n)\vee 1}^{i-M-1} \bs \Gamma_k(j/n, u) K_{\tau}(j/n - t) \right\}\bar{\bs \phi}^{\top}_{i,k}(u)+ \bar{\bs \psi}_{i,k}(t,u)\bs \Gamma_k^{\top}(i/n, u) \right] \right)\\
&+ \left[\sum_{k=1}^{s_n}\mathrm{tr}\left\{\bs \Gamma_k(i/n, u) \sum_{j=(i-s_n)\vee 1}^{i-M-1} \bs \Gamma_k^{\top}(j/n, u) K_{\tau}(j/n - t) \right.\right. \\ &\left.\left. + \sum_{j=(i-s_n)\vee 1}^{i-M-1} \bs \Gamma_k(j/n, u) K_{\tau}(j/n - t)  \bs \Gamma^{\top}_k(i/n, u) \right\}\right]\\ 
& := u_{i1}(t,u)  + u_{i2}(t,u) + u_{i3}(t,u),\label{eq:ualt}
\end{align}
where the formulas of $u_{i1}(t,u)$ is the same as $\mathrm{tr}\left\{\mf U_i(t,u)\right\} /\sqrt{n \tau s_n} $ under $H_0$, see \eqref{eq:recallU}, $u_{i2}(t,u)$ is the extra randomness under $H_A$,  $u_{i3}(t,u)$ is a deterministic function of $t$ and $u$.\par 

Note that the arguments for \eqref{eq:step1.1} and \eqref{eq:step1.2} do not reply on the null hypothesis. 
Following \eqref{eq:step1.1} and \eqref{eq:step1.2} of Step 1, since $P(\mathbb A_n) \to 1$,  we have
\begin{align}
   \max_{0 \leq l \leq n-2\nt}|\tilde{\mf Z}_l - \bar {\mf Z}_l|_{\infty}  = \Op(\sqrt{L \log(nN)  s_n}  l_n\eta_n (n^2N)^{1/q} ).\label{eq:orderstep1}
\end{align}

Using  \eqref{eq:rowLEu}, we have
 \begin{align}
     &\sup_{t\in[\tau, 1-\tau], u \in [0,1]} \left|\sum_{r=j}^{j+L-1} K_{\tau}(r/n - t)\E[\mathrm{tr}\{  \mf U_{r}(t,u)\}] \right. \\ &  \left.-\sum_{r=j}^{j+L-1} K_{\tau}(r/n - t) \sum_{k=1}^{3m} 2 \mathrm{tr}\left[\bs \Gamma_k(r/n, u) \left\{\sum_{h=1}^n\boldsymbol\Gamma_k^{\top}(h/n, u) K_{\tau}(h/n - t) \mf 1(h \in \mathbb L_r)\right\} \right]\right. \\ &  \left.- \sum_{r=j}^{j+L-1} K_{\tau}(r/n - t) \sum_{k=3m+1}^{s_n} 2\mathrm{tr} \left\{\sum_{h=1}^n \bs \Gamma_{r-h}(r/n , u) \boldsymbol\Gamma^{\top}_{r-h}((r-k)/n, j/N) K_{\tau}(h/n - t) \mf 1(h \in \mathbb L_r)\right\} \right|\\
    &= O(Ls_n^{3/2}(\log n +s_n)/n+  L(\log n)s_n/n).
\end{align}
Then, by the continuity of $|{\bs \Gamma}_{k}(t,u)|_F$ and $|{\bs \Gamma}_{k}(t,u)|_F = O(\chi^k)$, we can derive
 \begin{align}
     &\sup_{t\in[\tau, 1-\tau], u \in [0,1]} \left|\sum_{r=j}^{j+L-1} K_{\tau}(r/n - t)\E[\mathrm{tr}\{  \mf U_{r}(t,u)\}] -  2Ls_n K^2_\tau(j/n -t)\sum_{k =1}^{\infty}|{\bs \Gamma}_{k}(j/n,u)|_F^2\right|  \\& = O(g_n^*),\label{eq:LemmaB1}
\end{align}
where $g_n^{*} =  L^2 s_n^2/n + L s_n^{5/2} (\log n) /n + s_n^2 L /(n\tau)= O(L^2 s_n^2/n + s_n^2 L /(n\tau))$ under the condition $s_n^{1/2} (\log n) /L \to 0$.  
\par By triangle inequality, \eqref{eq:LemmaB1} and the continuity of $\mathrm{tr}\left\{\sum_{k=1}^{\infty} \bs \Gamma_i(t,u)\bs \Gamma^{\top}_i(t,u) \right\}$ with respect to $t$ and uniform for all $u \in [0,1]$ in 
\cref{ass:diff},   using the arguments in \eqref{eq:step2diff}, we have
\begin{align}
    &\E\left(\max_{0 \leq l \leq n-2\nt}|\check{\mf Z}_l  - \bar{\mf Z}_l|^q_{\infty}|\F_n\right)  \\ 
    & \leq  M \left|\sqrt{\frac{\log( n  N)}{2^{-1}s_n^2 (n\tau) L } } \right.\\&\times \left. \left\{ \max_{\substack{0 \leq  l \leq n-2\nt,\\ 1 \leq k \leq N} }\sum_{j=L}^{2\nt -L} \left(\left[\sum_{i=j - L }^{j-1} \left\{ K_{\tau}\left(\frac{i-\nt}{n} \right)\E \left[\mathrm{tr}\left\{\mf U_{i+l-1}\left(\frac{\nt + l -1}{n}, \frac{k}{N}\right) \right\}\right]\right.\right.\right.\right.\right.\\ &  \left.\left.\left.\left.\left.- K_{\tau}\left(\frac{i+L-\nt}{n}\right) \E \left[\mathrm{tr}\left\{\mf U_{i+L+l-1}\left(\frac{\nt + l -1}{n}, \frac{k}{N}\right)\right\}\right]
\right\} \right]^2 \right)\right\}^{1/2} \right|^q \\ 
&  = O\left\{ (L\log (n  N))^{q/2}( g_n^*/(s_n 
 L)+ L/(n\tau) )^q \right\}. \label{eq:orderstep2}
\end{align}
Therefore, by Markov's inequality, we have 
\begin{align}
   \max_{0 \leq l \leq n-2\nt}|\check{\mf Z}_l  - \bar{\mf Z}_l|_{\infty} = \Op \left\{ (L\log (n  N))^{1/2}( g_n^*/(s_nL)+L/(n\tau))\right\}.
\end{align}
Note that under the alternative hypothesis
\begin{align}
 \tilde W(t,u,i/n, F_k) = s_n^{-1} (u_{i1}(t,u) -\E   u_{i1}(t,u) ) + s_n^{-1} u_{i2}(t,u).
\end{align}
Recall the definition of $\tilde W_{i,k,j}$ in \eqref{eq:tildeW}, the definition of $\bar \sigma^2_{k_1, k_2, j_1, j_2}$  in \eqref{eq:sigmabar}, and the definition of $\bar S_{r,k,j}$ in \eqref{eq:defSbar}. We have 
\begin{align}
    &\max_{\substack{0 \leq k_1, k_2 \leq n - 2\nt,\\1 \leq j_1, j_2 \leq N} } \bar \sigma^2_{k_1, k_2, j_1, j_2}\\ &\leq  \max_{\substack{L\leq r \leq 2\nt -L,\\ 0 \leq k \leq n - 2\nt , 1 \leq j \leq N}} 4\| \bar S_{r,k,j}\|^2
    \\
    & \leq 4 \max_{\substack{L\leq r \leq 2\nt -L,\\0 \leq k \leq n - 2\nt , 1 \leq j \leq N} }  \left\|\sum_{i=r}^{r + L -1} K_{\tau}\left(\frac{i-\nt}{n}\right) \tilde W\left(\frac{\nt + k -1}{n}, \frac{j}{N}, \frac{i+k-1}{n}, \FF_{i+k-1}\right) \right\|^2/L\\
    & \leq \frac{4}{s^2_n L} \sup_{t \in [\tau, 1-\tau], u\in[0,1]} \max_{\substack{L\leq j \leq 2\nt -L}} \left\{\left\|\sum_{i=j}^{j+L-1} K_{\tau}\left(\frac{i - \nt}{n}\right) (u_{i1}(t,u) - \E u_{i1}(t,u)) \right\| \right.\\  &+  \left. \left\| \sum_{i=j}^{j+L-1} K_{\tau}\left(\frac{i - \nt}{n}\right) u_{i2}(t,u) \right\|  \right\}^2\\
    & \leq \frac{8}{s^2_n L} \sup_{t \in [\tau, 1-\tau], u\in[0,1]} \max_{\substack{L\leq j \leq 2\nt -L}} \left\|\sum_{i=j}^{j+L-1} K_{\tau}\left(\frac{i - \nt}{n}\right) (u_{i1}(t,u) - \E u_{i1}(t,u)) \right\|^2 \\  &+   \frac{8}{s^2_n L} \sup_{t \in [\tau, 1-\tau], u\in[0,1]} \max_{\substack{L\leq j \leq 2\nt -L}} \left\| \sum_{i=j}^{j+L-1} K_{\tau}\left(\frac{i - \nt}{n}\right) u_{i2}(t,u) \right\|^2 \\
    &:= S_1^2 + S_2^2,\label{eq:barsigmaalt}
\end{align}
where $S_1$ and $S_2$ are defined in the obvious way. For the first term, since $\E (\bar{\bs \phi}_{i,k}(u)) = 0$, using \cref{lm:delta1} by triangle inequality and Burkholder inequality, we have
\begin{align}
    S_1 & \leq 2\sqrt{\frac{2}{s^2_n L}} \sum_{l \in \mathbb Z} \left\| \sum_{i=j}^{j+L-1}\proj^{i-l}K_{\tau}\left(\frac{i - \nt}{n}\right)(u_{i1}(t,u) - \E\left\{u_{i1}(t,u)\right\}  \right\|\\ 
    &\leq 2\sqrt{\frac{2}{L}} \sum_{l \in \mathbb Z}  \left\{\left(\sum_{i=j}^{j+L-1}\theta_{i-l,2} \right)^2 \right\}^{1/2} = O(s_n). \label{eq:S1}
\end{align}
For the second term $S_2$, we first investigate $\sum_{i=j}^{j+L-1}  K_{\tau}\left(\frac{i - \nt}{n}\right)\sum_{k=1}^{s_n} \mathrm{tr}\left\{\bs \Gamma_k(i/n, u)\bar{\bs \psi}_{i,k}^{\top}(t,u)\right\}$ in $u_{i2}(t,u)$ in \eqref{eq:ualt}. Note that by triangle inequality and Burkholder inequality,for a $\chi_1 \in (0,1)$,  we have
\begin{align}
    &\left\|\sum_{i=j}^{j+L-1}K_{\tau}\left(\frac{i - \nt}{n}\right)\sum_{k=1}^{s_n} \mathrm{tr}\left\{\bs \Gamma_k(i/n, u)\bar{\bs \psi}_{i,k}^{\top}(t,u)\right\} \right\| \\ 
    & \leq \sum_{l\in\mathbb Z} \sum_{k=1}^{s_n} \sum_{r,h=1}^p \left \|\sum_{i=j}^{j+L-1}K_{\tau}\left(\frac{i - \nt}{n}\right) \Gamma_{k}^{r,h}(i/n,u)\proj^{i-l} \psi_{i,k}^{r,h}(t,u) \right\|, \\
    &\leq \left( \sum_{l\geq 4s_n+1} +  \sum_{l\leq 4s_n} \right)\sum_{k=1}^{s_n} \sum_{r,h=1}^p \left\{ \sum_{i=j}^{j+L-1}\left|K_{\tau}\left(\frac{i - \nt}{n}\right) \Gamma_{k}^{r,h}(i/n,u)\right| \left\| \psi_{i,k}^{r,h}(t,u) - \psi_{i,k}^{r,h, (i-l)}(t,u) \right\|^2 \right\}^{1/2}\\ 
    & = O(\sqrt{Ls_n} \chi_1^{s_n/2} + s_n^2 \sqrt{L} ),\label{eq:u21}
\end{align}
where the last equality follows from   \eqref{eq:F21} and similar arguments in \eqref{eq:F22}. 
Next, we investigate $\sum_{i=j}^{j+L-1}K_{\tau}\left(\frac{i - \nt}{n}\right)\sum_{k=1}^{s_n}\mathrm{tr}\left\{\bar{\bs \phi}_{i,k}(u) \sum_{j=(i-s_n)\vee 1}^{i-M-1} \bs \Gamma_k^{\top}(j/n, u)\right\} K_{\tau}(j/n - t)$ in $u_{i2}(t,u)$ of \eqref{eq:ualt}. By triangle inequality and Burkholder inequality, for a $\chi_2 \in (0,1)$,   we have 
\begin{align}
   & \left\| \sum_{i=j}^{j+L-1}K_{\tau}\left(\frac{i - \nt}{n}\right)\sum_{k=1}^{s_n}\mathrm{tr}\left\{\bar{\bs \phi}_{i,k}(u) \sum_{j=(i-s_n)\vee 1}^{i-M-1} \bs \Gamma_k^{\top}(j/n, u)\right\} K_{\tau}(j/n - t) \right\| \\
    & \leq \sum_{l \in 
 \mathbb Z}\sum_{k=1}^{s_n} \sum_{r,h=1}^p \left\|\sum_{i=j}^{j+L-1}K_{\tau}\left(\frac{i - \nt}{n}\right)\left\{\proj^{i-l} \bar{\phi}^{r,h}_{i,k}(u) \sum_{j=(i-s_n)\vee 1}^{i-M-1} K_{\tau}(j/n - t)  \Gamma_k^{r,h} (j/n, u)\right\} \right\| \\ 
 & \leq \sum_{l \in 
 \mathbb Z}\sum_{k=1}^{s_n} \sum_{r,h=1}^p \left\{\sum_{i=j}^{j+L-1} \|\bar{\phi}^{r,h}_{i,k}(u) - \bar{\phi}^{r,h,(i-l)}_{i,k}(u)\|^2 \right.\\& \times \left. \left| K_{\tau}\left(\frac{i - \nt}{n}\right) \sum_{j=(i-s_n)\vee 1}^{i-M-1} K_{\tau}(j/n - t)  \Gamma_k^{r,h} (j/n, u)\right|^2  \right\}^{1/2}\\
 & = O\left\{ s^2_n \sqrt{L}\sum_{l=4s_n+1}^{\infty} \chi_2^{l} +s_n \sqrt{L}\sum_{l=1}^{4s_n}\sum_{k=1}^{s_n} (\chi_2^{(l-k)}\mf 1(l > k)+ \chi_2^l) \right\}\\
 & = O(s_n^2\sqrt{L} ).\label{eq:u22}
\end{align}
Using \eqref{eq:ualt}, \eqref{eq:u21} and \eqref{eq:u22} as well as similar arguments therein, by triangle inequality, since $s_n/(n\tau) \to 0$, for a $\chi_2 \in (0,1)$,  we have 
\begin{align}
    S_2 &=2\sqrt{\frac{2}{s^2_n L}} \sup_{t \in [\tau, 1-\tau], u\in[0,1]} \max_{\substack{L\leq j \leq 2\nt -L}} \left\| \sum_{i=j}^{j+L-1} K_{\tau}\left(\frac{i - \nt}{n}\right) u_{i2}(t,u) \right\|\\ & =O\left(\sqrt{\frac{2}{s_n^2 L}} s_n^2\sqrt{L}\right) = O( s_n).\label{eq:S2}
\end{align}
Combining \eqref{eq:barsigmaalt}, \eqref{eq:S1} and \eqref{eq:S2}, we have 
\begin{align}
   \max_{\substack{0 \leq k_1, k_2 \leq n - 2\nt,\\1 \leq j_1, j_2 \leq N} } \bar \sigma^2_{k_1, k_2, j_1, j_2} = O(s^2_n). \label{eq:orderstep3}
\end{align}
Recall that $g_n^{*} = O(L^2 s_n^2/n + s_n^2 L /(n\tau))$.
Combining \eqref{eq:orderstep1}, \eqref{eq:orderstep2} and \eqref{eq:orderstep3}, since 
\begin{align}
    & \sqrt{L \log(nN)  s_n}  l_n\eta_n (n^2N)^{1/q} + (L\log (n  N))^{1/2}( g_n^*/(s_nL) + L/(n\tau))
    \\ 
    &=\sqrt{L \log(nN)  s_n}  l_n\eta_n (n^2N)^{1/q} + (L\log (n  N))^{1/2}\{ L/(n\tau) + Ls_n/n\}\\ 
    &=o(s_n \log(nN))
\end{align}
we have 
\begin{align}
   \tilde Z_{\bt} = \max_{0 \leq l \leq n-2\nt}|\tilde{\mf Z}_l|_{\infty} = \Op(s_n\log (nN)).
\end{align}
\end{proof}

\section{Proof of \cref{thm:extend}}\label{sec:exproof}
\subsection{Proof of \cref{thm:extend}}
\begin{proof}
Note that the point-wise upper bounds of the form $\sup_{t \in [\tau, 1-\tau], u,v \in [0,1]}\| \cdot\|_p$ for $p\geq 1$ will be the same as $\sup_{t \in [\tau, 1-\tau], u \in [0,1]}\| \cdot\|_p$  if they can be controlled by the quantities assumed in (1)-(3) in \cref{ass:expo}. 

The difference between \cref{thm:extend} and \cref{thm:ga}, \cref{thm:alt} and \cref{thm:bs} lies in the following three aspects:

(1) Deriving the uniform upper bound of the random variables of with respect to three variables  $t\in [\tau, 1-\tau], u,v \in[0,1]$ for the extended estimator \eqref{eq:testQextend} instead of only two variables $t\in [\tau, 1-\tau], u\in[0,1]$ for the estimator \eqref{eq:testQ}. The maximum inequality for three continuous variables is investigated in \cref{prop:b3} as a counterpart of \cref{prop:b2}. As a result, \cref{prop:esti} will be extended into the following: 

If \cref{ass:expo} is satisfied and  for some  $q \geq 2$,  $$\delta^*_n= s_n^2(n\tau) \{b^2 + 1/(nb)\}\tau^{-3/q} +s_n^{3/2} \sqrt{n/b} \tau^{1-3/q} = o(s_n \sqrt{n\tau}),$$ and   $\tau \to 0$,  $n\tau \to \infty$, $(\log n)^4/(nb) \to 0$, then we have  
\begin{align}
    \left\| \sup_{t \in [\tau, 1-\tau], u,v \in [0,1]}\left|  \sum_{i=1}^{n}\mathrm{tr} \left\{\hat{\mf U}_i(t,u,v)- \mf U_i(t,u,v)\right\}K_{\tau}(i/n - t) \right| \right\|_q &=  O(\delta^*_n) = o(s_n \sqrt{n\tau}).
\end{align}
(2) Extend 
$$
g(t, u) =\frac{1}{s_n\sqrt{n\tau}} \sum_{i=1}^n K_{\tau}(i/n-t) \mathrm{tr}\{\mf U_i(t, u)\}
$$
to $g(t, u, v) =\frac{1}{s_n\sqrt{n\tau}} \sum_{i=1}^n K_{\tau}(i/n-t) \mathrm{tr}\{\mf U_i(t, u, v)\}$, and derive the upper bound for approximating $\sup_{t \in [\tau, 1-\tau], u,v \in [0,1]}g(t, u, v)$ by $\max_{1 \leq i \leq n, 1 \leq j,j^{\prime} \leq N}g(i/n, j/N, j^{\prime}/N)$. 

Through tedious calculations, the rate in \eqref{eq:maxg} will be replaced by 
\begin{align}
   \left\| \underset{t \in [\tau, 1-\tau], u,v \in [0,1]}{\sup}\underset{ \substack{1 \leq i \leq n\\ 1 \leq j, j^{\prime} \leq N  }}{\max} |g(t, u, v) - g(i/n,  j/N,  j^{\prime} /N)| \right\|_q = O( s_n (nN^2)^{1/q}((n\tau)^{-1}+ N^{-1})).
\end{align}

(3) Apply Gaussian approximation and establishing bootstrap consistency for the $(n-2\nt)N^2$-dimensional vector $\mf V_i$,
\begin{align}
      {\mf V}_{i,l} &:= K_{\tau}(i/n-l/n) (\mathrm{tr}(\mf U_{i}(l/n, j/N, j^{\prime} /N)),  1 \leq j,j^{\prime} \leq N)^{\T},  \\
      \mf V_i &= ({\mf V}_{i,\nt} ,\cdots, {\mf V}_{i + n - 2\nt, n-\nt} )^{\top},
\end{align}  
instead of the $(n-2\nt)N$ -dimensional one
\begin{align}
      {\mf V}_{i,l}:&= K_{\tau}(i/n-l/n) (\mathrm{tr}(\mf U_{i}(l/n, j/N)),  1 \leq j \leq N)^{\T}, \\ \mf V_i &= ({\mf V}_{i,\nt} ,\cdots, {\mf V}_{i + n - 2\nt, n-\nt} )^{\top}.
\end{align} 
Therefore, it suffices to change $N$ into $N^2$ in the results of \eqref{eq:GA} and \cref{thm:bs}.
\end{proof}
\end{appendix}

\begin{funding}
Financial support by the Deutsche Forschungsgemeinschaft (DFG, German Research Foundation; Project-ID 520388526;  TRR 391:  Spatio-temporal Statistics for the Transition of Energy and Transport) is gratefully acknowledged. Weichi Wu is the corresponding author and is supported  by NSFC General program (No.12271287). The authors are  listed in alphabetical order.
\end{funding}

\bibliography{main}
\bibliographystyle{imsart-nameyear}

\newpage


\end{document}